\newcounter{mycomment}
\title{Strongly Polynomial Frame Scaling to High Precision}
\author{Daniel Dadush
\and Akshay Ramachandran}
\begin{document}

\maketitle

\begin{abstract}
The frame scaling problem is: given vectors $U := \{u_{1}, ..., u_{n} \} \subseteq \R^{d}$, marginals $c \in \R^{n}_{++}$, and precision $\eps > 0$, find left and right scalings $L \in \R^{d \times d}, r \in \R^n$ such that $(v_1,\dots,v_n) := (Lu_1 r_1,\dots,Lu_nr_n)$ simultaneously satisfies $\sum_{i=1}^n v_i v_i^{\mathsf{T}} = I_d$ and $\|v_{j}\|_{2}^{2} = c_{j}, \forall j \in [n]$, up to error $\eps$. 
This problem has appeared in a variety of fields throughout linear algebra and computer science. 
In this work, we give a strongly polynomial algorithm for frame scaling with $\log(1/\eps)$ convergence. 
This answers a question of Diakonikolas, Tzamos and Kane (STOC 2023), who gave the first strongly polynomial randomized algorithm with $\poly(1/\eps)$ convergence for Forster transformation, the special case $c = \frac{d}{n} 1_{n}$. 
Our algorithm is deterministic, applies for general marginals $c \in \R^{n}_{++}$, and requires $O(n^{3} \log(n/\eps))$ iterations as compared to the $O(n^{5} d^{11}/\eps^{5})$ iterations of DTK.
By lifting the framework of Linial, Samorodnitsky and Wigderson (Combinatorica 2000) for matrix scaling to the frame setting, we are able to simplify both the algorithm and analysis. 
Our main technical contribution is to generalize the potential analysis of LSW  to the frame setting and compute an update step in strongly polynomial time that achieves geometric progress in each iteration. 
In fact, we can adapt our results to give an improved analysis of strongly polynomial matrix scaling, reducing the $O(n^{5} \log(n/\eps))$ iteration bound of LSW to $O(n^{3} \log(n/\eps))$. 
Additionally, we prove a novel bound on the size of approximate frame scaling solutions, involving the condition measure $\bar{\chi}$ studied in the linear programming literature, which may be of independent interest. 
\end{abstract}

\section{Introduction}

In this work, we study the following problem:

\begin{definition} [Frame Scaling] \label{d:introForsterTransform}
    A set of vectors $U := (u_{1}, ..., u_{n}) \in \R^{d \times n}$ is called a \textbf{frame} if the matrix $U$ is of full row rank. Given input frame $U$, marginals $c \in \R^{n}_{++}$, and precision $\eps > 0$, find a {\bf left scaling} $L \in \R^{d \times d}$ and a {\bf right scaling} $r \in \R^{n}$, such that $\{v_{j} := L u_{j} r_{j}\}_{j \in [n]}$ are in $\eps$-approximate $(I_{d},c)$-position:
    \[ \Big\| \sum_{j=1}^{n} v_{j} v_{j}^{\Tr} - I_{d} \Big\|_{F}^{2} + \sum_{j=1}^{n} \Big(\|v_{j}\|_{2}^{2} - c_{j} \Big)^{2} \leq \eps^{2} .   \]
    We say $V:=(v_1,\dots,v_n) \in \R^{d \times n}$ is $(I_{d},c)$-scaled if $\eps=0$ in the above. 
\end{definition}

Scaling to approximate $(I_{d},c)$-position can be thought of as a way to regularize a set of vectors. 
The first error term is small when the vectors are globally balanced in the sense that $\|V^{\Tr} x\|_{2} \approx \|x\|_{2}$ for all directions $x \in \R^{d}$; and the second is small when the vector norms nearly match the desired marginals $c \in \R^{n}_{++}$. 
This problem has appeared in a variety of areas including algebraic geometry \cite{GelfandMatroid}, communication complexity \cite{Forster}, functional analysis \cite{BartheBL}, coding theory and signal processing \cite{FramesBook} \cite{HP}. 
The question of existence of frame scalings has been asked and answered many times in the literature: 

\begin{theorem}[\cite{BartheBL,GelfandMatroid}] \label{t:frameExistence}
    Frame $U \in \R^{d \times n}$ can be scaled to $\eps$-approximate $(I_{d},c)$-position for any $\eps > 0$ iff 
    \[  \forall T \subseteq [n]: \quad \langle c, 1_{T} \rangle \leq \rk(U_{T})  \qquad \text{and} \qquad \langle c, 1_{n} \rangle = d .    \]
    We say $(U,c)$ is feasible if this is the case, and otherwise we say it is infeasible. 
\end{theorem}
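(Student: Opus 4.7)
The plan is to prove the two directions separately, with sufficiency as the substantive part.

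\textbf{Necessity ($\Rightarrow$).} Fix, for each $\eps > 0$, an $\eps$-approximate scaling $V_\eps := L_\eps U\,\mathrm{diag}(r_\eps)$ with columns $v_j$. The Frobenius bound gives $|\Tr(V_\eps V_\eps^{\Tr}) - d| \le \sqrt{d}\,\eps$, while Cauchy--Schwarz on the marginal error gives $\big|\sum_j \|v_j\|_2^2 - \langle c,1_n\rangle\big| \le \sqrt{n}\,\eps$. Since $\Tr(V_\eps V_\eps^{\Tr}) = \sum_j \|v_j\|_2^2$, letting $\eps \to 0$ yields $\langle c,1_n\rangle = d$. For the Hall bound, fix $T\subseteq [n]$ and set $M_T := \sum_{j\in T} v_j v_j^{\Tr}$. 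Then $0 \preceq M_T \preceq V_\eps V_\eps^{\Tr} \preceq (1+\eps)\,I_d$ (operator norm dominated by Frobenius), while $M_T = L_\eps U_T\,\mathrm{diag}(r_{\eps,T})^2 U_T^{\Tr} L_\eps^{\Tr}$ has rank at most $\rk(U_T)$, so $\Tr(M_T) \le (1+\eps)\,\rk(U_T)$. Combining with $\Tr(M_T) = \sum_{j \in T} \|v_j\|_2^2 \ge \langle c,1_T\rangle - \sqrt{|T|}\,\eps$ and sending $\eps\to 0$ yields $\langle c,1_T\rangle \le \rk(U_T)$.

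\textbf{Sufficiency ($\Leftarrow$).} I would reduce to a one-variable convex program. By polar decomposition we may take $L$ symmetric positive definite; given the right scaling $s_j := r_j^2 > 0$, the constraint $VV^{\Tr} = I_d$ pins down $L = W(s)^{-1/2}$ with $W(s) := \sum_j s_j u_j u_j^{\Tr}$, and the marginal constraints reduce to the leverage-score equations $s_j\, u_j^{\Tr} W(s)^{-1} u_j = c_j$. These are exactly the stationarity conditions of the potential
\[
    \phi(t) \;:=\; \log\det\!\Big(\sum_{j=1}^n e^{t_j}\, u_j u_j^{\Tr}\Big) \;-\; \langle c, t\rangle, \qquad t \in \R^n,
\]
which is convex in $t$ (a standard fact about log-determinants of exponentially-weighted PSD sums). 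Thus it suffices to show $\inf_t \phi(t) > -\infty$: convexity then guarantees that any minimizing sequence satisfies $\nabla \phi(t^{(k)}) \to 0$, from which $\eps$-approximate scalings can be read off for every $\eps>0$.

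Lower-boundedness is where the two hypotheses enter. Along any ray $t = t_0 + \alpha v$ with $\alpha \to +\infty$, let $M := \max_j v_j$ and $T := \{j : v_j = M\}$; then the dominant exponentials concentrate $W(e^t)$ on $\mathrm{span}\{u_j : j \in T\}$, giving $\log\det W(e^t) = M\,\rk(U_T)\,\alpha + O(1)$, while $\langle c, t\rangle = \langle c, t_0\rangle + \alpha \langle c, v\rangle$ with $\langle c, v\rangle \ge M\langle c, 1_T\rangle$. Hence $\phi$ stays bounded along this ray iff $\rk(U_T) \ge \langle c, 1_T\rangle$, the Hall inequality; the global shift invariance $\phi(t + \alpha 1_n) = \phi(t) + (d - \langle c, 1_n\rangle)\alpha$ is neutralized by $\langle c, 1_n\rangle = d$. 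A general ray, together with the subleading contributions from smaller coordinates of $v$, is then handled by iterating this asymptotic analysis on a filtration of subsets/subspaces.

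The main obstacle is precisely this degeneration analysis: converting an unbounded minimizing sequence into a violated Hall inequality via a filtration argument. This is essentially the Hilbert--Mumford numerical criterion from geometric invariant theory and forms the technical heart of the existence proofs cited in \cite{BartheBL, GelfandMatroid}.
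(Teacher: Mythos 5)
The paper does not prove this statement itself -- it is imported from \cite{BartheBL,GelfandMatroid} -- so I am judging your argument on its own. Your necessity direction is correct and complete: the trace identity $\mathrm{tr}(VV^{\Tr})=\sum_j\|v_j\|_2^2$ gives $\langle c,1_n\rangle=d$ in the limit, and the bound $\mathrm{tr}(M_T)\le(1+\eps)\rk(U_T)$ via $M_T\preceq VV^{\Tr}\preceq(1+\eps)I_d$ and $\rk(M_T)\le\rk(U_T)$ gives the subset inequalities. Your sufficiency strategy (the convex potential $\phi(t)=\log\det(\sum_j e^{t_j}u_ju_j^{\Tr})-\langle c,t\rangle$, whose gradient is exactly $\lev^U(e^t)-c$) is the standard variational route behind the cited existence proofs and is the right idea, but as written it has two genuine gaps beyond the one you acknowledge. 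First, ``convexity guarantees that any minimizing sequence satisfies $\nabla\phi(t^{(k)})\to 0$'' is false for general smooth convex functions bounded below: e.g.\ $f(x,y)=\sqrt{x^2+e^{-2y}}$ is convex with $\inf f=0$, yet along $(x_k,y_k)=(e^{-k},k)$ one has $f\to 0$ while $\|\nabla f\|\ge 1/\sqrt{2}$. This is repairable -- $\nabla\phi$ is bounded and Lipschitz (leverage scores and their derivatives lie in $[0,1]$ up to constants), so gradient descent, or Ekeland's variational principle, produces points with arbitrarily small gradient once $\inf\phi>-\infty$ -- but the repair has to be said.

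Second, and more seriously, the lower-boundedness argument does not go through as sketched. The single-set ray computation is wrong in detail: with $T=\{j:v_j=M\}$ you claim $\langle c,v\rangle\ge M\langle c,1_T\rangle$, which fails whenever $v$ has coordinates below $M$ of the wrong sign; the correct asymptotic slope is $\sum_i M_i\big(\rk(U_{F_i})-\rk(U_{F_{i-1}})\big)-\langle c,v\rangle$ over the full level-set filtration $F_1\subseteq F_2\subseteq\cdots$, and one must Abel-sum this against $\langle c,1_{F_i}\rangle\le\rk(U_{F_i})$ and $\langle c,1_n\rangle=d$. Moreover, even granting nonnegative slope (indeed boundedness below) along every ray, this does not by itself give $\inf_t\phi>-\infty$: a convex function can drift to $-\infty$ sublinearly, and the $O(1)$ remainder in your ray expansion is not uniform in the direction, so the ``ray-to-global'' step is missing. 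The cleanest fix replaces the degeneration analysis entirely: by Cauchy--Binet, $\log\det W(e^t)\ge\langle 1_B,t\rangle+2\log|\det U_B|$ for every basis $B$ of the column matroid of $U$, and by Edmonds' description of the base polytope the stated conditions are exactly equivalent to $c\in\mathrm{conv}\{1_B\}$; writing $c=\sum_B\lambda_B 1_B$ yields the uniform bound $\phi(t)\ge\min_{B:\lambda_B>0}2\log|\det U_B|$, after which the small-gradient argument finishes sufficiency. So: necessity is done, sufficiency is the right approach but needs the Lipschitz-gradient (or Ekeland) step and a genuine lower-boundedness proof such as the Cauchy--Binet/base-polytope argument in place of the ray sketch.
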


Some previous proofs of this existence statement use algebraic \cite{GelfandMatroid} or compactness \cite{Forster} arguments, and therefore are non-constructive. Recently, there has been significant focus on algorithms for computing frame scalings \cite{HardtMoitra}, \cite{SinghVishnoi}, which are are useful for data analysis or signal processing. 
Our work is greatly influenced by two recent directions (scaling and learning theory), and we discuss these connections in more detail below. 

The first is a recent line of work on a class of related problems such as operator scaling \cite{GGOW15}, \cite{BFGOWW} and computing Brascamp-Lieb constants \cite{GGOWBL}. These problems fall into the so-called scaling framework, with connections to algebraic geometry and invariant theory. In particular, there has been a great deal of progress in new tractable algorithms for many problems in this framework (including frame scaling), which exploit the underlying algebraic and geometric structure of scalings (see \cite{Towards} for a detailed exposition).  

The second is the recent application of frame scaling to problems in learning theory. In this context, the special case of $c := \frac{d}{n} \vec{1}_{n}$ is known as the Forster transform, as it was originally studied by Forster \cite{Forster} in the context of communication complexity lower bounds. This has been used as a robust method to regularize a dataset in order to speed up various downstream learning tasks \cite{HardtMoitra}, \cite{PointLocation}, \cite{DKT21}, \cite{DKT}. 

Many of the known algorithms for frame scaling exploit a convex formulation of this problem and then apply standard optimization techniques, such as Ellipsoid, gradient descent, or interior point methods \cite{HardtMoitra}, \cite{IPMforGP}, \cite{AKS}, \cite{PointLocation}. The standard analyses of these off-the-shelf methods as applied in these algorithms have weakly polynomial guarantees, in that the number of iterations depends on the bit complexity of the input.

In very recent work, Diakonikolas, Tzamos and Kane \cite{DKT} gave the first strongly polynomial algorithm for computing Forster transformations, the special case $c = \frac{d}{n} \vec{1}_{n}$. They applied this result to give the first strongly polynomial algorithm for halfspace learning in certain noise models. The running time of their algorithm is polynomial in $(d,n, 1/\eps)$, so they asked whether it is possible to extend this strongly polynomial result to the setting of exponentially small $\eps$. In this work, we answer this question in the affirmative by giving a strongly polynomial algorithm for the frame scaling problem with arbitrary $c \in \R^{n}_{+}$ that has $\log(1/\eps)$ runtime dependence. 

Before presenting our main result, we discuss how our scalings are represented. First note that both the isotropy and the norm condition are easy to satisfy individually simply by normalizing. Explicitly, for any left scaling $L$, there is a simple way to compute a right scaling satisfying the norm condition, namely $r_{j}^{2} := c_{j}/\|L u_{j}\|_{2}^{2}$. Similarly, for any right scaling matrix $R = \diag(r_1,\dots,r_n)$, there is a natural corresponding left scaling $L := (U R^{2} U^{\Tr})^{-1/2}$ satisfying the isotropy condition. 
As it is not possible to take square roots in the strongly polynomial model, in our algorithm we choose to maintain the square of the right scaling $z := (r_1^2,\dots,r_n^2) \in \R^{n}_{++}$ while leaving the left scaling $L := (U Z U^{\Tr})^{-1/2}$ implicit. Our main result follows:

\begin{theorem} [Main Theorem (Informal)]
There is a deterministic strongly polynomial time algorithm which, on input frame $U \in \R^{d \times n}$, marginals $c \in \R^{n}_{++}$, and precision $\eps > 0$, takes $O(n^3 \log(n/\eps))$ iterations each requiring $\poly(n,d)$ operations, and outputs either (1) the squared right scaling $z \in \R^{n}_{++}$ of an $\eps$-approximate frame scaling solution; or (2) a certificate of infeasibility $T \subseteq [n]$ satisfying $\langle c, 1_T \rangle > \rk(U_T)$. 
\end{theorem}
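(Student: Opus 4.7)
The plan is to lift the potential-function framework of Linial-Samorodnitsky-Wigderson from matrix scaling to the frame setting. Following the paper's parametrization, maintain the squared right scaling $z \in \R^{n}_{++}$ with $Z := \diag(z)$ and implicit left scaling $L := (UZU^{\Tr})^{-1/2}$, and work with the convex potential
\[
f(z) \;:=\; \log\det(UZU^{\Tr}) - \langle c, \log z\rangle,
\]
whose stationary points are exactly the scalings in $(I_d, c)$-position: the derivative $\partial_j f(z) = u_j^{\Tr}(UZU^{\Tr})^{-1}u_j - c_j/z_j$ vanishes precisely when $z_j \ell_j = c_j$, where $\ell_j := u_j^{\Tr}(UZU^{\Tr})^{-1}u_j$. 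With this implicit $L$ the isotropy term $\|\sum_j v_jv_j^{\Tr} - I_d\|_F^2$ of Definition \ref{d:introForsterTransform} vanishes identically, and the remaining norm error equals $\|Z\nabla f(z)\|_2^2$; it therefore suffices to drive $\|Z\nabla f(z)\|_2$ below $\eps$.

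Initialize $z = 1_n$ and iterate: pick a coordinate $j$ maximizing $|z_j\, \partial_j f(z)|$ and replace $z_j$ by the exact one-dimensional minimizer of $f$ along $e_j$. By the matrix determinant lemma, $\log\det(UZU^{\Tr})$ as a function of $z_j$ alone is $\log(\alpha + \beta z_j)$, with the constants $\alpha, \beta$ extracted from the current $\ell_j$ via Sherman-Morrison. The resulting update $z_j^\ast = c_j(1 - z_j \ell_j)/((1 - c_j)\ell_j)$ is a rational function of the current state with no square roots, hence strongly polynomial. Feasibility of $(U,c)$ applied via Theorem \ref{t:frameExistence} with $T = \{j\}$ yields $c_j \leq 1$, which together with $z_j \ell_j \leq 1$ keeps the update well-defined and positive.

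The convergence analysis has three ingredients. First, bound $f(1_n) - \inf_z f(z)$ using the promised $\bar{\chi}$-size bound on approximate scalings: any $\eps$-approximate minimizer has $\max_j z_j^\ast / \min_j z_j^\ast \leq \poly(n,\bar{\chi}(U))$, yielding a range bound of $O(n \log(n\bar{\chi}(U)))$ that is polynomial in the strongly polynomial model since $\bar{\chi}(U)$ is determined by subdeterminant ratios. Second, show that while $\|Z\nabla f(z)\|_2 \geq \eps$ each iteration decreases $f$ by $\Omega(\|Z\nabla f(z)\|_2^2 / n^3)$, via a second-order Taylor estimate on the one-dimensional restriction whose curvature is controlled by the current leverage scores. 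Third, combine with a local strong-convexity argument — the Hessian at an exact scaling has condition number bounded by $\bar{\chi}$-type quantities — to convert the per-step additive decrease into multiplicative contraction of $\|Z\nabla f(z)\|_2^2$ by a factor $1 - \Omega(1/n^3)$ per iteration, giving the claimed total of $O(n^3 \log(n/\eps))$ iterations.

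The main obstacle is combining strong polynomiality with $\log(1/\eps)$ convergence: the local strong-convexity argument must be carried out in rational arithmetic and must survive the multiplicative blow-up of $z$-ratios produced during the global phase. My plan is to show that the greedy coordinate update with $|z_j \partial_j f(z)|$ maximal acts as preconditioned gradient descent on the local quadratic model, with preconditioner determined by the leverage scores themselves, so that no external spectral estimates are needed. Finally, for infeasibility detection, I would monitor $z$ throughout execution: if the ratio $\max_j z_j / \min_j z_j$ exceeds the $\bar{\chi}$-threshold implied by the size bound for any feasible scaling, then a Hall-type analysis of the support profile of the large and small coordinates of $z$ exhibits a combinatorial certificate $T \subseteq [n]$ with $\langle c, 1_T\rangle > \rk(U_T)$, fulfilling the output requirement.
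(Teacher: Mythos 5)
Your setup matches the paper's (maintain the squared right scaling $z$, leave $L := (UZU^{\Tr})^{-1/2}$ implicit, and note that $z_j\partial_j f(z) = \lev^U_j(z) - c_j$, so it suffices to drive $\|\lev^U(z)-c\|_2$ below $\eps$), but the convergence engine you propose has a genuine gap. You run greedy single-coordinate (Sinkhorn-style) descent on the convex potential $f(z) = \log\det(UZU^{\Tr}) - \langle c,\log z\rangle$ and claim a multiplicative contraction of $\|Z\nabla f(z)\|_2^2$ by $1-\Omega(1/n^3)$ per iteration via ``local strong convexity'' with condition number controlled by $\bar\chi$-type quantities. This is exactly where the argument breaks: $f$ is not strongly convex in any instance-independent sense. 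When some feasibility constraint $\langle c,1_T\rangle \leq \rk(U_T)$ is tight (or the instance is near-degenerate), the infimum is attained only at infinity or the relevant curvature degenerates, so ``the Hessian at an exact scaling'' need not exist, and even when it does its smallest eigenvalue is governed by condition measures like $\bar\chi(U)$, which can be exponential in the bit size and is unbounded as a function of $(n,d)$ in the real model. A contraction rate inherited from such a Hessian gives an iteration count depending on the input numbers, i.e.\ a weakly polynomial bound, not the claimed strongly polynomial $O(n^3\log(n/\eps))$; you give no argument for why the contraction factor would be $1-\Omega(1/n^3)$ rather than $1-\Omega(1/\kappa)$ for an instance-dependent $\kappa$. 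This is precisely the obstruction the paper's proof is built to avoid: instead of the convex potential, it uses the combinatorial potential $\|\lev^U(z)-c\|_2^2$ directly, sorts the errors and scales up a prefix set $T$ whose margin $\gamma$ satisfies $\gamma^2 \geq \|\lev^U(z)-c\|_2^2/(2n^3)$ unconditionally (a consequence only of $\langle \lev^U(z)-c,1_n\rangle = 0$), proves the error drops by $2\gamma(h(\alpha)-h(1))$ for the proxy function $h$, and computes a step size with $h(\alpha)-h(1) = \Theta(\gamma)$ in strongly polynomial time via Newton--Dinkelbach seeded by a coarse (determinant local-optimum) eigen-sum approximation. The $1-\Omega(1/n^3)$ rate there is condition-number free; in your scheme it is not.

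Two further, smaller issues. First, bit-complexity: your exact one-dimensional updates are rational in the current state, so iterated composition can blow up bit lengths; the initial $\bar\chi$-based range bound does not by itself control intermediate bit sizes, and you would still need an explicit regularization/rounding step of the kind the paper develops (Theorem \ref{t:Regularization} plus the rounding oracle). Second, your infeasibility test compares $\max_j z_j/\min_j z_j$ against a threshold involving $\bar\chi(U)$, which is NP-hard to compute even approximately; the paper instead certifies infeasibility combinatorially, by checking $\rk(U_T) < \langle c,1_T\rangle$ for the margin set $T$ whenever no sufficiently progressing step size exists (Proposition \ref{p:outlineInfeasible}), which is checkable in strongly polynomial time. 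To repair your approach you would either need an instance-independent error bound (\L{}ojasiewicz-type inequality) for $f$ with modulus $\poly(n,d)$ --- which is not available --- or switch to the margin-based potential argument, at which point you are essentially reconstructing the paper's proof.
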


As mentioned above, this answers the open question of~\cite{DKT} regarding the dependence on $\eps$ and also resolves the general marginal case. As added benefits, our algorithm is deterministic and substantially simpler than that of~\cite{DKT}. The main reason for the latter is that~\cite{DKT} works with left scalings directly, which are matrices, whereas we work directly with (squared) right scalings, which are just positive vectors. 
We then follow a procedure similar to the classical work of Linial, Samrodnitsky, and Wigderson \cite{LSW}, which gave a strongly polynomial $\poly(n,m)\log(n/\eps)$-time algorithm for the related, but simpler, \emph{matrix scaling problem}. In particular, we are able to lift the potential function analysis used in~\cite{LSW} to the frame setting and argue geometric decrease in each iteration. 
Even for large $\eps > 0$, our algorithm is somewhat more efficient than that of~\cite{DKT} in terms of iterations, as our work requires $O(n^{3} \log(n/\eps))$ iterations to produce an $\eps$-approximate solution, compared to $O(n^{5} d^{11}/\eps^{3})$ iterations required in \cite{DKT}. Both algorithms require a similar $\poly(n,d)$ amount of work for each iteration, neither of which are claimed to be optimized. The formal complexity of each iteration is left to the main body, along with a detailed comparison with prior work (see \cref{t:mainAlgAnalysis}).

Interestingly, applying our techniques to the simpler matrix scaling problem gives a tighter analysis than the original algorithm \cite{LSW}.  
As a consequence, we show that using a slightly optimized update step, one can decrease the number of iterations needed for matrix scaling from $O(n^5 \log(n/\eps))$ to $O(n^3 \log(n/\eps))$ while maintaining roughly the same work per iteration. For a formal definition of the problem and our new result, see \cref{s:matrixImprovement}.

In the rest of the introduction, we give a high-level overview of the techniques (\cref{ss:techniques}), a detailed comparison between the different algorithms (\cref{ss:Comparison}), and a discussion of subtleties in the definition of strongly polynomial  versus weakly polynomial algorithms (\cref{ss:stronglyPoly}).

\subsection{Techniques} \label{ss:techniques}

In this subsection, we give technical overview of our improved algorithm for frame scaling. 
Our high-level strategy is essentially the same as that of \cite{LSW} for matrix scaling.
We iteratively update the scaling so that the norm square of the error decreases sufficiently. Each update step is a uniform scaling of a subset of columns, and the goal is to find a step-size making sufficient progress. How precisely to choose the update direction and step-size is the main technical work. The algorithm of \cite{DKT} for Forster transform is similar in spirit, but the update step and analysis both involve eigenspaces and so are more complicated; we defer more detailed comparison to \cref{ss:Comparison}. 

We begin by helpful simplification: to find an $\eps$-approximate scaling according to \cref{d:introForsterTransform}, it is enough to compute $z \in \R^{n}_{++}$ such that $\|\lev^{U}(z) - c\|_{2}^{2} \leq \eps^{2}$ for 
\[  \lev^{U}_{j}(z) := z_{j} u_{j}^{\Tr} (U Z U^{\Tr})^{-1} u_{j} .    \]
This is equivalent to original frame scaling problem: applying the scalings $L := (U Z U^{\Tr})^{-1/2}, R := \sqrt{Z}$, we get that $V := LUR$ automatically satisfies the isotropy condition $V V^{\Tr} = I_{d}$ and has squared column norms $\{\|v_{j}\|_{2}^{2} = \lev_{j}^{U}(z)\}_{j \in [n]}$. 
The quantity $\lev^U_j(z)$, $j \in [n]$, is the \emph{leverage score} of the $j$th row of $\sqrt{Z} U^\Tr$, where $Z := {\rm diag}(z_1,\dots,z_n)$, an extremely well studied quantity in numerical linear algebra which indicates the ``importance'' of a row. Note that $\lev^{U}(z)$ can be computed in strongly polynomial time as it involves just matrix inversions and multiplications. 
This is a key simplification as we only have to maintain a vector $z \in \R^{n}_{++}$ instead of a matrix. This is the first reason that we choose to work with the squares of right scalings and leave the left scaling implicit. 

Now we can focus on the error $\lev^{U}(z) - c$, which is just a vector. In order to decrease $\|\lev^{U}(z) - c\|_{2}^{2}$ in each iteration, we follow the combinatorial approach of \cite{LSW} and uniformly scale up a subset of columns with large ``margin". Explicitly, we sort the error vector $\lev_{1}^{U}(z) - c_{1} \leq ... \leq \lev_{n}^{U}(z) - c_{n}$, and choose $T = [k]$ to be the prefix set maximizing margin $2\gamma := (\lev_{k+1}^{U}(z) - c_{k+1}) - (\lev_{k}^{U}(z) - c_{k})$. In \cref{p:outlinePolyGap}, we show this maximum margin satisfies $\gamma^{2} \geq \|\lev^{U}(z) - c\|_{2}^{2}/(2n^3)$. From here, our main goal will be to compute a new iterate $z' \in \R^n_{++}$ satisfying $\|\lev^U(z)-c\|_2^2 - \|\lev^U(z')-c\|_2^2 \lesssim \gamma^2$, which decreases the squared error by a $(1-\Omega(1/n^3))$ multiplicative factor, and in turn gives our $O(n^3 \log (n/\epsilon))$ iteration bound.

The next iterate is of the form $z' \gets z \circ (1_{\overline{T}} + \alpha 1_{T})$ for some $\alpha \geq 1$, where $\circ$ denotes the entry-wise product of vectors and $1_T \in \{0,1\}^n$ is the indicator of $T$. It is straightforward to show that this scaling increases all $\lev|_{T}$ and decreases all $\lev|_{\overline{T}}$, the leverages scores indexed by $T$ and $\bar{T} := [n] \setminus T$ respectively. The key of our potential analysis is \cref{l:outlineProgressLemma}, where we show that the decrease in error $\|\lev^{U}(z) - c\|_{2}^{2}$ can be lower bounded in terms of the total amount that the leverage scores increase and the initial margin $\gamma$. 
Formally, we define the proxy function $h(\alpha)$ to be the sum of leverage scores $\lev|_{T}$ when scaling up $z \to z \circ (1_{\overline{T}} + \alpha 1_{T})$. And in \cref{l:outlineProgressLemma}, we show that the error $\|\lev^{U}(z) - c\|_{2}^{2}$ decreases additively by $2\gamma (h(\alpha) - h(1))$ as long as $h(\alpha) - h(1) \leq \gamma$. 

In words, we show that as long as there is still a gap between $\lev|_{T}$ and $\lev|_{\overline{T}}$, then the error $\|\lev^{U}(z) - c\|_{2}^{2}$ is decreasing at a rate proportional to margin $\gamma$ and the difference in leverage scores $\lev|_{T}$. 

If we could find an update $\alpha$ such that $\gamma/5 \leq h(\alpha) - h(1) \leq \gamma$, then this analysis would show geometric progress in each iteration, since we have already argued $\gamma^{2}$ is a polynomial fraction of the potential $\|\lev^{U}(z) - c\|_{2}^{2}$. 
Following a similar argument as \cite{LSW}, we can show that such an update always exists in the feasible case. Specifically, in \cref{p:outlineInfeasible} we show $\lim_{\alpha \to \infty} h(\alpha) = \rk(U_{T})$, whereas the margin $\gamma$ of $T$ effectively shows $h(1) \leq \langle c, 1_{T} \rangle - \gamma$. Using feasibility $\langle c, 1_{T} \rangle \leq \rk(U_{T})$ and the fact that $h$ is increasing, we must eventually have an increase $h(\alpha) - h(1) \geq \gamma/2$. 
In the contrapositive, this shows that a good update does not exist only when the subset $T$ gives a certificate of infeasibility, $\langle c, 1_{T} \rangle > \rk(U_{T})$ according to \cref{t:frameExistence}. 

The rest of the algorithm, actually computing an update in the feasible case, is the main technical contribution of this work. 
Formally, we want an update satisfying $\gamma/5 \leq h(\alpha) - h(1) \leq \gamma$, which is essentially an approximate root-finding problem. 
In our setting, the function $h$ is non-increasing and concave, as we show in \cref{f:HIncreasingConcave}, so we are able to use the classical Newton-Dinkelbach method, along with a refined Bregman divergence-based analysis similar to \cite{AcceleratedND}, in order to efficiently compute the update.

We note that the concavity of $h(\alpha)$ is another helpful property that is a consequence of our choice of representation as squares of right scalings $z \in \R^{n}_{++}$.  

The running time of the Newton-Dinkelbach method applied to our function $h$ is proportional to $O(\log(1+\alpha_*/\alpha_0))$, where $h(\alpha_{*}) = h(1)+\gamma/5$ and $\alpha_0$ is our initial guess. We emphasize that this is a special property of our function $h$ and our desire to just compute an update making sufficient progress; in general, the analysis is more complicated (see \cite{AcceleratedND}). 
 
We rewrite the proxy function as 
\[ h(\alpha) = \tr[\alpha U_{T} Z_{T} U_{T}^{\Tr} (U Z U^{\Tr} + (\alpha-1) U_{T} Z_{T} U_{T}^{\Tr})^{-1}] = \sum_{i=1}^{d} \frac{\alpha \mu_{i}}{1 + (\alpha-1) \mu_{i}} ,   \]
where $1 \geq \mu_1 \geq \cdots \geq \mu_d \geq 0$ are the eigenvalues of $U_{T} Z_{T} U_{T}^{\Tr} (U Z U^{\Tr})^{-1}$. The eigenvalues are in the range $[0,1]$ as the matrix is similar to a principal submatrix of a projection matrix, and the number of non-zero eigenvalues is precisely $\rk(U_T)$. Note that we do not have access to $\mu$, but only to evaluations of $h(\alpha)$ and $h'(\alpha)$. In \cref{l:updateRange}, we are able to use this implicit information to show that the solution lies in a small multiplicative range, either close to $1$ or close to $\gamma/\mu_{s}$ where $\mu_{s} := \sum_{i :\mu_{i} < 1/2} \mu_{i}$ is the sum of small eigenvalues. 

Unfortunately, we cannot compute $\mu_{s}$ in strongly polynomial time, but  we only need a polynomial approximation of this quantity in order to run Newton-Dinkelbach efficiently.
By some preprocessing, we can in fact reduce to the setting where there is a large gap between the small and large eigenvalues. 
In order to compute the spectral sum $\mu_{s}$, our goal is to approximate the subspace of large eigenvectors.
Under the gapped assumption, all the vectors are close to the large eigenspace, so we can approximate this subspace by the span of a column subset. To find a good column subset, we apply a classical combinatorial algorithm of Knuth \cite{Knuth} which finds a subset with large determinant, and adapt the analysis to show that this gives a sufficiently accurate (but very weak) eigendecomposition in our setting. 
Using the resulting initial guess with the Newton-Dinkelbach method allows us to efficiently compute the desired update. 

The final piece of the algorithm is a regularization step that we use to maintain bounded bit complexity. A similar idea was also used in \cite{DKT} for the strongly polynomial Forster transform. But once again, the fact that we only maintain the square of the right scaling $z \in \R^{n}_{++}$ instead of a matrix greatly simplifies our regularization procedure while allowing us to give a more refined bound. 
Formally, in \cref{t:Regularization}, we show that we can always reduce the condition number $\log \frac{z_{\max}}{z_{\min}}$ by shrinking the (multiplicative) gaps between the entries of $z$ while maintaining small error. In fact, we are able to relate the required magnitude and bit complexity of the scaling to the $\bar{\chi}$ condition measure of the frame $U$. This is an extremely well-studied concept in the linear programming literature (see e.g. \cite{ScalingInvIPM}), and importantly is a much more refined condition measure than just the bit complexity of the frame. We believe this relation and our regularization bounds are of independent interest.

\subsection{Relation to Previous Work} \label{ss:Comparison}

In this subsection, we compare our algorithm and analysis to the previous strongly polynomial scaling algorithms of \cite{LSW} and \cite{DKT}. As mentioned in \cref{ss:techniques}, the high-level iterative approach is similar, so we focus on the differences in implementation and analysis that lead to our improvements. 

We first give a formal definition of matrix scaling and compare it to frame scaling. Here, we are given a non-negative matrix $A \in \R^{m \times n}_{+}$, desired row and column sums $(r,c) \in \R^{m+n}_{++}$ and a tolerance $\eps > 0$, and the goal is to compute positive diagonal matrices $L \in \diag_{++}(m), R \in \diag_{++}(n)$ such that the row and column sums $(RA^{\Tr}L 1_m,LAR 1_n)$ of the rescaled matrix $LAR$ are within distance $\eps$ of $(r,c)$. To make the analogy to frame scaling clearer, for input $U \in \R^{d \times n}$, the column sum condition corresponds to the column norm condition on $U$, and the row sum condition corresponds to the isotropy condition. 
We will compare our results with the classical work of \cite{LSW}, which gave a strongly polynomial $\poly(n,m)\log(n/\eps)$-time algorithm for matrix scaling.

We recall the main components of the algorithm: 
\begin{enumerate}
\item Reduce from finding a left and right scaling $L \in \R^{d \times d}$ and $r \in \R^{n}_{++}$, to just the square of a right scaling $z = (r_{1}^{2}, ..., r_{n}^{2}) \in \R^{n}_{++}$. 
\item Use norm squared of the error $\|c-\lev^U(z)\|_2^2$ as our potential function, and attempt to decrease it by a $(1-O(1/n^3))$ multiplicative factor.
\item In each iteration, choose a subset $T \subseteq [n]$ with large ``margin" $\gamma > 0$ to scale up uniformly by a factor $\alpha \geq 1$. That is, $z \rightarrow z \circ (1_{\overline{T}} + \alpha 1_T)$. 
\item Relate the improvement to the proxy function $h_T^U(\alpha)$, the sum of the leverage scores inside $T$, and find a step-size $\alpha \geq 1$ satisfying $h_T(\alpha) \in [\gamma/5,\gamma]$ to get the desired geometric decrease in potential. 
\item Perform a regularization step to keep multiplicative range $\max_i z_i / \min_i z_i$ of the scaling uniformly bounded. This enables us to maintain polynomial bit complexity. 
\end{enumerate}

In the remainder, we will discuss each of the above steps in more detail, specifically how they compare across the different strongly polynomial algorithms, and how our new techniques give improved convergence. 

    \textbf{Representation of Scalings}: All three algorithms reduce to just one-sided scaling. In the matrix setting, for each right scaling $R \in \diag_{++}(n)$, there is a unique left scaling normalizing the rows, namely $L_{ii} := r_{i} / (A R 1_{n})_{i}$. 
    Note that we could equivalently work with column normalized matrices by transposing the input. 
    Therefore, the algorithm of \cite{LSW} maintains a right scaling while implicitly choosing the left scaling that matches the row condition. 

    In the frame setting, there are two inequivalent ways to perform this reduction: 
    any right scaling $R \in \diag_{++}(n)$ induces a unique isotropic left scaling $L := (U R^{2} U^{\Tr})^{-1/2}$; 
    and similarly, any left scaling $L \in \R^{d \times d}$ induces a unique right scaling satisfying the norm conditions, $r_{j}^{2} := c_{j}/\|L u_{j}\|_{2}^{2}$. 
    Our algorithm uses the first approach, whereas the algorithm of \cite{DKT} uses the second. 
    This ends up being a key reason that our algorithm and analysis is quite a bit simpler. As an example, our representation is a positive vector, whereas the algorithm of \cite{DKT} keeps track of the left scaling matrix. And the spectral properties of the left scaling become crucial in their algorithm (e.g. condition number and singular vectors), which require a much more complicated set of procedures, such as a strongly polynomial approximate eigendecomposition. 

    \textbf{Potential Function}: For the potential function, both our work and \cite{LSW} measure the norm square of the error of the column marginals, whereas the potential function in \cite{DKT} measures the error of the isotropy condition $V V^{\Tr} - I_{d}$, which is a matrix quantity. This difference is another source of difficulty for the Forster transform algorithm of \cite{DKT}, as updates no longer just increase or decrease norms, but affect the whole error matrix, and in particular its eigenvalues, in more complicated ways. 

    \textbf{Update Direction}: 
    Both our work and \cite{LSW} choose to uniformly scale up a set of columns with large margin as defined in \cref{ss:techniques}. 

    On the other hand, the potential function of \cite{DKT} is a matrix quantity, and therefore the relevant notion of ``gap" is with respect to its eigenvalues. Assuming an exact eigendecomposition method, the analogous update step would be to scale up the eigenspace corresponding to the small errors. While an exact eigendecomposition cannot be performed in strongly polynomial time, they developed a randomized power method based algorithm to compute a very accurate approximation of the small eigenspace. 
    
    \textbf{Update Step-Size}: 
    This is the simplest in the matrix scaling setting of \cite{LSW}, as the step-size can be computed explicitly in terms of the input matrix and current scalings. In \cref{s:matrixImprovement}, we show how to adapt our proxy function analysis to give an improved update step for matrix scaling that makes quadratically more progress in each step and reduces the number of iterations by the same factor. 
    
    In \cite{DKT}, the update is quite complicated for a few reasons: 
    because the error is a matrix quantity, it is harder to control how scaling up an eigenspace affects the error $\|V V^{\Tr} - I_{d}\|_{F}^{2}$. Specifically, there is an off-diagonal term in Lemma 3.3 of \cite{DKT} which detracts from the progress for large step sizes. 
    For this reason, the algorithm in \cite{DKT} breaks into two cases depending on how the off-diagonal term grows in terms of the scaling. In both cases, the algorithm chooses a fixed step-size that is guaranteed to make $\poly(\eps/nd)$ progress. This is one of the two main bottlenecks for the $\poly(1/\eps)$ runtime, as it is difficult to make geometric progress in the presence of the off-diagonal term. The other main bottleneck is the requirement of a strongly polynomial approximate eigendecomposition procedure. 

    In our work, we relate the error to a proxy function $h_{T}(\alpha)$, which is the sum of leverage scores in $T$ after scaling up by $\alpha \geq 1$. Then we use spectral properties of the column subset $V_{T}$ to compute the appropriate update step making geometric progress. Importantly, we only require a $\poly(n,d)$ factor approximation of the sum of the small eigen values (i.e., of size at most $1/2$), which corresponds to a very coarse eigendecomposition. We are therefore able to use a simple deterministic method to compute this simpler spectral quantity, as opposed to the full eigendecomposition required in \cite{DKT}. Further, our potential function analysis is tighter and allows us to make geometric progress in each iteration.  

    \textbf{Controlling Bit Size}: 
    The algorithm in \cite{LSW} does not use a regularization step, and instead relies on the fact that the magnitude of the scalings at most squares in each iteration. While squaring can double the bit length, the algorithm only requires the top $O(\log(n/\eps))$ most significant bits of each number (scaling or matrix entry) when computing the updates. This is because the algorithm only requires $(1+ \poly(\eps/nd))$ multiplicative approximations of partial row and column sums, and all of these quantities consist only of non-negative numbers. This allows them to use a floating point representation of the entries and scalings, keeping track of only the high order bits, and proving that the exponent of the floating point representation grows by only a polynomial factor throughout the algorithm. 
    
    On the other hand, the frame scaling setting requires performing matrix operations (pseudo-inverses, determinants, projections, rank computations) up to varying levels of accuracy, which are much more sensitive to small errors than approximating sums of non-negative numbers as in matrix scaling. Therefore, both our algorithm and that of \cite{DKT} involve an additional regularization step at the end to control the magnitude and bit complexity of scalings. The regularization procedure in \cite{DKT} is an iterative method involving approximate eigendecompositions and projections of point sets to subspaces. In our setting, we benefit from the fact that our scaling is just a vector of numbers, whereas the left scaling maintained in \cite{DKT} is a matrix. Therefore our regularization procedure just involves arithmetic and comparison operations on the right scaling $z \in \R^{n}_{++}$. 
    There are many previous results for bit complexity bounds for frame scaling in particular (see \cite{SinghVishnoi}, \cite{StraszakVishnoi}, \cite{IPMforGP}), and we use very similar techniques to achieve our bound. In particular, our procedure can be seen as an explicit algorithmic version of the results of \cite{StraszakVishnoi} and \cite{IPMforGP}, which give an implicit algorithm for a more general setting. 
    Furthermore, we are able to relate the required bit complexity to the $\bar{\chi}$ condition measure of the input, which gives a much more refined bound than just bit size (see \cref{s:regularization} for details). 

\subsection{Strong and Weak Polynomial Time} \label{ss:stronglyPoly}

The main result in this work is a faster strongly polynomial algorithm for frame scaling. In this subsection, we will more precisely define the computational model used in this work, as well as some subtleties that require clarification. 

The ``genuinely polynomial" model was first described by Meggido \cite{Megiddo}:
for input given as $n$ real numbers, a strongly polynomial algorithm can perform $\poly(n)$ elementary arithmetic and comparison operations; further, if the input has bit complexity $b$, then the algorithm must have polynomial space complexity, in the sense that intermediate quantities must remain $\poly(n,b)$ bit complexity. 
Subsequently, Smale introduced the real model of computation, where the machine has RAM access to real numbers and an oracle for performing exact arithmetic and comparison operations on these numbers. So a stricter definition is that a strongly polynomial algorithm should work in the real model while magically maintaining polynomial space complexity if the input is given in bounded bit representation. Note that in this interpretation, the algorithm does not have knowledge of the bit representation of the input.

    This work studies strongly polynomial algorithms for $\eps$-approximate frame scaling with $\log(1/\eps)$ convergence.
    Formally, our algorithm performs $\poly(n, \log(1/\eps))$ exact arithmetic and comparison operations in the real model. In order to maintain polynomial space complexity, we require a regularization procedure that rounds intermediate quantities to some precision depending on the bit complexity of the input frame. 
    Such a `rounding' oracle is not present in the strictest definition of strongly polynomial, but is present in slightly weaker variants that have been defined in \cite{GLS}. Further, such an operation is implicitly required in many other strongly polynomial algorithms in the literature. We give a more detailed discussion of this model in \cref{s:stronglyPoly}. 

\subsection{Notation}
We use $\R$, $\R_+$ and $\R_{++}$ to denote the reals, non-negative reals and positive reals respectively.
For $z \in \R^{n}$ we use $Z := \diag(z) \in \R^{n \times n}$. We use $\diag(n)$ and $\diag_{++}(n)$ to denote $n \times n$ diagonal matrices with real and positive real diagonal entries respectively. 
For vectors $\{u_{1}, ..., u_{n}\}$, we will sometimes abuse notation and let $U$ be the matrix with columns $(u_{1}, ..., u_{n})$. 
For univariate function $f(x)$ we use $\frac{d}{dx}$ to be the derivative with respect to $x$.
We use $\circ$ for the entry-wise (Hadamard) product of vectors. That is, $(x \circ y)_i = x_i y_i$ for $x,y \in \R^n$.
For set $T \subseteq [n]$ we use $1_{T}$ to be the indicator vector of a set, and $1_{n}$ to be the all-ones vector. 

\subsection{Organization}

The remainder of this work is structured as follows: in \cref{s:HighLevelAnalysis}, we present our main iterative algorithm and high level convergence analysis for the frame scaling problem (\cref{d:introForsterTransform}). In \cref{s:Guess}, we present and analyze a coarse eigendecomposition procedure that is used as a subroutine within each update iteration. In \cref{s:regularization} we describe and analyze the procedure to round scalings in order to maintain bounded bit complexity. In \cref{s:FinalProofs} we present the convergence analysis of our main algorithm using the guarantees of the previous sections. 
In \cref{s:matrixImprovement} we show how our new techniques give an improved result for strongly polynomial matrix scaling. In \cref{s:stronglyPoly} we give a more detailed discussion of the strongly polynomial model and how it relates to the algorithm presented here. 
Finally in \ref{s:learningTheory}, we discuss how our frame scaling algorithm fits into the known applications for halfspace learning. 

\section{Iterative Algorithm and Analysis} \label{s:HighLevelAnalysis}

In this section, we describe the main iterative algorithm used to solve the frame scaling problem with $\log(1/\eps)$ convergence. The update in each iteration scales up a particular subset uniformly in order to make progress. In the rest of this section, we describe how to choose the subset and step-size. Throughout this paper, we assume for simplicity that the input frames $U \in \R^{d \times n}$ are full row rank, though this assumption can be avoided by considering pseudo-inverses instead of inverses and replacing the factor $d$ with $\rk(U)$ in the analysis. 

We first observe that we can just keep track of the right scaling, as any fixed $r \in \R^{n}_{++}$ induces a unique $L := (U R^{2} U^{\Tr})^{-1/2}$ such that $V := L U R$ satisfies the isotropy condition $V V^{\Tr} = I_{d}$. 
Therefore, in the rest of the work, we will leave the left scaling implicit. For technical reasons, we will keep track of the square of the right scaling $z \in \R^{n}_{++}$, so the implicit left scaling is $L := (U Z U^{\Tr})^{-1/2}$. 
As we show in \ref{s:learningTheory}, this implicit representation is sufficient for downstream applications of frame scaling. 
With this reduction, we can focus on matching the norms of the columns. 



\begin{definition} \label{d:outlineLeverageScores}
    For frame $U \in \R^{d \times n}$, applying right scaling $z \in \R^{n}_{++}$ and transforming to isotropic position gives norms
    \[ \lev_{j}^{U}(z) := z_{j} u_{j}^{\Tr} (U Z U^{\Tr})^{-1} u_{j} = \|(U Z U^{\Tr})^{-1/2} u_{j} \sqrt{z_{j}} \|_{2}^{2} .  \]
    We let $\lev^{U}(z) \in \R^{n}_{+}$ denote the vector of norms.
\end{definition}

    These induced norms are known as leverage scores corresponding to the rows of the matrix $\sqrt{Z} U^{\Tr}$ and are a well-studied linear algebraic parameter with many applications. 
    Note that even though our left/right scalings involve square roots, the norms $\lev^{U}(z)$ can be computed exactly using matrix multiplication and inversion, both of which can be performed in strongly polynomial time \cite{Edmonds}. 

The following simple property of leverage scores will be useful throughout our analysis. 

\begin{fact} \label{f:outlineSumLevScores}
    For any frame $U \in \R^{d \times n}$ and scaling $z \in \R^{n}_{++}$,
    \[ \langle \lev^{U}(z), 1_{n} \rangle = \sum_{j \in [n]} z_{j} u_{j}^{\Tr} (U Z U^{\Tr})^{-1} u_{j} = \tr[ U Z U^{\Tr} (U Z U^{\Tr})^{-1}] = \tr[I_{d}] = d.  \]
\end{fact}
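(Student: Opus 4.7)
The plan is a one-line computation using the cyclic property of the trace, mirroring essentially what is already written in the displayed equation of the fact. I would begin by observing that each summand $z_j u_j^{\Tr} (UZU^{\Tr})^{-1} u_j$ is a scalar, so it coincides with its own trace, and by cyclicity of the trace it equals $\tr\bigl[z_j\, u_j u_j^{\Tr} (UZU^{\Tr})^{-1}\bigr]$. Summing over $j \in [n]$ and pulling the finite sum inside the trace by linearity rewrites the left-hand side as $\tr\bigl[\bigl(\sum_{j=1}^n z_j u_j u_j^{\Tr}\bigr)(UZU^{\Tr})^{-1}\bigr]$.

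Next, I would use the standard outer-product identity $\sum_{j=1}^n z_j u_j u_j^{\Tr} = UZU^{\Tr}$, which is just the column-wise expansion of the matrix product with $Z = \diag(z)$. Substituting, the expression reduces to $\tr\bigl[UZU^{\Tr}(UZU^{\Tr})^{-1}\bigr] = \tr[I_d] = d$.

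The only point to check, and really it is routine, is that $UZU^{\Tr}$ is invertible so that the inverse appearing in \cref{d:outlineLeverageScores} is well-defined. This follows from the standing assumption that $U$ has full row rank and the hypothesis $z \in \R^n_{++}$: the matrix $Z^{1/2} U^{\Tr}$ still has rank $d$, so $UZU^{\Tr} = (Z^{1/2}U^{\Tr})^{\Tr}(Z^{1/2}U^{\Tr})$ is positive definite. There is no genuine obstacle in this proof; the fact is stated separately because the identity $\langle \lev^U(z), 1_n\rangle = d$ will be invoked repeatedly in the potential analysis (e.g.\ when bounding margins and controlling the range of $h(\alpha)$ as $\alpha \to \infty$).
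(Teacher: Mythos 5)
Your proposal is correct and follows essentially the same route as the paper, which proves the fact by the same chain of equalities (cyclicity/linearity of the trace plus the outer-product identity $\sum_j z_j u_j u_j^{\Tr} = U Z U^{\Tr}$) displayed in the statement itself. Your added remark on invertibility of $U Z U^{\Tr}$ from full row rank and $z \in \R^n_{++}$ is a harmless and accurate elaboration of the paper's standing assumption.
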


This implies $\langle c, 1_{n} \rangle = d$ is a necessary condition for feasibility of $c \in \R^{n}_{++}$ according to \cref{t:frameExistence}, and we will assume this in the remainder. 
Therefore, by restricting to the isotropic setting, the frame scaling problem simplifies to: 

\begin{definition}  [Isotropic Frame Scaling Problem] \label{d:outlineIsoFrameScalingProblem}
    Given frame $U \in \R^{d \times n}$, marginals $c \in \R^{n}_{++}$ with $\langle c, 1_{n} \rangle = d$, and $\eps > 0$, output either (1) scaling $z \in \R^{n}_{++}$ such that $\|\lev^{U}(z) - c\|_{2}^{2} \leq \eps^{2}$; or (2) certificate of infeasibility: $T \subseteq [n]$ such that $\langle c, 1_{T} \rangle > \rk(U_{T})$. 
\end{definition}

We now present our main algorithm and convergence result. 

\begin{algorithm} \label{a:MainAlgorithm}
\caption{$\operatorname{Main}(U,c,\eps)$}

\KwIn{ Full row rank frame $U \in \R^{d \times n}$, marginals $c \in \R^{n}_{++}, \langle c, 1_{n} \rangle = d$, precision $\eps > 0$ }
\KwOut{ Scaling $z \in \R^{n}_{++}$ s.t. $\|\lev^{U}(z) - c\|_{2}^{2} \leq \eps^{2}$, or certificate of infeasibility: $\langle c, 1_{T} \rangle > \rk(U_{T})$ }
\While{ $\|\lev^{U}(z^{(t)}) - c\|_{2}^{2} > \eps^{2}$ } {
Sort $\lev^{U}_{1}(z^{(t)}) - c_{1} \leq ... \leq \lev^{U}_{n}(z^{(t)}) - c_{n}$\;
Let $k := \arg\max_{j \in [n]} \lev^{U}_{j+1}(z^{(t)}) - \lev^{U}_{j}(z^{(t)})$ be the index with largest gap\;
Choose $T = [k]$ with margin $\gamma := (\lev^{U}_{k+1}(z^{(t)}) - \lev^{U}_{k}(z^{(t)}))/2$ \;
\If{$\rk(U_{T}) < \langle c, 1_{T} \rangle$}{ 
\text{ Output Infeasible\; }
}
\Else{
$\hat{\alpha} \gets \text{SubsetScaleUp}(U,z^{(t)},T,\gamma)$ (Algorithm \ref{a:UpdateAlg})\;
$z^{(t+1)} \gets z^{(t)} \circ (1_{\overline{T}} + \hat{\alpha} 1_{T})$\;
$z^{(t+1)} \gets \text{Regularize}(z^{(t+1)}, \gamma/\poly(n,d))$ (\cref{t:Regularization})\;
$t \gets t+1$\;
}

 }
 Output $z^{(t)}$\;
\end{algorithm}

\begin{theorem} \label{t:mainAlgAnalysis}
    Algorithm \ref{a:MainAlgorithm} either finds a certificate of infeasibility $\langle c, 1_{T} \rangle > \rk(U_{T})$ or produces a scaling $z \in \R^{n}_{++}$ satisfying $\|\lev^{U}(z) - c\|_{2}^{2} \leq \eps^{2}$ in at most $O(n^{3} \log(n/\eps))$ iterations. Each iteration can be implemented using $O(n \log n)$ comparisons, and $O(n d^{2} \log n)$ matrix operations involving $d \times d$ matrices (multiplication, inverse, determinant), and $O(1)$ Gram-Schmidt operations on $d \times n$ matrices. Further, if the entries of $U$ have bit complexity $b$, then all intermediate scalings have bit complexity $O(n ( d b + \log(nd/\eps)))$.  
\end{theorem}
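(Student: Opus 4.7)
The plan is to combine the three analytic pillars announced in \cref{ss:techniques} — the margin bound \cref{p:outlinePolyGap}, the progress lemma \cref{l:outlineProgressLemma}, and the infeasibility dichotomy \cref{p:outlineInfeasible} — with the guarantee of the inner subroutine $\operatorname{SubsetScaleUp}$ and the regularization theorem \cref{t:Regularization}. Correctness is the easy part: the \textbf{if} branch only outputs ``Infeasible'' when the chosen prefix $T=[k]$ satisfies $\langle c,1_T\rangle > \rk(U_T)$, which by \cref{t:frameExistence} certifies infeasibility; and when the loop exits, $z^{(t)}$ satisfies $\|\lev^U(z^{(t)})-c\|_2^2\le \eps^2$ by the loop condition.

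The main quantitative step is bounding the iteration count. Let $\Phi(z):=\|\lev^U(z)-c\|_2^2$. The output specification of $\operatorname{SubsetScaleUp}$ will give $\hat{\alpha}\ge 1$ with $h(\hat{\alpha})-h(1)\in[\gamma/5,\gamma]$, so \cref{l:outlineProgressLemma} yields an additive drop $\Phi(z^{(t)})-\Phi(z^{(t)}\circ(1_{\bar T}+\hat\alpha 1_T))\ge 2\gamma(\gamma/5)=2\gamma^2/5$. Plugging in \cref{p:outlinePolyGap}, this is at least $\Phi(z^{(t)})/(5n^3)$, giving a $(1-1/(5n^3))$ multiplicative contraction per iteration before regularization. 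Since $\langle \lev^U(z),1_n\rangle=d=\langle c,1_n\rangle$ by \cref{f:outlineSumLevScores}, the initial potential is $\Phi(z^{(0)})\le 2d^2\le 2n^2$, so $O(n^3\log(n/\eps))$ contractions suffice to reach $\Phi\le \eps^2/2$. I will then absorb the regularization error: since \cref{t:Regularization} is invoked at tolerance $\gamma/\poly(n,d)$ and \cref{p:outlinePolyGap} gives $\gamma\lesssim \sqrt{\Phi}$, the induced perturbation to $\Phi$ is a $1/\poly(n,d)$ fraction of the prior drop and can be folded into the constants. Finally, the contrapositive of \cref{p:outlineInfeasible} — coupled with the observation that $h$ is continuous, non-decreasing, with $h(1)\le\langle c,1_T\rangle-\gamma$ and $\lim_{\alpha\to\infty}h(\alpha)=\rk(U_T)$ — shows that whenever $\rk(U_T)\ge \langle c,1_T\rangle$ the subroutine actually finds such an $\hat\alpha$, so the algorithm only reports infeasibility when warranted.

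For the per-iteration cost, sorting the error vector takes $O(n\log n)$ comparisons, while computing $\lev^U(z^{(t)})$ costs a single $d\times d$ inversion plus $n$ quadratic forms. The bulk is hidden in $\operatorname{SubsetScaleUp}$: I will cite that $\operatorname{Knuth}$'s subset selection requires $O(1)$ Gram--Schmidt passes on $d\times n$ matrices and returns an initial $\alpha_0$ within a $\poly(n,d)$ factor of the target root $\alpha_*$, as established by \cref{l:updateRange}. The Newton--Dinkelbach phase then converges in $O(\log(1+\alpha_*/\alpha_0))=O(\log n)$ steps, each evaluating $h(\alpha)$ and $h'(\alpha)$ via constantly many $d\times d$ inversions/multiplications; testing $\rk(U_T)$ versus $\langle c,1_T\rangle$ costs one further determinant/rank computation. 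Tallying yields $O(nd^2\log n)$ $d\times d$ matrix operations as claimed.

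The delicate part I expect to spend the most care on is the bit complexity bound. A single update $z\mapsto z\circ(1_{\bar T}+\hat\alpha 1_T)$ with $\hat\alpha$ as large as $\poly(n,d)/\gamma$ can double the bit length of each $z_i$, so over $O(n^3\log(n/\eps))$ iterations the raw representation blows up uncontrollably. The regularization step of \cref{t:Regularization} is what saves us: after each update it rounds $z^{(t+1)}$ at precision $\gamma/\poly(n,d)$ while preserving $\Phi$ up to the absorbable error above, and it caps the multiplicative range $z_{\max}/z_{\min}$ by a function of the input $\bar\chi$-condition measure. Combining the $O(db)$-bit bound on this range (via the $\bar\chi$ estimate used in \cref{t:Regularization}) with an additional $O(\log(n/\eps))$ bits for the rounding precision, and then observing that each individual entry accumulates at most $O(n)$ rounding interactions across the algorithm's lifetime, I will conclude the stated $O(n(db+\log(nd/\eps)))$ bit bound on all intermediate $z^{(t)}$. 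The main obstacle is verifying that the rounding error introduced by \cref{t:Regularization} at each step does not compound across the $O(n^3\log(n/\eps))$ iterations in a way that erodes the multiplicative contraction; this will require a careful inductive argument showing that the rounded iterate still satisfies the prefix-margin structure needed for the next iteration.
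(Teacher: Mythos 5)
Your proposal follows essentially the same route as the paper: the margin bound (\cref{p:outlinePolyGap}), the progress lemma (\cref{l:outlineProgressLemma}) applied to the update returned by the subroutine (\cref{t:updateGuarantee}), infeasibility certified via \cref{p:outlineInfeasible} and \cref{t:frameExistence}, absorption of the regularization error at precision $\delta = \gamma/\poly(n,d)$, and a bit bound through the $\rho$/$\bar{\chi}$ condition measure. Two places in your write-up need repair, though the cited results already supply what is missing.

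First, the absorption step uses \cref{p:outlinePolyGap} in the opposite direction from the one you quoted: what you need is $\|\lev^{U}(z)-c\|_{2} \leq \sqrt{2n^{3}}\,\gamma$, not ``$\gamma \lesssim \sqrt{\Phi}$''. The perturbation of the squared potential caused by regularization is bounded by (sum of the two norms) times (change in norm), i.e. roughly $(\sqrt{\Phi}+nd\delta)\cdot nd\delta$, and it is the upper bound $\sqrt{\Phi}\lesssim n^{3/2}\gamma$ that lets a choice like $\delta=\gamma/(15 n^{5/2} d)$ cap this by $\gamma^{2}/5$ against the drop of $2\gamma^{2}/5$. Also, the ``careful inductive argument'' you anticipate is unnecessary: the prefix/margin structure is recomputed from the current iterate at the start of every iteration, and the regularization perturbation is charged against that same iteration's drop, so nothing compounds across iterations.

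Second, your mechanism for the factor $n$ in the bit bound (``each entry accumulates at most $O(n)$ rounding interactions across the algorithm's lifetime'') does not work as stated -- there are $O(n^{3}\log(n/\eps))$ iterations, so lifetime accumulation cannot yield a factor $n$. The factor $n$ arises within a single regularization call: \cref{t:Regularization} (made algorithmic in \cref{c:RegularizationBit}) guarantees $\log(\hat{z}_{\max}/\hat{z}_{\min}) \leq n\log(\rho(U)/\delta)$, one $\log(\rho/\delta)$ term per prefix gap that is shrunk, and combining this with $\log \rho(U) \leq O(d(b+\log d)+\log n)$ from \cref{t:VYChiBound} and $\delta=\Omega(\gamma/\poly(nd))$, $\gamma=\Omega(\eps)$, gives the stated $O(n(db+\log(nd/\eps)))$ per-entry bound at the end of each iteration; \cref{p:iterateUB} handles the growth between the update and the regularization. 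A smaller accounting note: the $O(1)$ Gram-Schmidt operations on $d\times n$ matrices come from the rank/feasibility test and the regularization step, while the dominant $O(nd^{2}\log n)$ count of $d\times d$ matrix operations is due to the determinant local search inside Algorithm \ref{a:GuessAlgorithm}; attributing $O(1)$ Gram-Schmidt passes to the Knuth-style subset selection misplaces where the work actually is, even though your totals match the theorem.
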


As stated in the introduction, our algorithm requires $O(n^{3} \log(n/\eps))$ iterations, as compared to the $O(n^{5} d^{11}/\eps^{5})$ iteration bound given in \cite{DKT}. We can also compare the runtime of each iteration, and we state this in terms of matrix operations, such as multiplication, inverse, and determinant. In each iteration, our algorithm requires $O(n d^{2} \log n)$ matrix operations on $d \times d$ matrices (the precise runtime of each component of the algorithm is presented in more detail within each relevant section). Each iteration in \cite{DKT} requires at least $O(nd + d^{6} \log(d)/\eps^{2})$ matrix operations on $d \times d$ matrices.

We defer the proof of \cref{t:mainAlgAnalysis} to \cref{s:FinalProofs} after we have collected all components. 
In the remainder of this section, we motivate the steps in Algorithm \ref{a:MainAlgorithm}.

In order to decrease the error $\|\lev^{U}(z) - c\|_{2}^{2}$ while satisfying the strongly polynomial requirement, we choose a combinatorial update step: uniformly scale up the coordinates in a set. The appropriate choice of set is the one maximizing ``margin" defined below. A similar notion appears in both the original matrix scaling algorithm of \cite{LSW} and the Forster transform algorithm of \cite{DKT}. 

\begin{definition} \label{d:SetMargin}
    For input ($U \in \R^{d \times n}, z \in \R^{n}_{++}, c \in \R^{n}_{+}$) the margin of $T \subseteq [n]$ is defined as the largest $\gamma \geq 0$ such that there exists a threshold $\nu \in \R$ satisfying 
    \[ \max_{j \in T} (\lev_{j}^{U}(z) - c_{j}) \leq \nu - \gamma \leq \nu + \gamma \leq \min_{j \not\in T} (\lev_{j}^{U}(z) - c_{j}) .    \]
\end{definition}

We note that $\lev, c \in [0,1]^{n}$, so $\lev - c \in [-1,1]^{n}$ and therefore the margin is at most $1$. 
We will relate the decrease in error to the margin of $T$, so $\gamma$ controls how much progress we can make by scaling up $T$. 
   The next result shows that we can compute a set with large margin:

    \begin{prop} \label{p:outlinePolyGap}
For input ($U \in \R^{d \times n}, z \in \R^{n}_{++}, c \in \R^{n}_{+}$), if $\lev^{U}(z) - c$ is in non-decreasing order $T \subseteq [n]$ is the prefix set with largest margin $\gamma$, then 
\[ \gamma^{2} \geq \frac{1}{2 n^{3}} \|\lev^{U}(z) - c\|_{2}^{2} .    \]
\end{prop}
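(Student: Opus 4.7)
The plan is to reduce the bound to a purely combinatorial inequality about a mean-zero sorted vector whose consecutive gaps are all at most $2\gamma$.

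First I would set $e_j := \lev_j^U(z) - c_j$ so that, by the hypothesis, $e_1 \leq e_2 \leq \dots \leq e_n$. The key global constraint is that $\sum_{j=1}^n e_j = 0$: this follows from \cref{f:outlineSumLevScores}, which gives $\langle \lev^U(z), 1_n\rangle = d$, combined with the standing assumption $\langle c, 1_n\rangle = d$.

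Next I would unpack what ``largest margin prefix'' means in the sorted setting. For $T = [k]$, the definition of margin (\cref{d:SetMargin}) is maximized by taking $\nu = (e_k + e_{k+1})/2$, giving margin exactly $(e_{k+1}-e_k)/2$. Since $T$ is the prefix with largest margin, $2\gamma = \max_{k \in [n-1]}(e_{k+1}-e_k)$, and so $e_{k+1} - e_k \leq 2\gamma$ for every $k$. By a telescoping sum, this yields the Lipschitz-type bound
\[
  |e_i - e_j| \;\leq\; 2\gamma\,|i-j| \qquad \text{for all } i,j \in [n].
\]

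The main step is to convert $\|e\|_2^2$ into pairwise differences using the mean-zero property. The standard identity (which follows from $\sum_i e_i = 0$ via expanding $\sum_{i,j}(e_i - e_j)^2 = 2n\|e\|_2^2 - 2(\sum_i e_i)^2$) gives
\[
   \|e\|_2^2 \;=\; \frac{1}{2n}\sum_{i,j=1}^n (e_i - e_j)^2.
\]
Substituting the Lipschitz bound and using the crude estimate $\sum_{i,j}(i-j)^2 \leq n^2(n-1)^2 \leq n^4$ (or the exact $n^2(n^2-1)/6$), I get
\[
   \|e\|_2^2 \;\leq\; \frac{1}{2n}\sum_{i,j}(2\gamma)^2(i-j)^2 \;\leq\; \frac{2\gamma^2}{n}\cdot n^4 \;=\; 2\gamma^2 n^3,
\]
which rearranges to the desired $\gamma^2 \geq \|e\|_2^2/(2n^3)$.

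There is not really a hard step here; the only content is recognizing that sorting plus the mean-zero condition reduces everything to a one-dimensional Lipschitz estimate. The most likely source of a slip-up is the margin definition boundary case (empty or full $T$), which I would sidestep by noting that the algorithm only considers prefixes $T=[k]$ with $k \in [n-1]$, so $\gamma$ really is the half-maximum consecutive gap.
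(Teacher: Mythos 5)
Your proof is correct and follows essentially the same route as the paper: both use the mean-zero identity $2n\|x\|_2^2 = \sum_{i,j}(x_i-x_j)^2$ (from $\langle \lev^U(z),1_n\rangle = d = \langle c,1_n\rangle$) together with the fact that the maximal consecutive gap is $2\gamma$. The only cosmetic difference is that you bound each term via the Lipschitz estimate $|e_i-e_j|\leq 2\gamma|i-j|$ and sum, whereas the paper bounds every pairwise difference by the single worst case $2n\gamma$; both give the same $2n^3$ factor.
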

\begin{proof}
    We will use the property $\langle \lev^{U}(z), 1_{n} \rangle = d$ by \cref{f:outlineSumLevScores} and $\langle c, 1_{n} \rangle = d$ by assumption. Letting $x := \lev^{U}(z) - c$ for shorthand, we have
    \[ 2n \|x\|_{2}^{2} = \sum_{i,j \in [n]} (x_{i} - x_{j})^{2} \leq n^{2} \max_{i,j \in [n]} (x_{i} - x_{j})^{2} \leq n^{2} (2 n \gamma)^{2} = 4 n^{4} \gamma^{2} ,     \]
    where the first step uses $\langle x, 1_{n} \rangle = 0$, and the fourth step uses that the max margin is $\gamma$ so the maximum distance between two consecutive values is $\leq 2 \gamma$. Rearranging gives the result. 
\end{proof}

\begin{remark}
    This can be compared with Lemma 4.7 in \cite{LSW}, where they give a very similar argument for the matrix scaling setting. 
    
    On the other hand, the improving step in Algorithm 2 of \cite{DKT} involves increasing and decreasing quantities. Therefore the update step in \cite{DKT} is much more complicated and requires a strongly polynomial and highly accurate approximate eigendecomposition algorithm. 
\end{remark}

    In the following \cref{ss:ReqUpdate} we discuss how we use the margin to analyze the improvement of $\|\lev - c\|_{2}^{2}$ due to uniformly scaling up subset $T$. Then in the remainder of the section, we give a strongly polynomial algorithm to compute an update making sufficient progress.

\subsection{Update Requirement} \label{ss:ReqUpdate}

    In this section we give explicit conditions for our chosen update step to make sufficient progress in decreasing the error $\|\lev - c\|_{2}^{2}$. We also show that these conditions are always satisfied when the input is feasible, and otherwise the set $T$ provides a certificate of infeasibility. 

    Our plan is to show that our uniform scaling of the coordinates $z_{T}$ decreases the margin $\gamma$, which leads to a decrease in the error. We first observe that the uniform scaling has a simple effect on the leverage scores. 

    \begin{fact} \label{f:LevScoreScaleUp}
        $\lev^{U}_{j}(z \circ (1_{\overline{T}} + \alpha 1_{T}))$ is increasing in $\alpha$ for $j \in T$, and decreasing in $\alpha$ for $j \not\in T$. 
    \end{fact}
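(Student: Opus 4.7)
The plan is to reduce both statements to elementary monotonicity properties of the matrix function $\alpha \mapsto (M_0 + \alpha M_T)^{-1}$, where $M_0 := U_{\overline{T}} Z_{\overline{T}} U_{\overline{T}}^{\Tr}$ and $M_T := U_T Z_T U_T^{\Tr}$. Observe that if $z(\alpha) := z \circ (1_{\overline{T}} + \alpha 1_T)$, then $U Z(\alpha) U^{\Tr} = M_0 + \alpha M_T$, so for every $j$ we have $\lev_j^U(z(\alpha)) = z_j(\alpha)\, u_j^{\Tr}(M_0 + \alpha M_T)^{-1} u_j$. Since $M_0, M_T \succeq 0$ and $U Z(\alpha) U^{\Tr} \succ 0$ for $\alpha > 0$ (using full row rank of $U$), the matrix $(M_0 + \alpha M_T)$ is positive definite throughout, so all expressions are well defined.

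For $j \notin T$, the factor $z_j(\alpha) = z_j$ is constant, so monotonicity follows from the fact that $\alpha \mapsto M_0 + \alpha M_T$ is non-decreasing in the Loewner order, hence its inverse is non-increasing, hence the scalar $u_j^{\Tr}(M_0 + \alpha M_T)^{-1} u_j$ is non-increasing. (Equivalently, differentiating gives $\frac{d}{d\alpha}(M_0 + \alpha M_T)^{-1} = -(M_0 + \alpha M_T)^{-1} M_T (M_0 + \alpha M_T)^{-1} \preceq 0$.)

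The case $j \in T$ is the main obstacle, since now we multiply an increasing factor $\alpha$ with a decreasing quadratic form. The cleanest resolution is to use the \emph{scale invariance} of leverage scores: for any $\lambda > 0$, $\lev^U(\lambda z) = \lev^U(z)$, because both $Z$ and $(UZU^{\Tr})^{-1}$ pick up compensating factors of $\lambda$ and $1/\lambda$. Applied to $\lambda = 1/\alpha$, this gives
\[ \lev_j^U\bigl(z \circ (1_{\overline{T}} + \alpha 1_T)\bigr) = \lev_j^U\bigl(z \circ (\tfrac{1}{\alpha} 1_{\overline{T}} + 1_T)\bigr), \]
which reduces scaling $T$ up by $\alpha$ to scaling $\overline{T}$ down by $1/\alpha$. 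Under the substitution $\beta = 1/\alpha$, the roles of $T$ and $\overline{T}$ are interchanged and the claim for $j \in T$ becomes exactly the case already proved: as $\alpha$ grows, $\beta$ shrinks, so scaling $\overline{T}$ down by $\beta$ makes the leverage score of $j \in T$ (now in the complement of the scaled set) non-decreasing by the previous paragraph. This completes both cases in a unified way. Alternatively, one can bypass the scale-invariance trick by direct differentiation, using the identity $I - \alpha M_T (M_0 + \alpha M_T)^{-1} = M_0 (M_0 + \alpha M_T)^{-1}$ to rewrite $\tfrac{d}{d\alpha}\bigl[\alpha u_j^{\Tr}(M_0+\alpha M_T)^{-1} u_j\bigr] = (Nu_j)^{\Tr} M_0 (N u_j) \geq 0$ with $N := (M_0 + \alpha M_T)^{-1}$.
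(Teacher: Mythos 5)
Your proposal is correct. The case $j \notin T$ coincides with the paper's argument (Loewner monotonicity of $\alpha \mapsto (M_{\overline{T}} + \alpha M_T)^{-1}$, equivalently the derivative $-(M_{\overline{T}}+\alpha M_T)^{-1}M_T(M_{\overline{T}}+\alpha M_T)^{-1} \preceq 0$), but your main treatment of $j \in T$ takes a genuinely different route: the paper differentiates the matrix-valued function $\alpha \mapsto \alpha(M_{\overline{T}}+\alpha M_T)^{-1}$ and shows the derivative equals $(M_{\overline{T}}+\alpha M_T)^{-1}M_{\overline{T}}(M_{\overline{T}}+\alpha M_T)^{-1} \succeq 0$, whereas you invoke the scale invariance $\lev^U(\lambda z) = \lev^U(z)$ with $\lambda = 1/\alpha$ to rewrite scaling $T$ up as scaling $\overline{T}$ down, which reduces the increasing case to the decreasing case already handled. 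Your trick is more elementary and unifies the two cases — it needs nothing beyond operator monotonicity of the inverse and the homogeneity of leverage scores, with no product-rule matrix calculus. What the paper's computation buys is the explicit derivative formulas themselves, which are reused later (e.g.\ the closed-form expressions for $h(\alpha)$ and $h'(\alpha)$ in \cref{r:hStronglyPoly} and the eigenvalue decomposition in \cref{l:outlineHParts}), so the direct differentiation is not wasted effort in the paper's development; your closing alternative via the identity $I - \alpha M_T(M_0+\alpha M_T)^{-1} = M_0(M_0+\alpha M_T)^{-1}$ is in fact exactly the paper's calculation, so you have both arguments available.
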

    \begin{proof}
    We use $M_{T} := U_{T} Z_{T} U_{T}^{\Tr}, M_{\overline{T}} := U_{\overline{T}} Z_{\overline{T}} U_{\overline{T}}^{\Tr}$ for shorthand and calculate the following matrix derivatives, using $\frac{\rm d}{\rm d\alpha}$ for univariate derivative with respect to $\alpha$:

    \begin{align*}
        \frac{\rm d}{\rm d\alpha} ( M_{\overline{T}} + \alpha M_{T})^{-1} & = - ( M_{\overline{T}} + \alpha M_{T})^{-1} M_{T} ( M_{\overline{T}} + \alpha M_{T})^{-1} \preceq 0 , 
        \\ \frac{\rm d}{\rm d\alpha} \alpha ( M_{\overline{T}} + \alpha M_{T})^{-1} 
        & = ( M_{\overline{T}} + \alpha M_{T})^{-1} - \alpha \frac{\rm d}{\rm d\alpha} ( M_{\overline{T}} + \alpha M_{T})^{-1}
        \\ & = ( M_{\overline{T}} + \alpha M_{T})^{-1} ( M_{\overline{T}} + \alpha M_{T} - \alpha M_{T} ) ( M_{\overline{T}} + \alpha M_{T})^{-1} 
        \\ & = ( M_{\overline{T}} + \alpha M_{T})^{-1} M_{\overline{T}} ( M_{\overline{T}} + \alpha M_{T})^{-1} \succeq 0 , 
    \end{align*}
    where we used the fact $\frac{\rm d}{\rm d\alpha} X_{\alpha}^{-1} = - X_{\alpha}^{-1} (\frac{\rm d}{\rm d\alpha} X_{\alpha}) X_{\alpha}^{-1}$, and the inequalities were because $M_{T}, M_{\overline{T}} \succeq 0$. 

    We can now use this to show the required properties of the leverage score: 
    \begin{align*}
        \lev^{U}_{j \not\in T}(z \circ (1_{\overline{T}} + \alpha 1_{T})) & = 
        z_{j} u_{j}^{\Tr} ( U_{\overline{T}} Z_{\overline{T}} U_{\overline{T}}^{\Tr} + \alpha U_{T} Z_{T} U_{T}^{\Tr})^{-1} u_{j} 
        = z_{j} u_{j}^{\Tr} ( M_{\overline{T}} + \alpha M_{T})^{-1} u_{j} , 
        \\ \lev^{U}_{j \in T}(z \circ (1_{\overline{T}} + \alpha 1_{T})) & = 
        \alpha z_{j} u_{j}^{\Tr} ( U_{\overline{T}} Z_{\overline{T}} U_{\overline{T}}^{\Tr} + \alpha U_{T} Z_{T} U_{T}^{\Tr})^{-1} u_{j} 
        = z_{j} u_{j}^{\Tr} \alpha ( M_{\overline{T}} + \alpha M_{T})^{-1} u_{j}  ,
    \end{align*} 
    by \cref{d:outlineLeverageScores} of $\lev$ and the definitions of $M_{T}, M_{\overline{T}}$. Therefore $\lev_{j \not\in T}$ is non-increasing as the derivative is the quadratic form of a negative semi-definite matrix, and $\lev_{j \in T}$ is non-decreasing as the derivative is the quadratic form of a positive semi-definite matrix. 
    \end{proof}

    This implies the margin is decreasing, as all leverage scores less than the threshold are increasing and all leverage scores greater than the threshold are decreasing. We want to use this to show the error is also decreasing, so we define the following natural proxy function as a way to measure progress: 

    \begin{definition} \label{d:ProgressFunction}
    For frame $U \in \R^{d \times n}$ and scaling $z \in \R^{n}_{++}$, the following function is used to measure the effect of uniformly scaling up $T \subseteq [n]$:
    \[ h^{U,z}_{T}(\alpha) := \sum_{j \in T} \lev^{U}_{j}(z \circ (1_{\overline{T}} + \alpha 1_{T})) 
    = \tr[\alpha U_{T} Z_{T} U_{T}^{\Tr} (U_{\overline{T}} Z_{\overline{T}} U_{\overline{T}}^{\Tr} +\alpha U_{T} Z_{T} U_{T}^{\Tr})^{-1}]
       \]
    \end{definition}

    \begin{remark} \label{r:hStronglyPoly}
        For our update procedure, we will need to perform computations involving our proxy function in strongly polynomial time. We note the following explicit expressions: 
        \[ h(\alpha) = \tr[ \alpha M_{T} (M_{\overline{T}} + \alpha M_{T})^{-1} ], \qquad h'(\alpha) = \tr[ M_{T} (M_{\overline{T}} + \alpha M_{T})^{-1} M_{\overline{T}} (M_{\overline{T}} + \alpha M_{T})^{-1}]  ,  \]
        where we use the shorthand $M_{T} = U_{T} Z_{T} U_{T}^{\Tr}, M_{\overline{T}} = U_{\overline{T}} Z_{T} U_{\overline{T}}^{\Tr}$ and calculations given in \cref{f:LevScoreScaleUp}. 
        Therefore both $h, h'$ can be computed using simple matrix multiplication and inversion. 
    \end{remark}

    With these definitions in hand, we prove a structural lemma showing that the update step decreases the error proportional to the margin of the set and the proxy function $h$. This simple lemma is the key to our improved iteration bound.  

\begin{lemma} \label{l:outlineProgressLemma}
    For input ($U \in \R^{d \times n}, z \in \R^{n}_{++}, c \in \R^{n}_{+}$), let $T \subseteq [n]$ have margin $\gamma$ according to \cref{d:SetMargin} and let $h := h^{U,z}_{T}$ be the progress function given in \cref{d:ProgressFunction}. 
    Then for any scaling $z' := z \circ (1_{\overline{T}} + \alpha 1_{T})$ with $\alpha \geq 1$ satisfying $0 \leq h(\alpha) - h(1) \leq \gamma$, 
    \[ \|\lev^{U}(z) - c\|_{2}^{2} - \|\lev^{U}(z') - c\|_{2}^{2} \geq 2 \gamma (h(\alpha) - h(1)) .  \]
\end{lemma}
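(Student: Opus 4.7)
The plan is to directly expand the difference of squared errors and split the contribution by the partition $T, \bar T$. Set $\ell := \lev^U(z)$, $\ell' := \lev^U(z')$, and $\delta_j := \ell_j - \ell'_j$. Two preliminary observations drive everything. First, by \cref{f:outlineSumLevScores}, $\sum_j \ell_j = \sum_j \ell'_j = d$, so $\sum_j \delta_j = 0$. Second, by \cref{f:LevScoreScaleUp}, scaling up the $T$-coordinates can only increase $\ell_j$ for $j \in T$ and decrease $\ell_j$ for $j \notin T$; hence $\delta_j \leq 0$ on $T$ and $\delta_j \geq 0$ on $\bar T$. Writing $D := h(\alpha) - h(1) = \sum_{j \in T}(\ell'_j - \ell_j)$, the sign information together with the zero-sum identity gives the crucial accounting
\[
\sum_{j \in T}|\delta_j| \;=\; \sum_{j \notin T}|\delta_j| \;=\; D.
\]

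Next, I would use the elementary identity $a^2 - b^2 = (a-b)(a+b)$ on each coordinate and rewrite in terms of $\ell_j - c_j$:
\[
\|\ell - c\|_2^2 - \|\ell' - c\|_2^2 \;=\; \sum_j \delta_j\bigl(2(\ell_j - c_j) - \delta_j\bigr) \;=\; 2\sum_j \delta_j(\ell_j - c_j) \;-\; \sum_j \delta_j^2.
\]
This splits the problem into lower-bounding a linear term and upper-bounding a quadratic term.

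For the linear term, I would invoke the margin property of $T$ from \cref{d:SetMargin}: there exists $\nu$ with $\ell_j - c_j \leq \nu - \gamma$ for $j \in T$ and $\ell_j - c_j \geq \nu + \gamma$ for $j \notin T$. Multiplying these inequalities by $\delta_j$, paying attention to the sign of $\delta_j$ on each side, and summing yields
\[
\sum_j \delta_j(\ell_j - c_j) \;\geq\; (\nu-\gamma)\sum_{j \in T}\delta_j + (\nu+\gamma)\sum_{j \notin T}\delta_j \;=\; -(\nu-\gamma)D + (\nu+\gamma)D \;=\; 2\gamma D,
\]
where the $\nu$ contributions cancel thanks to $\sum_j \delta_j = 0$. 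For the quadratic term, the bound $\|x\|_2^2 \leq \|x\|_1^2$ applied separately on $T$ and $\bar T$ gives $\sum_{j \in T}\delta_j^2 \leq D^2$ and $\sum_{j \notin T}\delta_j^2 \leq D^2$, so $\sum_j \delta_j^2 \leq 2D^2$.

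Combining the two bounds yields $\|\ell - c\|_2^2 - \|\ell' - c\|_2^2 \geq 4\gamma D - 2D^2$, and the assumption $D \leq \gamma$ turns this into $\geq 2\gamma D$, which is the desired inequality. I do not anticipate a real obstacle here: the only subtle point is making sure the sign-handling in the linear-term argument is correct (the margin inequalities point in opposite directions, and so do the signs of $\delta_j$, and it is exactly this alignment that produces the $+2\gamma D$ rather than a cancellation). The upper bound on the quadratic term uses only $\|\cdot\|_2 \leq \|\cdot\|_1$, and the cap $D \leq \gamma$ in the hypothesis is precisely calibrated so that the quadratic error does not overwhelm the linear gain, which indicates why the SubsetScaleUp routine aims at an $\alpha$ with $D \in [\gamma/5, \gamma]$.
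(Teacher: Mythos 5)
Your proof is correct and follows essentially the same route as the paper's: expand the difference of squared errors, use the margin inequalities relative to the threshold $\nu$ (which cancels because $\sum_j \delta_j = 0$ — exactly the paper's trick of subtracting $\nu 1_n$ from both errors), and invoke $h(\alpha)-h(1) \leq \gamma$ so that the second-order effect cannot wipe out the linear gain. The only organizational difference is in that last step: the paper bounds the combined coefficient $2\nu - (\lev_j - c_j) - (\lev'_j - c_j)$ entrywise by $\gamma$, using that each leverage score moves by at most $h(\alpha)-h(1) \leq \gamma$ and hence never crosses the threshold, whereas you split off the quadratic term $\sum_j \delta_j^2 \leq 2D^2$ and absorb it at the end, which even yields the marginally stronger intermediate bound $4\gamma D - 2D^2$.
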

\begin{proof}
    For shorthand, let $\lev := \lev^{U}(z)$ and $\lev' := \lev_{j}^{U}(z')$. 
    Then
    \begin{align*}
        \|\lev - c\|_{2}^{2} - \|\lev' - c\|_{2}^{2} 
        & = \|\lev - c - \nu 1_{n}\|_{2}^{2} - \|\lev' - c - \nu 1_{n}\|_{2}^{2} 
        = \sum_{j \in [n]} \Big( (\lev_{j} - \lev'_{j})^{2} + 2 (\lev_{j} - \lev'_{j}) (\lev'_{j} - c_{j} - \nu) \Big)
        \\ & = \sum_{j \in T} (\lev'_{j} - \lev_{j}) \Big(2 \nu - (\lev_{j} - c_{j}) - (\lev'_{j} - c_{j}) \Big)
        + \sum_{j \not\in T} (\lev_{j} - \lev'_{j}) \Big((\lev_{j} - c_{j}) + (\lev'_{j} - c_{j}) - 2 \nu \Big)
    \end{align*}
    where in the first step we subtract the threshold difference of norms as $\langle \lev - c, 1_{n} \rangle = \langle \lev' - c, 1_{n} \rangle$ by \cref{f:outlineSumLevScores}.  

    Now we focus on the first sum. Since we are uniformly scaling up $T$ with $\alpha \geq 1$, we have $\lev'_{j} \geq \lev_{j}$ for all $j \in T$ by \cref{f:LevScoreScaleUp}. Further, by definition of margin, we have $\nu - (\lev_{j} - c_{j}) \geq \gamma$ for all $j \in T$.
 
    Finally, our condition on $h := h_{T}^{U,z}$ implies 
    \[ \lev'_{j} - \lev_{j} \leq \sum_{j \in T} (\lev'_{j} - \lev_{j}) = h(\alpha) - h(1) \leq \gamma    ,    \]
    where we used $\lev'_{j} \geq \lev_{j}$ for all $j \in T$ by \cref{f:LevScoreScaleUp} and our assumption $h(\alpha) - h(1) \leq \gamma$. 

    This implies $\nu - (\lev'_{j} - c_{j}) \geq 0$ for all $j \in T$, so we can lower bound 
    \[ \sum_{j \in T} (\lev'_{j} - \lev_{j}) \Big(2 \nu - (\lev_{j} - c_{j}) - (\lev'_{j} - c_{j}) \Big) \geq \sum_{j \in T} (\lev'_{j} - \lev_{j}) \gamma .  \]
    We can use a symmetric argument to lower bound the $j \not\in T$ terms,
    noting $\lev'_{j} \leq \lev_{j}, (\lev_{j} - c_{j}) - \nu \geq \gamma$ in this case, along with the fact that $\langle 1_{\overline{T}}, \lev - \lev' \rangle = \langle 1_{T}, \lev' - \lev \rangle$ by \cref{f:outlineSumLevScores}. 

    Putting these together, we have
    \begin{align*}
        \|\lev - c\|_{2}^{2} - \|\lev' - c\|_{2}^{2} 
        \geq \gamma \bigg( \sum_{j \in T} (\lev_{j}' - \lev_{j}) + \sum_{j \not\in T} (\lev_{j} - \lev'_{j}) \bigg) 
        = 2 \gamma \sum_{j \in T} (\lev_{j}' - \lev_{j}) = 2 \gamma (h(\alpha) - h(1)) , 
    \end{align*}  
    where in the second step we used $\langle \lev' - \lev, 1_{T} \rangle = \langle \lev - \lev', 1_{\overline{T}} \rangle$ and the last step was by \cref{d:ProgressFunction} of $h$. 
\end{proof}

\begin{remark}
    This can be compared with Lemma 4.7 in \cite{LSW}, where they give a lower bound on the progress in terms of the largest entry-wise change. In \cref{t:updateGuarantee}, we show how to compute an update making $\gamma^{2}$ progress by a more refined analysis of the progress function $h$. Applying our analysis in the matrix setting gives an $O(n^{2})$ improvement in the runtime, as discussed in \cref{s:matrixImprovement}.  
    
    The algorithm of \cite{DKT} measures error in terms of the difference of the matrix $V V^{\Tr}$ to the identity. Therefore, their analysis of the update in Lemma 3.3 includes ``off-diagonal" terms which may increase the error. In order to guarantee that the error decreases, their step size is bounded by $\poly(\eps)$, which leads to their $\poly(1/\eps)$ convergence. 
    \end{remark}

    Note that by \cref{p:outlinePolyGap}, we can always find a set with margin at least a polynomial fraction of the error $\|\lev - c\|_{2}^{2}$. Therefore if we could find $\gamma \geq h(\alpha) - h(1) \geq \gamma/\poly(n,d)$ in each iteration, this would give $\log(1/\eps)$ convergence. 
    In \cref{ss:updateRange}, we prove some structural lemmas about the progress function $h$ and use it to give an explicit expression for a good update. Then in \cref{ss:updateCompute}, we show how to compute this update in strongly polynomial time.

\subsection{Update Range} \label{ss:updateRange}

    The goal of this subsection is to give an explicit expression for a good update step which allows us to make geometric progress in each iteration. 

    We first show that a good update always exists in the feasible case.  

\begin{prop} \label{p:outlineInfeasible}
    For progress function $h := h_{T}^{U,z}$ as in \cref{d:ProgressFunction}, 
    \begin{enumerate}
        \item $h$ is an increasing function of $\alpha$;
        \item $\lim_{\alpha \to \infty} h(\alpha) = \rk(U_{T})$;
        \item $\langle c, 1_{T} \rangle - h(1) \geq \gamma$ where $\gamma$ is the margin of set $T$ according to \cref{d:SetMargin};
        \item If $\rk(U_{T}) \geq \langle c, 1_{T} \rangle$, then for margin $\gamma$ and any $\delta \in [0,\gamma)$, there is a finite solution $\alpha < \infty$ to the equation $h(\alpha) = h(1) + \delta$.
    \end{enumerate}
\end{prop}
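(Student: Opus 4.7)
The plan is to prove the four claims in sequence, with (4) following from (1)--(3) via the intermediate value theorem. Claim (1) is immediate from \cref{f:LevScoreScaleUp}: for each $j \in T$ the map $\alpha \mapsto \lev^U_j(z \circ (1_{\overline{T}} + \alpha 1_T))$ is non-decreasing, and $h$ is the sum of such maps over $j \in T$.

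For claim (2), I will use the spectral decomposition sketched in \cref{ss:techniques}. Set $M := U Z U^{\Tr}$ and $M_T := U_T Z_T U_T^{\Tr}$. Since $U$ has full row rank, $M \succ 0$, and the PSD matrix $M^{-1/2} M_T M^{-1/2}$ is dominated by $M^{-1/2} M M^{-1/2} = I_d$ (using $M = M_T + M_{\overline{T}}$ and $M_{\overline{T}} \succeq 0$). Hence its eigenvalues $\mu_1, \dots, \mu_d$ lie in $[0,1]$, with exactly $\rk(M_T) = \rk(U_T)$ of them nonzero (since $Z_T \succ 0$ and $M^{-1/2}$ is invertible). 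Diagonalizing in this basis gives
\[ h(\alpha) \;=\; \sum_{i=1}^{d} \frac{\alpha \mu_i}{1 + (\alpha-1)\mu_i}, \]
and each summand tends to $1$ if $\mu_i > 0$ and to $0$ otherwise, so $\lim_{\alpha \to \infty} h(\alpha) = \rk(U_T)$.

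For claim (3), I will combine $\langle c, 1_n \rangle = d$ with \cref{f:outlineSumLevScores} to get $\sum_{j \in [n]}(\lev^U_j(z) - c_j) = 0$, hence
\[ \langle c, 1_T \rangle - h(1) \;=\; \sum_{j \in T}(c_j - \lev^U_j(z)) \;=\; \sum_{j \notin T}(\lev^U_j(z) - c_j). \]
The margin condition in \cref{d:SetMargin} yields the two lower bounds $\sum_{j \in T}(c_j - \lev^U_j(z)) \geq |T|(\gamma - \nu)$ and $\sum_{j \notin T}(\lev^U_j(z) - c_j) \geq |\overline{T}|(\nu + \gamma)$. Since the margin is only defined when both $T$ and $\overline{T}$ are nonempty, I can case-split on the sign of $\nu$: if $\nu \geq 0$, the second bound is $\geq |\overline{T}|\gamma \geq \gamma$; if $\nu \leq 0$, the first bound is $\geq |T|\gamma \geq \gamma$. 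Either way, $\langle c, 1_T \rangle - h(1) \geq \gamma$.

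Finally, claim (4) follows by continuity of $h$, visible from the rational expression in (2). By (3), $h(1) + \delta < h(1) + \gamma \leq \langle c, 1_T \rangle \leq \rk(U_T)$, and by (1)--(2), $h$ is a continuous non-decreasing function on $[1,\infty)$ approaching $\rk(U_T)$ at infinity. The intermediate value theorem thus yields a finite $\alpha \geq 1$ with $h(\alpha) = h(1) + \delta$. The only subtle step is the sign split in (3), since the margin definition places no restriction on the threshold $\nu$; the remaining steps are routine manipulations of the spectral and combinatorial identities already established in the paper.
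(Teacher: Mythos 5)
Your proof is correct and takes essentially the same route as the paper: item (1) via \cref{f:LevScoreScaleUp}, item (3) via the sum identity from \cref{f:outlineSumLevScores} together with the case split on the sign of the threshold $\nu$, and item (4) via monotonicity, continuity, and the limit from (2) combined with feasibility. The only deviation is in item (2), where you use the eigenvalue expression $h(\alpha)=\sum_i \frac{\alpha\mu_i}{1+(\alpha-1)\mu_i}$ (the content of \cref{l:outlineHParts}) instead of the paper's direct trace-limit computation with the pseudoinverse, a variant the paper itself points out right after \cref{l:outlineHParts} as yielding a simple proof of (2).
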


\begin{proof}

    (1) follows simply from \cref{f:LevScoreScaleUp} as $h$ is the sum of $\lev_{j \in T}$ which are increasing. 
    
    For item (2), we have
    \begin{align*}
    \lim_{\alpha \to \infty} h(\alpha) & = 
    \lim_{\alpha \to \infty}  \tr[ \alpha U_{T} Z_{T} U_{T}^{\Tr}
    (U_{\overline{T}} Z_{\overline{T}} U_{\overline{T}}^{\Tr} + \alpha U_{T} Z_{T} U_{T}^{\Tr})^{-1} ]
    \\ & = \lim_{\lambda \to 0} \tr[ U_{T} Z_{T} U_{T}^{\Tr}, (\lambda U_{\overline{T}} Z_{\overline{T}} U_{\overline{T}}^{\Tr} + U_{T} Z_{T} U_{T}^{\Tr} )^{-1} ] 
    \\ & = \tr[ (U_{T} Z_{T} U_{T}^{\Tr}) (U_{T} Z_{T} U_{T}^{\Tr})^{+} ]  = \rk(U_{T} Z_{T} U_{T}^{\Tr})
    = \rk(U_{T}) , 
    \end{align*}
    where we used change of variable $\lambda := \alpha^{-1}$. 

    In the remainder of the proof, we use $\lev := \lev^{U}(z)$ for shorthand. 
    Because $T$ has margin $\gamma$, it must be the case that either $\max_{j \in T} \lev_{j} - c_{j} \leq - \gamma$ or $\min_{j \not\in T} \lev_{j} - c_{j} \geq \gamma$. By \cref{f:outlineSumLevScores} we have $\langle c - \lev, 1_{T} \rangle = -\langle c - \lev, 1_{\overline{T}} \rangle$, 
    so in either case we have $\langle c, 1_{T} \rangle - h(1) = \langle c - \lev, 1_{T} \rangle \geq \min \{|T|, |\overline{T}|\} \gamma$, which establishes item (3). 

    For the final item,
    we can lower bound the maximum possible progress 
    \[\sup_{\alpha \in [1,\infty]} h(\alpha) - h(1) = \lim_{\alpha \to \infty} h(\alpha) - h(1) = \rk(U_{T}) - \langle \lev, 1_{T} \rangle = (\rk(U_{T}) - \langle c, 1_{T} \rangle) + \langle c - \lev, 1_{T} \rangle \geq \gamma ,  \]
    where the first step was by item (1), the second was by item (2), and in the last step we used $\rk(U_{T}) \geq \langle c, 1_{T} \rangle$ by feasibility and lower bounded the second term using item (3). 

    Therefore for every $\delta \in [0,\gamma)$, there is a finite solution to the equation $h(\alpha) = h(1) + \delta$. 
\end{proof}

\begin{remark}
    This can be compared to Theorem 4.1 in \cite{LSW} on matrix scaling, where they show that for feasible inputs there is always an update decreasing the error by a $(1 - 1/\poly(n))$ multiplicative factor. 

    Similarly, at the end of Proposition 3.8 in \cite{DKT}, the authors show that if the update step does not succeed, then it produces a certificate of infeasibility. 
\end{remark}

Item (4) in the contrapositive implies that the only reason we cannot make sufficient progress $\Omega(\gamma)$ in the proxy function $h$ is when the set $T$ gives a certificate of infeasibility $\langle c, 1_{T} \rangle > \rk(U_{T})$ according to \cref{t:frameExistence}.  
Combining this with the analysis in \cref{l:outlineProgressLemma} shows that in the feasible case we can always make $1/\poly(n)$ multiplicative progress in the error.

We will use structural properties of the proxy function to compute our update, so for this purpose we rewrite $h$ as a sum of simple constituent functions. 

    \begin{lemma} \label{l:outlineHParts}
        For frame $U \in \R^{d \times n}$, scaling $z \in \R^{n}_{++}$, and set $T \subseteq [n]$, let $\mu := {\rm spec}((U Z U^{\Tr})^{-1} U_{T} Z_{T} U_{T}^{\Tr})$.
        The function $h := h_{T}^{U,z}$ given in \cref{d:ProgressFunction} can be rewritten as
        \[ h(\alpha) = \sum_{i=1}^{d} \frac{\alpha \mu_{i}}{1-\mu_{i} + \alpha \mu_{i}}, \qquad 
        h(\alpha) - h(1) = \sum_{i=1}^{d} \frac{(\alpha-1) \mu_{i} (1 - \mu_{i})}{1 + (\alpha-1) \mu_{i}} , 
        \qquad h'(\alpha) = \sum_{i=1}^{d} \frac{\mu_{i} (1-\mu_{i})}{(1 + (\alpha - 1)\mu_{i})^2} . 
        \]
    \end{lemma}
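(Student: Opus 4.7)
The plan is to reduce the matrix trace expression for $h(\alpha)$ to a scalar sum over the eigenvalues $\mu_i$ by simultaneously diagonalizing the two quadratic forms $M_T := U_T Z_T U_T^{\Tr}$ and $M := U Z U^{\Tr} = M_T + M_{\overline{T}}$. Since $M_{\overline{T}} \succeq 0$ and $M$ is invertible (because $U$ has full row rank and $Z \succ 0$), $M^{-1/2}$ is well-defined, and I will set $N := M^{-1/2} M_T M^{-1/2}$, which is similar to $M^{-1} M_T$ and hence has the same spectrum $\mu \in \R^{d}$. Note also that $N \preceq I$ since $M_T \preceq M$, so $\mu_i \in [0,1]$ for all $i$.

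First I would rewrite the denominator as $M_{\overline{T}} + \alpha M_T = M + (\alpha - 1) M_T$ and factor: $M + (\alpha-1) M_T = M^{1/2} (I + (\alpha - 1) N) M^{1/2}$, giving
\[ (M_{\overline{T}} + \alpha M_T)^{-1} = M^{-1/2} (I + (\alpha-1) N)^{-1} M^{-1/2} . \]
Using cyclic invariance of trace,
\[ h(\alpha) = \tr\bigl[\alpha M_T M^{-1/2} (I + (\alpha-1) N)^{-1} M^{-1/2}\bigr] = \tr\bigl[\alpha N (I + (\alpha-1) N)^{-1}\bigr] . \]
Because $N$ and $(I + (\alpha-1) N)^{-1}$ are simultaneously diagonalizable, the trace becomes a sum of scalar functions of the eigenvalues:
\[ h(\alpha) = \sum_{i=1}^{d} \frac{\alpha \mu_i}{1 + (\alpha - 1) \mu_i} = \sum_{i=1}^{d} \frac{\alpha \mu_i}{1 - \mu_i + \alpha \mu_i} . \]

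From this closed form, the other two identities are routine scalar algebra. For the shifted expression, I subtract $h(1) = \sum_i \mu_i$ and simplify each term:
\[ \frac{\alpha \mu_i}{1 + (\alpha-1)\mu_i} - \mu_i = \frac{\alpha \mu_i - \mu_i - (\alpha-1)\mu_i^2}{1 + (\alpha-1)\mu_i} = \frac{(\alpha - 1) \mu_i (1 - \mu_i)}{1 + (\alpha-1) \mu_i} , \]
which yields the stated formula for $h(\alpha) - h(1)$. For the derivative, I apply the quotient rule termwise:
\[ \frac{\rm d}{\rm d\alpha} \frac{\alpha \mu_i}{1 + (\alpha - 1)\mu_i} = \frac{\mu_i (1 + (\alpha-1)\mu_i) - \alpha \mu_i^2}{(1 + (\alpha-1) \mu_i)^2} = \frac{\mu_i (1 - \mu_i)}{(1 + (\alpha-1)\mu_i)^2} , \]
and sum over $i$ to obtain the claimed expression for $h'(\alpha)$.

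There is no real obstacle here beyond bookkeeping: the only substantive step is recognizing that after factoring $M^{1/2}$ out of the denominator and using cyclic invariance of trace, everything reduces to a scalar function of the eigenvalues of $M^{-1} M_T$. The sub-step worth stating carefully is that $N = M^{-1/2} M_T M^{-1/2}$ and $M^{-1} M_T$ are similar (via conjugation by $M^{1/2}$), so they share the spectrum $\mu$; this is what allows the matrix definition in the lemma to match the scalar sum. Everything after that is elementary manipulation of the rational functions $\frac{\alpha \mu}{1 + (\alpha-1)\mu}$.
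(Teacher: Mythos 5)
Your proof is correct and follows essentially the same route as the paper: the paper defines $V := (U Z U^{\Tr})^{-1/2} U Z^{1/2}$ and works with $V_T V_T^{\Tr}$, which is exactly your matrix $N = M^{-1/2} M_T M^{-1/2}$, then uses the same similarity/simultaneous-diagonalization argument to reduce the trace to the scalar sum, with the remaining two identities obtained by the same elementary algebra. No gaps.
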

    \begin{proof}

    Let $V := (U Z U^{\Tr})^{-1/2} U Z^{1/2}$, and note that $V_{T} V_{T}^{\Tr}$ is similar to $(U Z U^{\Tr})^{-1} U_{T} Z_{T} U_{T}^{\Tr}$ and so has the same spectrum $\mu$. Therefore we can rewrite
\[ h(\alpha) = \tr[  \alpha U_{T} Z_{T} U_{T}^{\Tr}
    (U_{\overline{T}} Z_{\overline{T}} U_{\overline{T}}^{\Tr} + \alpha U_{T} Z_{T} U_{T}^{\Tr})^{-1} ]
= \alpha \tr[ V_{T} V_{T}^{\Tr} (I_{d} + (\alpha-1) V_{T} V_{T}^{\Tr})^{-1} ] = \sum_{i=1}^{d} \frac{\alpha \mu_{i}}{1 + (\alpha-1) \mu_{i}} , 
\]
    where in the last step we used that $V_{T} V_{T}^{\Tr}$ and $(I_{d} + (\alpha-1) V_{T} V_{T}^{\Tr})^{-1}$ can be simultaneously diagonalized. 

    The remaining expressions are straightforward calculations. 

    \end{proof}

    Note that this gives very simple proofs of certain properties shown above using matrix methods: e.g. \cref{p:outlineInfeasible}(1) follows from the fact that $\alpha \to \frac{(\alpha-1) \mu(1-\mu)}{1 + (\alpha-1) \mu}$ is increasing, and (2) follows from $\lim_{\alpha \to \infty} \frac{(\alpha-1) \mu(1-\mu)}{1 + (\alpha-1) \mu} = 1-\mu$.
    We now state the main result of this subsection. 

    \begin{lemma} \label{l:updateRange}
        Consider feasible input $(U,z,T)$ according to \cref{p:outlineInfeasible} with progress function $h := h_{U,z}^{T}$ and gap $\gamma \in (0,1]$. Assume $\lim_{\alpha \to \infty} h(\alpha) \geq h(1) + \gamma$, and let $\alpha_{*}$ satisfy $h(\alpha_{*}) = h(1) + \gamma/5$. 
    \begin{enumerate}
        \item If $h'(1) \geq \gamma/4$, then $\alpha_{*} \leq 4$. 
        \item Otherwise, if $h'(1) < \gamma/4$, let $\mu_{s} := \sum_{i: \mu_{i} < 1/2}$. Then $\mu_{s} > 0$ and $\alpha_{*} \leq 1 + \frac{\gamma}{\mu_{s}}$. More generally, for any $\alpha := 1 + \delta/\mu_{s}$ with $\delta \in [0,1]$:
        \[ \frac{\delta}{4} \leq h(\alpha) - h(1) \leq \frac{\gamma}{2} + \delta .    \]
    \end{enumerate}
    \end{lemma}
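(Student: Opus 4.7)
My plan is to work directly with the explicit sum-of-fractions representation from \cref{l:outlineHParts}, writing $h(\alpha) - h(1) = \sum_{i=1}^{d} g_i(\alpha)$ where $g_i(\alpha) := (\alpha-1)\mu_i(1-\mu_i)/(1+(\alpha-1)\mu_i)$ and $g_i'(1) = \mu_i(1-\mu_i)$. I split the spectrum into \emph{large} indices $L := \{i : \mu_i \geq 1/2\}$ and \emph{small} indices $S := \{i : \mu_i < 1/2\}$, and introduce $\mu_l := \sum_{i \in L}(1-\mu_i)$ to pair with the given $\mu_s = \sum_{i \in S}\mu_i$. The workhorse inequality is the bucket bound $\mu_i(1-\mu_i) \geq \tfrac{1}{2}\mu_i$ on $S$ and $\mu_i(1-\mu_i) \geq \tfrac{1}{2}(1-\mu_i)$ on $L$, which summed yields $h'(1) \geq (\mu_s + \mu_l)/2$.

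For part~(1), the key inequality is the uniform bound $1+(\alpha-1)\mu_i \leq \alpha$ (from $\mu_i \leq 1$), which gives $g_i(\alpha) \geq \tfrac{\alpha-1}{\alpha}\,g_i'(1)$ termwise and hence $h(\alpha) - h(1) \geq \tfrac{\alpha-1}{\alpha}\,h'(1)$ after summing. Combining with $h'(1) \geq \gamma/4$ and evaluating at a constant $\alpha$ of the stated size forces $h(\alpha) - h(1) \geq \gamma/5$; the monotonicity of $h$ (\cref{p:outlineInfeasible}(1)) and the definition of $\alpha_*$ then give the claimed constant bound.

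For part~(2), I first establish $\mu_s > 0$ by contradiction: if $\mu_s = 0$, every nonzero $\mu_i$ lies in $[1/2,1]$, so $\lim_{\alpha \to \infty}(h(\alpha) - h(1)) = \sum_{i : \mu_i > 0}(1-\mu_i) = \mu_l$; but the bucket bound gives $\mu_l \leq 2 h'(1) < \gamma/2$, contradicting the hypothesis $\lim_{\alpha \to \infty} h(\alpha) \geq h(1) + \gamma$. For the range statement, I fix $\delta \in [0,1]$ and set $\alpha := 1 + \delta/\mu_s$, so $(\alpha-1)\mu_i = \delta\mu_i/\mu_s$. For $i \in S$, the inequality $\mu_i \leq \mu_s$ gives $(\alpha-1)\mu_i \leq \delta \leq 1$, hence $1 + (\alpha-1)\mu_i \in [1,2]$; combined with $1-\mu_i > 1/2$ this sandwiches each summand as $(\alpha-1)\mu_i/4 \leq g_i(\alpha) \leq (\alpha-1)\mu_i$, so $\delta/4 \leq \sum_{i \in S} g_i(\alpha) \leq \delta$. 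For $i \in L$, the trivial bound $(\alpha-1)\mu_i/(1+(\alpha-1)\mu_i) \leq 1$ gives $g_i(\alpha) \leq 1-\mu_i$, so $\sum_{i \in L} g_i(\alpha) \leq \mu_l \leq 2 h'(1) < \gamma/2$. Adding the two contributions yields the sandwich $\delta/4 \leq h(\alpha) - h(1) \leq \delta + \gamma/2$. Setting $\delta = \gamma \in (0,1]$ gives $h(\alpha) - h(1) \geq \gamma/4 > \gamma/5$, so $\alpha_* \leq 1 + \gamma/\mu_s$ by monotonicity.

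The one genuinely delicate choice in this proof is the scale $\alpha = 1 + \delta/\mu_s$: it is calibrated exactly so that $(\alpha-1)\mu_i \leq 1$ for every small eigenvalue, preventing saturation of $t \mapsto t/(1+t)$ and yielding the clean linear sandwich on the small block. The hypothesis $h'(1) < \gamma/4$ serves the single purpose of bounding the uncontrolled large-block contribution $\mu_l$ below $\gamma/2$, so it cannot wash out the lower bound $\delta/4$ from the small block.
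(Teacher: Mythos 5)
Your proposal is correct in substance and follows essentially the same route as the paper: part (2) is virtually identical (split the spectrum at $1/2$, bound the large-eigenvalue block by $\sum_{i\in L}(1-\mu_i)\le 2h'(1)<\gamma/2$, and sandwich the small block linearly at the scale $\alpha=1+\delta/\mu_s$ using $(\alpha-1)\mu_i\le\delta\le 1$), and your termwise bound $g_i(\alpha)\ge\frac{\alpha-1}{\alpha}g_i'(1)$ in part (1) is exactly the inequality of \cref{f:polarizedUpdate} specialized at $\alpha=1$, which is also what the paper uses; you just apply it at a fixed $\alpha$ and invoke monotonicity rather than rearranging at $\alpha_*$.

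The one point to fix is the constant in part (1): evaluating at $\alpha=4$ gives only $h(4)-h(1)\ge\frac{3}{4}h'(1)\ge\frac{3\gamma}{16}$, which is strictly less than $\gamma/5$, so your argument as written does not certify $\alpha_*\le 4$; you need $\alpha=5$, where $\frac{4}{5}\cdot\frac{\gamma}{4}=\frac{\gamma}{5}$, yielding $\alpha_*\le 5$. Note the paper's own rearrangement, $\frac{\gamma}{4}(\alpha_*-1)\le\alpha_*\frac{\gamma}{5}$, likewise gives $\alpha_*\le 5$ rather than the stated $4$, so this is a shared constant slip; it is harmless downstream (in \cref{t:updateGuarantee} it only changes the Newton--Dinkelbach iteration count in the $h'(1)\ge\gamma/4$ case from $1$ to $2$, still $O(1)$), but your write-up should either prove the bound with constant $5$ or supply a separate argument if the constant $4$ is to be kept.
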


    In the remainder of this subsection, we collect some structural results in order to prove the above lemma. 

    \begin{corollary} \label{f:HIncreasingConcave}
        In the setting of \cref{l:outlineHParts}, the proxy function $h$ is increasing and concave. 
    \end{corollary}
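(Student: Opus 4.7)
The plan is to read off the result directly from the decomposition given in \cref{l:outlineHParts}, namely
\[ h(\alpha) \;=\; \sum_{i=1}^{d} \frac{\alpha \mu_i}{1 - \mu_i + \alpha \mu_i}, \qquad h'(\alpha) \;=\; \sum_{i=1}^{d} \frac{\mu_i(1-\mu_i)}{(1 + (\alpha-1)\mu_i)^2}, \]
so it suffices to verify term-by-term that each summand is increasing and concave in $\alpha$ under the assumption that every $\mu_i$ lies in $[0,1]$.

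My first step would be to justify the eigenvalue range $\mu_i \in [0,1]$. Setting $V := (U Z U^{\Tr})^{-1/2} U Z^{1/2}$ as in the proof of \cref{l:outlineHParts}, the matrix $V_T V_T^{\Tr}$ is similar to $(U Z U^{\Tr})^{-1} U_T Z_T U_T^{\Tr}$, and since $V V^{\Tr} = (U Z U^{\Tr})^{-1/2}(U Z U^{\Tr})(U Z U^{\Tr})^{-1/2} = I_d$, we have $0 \preceq V_T V_T^{\Tr} \preceq V V^{\Tr} = I_d$. Hence $\mu_i \in [0,1]$.

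Next, monotonicity is immediate from the formula for $h'$: each term $\frac{\mu_i(1-\mu_i)}{(1+(\alpha-1)\mu_i)^2}$ is non-negative for $\mu_i \in [0,1]$ and $\alpha \geq 1$, so $h'(\alpha) \geq 0$ and $h$ is increasing. For concavity, I would differentiate once more to obtain
\[ h''(\alpha) \;=\; -2 \sum_{i=1}^{d} \frac{\mu_i^2 (1-\mu_i)}{(1 + (\alpha-1)\mu_i)^3}, \]
and observe that each summand is non-negative (numerator $\mu_i^2(1-\mu_i) \geq 0$ since $\mu_i \in [0,1]$, denominator positive since $1 + (\alpha-1)\mu_i \geq \min(1,\alpha) > 0$), so $h''(\alpha) \leq 0$ and $h$ is concave.

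There is no real obstacle here; the statement is a structural corollary of the rational-function form of $h$, and once the range $\mu_i \in [0,1]$ is established the monotonicity/concavity reduces to inspecting the signs of $h'$ and $h''$. Alternatively, one could observe that for each fixed $\mu \in [0,1]$ the map $\alpha \mapsto \frac{\alpha\mu}{1-\mu+\alpha\mu} = 1 - \frac{1-\mu}{1-\mu+\alpha\mu}$ is a Möbius-type composition that is manifestly increasing and concave on $\alpha > 0$, and that sums of increasing concave functions are increasing and concave.
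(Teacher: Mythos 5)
Your proposal is correct and follows essentially the same route as the paper: it invokes the decomposition of \cref{l:outlineHParts} into summands of the form $\frac{\alpha\mu}{1-\mu+\alpha\mu}$ with $\mu \in [0,1]$ and verifies term-by-term that each is increasing and concave (the paper leaves the sign checks of $h'$ and $h''$ implicit, which you carry out explicitly). The only addition is your justification that $\mu_i \in [0,1]$ via $0 \preceq V_T V_T^{\Tr} \preceq V V^{\Tr} = I_d$, which the paper also uses elsewhere and is a welcome, if minor, bit of added rigor.
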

    \begin{proof}
        This follows simply from \cref{l:outlineHParts} by noting $h$ is a sum of functions of the form $\frac{\alpha \mu}{1-\mu + \alpha \mu}$, which can be easily verified to be increasing and concave for $\mu \in [0,1]$. 
    \end{proof}

    Root-finding for increasing concave functions is exactly the setting for the Newton-Dinkelbach method, and we show how to use this method to compute the solution in \cref{ss:updateCompute}. 

    We next establish a structural inequality of our function $h$, showing it is approximately linear in a small neighborhood.  

        \begin{claim} \label{f:polarizedUpdate}
        Consider input $(U,z)$ and $T \subseteq [n]$ with progress function $h := h^{U,z}_{T}$ according to \cref{d:ProgressFunction}. Then for any $1 \leq \alpha \leq \alpha'$, 
        \[  \frac{\alpha}{\alpha'} (\alpha'-\alpha) h'(\alpha) \leq  h(\alpha') - h(\alpha)  \leq (\alpha'-\alpha) h'(\alpha).  \]
        Equivalently, for Bregman divergence $D(\beta | \alpha) := h'(\alpha) (\beta - \alpha) + h(\alpha) - h(\beta)$,
        \[  0 \leq D(\alpha' | \alpha) \leq \bigg( \frac{\alpha'}{\alpha} - 1 \bigg) (h(\alpha') - h(\alpha)) .    \]
    \end{claim}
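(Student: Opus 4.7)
The plan is to exploit the explicit spectral decomposition from \cref{l:outlineHParts}, which writes both $h$ and $h'$ as positive sums of the rank-one contributions $h_{\mu}(\alpha) := \frac{\alpha \mu}{1+(\alpha-1)\mu}$ indexed by the eigenvalues $\mu = \mu_i \in [0,1]$. By linearity of both sides of the double inequality, it suffices to prove the analogous bounds for each fixed $\mu \in [0,1]$, reducing the claim to an elementary one-variable calculation.

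First I would compute
\[ h_\mu(\alpha') - h_\mu(\alpha) = \frac{\mu(1-\mu)(\alpha'-\alpha)}{(1+(\alpha-1)\mu)(1+(\alpha'-1)\mu)}, \qquad (\alpha'-\alpha) h_\mu'(\alpha) = \frac{\mu(1-\mu)(\alpha'-\alpha)}{(1+(\alpha-1)\mu)^2}, \]
so that the ratio of the former to the latter is $R(\mu) := \frac{1+(\alpha-1)\mu}{1+(\alpha'-1)\mu}$. The upper bound $R(\mu) \leq 1$ is immediate from $\alpha \leq \alpha'$, while the lower bound $R(\mu) \geq \alpha/\alpha'$ is equivalent, after clearing denominators, to $(\alpha'-\alpha)(1-\mu) \geq 0$, which holds since $\mu \leq 1$ and $\alpha \leq \alpha'$. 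Summing the per-$\mu$ bounds weighted by the nonnegative quantity $\frac{\mu_i(1-\mu_i)(\alpha'-\alpha)}{(1+(\alpha-1)\mu_i)^2}$ over $i \in [d]$ establishes both inequalities in the first display of the claim.

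For the equivalent Bregman-divergence formulation, the lower bound $D(\alpha'|\alpha) \geq 0$ is precisely concavity of $h$, already recorded in \cref{f:HIncreasingConcave}. For the upper bound, I would rearrange the left half $\frac{\alpha}{\alpha'}(\alpha'-\alpha) h'(\alpha) \leq h(\alpha')-h(\alpha)$: multiplying by $\alpha'/\alpha$ and subtracting $(h(\alpha')-h(\alpha))$ from both sides gives $D(\alpha'|\alpha) = (\alpha'-\alpha) h'(\alpha) - (h(\alpha')-h(\alpha)) \leq (\alpha'/\alpha - 1)(h(\alpha')-h(\alpha))$, which is the claim.

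There is no serious obstacle: the entire argument reduces to the single pointwise inequality $(\alpha'-\alpha)(1-\mu) \geq 0$, which uses only $\alpha \leq \alpha'$ together with the bound $\mu \leq 1$ on the eigenvalues (guaranteed by the similarity to a principal submatrix of a projection matrix in the proof of \cref{l:outlineHParts}). The only real content of the calculation is making the role of the factor $\alpha/\alpha'$ transparent: it emerges from the gap between the quadratic denominator in $h'(\alpha)$ and the asymmetric cross-product denominator in $h(\alpha')-h(\alpha)$, and is tight precisely in the limit $\mu \to 0$.
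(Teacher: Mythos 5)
Your proof is correct and follows essentially the same route as the paper: both use the eigenvalue decomposition of \cref{l:outlineHParts} and reduce the lower bound to the per-eigenvalue ratio inequality $\frac{1+(\alpha-1)\mu}{1+(\alpha'-1)\mu} \geq \frac{\alpha}{\alpha'}$, i.e.\ $(\alpha'-\alpha)(1-\mu)\geq 0$, with the paper citing concavity (\cref{f:HIncreasingConcave}) for the upper bound where you instead note $R(\mu)\leq 1$ termwise, and both handle the Bregman form by the same rearrangement. One small slip in your closing remark: the factor $\alpha/\alpha'$ is attained in the limit $\mu \to 1$ (where $R(\mu)=\alpha/\alpha'$), not $\mu \to 0$ (where $R(\mu)\to 1$, so it is the upper bound that is tight); this does not affect the proof.
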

    \begin{proof}
        We focus on proving the first statement, as the second line is a simple rearrangement. 
        The upper bound is direct from concavity of $h$ according to \cref{f:HIncreasingConcave}
        For the lower bound, we write out the expression for progress as given in \cref{l:outlineHParts}:
        \begin{align*}
            h(\alpha') - h(\alpha) & = \sum_{i=1}^{d} \bigg( \frac{\alpha' \mu_{i}}{1 + (\alpha'-1)\mu_{i}} - \frac{\alpha \mu_{i}}{1 + (\alpha-1)\mu_{i}} \bigg)
            = \sum_{i=1}^{d}  \frac{(\alpha'-\alpha) \mu_{i} (1-\mu_{i})}{(1 + (\alpha'-1)\mu_{i}) (1 + (\alpha-1)\mu_{i}) }
            \\ & = \sum_{i=1}^{d} \frac{(\alpha'-\alpha) \mu_{i} (1-\mu_{i})}{(1 + (\alpha-1)\mu_{i})^{2}} \frac{1 + (\alpha-1)\mu_{i}}{1 + (\alpha'-1) \mu_{i}} 
            \geq \frac{\alpha}{\alpha'}  \sum_{i=1}^d \frac{(\alpha'-\alpha) \mu_{i} (1-\mu_{i})}{(1 + (\alpha-1)\mu_{i})^{2}} = \frac{\alpha}{\alpha'} (\alpha'-\alpha) h'(\alpha)  , 
        \end{align*} 
        where the inequality follows since $\frac{1+(\alpha-1)\mu}{1+(\alpha'-1)\mu} \geq \frac{\alpha}{\alpha'}$ for $1 \leq \alpha \leq \alpha'$ and $\mu \in [0,1]$ and we substitute the expression for the derivative $h'(\alpha)$ in the final step.
    \end{proof}

    We can now prove the explicit expression for our update. 

    \begin{proof} [Proof of \cref{l:updateRange}]
    The case (1) $h'(1) \geq \gamma/4$ follows simply from the previous claim:
    \[ \frac{\gamma}{4} (\alpha_{*} - 1) \leq h'(1) (\alpha_{*} - 1) \leq \alpha_{*} (h(\alpha_{*}) - h(1)) = \alpha_{*} \frac{\gamma}{5}  ,  \]
    where in the first step we used assumption $h'(1) \geq \gamma/4$, the second step was by the lower bound in \cref{f:polarizedUpdate}, and the final step was by definition $h(\alpha_{*}) = h(1) + \gamma/5$. Rearranging gives $\alpha_{*} \leq 4$. 

    For the first part of (2), first assume for contradiction that $\mu_{s} = 0$ so $\mu_{i \leq r} \geq 1/2$ where $r := \rk(U_{T})$. Then this gives
    \[ \frac{\gamma}{4} > h'(1) = \sum_{i=1}^{r} \mu_{i} (1-\mu_{i}) \geq \frac{1}{2} \sum_{i=1}^{r} (1-\mu_{i}) = \frac{1}{2} (\lim_{\alpha \to \infty} h(\alpha) - h(1) ) \geq \frac{\gamma}{2} ,   \]
    where the first step was by assumption $h'(1) < \gamma/4$, in the second step we used that $\mu_{i \leq r} \geq 1/2$, in the third step we applied $\lim_{\alpha \to \infty} h(\alpha) = r$ by \cref{p:outlineInfeasible}(2) and $h(1) = \sum_{i=1}^{r} \mu_{i}$, and the final step was by assumption. This gives the required contradiction, so we must have $\mu_{s} = \sum_{i : \mu_{i} < 1/2} \mu_{i} > 0$. 

    Next, the bound $\alpha_{*} \leq 1 + \gamma/\mu_{s}$ follows from the general statement applied with $\delta = \gamma$:
    \[ h(1 + \gamma/\mu_{s}) \geq h(1) + \gamma/4 \geq h(\alpha_{*}) \implies 1 + \gamma/\mu_{s} \geq \alpha_{*} ,   \]
    where the implication was by monotonicity of $h$ according to \cref{f:HIncreasingConcave}. 
        
        To prove the general claim, we separate into two terms, depending on large and small eigenvalues:
        \[ h(\alpha) - h(1) = \sum_{i=1}^{d} \frac{(\alpha-1) \mu_{i} (1-\mu_{i})}{1 + (\alpha-1) \mu_{i}} 
        = \sum_{i : \mu_{i} \geq 1/2} \frac{(\alpha-1) \mu_{i} (1-\mu_{i})}{1 + (\alpha-1) \mu_{i}} + \sum_{i : \mu_{i} < 1/2} \frac{(\alpha-1) \mu_{i} (1-\mu_{i})}{1 + (\alpha-1) \mu_{i}}   \]
        as given by \cref{l:outlineHParts}. We can bound the large eigenvalue term as
        \[ 0 \leq \sum_{i : \mu_{i} \geq 1/2} \frac{(\alpha-1) \mu_{i} (1-\mu_{i})}{1 + (\alpha-1) \mu_{i}} \leq \sum_{i : \mu_{i} \geq 1/2} (1-\mu_{i}) \leq 2 \sum_{i : \mu_{i} \geq 1/2} \mu_{i} (1-\mu_{i}) \leq 2 h'(1) ,   \]
        where non-negativity is clear, in the second step we used the simple bound $(\alpha-1) \mu \leq 1 + (\alpha-1)\mu$, in the third step we used $\mu_{i} \geq 1/2$, and in the final step we used the expression in \cref{l:outlineHParts} for the derivative. 
        
        For the other term, we substitute $\alpha = 1 + \delta/\mu_{s}$ for $\mu_{s} := \sum_{i : \mu_{i} < 1/2} \mu_{i}$ and upper bound
        \[
            \sum_{i : \mu_{i} < 1/2} \frac{(\alpha-1) \mu_{i} (1-\mu_{i})}{1 + (\alpha-1) \mu_{i}} 
            \leq \frac{\delta}{\mu_{s}} \sum_{i : \mu_{i} < 1/2} \mu_{i} = \delta , \]
        using the simple bound $\frac{1-\mu}{1 + (\alpha-1) \mu} \leq 1$. Similarly, we can lower bound
        \[ 
            \sum_{i : \mu_{i} < 1/2} \frac{(\alpha-1) \mu_{i} (1-\mu_{i})}{1 + (\alpha-1) \mu_{i}} 
            \geq \frac{\delta}{\mu_{s}} \left( \sum_{i : \mu_{i} < 1/2} \mu_{i} \right) \frac{1/2}{1+\delta} \geq \frac{\delta}{4}  \]
        where first step was by $1 - \mu_{i} \geq 1/2$ for the numerator and $\mu_{i}/\mu_s \leq 1$ for the denominator, and the last step was by our assumption $\delta \leq 1$. 
        Putting together the bounds for both terms and using the condition $h'(1) \leq \gamma/4$ gives the statement
         \[ \frac{\delta}{4} \leq h(1 + \delta/\mu_{s}) - h(1) \leq 2 h'(1) + \delta  \leq \frac{\gamma}{2} + \delta .    \]

    \end{proof}

 In the following subsection, we show how to find such an update in strongly polynomial time.

\subsection{Computing the Update} \label{ss:updateCompute}

The goal of this section is to compute an update $\hat{\alpha}$ for which we can apply \cref{l:outlineProgressLemma} to get polynomial progress. 
If we had access to the spectrum $\mu$, then we could compute a good update using the explicit expression given in \cref{l:updateRange}.     
    In our setting, we only have access to the frame $U$, so one approach would be to approximate the eigenvalues. It turns out that we can bypass this procedure and use a more implicit method to compute the solution. 

Below we give the pseudocode and analysis for our update algorithm. 

\begin{theorem} \label{t:updateGuarantee}
    Given input $(U,z,T,\gamma)$ that is feasible according to \cref{p:outlineInfeasible}, Algorithm \ref{a:UpdateAlg} runs in strongly polynomial time and finds $\hat{\alpha} \geq 1$ satisfying 
    \[ \gamma/5 \leq  h(\hat{\alpha}) - h(1) \leq \gamma , \]
    where $h := h_{T}^{U,z}$ according to \cref{d:ProgressFunction}. 
\end{theorem}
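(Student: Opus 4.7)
The plan is to solve the approximate root-finding problem
$h(\hat\alpha) - h(1) \in [\gamma/5,\gamma]$
by a Newton--Dinkelbach style iteration, exploiting that $h$ is increasing and concave (\cref{f:HIncreasingConcave}) and that both $h(\alpha)$ and $h'(\alpha)$ can be evaluated in strongly polynomial time (\cref{r:hStronglyPoly}). First I would compute $h(1)$ and $h'(1)$ and branch on which case of \cref{l:updateRange} applies. In the ``large derivative'' case $h'(1)\ge\gamma/4$, \cref{l:updateRange}(1) guarantees that the root $\alpha^{**}$ of $h(\alpha)=h(1)+\gamma$ lies in $[1,4]$, so I initialize $\alpha_0=1$. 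In the ``small derivative'' case $h'(1)<\gamma/4$, I invoke the coarse eigendecomposition procedure of \cref{s:Guess} to obtain an estimate $\tilde\mu_s$ approximating $\mu_s$ to within a $\poly(n,d)$ factor, and initialize $\alpha_0 = 1 + c\gamma/\tilde\mu_s$ for a small constant $c$ chosen so that the lower bound of \cref{l:updateRange}(2) forces $h(\alpha_0)-h(1) \leq \gamma/5$, hence $\alpha_0\le\alpha^{**}$.

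From the chosen starting point I run the standard Newton--Dinkelbach update
$\alpha_{t+1} = \alpha_t + (y-h(\alpha_t))/h'(\alpha_t)$
targeting $y=h(1)+\gamma$, halting as soon as $h(\alpha_t)-h(1)\ge\gamma/5$. By concavity and the upper bound in \cref{f:polarizedUpdate}, the iterates remain monotonically increasing and bounded above by $\alpha^{**}$, so at the moment of termination both inequalities $\gamma/5\le h(\hat\alpha)-h(1)\le\gamma$ hold automatically. For convergence, I plan to use the two--sided Bregman inequality of \cref{f:polarizedUpdate} in the form $D(\alpha^{**}\,|\,\alpha_t)\le(\alpha^{**}/\alpha_t - 1)(h(\alpha^{**})-h(\alpha_t))$, which, combined with the Newton identity $\alpha^{**}-\alpha_{t+1}=D(\alpha^{**}\,|\,\alpha_t)/h'(\alpha_t)$, yields the dichotomy that in each step either the residual $h(\alpha^{**})-h(\alpha_t)$ shrinks by a constant factor, or $\alpha_{t+1}/\alpha_t$ grows by a constant factor. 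Since $\alpha^{**}/\alpha_0 \le \poly(n,d)$ by the initialization, both events can each occur only $O(\log(nd/\gamma))$ times, giving a strongly polynomial iteration bound.

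The main obstacle is the ``small derivative'' case, where a naive start at $\alpha_0=1$ could be exponentially far from $\alpha^{**}$: producing a starting point of the correct order requires a polynomial--factor estimate of the spectral quantity $\mu_s=\sum_{i:\mu_i<1/2}\mu_i$ in strongly polynomial time. This is exactly what the coarse eigendecomposition of \cref{s:Guess} is designed to provide, and verifying that the estimate suffices to place $\alpha_0$ below $\alpha^{**}$ while ensuring $\alpha^{**}/\alpha_0$ is only polynomial is the key nontrivial step; once this is in hand, the Newton--Dinkelbach convergence analysis is a routine adaptation of the accelerated framework of \cite{AcceleratedND} specialized to our explicit form of $h$ from \cref{l:outlineHParts}.
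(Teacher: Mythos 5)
Your overall route is the same as the paper's: the same case split on $h'(1)\ge \gamma/4$, the same initialization ($\alpha_0=1$, respectively $\alpha_0 = 1+\Theta(\gamma/\tilde\mu)$ using the coarse spectral-sum estimate of \cref{t:GuessApproximation}), and a Newton--Dinkelbach iteration toward $b=h(1)+\gamma$ that halts once $h-h(1)\ge\gamma/5$, analyzed through the Bregman inequality of \cref{f:polarizedUpdate}. Two steps, however, do not hold as written. First, your convergence argument is pinned to $\alpha^{**}$, the root of $h(\alpha)=h(1)+\gamma$: you claim the iterates are bounded by $\alpha^{**}$ and that $\alpha^{**}/\alpha_0\le\poly(n,d)$ ``by the initialization''. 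Neither is justified: \cref{l:updateRange} bounds $\alpha_*$ with $h(\alpha_*)=h(1)+\gamma/5$, not $\alpha^{**}$, and in the feasible case $\alpha^{**}$ need not even be finite, since \cref{p:outlineInfeasible}(4) only yields finite roots for targets $h(1)+\delta$ with $\delta<\gamma$ (e.g.\ when $\lim_{\alpha\to\infty}h(\alpha)=h(1)+\gamma$ exactly). The repair is what the paper does: measure progress against $\alpha_*$; every non-terminal iterate satisfies $h(\alpha_t)<h(1)+\gamma/5$ and hence $\alpha_t\le\alpha_*$, and $\alpha_*/\alpha_0$ is polynomially bounded by \cref{l:updateRange} together with $\tilde\mu\le(1+8nd^2)\mu_s$; the paper packages this as a $(1-\eps)$-geometric decay of $D_h(\alpha_*\,|\,\alpha_t)$ in \cref{t:outlineNDAnalysis}, giving $O(\log(nd))$ iterations rather than your $O(\log(nd/\gamma))$.

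Second, in the small-derivative case you claim a small constant $c$ in $\alpha_0=1+c\gamma/\tilde\mu$ forces $h(\alpha_0)-h(1)\le\gamma/5$ via the lower bound of \cref{l:updateRange}(2). That cannot work: the relevant upper bound there is $\gamma/2+\delta$, which never drops below $\gamma/2$ no matter how small $\delta$ is. Fortunately this stronger property is not needed: the Newton--Dinkelbach precondition is only $h(\alpha_0)\le h(1)+\gamma$, which holds for $c\le 1/2$ by monotonicity of $h$, the inequality $\tilde\mu\ge\mu_s$, and the upper bound of \cref{l:updateRange}(2) with $\delta=\gamma/2$; and if $h(\alpha_0)-h(1)\ge\gamma/5$ already, outputting $\alpha_0$ is valid. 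With these two repairs your argument coincides with the paper's proof.
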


\begin{algorithm} \label{a:UpdateAlg}
\caption{$\operatorname{Update}(U,z,T,\gamma)$}

\KwIn{$U \in \R^{d \times n}$, $\rk(U) = d$, scaling $z$, $T \subseteq [n]$, margin $\gamma \in (0,1]$ 
satisfying $\sup_{\alpha \geq 1} h(\alpha) \geq h(1)+\gamma$.}
\KwOut{ $\hat{\alpha} \geq 1$ s.t. $\gamma/5 \leq h(\hat{\alpha}) - h(1) \leq \gamma$ for $h := h_{T}^{U,z}$ (\cref{d:ProgressFunction}) }
\uIf{$h'(1) \geq \gamma/4$}{
    $\alpha_0 \leftarrow 1$\;
}\Else{
   $\tilde{\mu} \leftarrow \operatorname{Approx-Small-Eigen-Sum}(U,z,T)$ (Algorithm \ref{a:GuessAlgorithm})\;
   $\alpha_0 \leftarrow 1 + \frac{\gamma}{2\tilde{\mu}}$\;
}
Output $\text{ND}(h, \alpha_{0}, b' := h(1) + \gamma/4, b := h(1) + \gamma)$ (Algorithm \ref{a:NDAlgorithm})\;
\end{algorithm}

    The key observation is that our function $h$ is increasing and concave by \cref{f:HIncreasingConcave}. Therefore, we can apply the Newton-Dinkelbach method to compute an approximate solution to the equation $h(\hat{\alpha}) - h(1) = \gamma$. The method and its analysis are given below.

\begin{algorithm} \label{a:NDAlgorithm}
\caption{$\operatorname{ND}(f,\alpha_0,b',b)$}

\KwIn{Differentiable and increasing concave function $f : \R \to \R$, lower bound $b'$, upper bound $b$ satisfying $\sup_\alpha f(\alpha) \geq b$ and $b' < b$, starting guess $f(\alpha_{0}) \leq b$.}
\KwOut{$\hat{\alpha}$ satisfying $f(\hat{\alpha}) \in [b',b]$}
$t \leftarrow 0$\;
\While{$f(\alpha_{t}) < b'$} {
    $\alpha_{t+1} := \alpha_{t} + \frac{b - f(\alpha_{t})}{f'(\alpha_{t})}$\;
    $t \gets t+1$\;}
    Output $\alpha_{t}$\;
\end{algorithm}

    The convergence of this method is well-known, and we used a refined iteration bound using Bregman divergence arguments similar to the work in \cite{AcceleratedND}. 

    \begin{theorem} \label{t:outlineNDAnalysis}
    Let $f : \R \to \R$ be a differentiable increasing concave function with $b' < b \in \R$. Let $\alpha_* \in \R$ satisfy $f(\alpha_{*}) = b'$ and let $\alpha_0 \in \R$ satisfy $f(\alpha_0) \leq b$. 
    
    Then the Newton-Dinkelbach method in Algorithm \ref{a:NDAlgorithm} applied with starting guess $\alpha_0$ outputs: $\alpha_{0}$ if $f(\alpha_{0}) \geq b'$; otherwise it outputs $\hat{\alpha}$ satisfying $f(\hat{\alpha}) \in [b',b]$ in at most $T$ iterations, where
    \[ 
       T := 1 + \left\lceil \log_{\frac{1}{1-\eps}} \left(\max \left\{1, \frac{D_{f}(\alpha_{*} |\alpha_{0})}{b-b'}\right\} \right) \right\rceil , 
    \] 
    for $\eps \in (0,1)$ satisfying $\eps(b-f(\alpha_0)) \leq b-b'$,
    and $D_{f}(\beta |\alpha) := f'(\alpha)(\beta -\alpha)+f(\alpha)-f(\beta)$ which is known as the Bregman divergence of $f$. 
 
    \end{theorem}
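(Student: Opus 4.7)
The plan is to analyze the Newton-Dinkelbach iteration through the Bregman potential $\Phi_t := D_f(\alpha_*|\alpha_t)$ and argue it contracts by at least a factor of $(1-\varepsilon)$ per iteration, until it drops below $b-b'$, at which point the algorithm must terminate within one more Newton step. I first dispose of the trivial case $f(\alpha_0) \geq b'$, where the \texttt{while}-loop never fires and the algorithm outputs $\alpha_0$. Henceforth assume $f(\alpha_0) < b'$, so $\alpha_0 < \alpha_*$.

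The first structural observation is that the iteration is well-defined and $f(\alpha_t) \leq b$ for all $t$: by concavity the tangent line $L_t(\alpha) := f(\alpha_t) + f'(\alpha_t)(\alpha - \alpha_t)$ lies above $f$, and since $L_t(\alpha_{t+1}) = b$ by construction, $f(\alpha_{t+1}) \leq L_t(\alpha_{t+1}) = b$. In particular $g_t := b - f(\alpha_t) \geq 0$, and plugging $f'(\alpha_t)(\alpha_{t+1} - \alpha_t) = b - f(\alpha_t)$ into the definition of Bregman divergence yields $g_{t+1} = D_f(\alpha_{t+1}|\alpha_t)$. A second key identity comes from $L_t(\alpha_*) = b + f'(\alpha_t)(\alpha_* - \alpha_{t+1})$ together with $\Phi_t = L_t(\alpha_*) - f(\alpha_*) = L_t(\alpha_*) - b'$, giving
\[
\Phi_t = (b - b') + f'(\alpha_t)(\alpha_* - \alpha_{t+1}).
\]
Since $f'(\alpha_t) > 0$, this says $\Phi_t \geq b - b'$ iff $\alpha_{t+1} \leq \alpha_*$ iff the algorithm has not terminated after step $t+1$; contrapositively, once $\Phi_t < b - b'$, the next Newton step overshoots $\alpha_*$ and the algorithm exits.

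For the progress analysis, I would invoke the three-point Bregman identity
\[
D_f(\alpha_*|\alpha_t) = D_f(\alpha_*|\alpha_{t+1}) + D_f(\alpha_{t+1}|\alpha_t) + (f'(\alpha_t) - f'(\alpha_{t+1}))(\alpha_* - \alpha_{t+1}),
\]
which is a direct calculation; combined with concavity (so $f'(\alpha_t) \geq f'(\alpha_{t+1})$) and $\alpha_{t+1} \leq \alpha_*$ (non-termination) it immediately delivers the additive decrease $\Phi_{t+1} \leq \Phi_t - g_{t+1}$. The main technical step, and what I expect to be the main obstacle, is to upgrade this to the multiplicative bound $\Phi_{t+1} \leq (1-\varepsilon)\Phi_t$. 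Here the hypothesis $\varepsilon(b - f(\alpha_0)) \leq b - b'$ is essential: since $g_t$ is non-increasing in $t$ (as $f(\alpha_t)$ is non-decreasing), we have $\varepsilon g_t \leq \varepsilon g_0 \leq b - b' \leq g_{t+1}$ whenever the algorithm has not terminated. The remaining work is to exploit the identity $\Phi_t - (b-b') = f'(\alpha_t)(\alpha_* - \alpha_{t+1})$ together with the three-point identity to bound $\Phi_t$ against the residual $g_{t+1}$ and the slope drop $f'(\alpha_t) - f'(\alpha_{t+1})$, extracting the factor $\varepsilon$. This closely parallels the Bregman-divergence-based convergence accounting of \cite{AcceleratedND}.

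Iterating the contraction $\Phi_{t+1} \leq (1-\varepsilon)\Phi_t$ gives $\Phi_{T-1} \leq (1-\varepsilon)^{T-1} \Phi_0$, so $\Phi_{T-1} < b - b'$ as soon as $T - 1 \geq \log_{1/(1-\varepsilon)}(\Phi_0/(b-b'))$, whereupon step $T$ terminates by the characterization from the second paragraph. Rounding up the logarithm and accounting for the edge case $\Phi_0 \leq b - b'$, where a single Newton step already suffices (handled by the $\max\{1,\cdot\}$), yields the stated bound $T = 1 + \lceil \log_{1/(1-\varepsilon)} (\max\{1, \Phi_0/(b-b')\}) \rceil$.
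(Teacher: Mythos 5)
Your setup matches the paper's: the trivial case, the invariant $f(\alpha_t)\leq b$ from concavity, the Bregman potential $\Phi_t := D_f(\alpha_*\,|\,\alpha_t)$, and the termination criterion ($\Phi_t < b-b'$ forces the next Newton step to overshoot $\alpha_*$) are all correct and are exactly the paper's skeleton. Your identities $g_{t+1}=D_f(\alpha_{t+1}|\alpha_t)$ and $\Phi_t=(b-b')+f'(\alpha_t)(\alpha_*-\alpha_{t+1})$ check out.

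However, there is a genuine gap at precisely the step you flag as "the remaining work": the multiplicative contraction $\Phi_{t+1}\leq(1-\eps)\Phi_t$ is the heart of the theorem, and the ingredients you assemble do not yield it. The three-point identity plus concavity and $\alpha_{t+1}\leq\alpha_*$ only give the additive decrease $\Phi_{t+1}\leq \Phi_t-g_{t+1}$ with $g_{t+1}>b-b'\geq\eps\,(b-f(\alpha_0))$; this is an $\eps$-fraction of $b-f(\alpha_0)$, not of $\Phi_t$, and in the regime where the theorem has content $\Phi_0=D_f(\alpha_*|\alpha_0)$ can exceed $b-f(\alpha_0)$ by an arbitrarily large factor (this is exactly the situation in the paper's application, where $D_h(\alpha_*|\alpha_0)$ is of order $nd^2\gamma$ while $b-h(\alpha_0)\leq\gamma$). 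So the additive bound cannot be "upgraded" without a new quantitative ingredient. The missing idea is a per-iteration \emph{slope-drop} bound, obtained by applying concavity between consecutive iterates at the overshooting Newton point (Radzik's inequality in the paper): $\tfrac{f'(\alpha_{t+1})}{f'(\alpha_t)}\leq 1-\tfrac{b-f(\alpha_{t+1})}{b-f(\alpha_t)}\leq 1-\eps$ whenever $f(\alpha_{t+1})<b'$, using $g_t\leq g_0$. The paper then transfers this geometric decay of $f'$ to the potential via auxiliary points $\alpha'_t$ (Newton steps aimed at $b'$), writing $\Phi_{t}=f'(\alpha_{t})(\alpha_*-\alpha'_{t+1})$ and using $\alpha'_t\leq\alpha_t\leq\alpha'_{t+1}$. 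Ironically, your own identity would let you finish more directly: when the algorithm has not terminated, $\Phi_{t+1}=f'(\alpha_{t+1})(\alpha_*-\alpha_{t+1})+f(\alpha_{t+1})-b'\leq f'(\alpha_{t+1})(\alpha_*-\alpha_{t+1})$ while $\Phi_t\geq f'(\alpha_t)(\alpha_*-\alpha_{t+1})$, so the contraction follows immediately from the slope-drop bound. But as written, neither the slope-drop inequality nor any substitute for it is derived, so the stated iteration bound is not established.
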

    \begin{proof}
        The first case $b' \leq f(\alpha_0) \leq b$ is trivial, so we assume $f(\alpha_0) < b'$. Let $\alpha_0 < \alpha_1 < \alpha_2,\dots$ denote the iterates of the Newton Dinkelbach algorithm.
        By concavity of $f$, every iterate $\alpha_t$, $t \geq 1$, satisfies $f(\alpha_t) \leq f'(\alpha_{t-1})(\alpha_t-\alpha_{t-1}) + f(\alpha_{t-1}) = b$. Thus, it suffices to upper bound the index of the first iterate satisfying $f(\alpha_t) \geq b'$. 

        We will use the Bregman divergence as a potential function to show convergence, so for shorthand, let $g(\alpha) := D_{f}(\alpha_{*} \mid \alpha) = (f(\alpha) - b) + f'(\alpha) (\alpha_{*} - \alpha)$. For any any $t \geq 1$,
        \begin{align*} 
        \alpha_{*} & > \alpha_{t} = \alpha_{t-1} + \frac{b - f(\alpha_{t-1})}{f'(\alpha_{t-1})} \iff f'(\alpha_{t-1}) (\alpha_{*} - \alpha_{t-1}) > b - f(\alpha_{t-1})
        \\ & \iff g(\alpha_{t-1}) = f'(\alpha_{t-1})(\alpha_{*} - \alpha_{t-1}) + f(\alpha_{t-1}) - b'> b - b' .    
        \end{align*}
        Therefore, the first time $g(\alpha_{t-1}) \leq b - b'$, then we will be done in the next iteration. 

        In the sequel, we show that for every step that we have not yet converged, i.e. $f(\alpha_{t}) < b'$, we have 
        \begin{equation} \label{eq:BregmanDrop}
            g(\alpha_t) \leq (1-\eps) g(\alpha_{t-1}).
        \end{equation}
        This combined with the argument above gives the iteration bound. 

        We first recall a classical inequality due to from Radzik \cite{Radzik}: for $t \geq 1$,
        \[ f(\alpha_{t-1}) < b \implies  \qquad \frac{b-f(\alpha_{t})}{b-f(\alpha_{t-1})} + \frac{f'(\alpha_{t})}{f'(\alpha_{t-1})} \leq 1 .  \]      
        This inequality is derived using concavity as follows:
        \[ f(\alpha_{t-1}) - b \leq f(\alpha_{t}) - b + f'(\alpha_{t})( \alpha_{t-1} - \alpha_{t}) = f(\alpha_{t}) - b + f'(\alpha_{t}) \frac{f(\alpha_{t-1})-b}{f'(\alpha_{t-1})} \]
        \[ \implies 1 \geq \frac{f(\alpha_{t})-b}{f(\alpha_{t-1})-b} + \frac{f'(\alpha_{t})}{f'(\alpha_{t-1})} ,  \]
        where we divided by $f(\alpha_{t-1}) - b < 0$.

        Now assuming that $f(\alpha_t) < b'$, $t \geq 1$, by Radzik's inequality we have that
        \begin{equation}
        \label{eq:grad-dop}
        1-\eps \geq 1 - \frac{b-b'}{b-f(\alpha_0)} \geq 1-\frac{b-f(\alpha_t)}{b-f(\alpha_{t-1})} \geq \frac{f'(\alpha_t)}{f'(\alpha_{t-1})}.
        \end{equation}
        Since $f$ is increasing, we conclude that $0 < f'(\alpha_t) \leq (1-\eps) f'(\alpha_{t-1})$. That is, at every non-terminating iteration, the derivative drops by a $1-\eps$ factor. 

        To prove \cref{eq:BregmanDrop}, define $\alpha'_t := \alpha_{t-1} + \frac{b'-f(\alpha_{t-1})}{f'(\alpha_{t-1})} $ for any $t \geq 1$ with $f(\alpha_{t-1}) < b'$. We have 
        \[ f(\alpha'_{t}) \leq f'(\alpha_{t-1})(\alpha'_t-\alpha_{t-1}) + f(\alpha_{t-1}) = b' , \]
        where the first step was by concavity, and the second was by substituting $\alpha'_{t}$. 
        Therefore we can rewrite
        \[
        g(\alpha_{t-1}) = f'(\alpha_{t-1})(\alpha_{*}-\alpha_{t-1}) + f(\alpha_{t-1}) - f(\alpha_{*}) = f'(\alpha_{t-1})(\alpha_{*}-\alpha'_t) , 
        \]
        where we substituted the above expression for $f(\alpha_{*}) = b'$. 
        If $f(\alpha_t) < b'$ for $t \geq 1$, then we can apply the above to show 
        \[ \frac{g(\alpha_{t})}{g(\alpha_{t-1})} = \frac{f'(\alpha_t)(\alpha_{*}-\alpha'_{t+1})}{f'(\alpha_{t-1})(\alpha_{*}-\alpha'_{t})} \leq (1-\eps) \frac{\alpha_{*}-\alpha'_{t+1}}{\alpha_{*}-\alpha'_{t}} \leq 1-\eps ,   \]
        where in the first and last step we substituted the expressions for $g(\alpha_{t}), g(\alpha_{t-1})$ in the numerator and denominator, in the second step we used \cref{eq:grad-dop}, and the final step was by $\alpha'_t \leq \alpha_t \leq \alpha'_{t+1}$. 

        Let $T$ be as in the statement of the theorem, and assume $f(\alpha_{t}) < b'$ for all $t \leq T-1$. Then we have
        \[ g(\alpha_{T-1}) \leq (1-\eps)^{T-1} g(\alpha_{0}) = b - b' ,    \]
        where the first step was by \cref{eq:BregmanDrop} and the second was by definition of $T$. But this implies that $\alpha_{t} \geq \alpha_{*}$, i.e. $f(\alpha_{t}) \geq b'$ as shown above.

    \end{proof}

    We will apply the method with $f = h_{U,z}^{T}$ and targets $b',b = h(1) + \Omega(\gamma)$.   
    The above analysis bounds the number of iterations of Newton-Dinkelbach, depending on the quality of the initial guess. For this purpose, we have \cref{l:updateRange} which gives explicit bounds on the solution. In the following \cref{s:Guess}, we show how to compute an approximation of $\mu_{s}$ in this our setting to give a good initial guess. The guarantee of the guessing algorithm is as follows: 

     \begin{theorem} \label{t:GuessApproximation}
    Algorithm \ref{a:GuessAlgorithm} runs in strongly polynomial time and outputs $\tilde{\mu}$ satisfying
    \[ \sum_{i : \mu_{i} < 1/2} \mu_{i} \leq \tilde{\mu} \leq (1 + 8 n d^{2}) \sum_{i : \mu_{i} < 1/2} \mu_{i} .    \]
    \end{theorem}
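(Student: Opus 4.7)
My plan is to reduce the estimation of $\mu_s := \sum_{i : \mu_i < 1/2} \mu_i$ to the task of identifying the ``large'' eigenspace of $V_T V_T^\Tr$, where $V := (UZU^\Tr)^{-1/2} U Z^{1/2}$ is the matrix from the proof of \cref{l:outlineHParts}. Since $h(1) = \tr(V_T V_T^\Tr) = \sum_i \mu_i$ is already computable in strongly polynomial time, I will equivalently target the complementary sum $\mu_l := \sum_{i:\mu_i \geq 1/2}\mu_i$ and output $\tilde{\mu} := h(1) - \hat{\mu}_l$ for some approximation $\hat{\mu}_l$. The claimed bounds then rewrite as $\hat{\mu}_l \leq \mu_l$ (to get $\tilde{\mu} \geq \mu_s$) and $\mu_l - \hat{\mu}_l \leq 8nd^2 \mu_s$ (to get the multiplicative upper bound), so the goal reduces to finding a subspace capturing most of the top-$r^*$ eigenspace, where $r^* := |\{i : \mu_i \geq 1/2\}|$.

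To find such a subspace, I would apply Knuth's classical greedy subset-selection procedure to the columns of $V_T$: starting from $S_0 := \emptyset$, iteratively add the column $j \in T \setminus S_k$ whose orthogonal component to $\operatorname{span}(V_{S_k})$ has the largest norm. This is simply pivoted Gram--Schmidt on $V_T$, computable in strongly polynomial time via matrix multiplication and inversion. The cutoff $r^*$ would be fixed by an explicit threshold on the incremental squared residual: while we are still picking up large eigendirections the residuals stay $\Omega(1)$, but once these are exhausted the residuals collapse to quantities of order $\mu_s$. Set $W := \operatorname{span}(V_{S_{r^*}})$ with orthogonal projection $P_W$, and define $\hat{\mu}_l := \tr(P_W V_T V_T^\Tr P_W)$.

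The correctness argument splits into two parts. The lower bound $\hat{\mu}_l \leq \mu_l$ follows from the Courant--Fischer variational principle applied to the $(d-r^*)$-dimensional complement $W^\perp$: since $V_T V_T^\Tr$ restricted to $W^\perp$ has trace at least the sum of its $d-r^*$ smallest eigenvalues, and the threshold forces $r^* \leq |\{i : \mu_i \geq 1/2\}|$, this sum contains all of $\mu_s$. For the upper bound, I would invoke Knuth's guarantee that the greedy $k$-volume $\det(V_{S_k}^\Tr V_{S_k})$ is within a factor $\binom{|T|}{k}$ of the maximum $k$-volume, combined with Cauchy--Binet $\sum_{|S|=k}\det(V_S^\Tr V_S) = e_k(\mu)$, to argue that each greedy step recovers an $\Omega(1/n)$ fraction of the still-available large mass. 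Converting this volume guarantee into a principal-angle bound between $W$ and the true top eigenspace, then into a trace perturbation of $V_T V_T^\Tr$, should yield the target additive bound $\mu_l - \hat{\mu}_l \leq 8nd^2 \mu_s$.

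The main obstacle will be sharpening Knuth's multiplicative volume guarantee into a polynomial-factor eigenspace approximation, since the naive version gives an exponential factor of $\binom{n}{d}$. The key should be that on the top eigenspace $V_T V_T^\Tr \succeq (1/2) I$, so ``misalignment'' errors can be paid for in units of $\mu_s$ with only polynomial overhead: each of the at most $d$ greedy iterations should contribute excess at most $O(nd)\mu_s$ to $\hat{\mu}_l - \mu_s$, summing to the claimed $O(nd^2)$. I also expect the threshold choosing $r^*$ to require delicate calibration, so that both the ``premature stop'' error (missing some $\mu_i \geq 1/2$) and the ``late stop'' error (including a spurious $\mu_i < 1/2$) can be charged to $\mu_s$ without breaking the polynomial bound.
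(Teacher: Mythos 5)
Your high-level strategy matches the paper's: select a column subset of $V_T$ whose span approximates the large eigenspace, output the trace of $V_TV_T^\Tr$ on the orthogonal complement, and get the lower bound $\tilde{\mu}\geq\mu_s$ from Ky--Fan/Courant--Fischer. But there is a genuine gap in the upper bound, and it sits exactly where you flag your ``main obstacle.'' You propose greedy (pivoted Gram--Schmidt) selection only, and hope to convert Knuth-type volume guarantees (greedy $k$-volume within $\binom{|T|}{k}$ of the maximum, plus Cauchy--Binet) into a polynomial-factor eigenspace approximation. That conversion is not carried out, and the quantities it would have to control are not controlled by greedy selection: what the upper bound actually needs is a lower bound on the smallest singular value of the selected block, equivalently on $\|(E_b^{\Tr}V_D)^{-1}\|_{\ope}$, where $E_b$ spans the top-$p$ eigenspace. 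Large \emph{incremental} residuals at selection time do not imply that every selected column stays far from the span of the \emph{other} selected columns (including later ones), and greedy column selection is known to lose factors exponential in the subset size in the worst case (this is precisely why rank-revealing QR needs swap steps). The paper closes this gap with an extra algorithmic ingredient you omit: after the greedy initialization it runs a determinant \emph{local search}, and the $\beta=2$-approximate local optimality of the final set $D$ (no single swap improves the subdeterminant by more than a factor $2$) yields, via the averaging bound of \cref{l:detAvgLB}, that $\|P_{D-i}^{\perp}v_i\|_2^2\geq 1/(4n)$ for every $i\in D$, hence $\sigma_p(V_D)^2\geq 1/(4np)$, and then $\tr[(I_d-P_D)V_TV_T^{\Tr}]\leq(1+\|(E_b^{\Tr}V_D)^{-1}\|_{\ope}^2)\sum_{i>p}\mu_i$ with $\|(E_b^{\Tr}V_D)^{-1}\|_{\ope}^2\leq 8nd$ in the nontrivial case $\mu_{p+1}\leq 1/(8nd)$. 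In the paper the greedy phase is only used to lower bound the initial determinant so that the doubling local search terminates in $O(p\log n)$ iterations; it is never the source of the approximation guarantee. Your sketch of ``each greedy step contributes excess at most $O(nd)\mu_s$'' is an assertion, not an argument, and without swap-stability I do not see how to make it true.

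A second, smaller issue: you propose to determine the cutoff $r^*$ by thresholding incremental residuals and acknowledge the calibration is delicate; the paper sidesteps this entirely by using the gap hypothesis $\sum_i\mu_i(1-\mu_i)<1/4$ to show that $p:=\lfloor\tr[V_TV_T^{\Tr}]\rceil$ is \emph{exactly} the number of eigenvalues $\geq 1/2$ (with the edge cases $p=0$ and $p=\rk(V_T)$ handled separately). You should adopt that computation of $p$, and add the determinant local search with its local-optimality analysis, to turn your outline into a proof.
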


    At this point we have all the necessary pieces to prove the guarantees of the update algorithm.

\begin{proof} [Proof of \cref{t:updateGuarantee}]

    Our plan is to apply the Newton-Dinkelbach method for $f := h$ with $b' = h(1) + \gamma/5$, $b = h(1) + \gamma$, and appropriately chosen guess $\alpha_{0} \geq 1$. Note that we are in the feasible case, so by \cref{p:outlineInfeasible}(4), there is a solution $\alpha_{*}$ satisfying $h(\alpha_{*}) = b' = h(1) + \gamma/5$. Therefore, by \cref{t:outlineNDAnalysis}, the number of iterations is bounded by 
    \begin{equation} \label{eq:updateIterBound}
        T := 1 + \left\lceil \log_{\frac{1}{1-\eps}} \left(\max \left\{1, \frac{D_{f}(\alpha_{*} |\alpha_{0})}{b-b'}\right\} \right) \right\rceil , 
    \end{equation} 
    where $\eps$ depends on our choice of $\alpha_{0}$. 

    For the case $h'(1) \geq \gamma/4$, we choose $\alpha_{0} = 1$ and claim that a single iteration suffices. Indeed, we have $\frac{b - b'}{b - h(\alpha_{0})} = \frac{4\gamma/5}{\gamma} = \frac{4}{5} =: \eps$. Further, 
    \[ D_{h}(\alpha_{*} | \alpha_{0}) \leq \alpha_{*} (h(\alpha_{*}) - h(1)) \leq 4 (\gamma/5) ,    \]
    where the first step was by \cref{f:polarizedUpdate}, and in the second step we used the bound $\alpha_{*} \leq 4$ given in \cref{l:updateRange}(1) and the definition $h(\alpha_{*}) = h(1) + \gamma/5$. This gives iteration bound
    \[ T = 1 + \left\lceil \log_{5} \left(\max \left\{1, \frac{4\gamma/5}{4 \gamma/5}\right\} \right) \right\rceil = 1 , 
    \]    
    where we substituted $1/(1-\eps) = 5, D_{h}(\alpha_{*} | \alpha_{0}) \leq 4 \gamma/5$ and $b - b' = 4 \gamma/5$ into \cref{eq:updateIterBound}. 

    In the other case, we use the value $\tilde{\mu}$ approximating $\mu_{s}$ given by \cref{t:GuessApproximation}, and choose guess $\tilde{\alpha} := 1 + \frac{\gamma}{2 \tilde{\mu}}$. 
    In order to apply the Newton-Dinkelbach method, we want to show that this is a feasible guess, i.e. $h(\tilde{\alpha}) \leq b = h(1) + \gamma$. For this, we note $\tilde{\mu} \geq \mu_{s}$ by the lower bound in \cref{t:GuessApproximation}, so we have $\tilde{\alpha} \leq 1 + \frac{\gamma}{2\mu_{s}}$, which implies 
    \[ h(\tilde{\alpha}) \leq h(1) + \gamma ,    \]
    where the first step was by monotonicity of $h$ according to \cref{f:HIncreasingConcave}, and the second step was by the upper bound in \cref{l:updateRange}(2) applied with $\delta = \gamma/2$. 

    Now we argue that $T \lesssim \log(nd)$. 
    As $\tilde{\alpha} \geq 1$, we have $h(\tilde{\alpha}) \geq h(1)$ so we can maintain $\eps := \frac{4}{5} \leq \frac{\gamma - \gamma/5}{\gamma}$. 

    We can bound the Bregman divergence as
    \[ D_{h}(\alpha_{*} | \tilde{\alpha}) \leq \bigg( \frac{\alpha_{*}}{\tilde{\alpha}} - 1 \bigg) (h(\alpha_{*}) - h(\tilde{\alpha})) \leq \frac{1+\gamma/\mu_{s}}{1+\gamma/2\tilde{\mu}}(\gamma/5) \leq O(n d^{2} \gamma ) ,    \]
    where the first step was by \cref{f:polarizedUpdate}, in the second step we used the bound $\alpha_{*} \leq 1+\gamma/\mu_{s}$ from \cref{l:updateRange}(2), and in the final step we used $\tilde{\mu} \leq O(n d^{2}) \mu_{s}$ by \cref{t:GuessApproximation}. This gives iteration bound
    \[ T = 1 + \left\lceil \log_{5} \left(\max \left\{1, \frac{O(n d^{2} \gamma)}{4 \gamma/5}\right\} \right) \right\rceil \leq O(\log (nd)) , 
    \]    
    where we substituted $1/(1-\eps) = 5, D_{h}(\alpha_{*} | \alpha_{0}) \leq O(nd^{2} \gamma)$ and $b - b' = 4 \gamma/5$ into \cref{eq:updateIterBound}. 

    By \cref{t:outlineNDAnalysis}, the update algorithm \ref{a:UpdateAlg} converges in $O(\log(nd))$ iterations of the Newton-Dinkelbach method. To show that this is strongly polynomial, we note that each iteration requires us to evaluate $h,h'$, for which we have the expressions 
    \begin{align*}
        h(\alpha) = \tr[ \alpha M_{T} (\alpha M_{T} + M_{\overline{T}})^{-1}], \quad h'(\alpha) = \tr[ M_{T} (\alpha M_{T} + M_{\overline{T}})^{-1} M_{\overline{T}} (\alpha M_{T} + M_{\overline{T}})^{-1} ] , 
    \end{align*}
    with $M_{T} = U_{T} Z_{T} U_{T}^{\Tr}, M_{\overline{T}} = U_{\overline{T}} Z_{\overline{T}} U_{\overline{T}}^{\Tr}$ as discussed in \cref{r:hStronglyPoly}. 
    \end{proof}

    The above analysis shows that in each iteration the update is strongly polynomial. 
    In order for our full algorithm to be strongly polynomial, we will the following useful bounds on the updates in each iteration. 

    \begin{corollary} \label{c:updateBound}
        For input $(U,z,T,\gamma)$ as in \cref{t:updateGuarantee}, the update $\hat{\alpha}$ satisfies
        \[ 1 \leq \hat{\alpha} \leq 1 + O(\gamma/\mu_{\min}) ,    \]
        where $\mu := {\rm spec}(U_{T} Z_{T} U_{T}^{\Tr} (U Z U^{\Tr})^{-1})$ and $\mu_{\min}$ denotes the smallest non-zero eigenvalue. 
    \end{corollary}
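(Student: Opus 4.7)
The lower bound $\hat{\alpha} \geq 1$ is immediate from the construction of Algorithm~\ref{a:UpdateAlg}: in both branches, the initial guess $\alpha_{0}$ satisfies $\alpha_{0} \geq 1$, and the Newton-Dinkelbach iterates of Algorithm~\ref{a:NDAlgorithm} are monotonically non-decreasing, since each step adds $(b - h(\alpha_{t}))/h'(\alpha_{t}) \geq 0$. (Here $h(\alpha_{t}) \leq b$ follows from the concavity argument in the proof of \cref{t:outlineNDAnalysis}, and $h'(\alpha_{t}) > 0$ follows from \cref{f:HIncreasingConcave}.)

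For the upper bound, my plan is to combine the termination guarantee on $h(\hat{\alpha})$ with the explicit spectral decomposition of $h$. The concavity argument in the proof of \cref{t:outlineNDAnalysis} shows that every iterate---including the terminating one---satisfies $h(\hat{\alpha}) \leq b = h(1) + \gamma$. Invoking \cref{l:outlineHParts} and retaining only the (non-negative) summand corresponding to $\mu_{\min}$ yields
\[
\gamma \;\geq\; h(\hat{\alpha}) - h(1) \;\geq\; \frac{(\hat{\alpha}-1)\mu_{\min}(1-\mu_{\min})}{1 + (\hat{\alpha}-1)\mu_{\min}} ,
\]
and solving this inequality for $\hat{\alpha} - 1$ gives the claimed bound in the generic regime where $1 - \mu_{\min}$ is bounded away from $0$. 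Concretely, when $\mu_{\min} \leq 1/2$, then $1 - \mu_{\min} \geq 1/2$ and a short algebraic manipulation yields $\hat{\alpha} - 1 \leq O(\gamma/\mu_{\min})$.

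The main obstacle is the edge regime $\mu_{\min} > 1/2$, where the factor $(1-\mu_{\min})$ in the above lower bound degrades and the direct inequality becomes vacuous. My plan is to handle this case by exploiting the branching structure of Algorithm~\ref{a:UpdateAlg} rather than the sum formula directly: if $\mu_{\min} > 1/2$ then every non-zero $\mu_{i}$ exceeds $1/2$, hence $\mu_{s} = 0$, which by the contrapositive of \cref{l:updateRange}(2) forces $h'(1) \geq \gamma/4$, i.e.\ the algorithm takes the Case~1 branch and returns $\hat{\alpha} = 1 + \gamma/h'(1)$ after a single Newton step. Using the estimate $h'(1) = \sum_{i} \mu_{i}(1-\mu_{i}) \geq \mu_{\min} \sum_{i}(1-\mu_{i})$ together with the feasibility bound $\sum_{i}(1-\mu_{i}) \geq \gamma$ from \cref{p:outlineInfeasible}, I then bound $\hat{\alpha}-1$ in this regime as well, completing the proof when both cases are combined into the single asymptotic statement.
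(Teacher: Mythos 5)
Your lower bound $\hat{\alpha} \geq 1$ is fine, and your observation that $\mu_{s}=0$ forces the $h'(1)\geq \gamma/4$ branch is correct. The genuine gap is in your main case. From $\gamma \geq h(\hat{\alpha})-h(1) \geq \frac{(\hat{\alpha}-1)\mu_{\min}(1-\mu_{\min})}{1+(\hat{\alpha}-1)\mu_{\min}}$ the algebra gives $(\hat{\alpha}-1)\mu_{\min}\,(1-\mu_{\min}-\gamma) \leq \gamma$, which is $O(\gamma/\mu_{\min})$ only if $1-\mu_{\min}-\gamma$ is bounded below by a constant. You only know $1-\mu_{\min}\geq 1/2$, while $\gamma$ may be anywhere in $(0,1]$; as soon as $\gamma \geq 1-\mu_{\min}$ the single-eigenvalue inequality is satisfied by \emph{every} $\hat{\alpha}$, and even for $\gamma$ slightly below $1-\mu_{\min}$ the bound blows up. More structurally, no proof can go through the output guarantee $h(\hat{\alpha})\leq h(1)+\gamma$ alone: the hypothesis of \cref{t:updateGuarantee} only requires $\sup_{\alpha\geq 1}h(\alpha)\geq h(1)+\gamma$, and when this supremum equals $h(1)+\gamma$ (e.g. one small eigenvalue and the rest near $1$), the condition $h(\hat{\alpha})\leq h(1)+\gamma$ holds for all $\hat{\alpha}$ and carries no information, yet the corollary must still bound the algorithm's actual output.

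The paper's proof therefore argues through the Newton--Dinkelbach iterates rather than the post-condition: in the second branch, the penultimate iterate satisfies $h(\alpha_{t-1})\leq h(1)+\gamma/5$, hence $\alpha_{t-1}\leq \alpha_{*}\leq 1+\gamma/\mu_{s}$ by \cref{l:updateRange}(2); then $h'(\alpha_{t-1})\geq \mu_{s}/8$ is shown from the spectral formula in \cref{l:outlineHParts}, so the final Newton step adds at most $8\gamma/\mu_{s}$, and $\mu_{s}\geq\mu_{\min}$ finishes the bound (the $h'(1)\geq\gamma/4$ branch is a single step bounded by $5$). If you want to repair your argument, you need this kind of control on the last two iterates — knowing only where the output lands relative to $h(1)+\gamma$ is not enough.
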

    \begin{proof}
    In the case when $h'(1) \geq \gamma/4$, the proof of \cref{t:updateGuarantee} shows that the Newton-Dinkelbach method only requires a single iteration, so we can bound
    \[ \hat{\alpha} = \alpha_{0} + \frac{\gamma}{h'(\alpha_{0})} \leq 1 + \frac{\gamma}{\gamma/4} = 5 ,    \]
    where the first step was by definition of the Newton step, and in the second step we substituted $\alpha_{0} = 1$ and $h'(1) \geq \gamma/4$ by assumption. 

    In the other case, $\hat{\alpha} = \alpha_{t}$ for $t$ the last iteration of the Newton-Dinkelbach method. Note $h(\alpha_{t-1}) \leq h(1) + \gamma/5 = h(\alpha_{*})$ as otherwise we would be done in the previous iteration, so we have $\alpha_{t-1} \leq \alpha_{*} \leq 1 + \gamma/\mu_{s}$ by \cref{l:updateRange}(2), where $\mu_{s} := \sum_{i : \mu_{i} < 1/2} \mu_{i}$. To bound the denominator, we have
    \[ h'(\alpha_{t-1}) = \sum_{i=1}^{d} \frac{\mu_{i} (1-\mu_{i})}{(1 + (\alpha_{t-1}-1) \mu_{i})^{2}} 
    \geq \sum_{i=1}^{d} \frac{\mu_{i}(1-\mu_{i})}{(1 + \gamma \mu_{i}/\mu_{s})^{2}} \geq \sum_{i : \mu_{i} < 1/2} \frac{\mu_{i}/2}{(1+\gamma)^{2}} \geq \frac{\mu_{s}}{8} ,  \]
    where the first step was by the expression in \cref{l:outlineHParts}, in the second step we used $\alpha_{t-1} \leq 1 + \gamma/\mu_{s}$ as shown above, in the third step we consider the $\mu_{i} < 1/2$ terms to bound $1-\mu_{i} \geq 1/2$ in the numerator and $\mu_{i} \leq \mu_{s}$ in the denominator, and the final step was by $\gamma \leq 1$. Putting this together, we bound the output 
    \[ \hat{\alpha} = \alpha_{t-1} + \frac{h(1) + \gamma - h(\alpha_{t-1})}{h'(\alpha_{t-1})} \leq 1 + \frac{\gamma}{\mu_{s}} + \frac{\gamma}{h'(\alpha_{t-1})} \leq 1 + \frac{\gamma}{\mu_{s}} + \frac{8 \gamma}{\mu_{s}} ,   \]
    where the first step was by definition of the Newton-Dinkelbach method, in the second step we used that $h(\alpha_{t-1}) \geq h(1)$ and $\alpha_{t-1} \leq \alpha_{*} \leq 1 + \gamma/\mu_{s}$ as shown above, and the final step was by the bound $h'(\alpha_{t-1}) \geq \mu_{s}/8$ shown above. The theorem follows as $\mu_{s} := \sum_{i : \mu_{i} < 1/2} \geq \mu_{\min} > 0$. 
    \end{proof}

In the following section we prove the approximation guarantee required for the starting guess.

 \section{Guessing Starting Point} \label{s:Guess}

    In this section, we present the guessing algorithm in \ref{a:UpdateAlg} and prove its approximation guarantee:

\begin{algorithm} \label{a:GuessAlgorithm}
\caption{$\operatorname{Approx-Small-Eigen-Sum}(U,z,T)$}

\KwIn{$U \in \R^{d \times n}, \rk(U)=d, z \in \R^{n}_{++}, T \subseteq [n]$ with $\mu := {\rm spec}( U_{T} Z_{T} U_{T}^{\Tr} (U Z U^{\Tr})^{-1})$, $\sum_{i=1}^d \mu_i(1-\mu_i) < 1/4$;}
\KwOut{$\tilde{\mu}$ satisfying $\sum_{i: \mu_i < 1/2} \mu_{i} \leq \tilde{\mu} \leq (1+ 8 n d^2) \sum_{i: \mu_i < 1/2} \mu_i$}
$p \gets \lfloor \tr[ U_{T} Z_{T} U_{T}^{\Tr} (U Z U^{\Tr})^{-1} ] \rceil$\;
\If{$p=\rk(V_{T})$}{Output $0$\;}
\If{$p=0$}{Output $\tr[ U_{T} Z_{T} U_{T}^{\Tr} (U Z U^{\Tr})^{-1} ]$\;}
\Else{
$D \gets \text{DetLocalOpt}(U,z,T,p)$ (Algorithm \ref{a:DetLocalOpt})\;
Output $\tr[(I_{d} - U_{D} (U_{D}^{\Tr} (U Z U^{\Tr})^{-1} U_{D})^{-1} U_{D}^{\Tr} (U Z U^{\Tr})^{-1} ) U_{T} Z_{T} U_{T}^{\Tr} (U Z U^{\Tr})^{-1} ]$\; }

\end{algorithm}

     \begin{theorem} [Restatement of \cref{t:GuessApproximation}] \label{t:GuessApproxR}
    Algorithm \ref{a:GuessAlgorithm} outputs value $\tilde{\mu}$ satisfying 
    \[ 
       \sum_{i: \mu_i < 1/2} \mu_{i} \leq \tilde{\mu} \leq (1+ 8 n d^{2}) \sum_{i: \mu_i < 1/2} \mu_{i} , 
    \]
    where $\mu := {\rm spec}(U_{T} Z_{T} U_{T}^{\Tr} (U Z U^{\Tr})^{-1})$. Furthermore, the algorithm runs in strongly polynomial time and uses $O(n d^{2} \log n)$ matrix computations involving $d \times d$ matrices (multiplication, determinant, inverse). 
\end{theorem}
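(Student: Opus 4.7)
Conjugating by $(UZU^\Tr)^{1/2}$ rewrites the algorithm's final trace as $\tr[(I_d-\Pi_D)W_TW_T^\Tr]$, where $W := (UZU^\Tr)^{-1/2}UZ^{1/2}$ is a Parseval frame ($WW^\Tr = I_d$), $W_TW_T^\Tr$ has spectrum $\mu$, and $\Pi_D$ is the orthogonal projection onto $\mathrm{span}(W_D)$ (since $W_D = (UZU^\Tr)^{-1/2}U_D Z_D^{1/2}$ and $Z_D$ is positive). The first observation is that the precondition $\sum_i \mu_i(1-\mu_i)<1/4$ polarizes the spectrum: using $\mu_i(1-\mu_i) \geq \min(\mu_i,1-\mu_i)/2$, one gets $\mu_s + \sum_{i \in L}(1-\mu_i) \leq 2\sum_i \mu_i(1-\mu_i) < 1/2$, where $L := \{i: \mu_i \geq 1/2\}$. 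Hence $\tr[W_TW_T^\Tr] = |L| - \sum_{i \in L}(1-\mu_i) + \mu_s$ lies strictly in $(|L|-1/2,|L|+1/2)$, so the rounded trace $p$ equals $|L|$. The two trivial branches follow immediately: $p = \rk(V_T)$ implies $\mu_s = 0$ (matching output $0$), and $p = 0$ implies $\mu_s = \tr[W_TW_T^\Tr]$ (matching the output).

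For the main case $0 < p < \rk(V_T)$, the lower bound $\tilde\mu \geq \mu_s$ is immediate from Ky Fan's inequality applied with $M = W_TW_T^\Tr$: any rank-$p$ orthogonal projection $\Pi$ satisfies $\tr[\Pi M] \leq \sum_{i=1}^p \lambda_i(M) = \sum_{i \in L}\mu_i$, so $\tilde\mu \geq \tr[W_TW_T^\Tr] - \sum_{i \in L}\mu_i = \mu_s$. For the upper bound, expanding in the eigenbasis $\{v_i\}$ of $W_TW_T^\Tr$ and using $\mu_i \leq 1$ for $i \in L$ and $\|\pi_D v_i\|^2 \leq 1$ for $i \in S$ yields
\[
\tilde\mu = \sum_i \mu_i\|\pi_D v_i\|^2 \leq \mu_s + \sum_{i \in L}\|\pi_D v_i\|^2,
\]
so it suffices to prove $\sum_{i \in L}\|\pi_D v_i\|^2 \leq 8nd^2\mu_s$, a quantitative statement that $\mathrm{span}(W_D)$ well-approximates the top eigenspace of $W_TW_T^\Tr$.

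The main obstacle is this last bound, which is where the local determinantal optimality of $D$ enters. The plan is to rewrite the swap optimality condition $R_{D\setminus i\cup j} \leq R_D$ (with $R_S := \det(W_S^\Tr W_S)$) via the rank-one update identity $\|\pi_S W_k\|^2 = R_{S\cup k}/R_S$ as
\[
\|\pi_{D\setminus i}W_j\|^2 \leq \|\pi_{D\setminus i}W_i\|^2 = R_D/R_{D\setminus i} \qquad \text{for all } i \in D,\ j \in T\setminus D.
\]
Combined with $\|\pi_D W_j\|^2 \leq \|\pi_{D\setminus i}W_j\|^2$ this gives a uniform per-column bound $\|\pi_D W_j\|^2 \leq \min_{i \in D}R_D/R_{D\setminus i}$, which sums to $\tilde\mu = \sum_{j \in T\setminus D}\|\pi_D W_j\|^2 \leq n\min_{i}R_D/R_{D\setminus i}$. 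The remaining (hardest) step converts this into the target $8nd^2\mu_s$ by invoking the Cauchy--Binet identity $e_p(\mu) = \sum_{|S|=p,\,S\subseteq T}R_S$, using the polarization of $\mu$ to show $e_p(\mu) \approx \prod_{i \in L}\mu_i$, and then averaging over $i \in D$ to extract a per-column bound of order $\mu_s/d$. Finally, the runtime claim reduces to the analysis of $\mathrm{DetLocalOpt}$: using a swap rule that only accepts at least a constant-factor increase in $R_D$ gives $O(n\log n)$ swaps, and each swap can be maintained incrementally using $O(d^2)$ matrix operations on the factor representation of $(W_D^\Tr W_D)^{-1}$, giving the claimed $O(nd^2\log n)$ matrix operation bound.
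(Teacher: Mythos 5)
Your setup agrees with the paper up to the crucial step: the conjugation to $V:=(UZU^{\Tr})^{-1/2}UZ^{1/2}$, the argument that the rounded trace equals $p=|\{i:\mu_i\geq 1/2\}|$ under $\sum_i\mu_i(1-\mu_i)<1/4$, the two trivial branches, and the Ky--Fan lower bound are all exactly the paper's proof. The gap is the step you yourself call hardest, and as sketched it cannot work. Your chain ``swap optimality $\Rightarrow\ \|(I-P_D)v_j\|_2^2\leq \beta\min_{i\in D}\det(V_D^{\Tr}V_D)/\det(V_{D\setminus i}^{\Tr}V_{D\setminus i})$ for all $j\in T\setminus D$, hence $\tilde\mu\leq \beta\, n\min_i \det(V_D^{\Tr}V_D)/\det(V_{D\setminus i}^{\Tr}V_{D\setminus i})$'' is valid but already too lossy to ever produce a factor proportional to $\mu_s$: take $V_T=[e_1,\dots,e_p,\sqrt{\epsilon}\,e_{p+1}]$ (legitimately realizable inside a Parseval frame, and satisfying $\sum_i\mu_i(1-\mu_i)=\epsilon(1-\epsilon)<1/4$). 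The local (indeed global) optimum is $D=\{1,\dots,p\}$, every leave-one-out ratio $\det(V_D^{\Tr}V_D)/\det(V_{D\setminus i}^{\Tr}V_{D\setminus i})$ equals $1$, so your bound gives $\tilde\mu\leq \beta n$ while $\mu_s=\epsilon$ is arbitrarily small. So $\min_i\det(V_D^{\Tr}V_D)/\det(V_{D\setminus i}^{\Tr}V_{D\setminus i})$ is simply not controlled by $\mu_s$, and no Cauchy--Binet manipulation can fix this after the information about which column $j$ is being projected has been discarded. The alternative reading of your plan, $\tilde\mu\leq e_{p+1}(\mu)/\det(V_D^{\Tr}V_D)\leq \mu_s\, e_p(\mu)/\det(V_D^{\Tr}V_D)$, would require $\det(V_D^{\Tr}V_D)\geq e_p(\mu)/\poly(n,d)$, i.e.\ that a $2$-approximate swap-local optimum approximates the maximum subdeterminant within a polynomial factor; swap-local search only guarantees exponential factors, and the paper itself only proves (and only needs) bounds of the form $\det(V_D^{\Tr}V_D)\geq (4np)^{-p}$.

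The missing mechanism is how the paper gets the $\mu_s$ factor: local optimality is used only to show each column of $V_D$ is far from the span of the others, $\|(I-P_{D\setminus i})v_i\|_2^2\geq 1/(4n)$, by combining the swap condition with the averaging bound $\max_{j}\|(I-P_{D\setminus i})v_j\|_2^2\geq \tr[(I-P_{D\setminus i})V_TV_T^{\Tr}]/|T|$ and $\tr[V_TV_T^{\Tr}]\geq p-1/2$; this yields $\sigma_p(V_D)^2\geq 1/(4np)$. The factor $\mu_s$ then enters through the eigenspace-overlap bound $\tr[(I-P_D)V_TV_T^{\Tr}]\leq \|E_s^{\Tr}V_D(E_b^{\Tr}V_D)^{-1}\|_F^2+\mu_s\leq (1+\|(E_b^{\Tr}V_D)^{-1}\|_{\ope}^2)\,\mu_s$, using $\|E_s^{\Tr}V_D\|_F^2\leq \|E_s^{\Tr}V_T\|_F^2=\mu_s$, with the case split $\mu_{p+1}\lessgtr 1/(8nd)$ controlling $\|(E_b^{\Tr}V_D)^{-1}\|_{\ope}$. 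Your argument contains no analogue of this per-column/eigenspace coupling, so the upper bound is not established. Separately, your runtime claim is also incomplete: bounding the number of factor-constant improving swaps requires a lower bound on the starting determinant, which the paper obtains from the greedy initialization ($\det(V_G^{\Tr}V_G)\geq \frac{1}{2p}{|T|\choose p}^{-1}$, hence $O(d\log n)$ swaps, each scanning $O(nd)$ candidate exchanges); your sketch omits any initialization argument.
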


    Throughout this section we will use shorthand $V := (U Z U^{\Tr})^{-1/2} U Z^{1/2}$. Note that $V V^{\Tr} = I_{d}$, $V^{\Tr} V$ is the orthogonal projection onto ${\rm im}(V^{\Tr}) = {\rm im}(\sqrt{Z} U^{\Tr})$, and 
    \[ {\rm spec}(V_{T} V_{T}^{\Tr}) = {\rm spec}((U Z U^{\Tr})^{-1/2} U_{T} Z_{T} U_{T}^{\Tr} (U Z U^{\Tr})^{-1/2} ) = {\rm spec}( U_{T} Z_{T} U_{T}^{\Tr} (U Z U^{\Tr})^{-1}) = \mu . \]

    Ignoring the edge cases $p \in \{0,\rk(U_{T})\}$, there are essentially two main steps in the algorithm, whose guarantees we prove in the following two subsections: (1) finding a local optimum for subdeterminant, and (2) computing the norm of the vectors after projection. In the remainder of the section, we motivate this algorithm and prove the approximation result assuming the guarantees of these two steps. 

    Our goal is to compute the sum of eigenvalues $\mu_{s} := \sum_{i : \mu_{i} < 1/2} \mu_{i}$ for $V_{T} V_{T}^{\Tr}$ in strongly polynomial time, where $\mu = {\rm spec}( V_{T}^{\Tr} V_{T})$ as defined in \cref{l:outlineHParts}. 
    The following characterization of spectral projections gives some intuition for our guessing algorithm. 

    \begin{theorem} [Ky-Fan] \label{t:KyFan}
    For any symmetric matrix with eigendecomposition $X = \sum_{i=1}^{d} \lambda_{i} e_{i} e_{i}^{\Tr}$ and $\lambda \in \R^{d}$ in non-increasing order,  
    \[  \sum_{i=1}^{k} \lambda_{i}(X) = \sum_{i=1}^{k} \langle e_{i}, X e_{i} \rangle = \max_{\rk(P) = k} \tr[P X] ,   \]
    where the max runs over all orthogonal projections of rank $k$. 
    \end{theorem}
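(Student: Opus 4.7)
The plan is to prove the two equalities separately, both via elementary linear-algebraic manipulations using the eigendecomposition of $X$. The first equality $\sum_{i=1}^k \lambda_i(X) = \sum_{i=1}^k \langle e_i, X e_i\rangle$ is immediate: since $X e_i = \lambda_i e_i$, we get $\langle e_i, X e_i\rangle = \lambda_i$, and summing over $i \leq k$ gives the claim. So the real content is the variational characterization $\sum_{i=1}^k \lambda_i = \max_{\rk(P)=k} \tr[PX]$.

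For this, I would argue achievability and optimality in turn. Achievability is immediate: take $P^\star := \sum_{i=1}^k e_i e_i^\Tr$, which is a rank-$k$ orthogonal projection, and compute $\tr[P^\star X] = \sum_{i=1}^k e_i^\Tr X e_i = \sum_{i=1}^k \lambda_i$ by the first equality. The nontrivial direction is the upper bound: for every rank-$k$ orthogonal projection $P$, $\tr[PX] \leq \sum_{i=1}^k \lambda_i$.

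To prove the upper bound, I would expand $P$ in an arbitrary orthonormal basis $f_1,\dots,f_k$ of its image, so $P = \sum_{j=1}^k f_j f_j^\Tr$. Substituting the eigendecomposition of $X$ gives
\[
\tr[PX] \;=\; \sum_{j=1}^k f_j^\Tr X f_j \;=\; \sum_{i=1}^d \lambda_i \sum_{j=1}^k \langle f_j, e_i\rangle^2 \;=\; \sum_{i=1}^d \lambda_i\, w_i,
\]
where the weights $w_i := \sum_{j=1}^k \langle f_j, e_i\rangle^2 = \|P e_i\|_2^2$ satisfy $w_i \in [0,1]$ (since $P$ is an orthogonal projection, so $\|P e_i\|_2 \leq \|e_i\|_2 = 1$) and $\sum_{i=1}^d w_i = \tr[P] = k$. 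The problem then reduces to maximizing the linear function $w \mapsto \langle \lambda, w\rangle$ over the hypersimplex $\{w \in [0,1]^d : \langle w, 1_d\rangle = k\}$.

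The hypersimplex has vertices given by indicator vectors $1_S$ for $S \subseteq [d]$, $|S| = k$, so the maximum of a linear function over it is attained at one of these vertices. Since $\lambda$ is in non-increasing order, the maximum is $\sum_{i=1}^k \lambda_i$, achieved by $S = [k]$. I do not expect any real obstacle here; the only subtlety worth double-checking is the identity $\sum_i w_i = \tr[P]$, which follows because $\sum_i \|P e_i\|_2^2 = \sum_i e_i^\Tr P^2 e_i = \tr[P^2] = \tr[P]$ using idempotence. Combining the upper bound with the achievability of $P^\star$ completes the proof.
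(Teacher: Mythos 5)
Your proof is correct. Note that the paper does not prove this statement at all: it is quoted as the classical Ky--Fan maximum principle and used as a black box, so there is no in-paper argument to compare against. What you give is the standard variational proof, and every step checks out: the first equality is immediate from $Xe_i=\lambda_i e_i$; achievability via $P^\star=\sum_{i\leq k} e_ie_i^{\Tr}$ is fine; and for the upper bound, the reduction $\tr[PX]=\sum_i \lambda_i w_i$ with $w_i=\|Pe_i\|_2^2\in[0,1]$ and $\sum_i w_i=\tr[P]=k$ is exactly right (your identity $\sum_i\|Pe_i\|_2^2=\tr[P^2]=\tr[P]$ correctly uses idempotence, and $\tr[P]=\rk(P)=k$ for an orthogonal projection). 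Maximizing the linear functional $\langle\lambda,w\rangle$ over the hypersimplex and appealing to its $0/1$ vertices is a valid finish; if you want to avoid invoking the vertex structure, the same bound follows directly from the elementary estimate $\sum_i\lambda_i w_i\leq\sum_{i\leq k}\lambda_i$ for any $w\in[0,1]^d$ with $\sum_i w_i=k$ and $\lambda$ non-increasing, but this is a matter of taste, not a gap.
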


    Letting $X := V_{T} V_{T}^{\Tr}$, this tells us that we can approximate the spectral sum $\mu_{s}$ if we can guess the number $p := |\{i : \mu_{i} \geq 1/2\}|$ and then compute a projector which approximates the projector onto the top $p$ eigenvalues.     
    The conditions $\mu \in [0,1]^{d}$ and $\sum_{i=1}^{d} \mu_{i} (1- \mu_{i}) < 1/4$ guarantee that all of the eigenvalues are either close to $0$ or close to $1$, so $\tr[V_{T} V_{T}^{\Tr}] \approx p$.
    Furthermore, in the gapped setting, all vectors have very little mass away from the large eigenspace. Therefore we could hope that the span of a ``representative" column subset gives a good approximation of the large eigenspace. 

    Formally, we want to choose a subset $D$ such that all vectors $v_{j \in T}$ are close to ${\rm im}(V_{D})$ in Euclidean distance, and we can use the determinant as a proxy for this property. Therefore, we will compute an approximate local optimum for subdeterminants, and use this as our ``representative" set. 

    \begin{definition} \label{d:detLocalOpt}
        For input $\Pi \in \R^{m \times m}$, $D \subseteq [m]$ is a $\beta$-approximate local optimum for subdeterminant if 
        \[ \forall i \in D, j \not\in D: \qquad \det( \Pi_{D-i+j, D-i+j} ) \leq \beta \det(\Pi_{D,D}) , \]
        where $\Pi_{S,S}$ denotes the $S \times S$ principal submatrix. 
    \end{definition}
    
    In \cref{ss:localOpt}, we show how to compute a $\beta$-local optimum for the subdeterminant of $U_{T}^{\Tr} (U Z U^{\Tr})^{-1} U_{T} Z_{T}$, or equivalently $V_{T}^{\Tr} V_{T}$, using a classical algorithm of Knuth \cite{Knuth}. Then, in \cref{ss:ProjAnalysis} we show that this local optimum subdeterminant property suffices to approximate $\mu_{s}$ in our setting.

    We can now give the proof of our approximation result assuming these two components. 

    \begin{proof} [Proof of \cref{t:GuessApproxR}]
    Recall $V := (U Z U^{\Tr})^{-1/2} U Z^{1/2}$ for shorthand with $V V^{\Tr} = I_{d}$ and $\mu = {\rm spec}(V_{T} V_{T}^{\Tr})$. 
    We compute $p := \lfloor \tr[V_{T} V_{T}^{\Tr}] \rceil$ in the first step of our algorithm, and we want to show that $p = |\{i : \mu_{i} \geq 1/2\}|$. We have 
    \[ \frac{1}{4} > \sum_{i=1}^{d} \mu_{i} (1-\mu_{i}) \geq \sum_{i : \mu_{i} < 1/2} \frac{\mu_{i}}{2} + \sum_{i : \mu_{i} \geq 1/2} \frac{(1-\mu_{i})}{2} ,     \]
    where the first step was by assumption, and in the second step we used $1-\mu_{i} \geq 1/2$ for the first sum and $\mu_{i} \geq 1/2$ for the second sum. Therefore, we can bound
    \begin{align*}
        \tr[V_{T} V_{T}^{\Tr}] & = \sum_{i=1}^{d} \mu_{i} \leq \sum_{i : \mu_{i} \geq 1/2} 1 + \sum_{i : \mu_{i} < 1/2} \mu_{i} < |\{i: \mu_{i} \geq 1/2\}| + 1/2
        \\ \tr[V_{T} V_{T}^{\Tr}] & = \sum_{i=1}^{d} \mu_{i} \geq \sum_{i : \mu_{i} \geq 1/2} (1 - (1 - \mu_{i})) + \sum_{i : \mu_{i} < 1/2} 0 > |\{i: \mu_{i} \geq 1/2\}| - 1/2 , 
    \end{align*}
    where in both lines the first step was by definition of $\mu$, in the second step we used $0 \preceq V_{T} V_{T}^{\Tr} \preceq I_{d}$ so $\mu_{i} \in [0,1]$, and the final step was by the calculation above. This shows that the nearest integer to $\tr[V_{T} V_{T}^{\Tr}]$ is exactly the number of large eigenvalues $|\{i: \mu_{i} \geq 1/2\}|$. 
    
    Now assuming we have the correct value of $p$, we first consider the easy cases: if $p=0$ then there are no large eigenvalues and we output $\tilde{\mu} = \tr[V_{T}V_{T}^{\Tr}] = \mu_{s}$; similarly if $p = \rk(V_{T})$ then there are no small eigenvalues and we output $\tilde{\mu} = 0 = \mu_{s}$.

    In the remaining case, we can apply the analysis of \cref{t:LocalOptAnalysis} and \cref{p:projGuarantee} to our input $V_{T}$ with $\tr[V_{T} V_{T}^{\Tr}] \geq p - 1/2$ to give the approximation guarantee in the theorem. 
    \end{proof}

    In the remainder of the section, we show how to compute an approximate local optimum for the determinant, and give guarantees the projection step. We will use the following linear algebraic facts in our analysis. 

    \begin{fact} \label{f:ProjVariational}
        For $V \in \R^{d \times k}$ and vector $u \in \R^{d}$, 
        \[ \|(I_{d} - P_{V}) u\|_{2} = \min_{y \in \R^{k}} \|V y + u\|_{2} ,    \]
        where $P_{V} := V (V^{\Tr} V)^{-1} V^{\Tr}$ is the orthogonal projection onto ${\rm im}(V)$. 
    \end{fact}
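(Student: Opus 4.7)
The plan is to prove this as a standard least-squares identity via orthogonal decomposition, which avoids having to worry about whether $V$ has full column rank (the formula $P_V = V(V^\Tr V)^{-1} V^\Tr$ presumes it, but the geometric content works either way).

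First, I would decompose the ambient space as ${\rm im}(V) \oplus {\rm im}(V)^\perp$, noting that $P_V$ is the orthogonal projector onto ${\rm im}(V)$ and $I_d - P_V$ is the orthogonal projector onto ${\rm im}(V)^\perp$. Then for any $y \in \R^k$, the vector $Vy$ lies in ${\rm im}(V)$, so $P_V(Vy) = Vy$ and $(I_d - P_V)(Vy) = 0$. Writing $u = P_V u + (I_d - P_V) u$, we get the orthogonal decomposition
\[
    Vy + u \;=\; \bigl(Vy + P_V u\bigr) \,+\, (I_d - P_V) u,
\]
where the first summand lies in ${\rm im}(V)$ and the second in ${\rm im}(V)^\perp$.

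By the Pythagorean identity for orthogonal vectors,
\[
    \|Vy + u\|_2^2 \;=\; \|Vy + P_V u\|_2^2 \,+\, \|(I_d - P_V) u\|_2^2.
\]
The second term does not depend on $y$, and the first term is non-negative, giving the lower bound $\|Vy+u\|_2 \geq \|(I_d - P_V) u\|_2$ for every $y$. To see that this bound is tight, note that $-P_V u \in {\rm im}(V)$, so there exists $y^\star \in \R^k$ with $V y^\star = -P_V u$; for this choice the first summand vanishes and equality is achieved.

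There is no real obstacle here: the only subtlety is the sign convention ($Vy + u$ rather than the more familiar $Vy - u$), but this is immaterial since $y \mapsto -y$ is a bijection on $\R^k$ and $\|{-Vy} - u\|_2 = \|Vy + u\|_2$. If preferred, one can instead give a one-line algebraic proof by choosing $y^\star = -(V^\Tr V)^{-1} V^\Tr u$, observing $V y^\star + u = (I_d - P_V) u$, and invoking the normal-equation characterization of least squares minimizers; the decomposition proof above is cleaner and makes the geometry transparent.
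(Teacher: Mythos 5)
Your proof is correct, and it is the standard argument: the paper states this as a Fact without proof, treating it as the classical least-squares/orthogonal-decomposition identity, which is exactly the justification you give (Pythagoras applied to the decomposition $Vy+u = (Vy+P_Vu) + (I_d-P_V)u$, with the minimum attained at $Vy^\star = -P_Vu$). Your remarks on the sign convention and on rank are fine; note only that the explicit formula $P_V = V(V^{\Tr}V)^{-1}V^{\Tr}$ in the statement already presumes full column rank, so no extra generality is needed where the fact is applied in the paper.
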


    \begin{fact} \label{f:DetGS}
        For linearly independent $V \in \R^{d \times k}$ and vector $v \in \R^{d}$, let $\tilde{V} := [V, v]$ be the concatenation. Then
        \[ \det(\tilde{V}^{\Tr} \tilde{V}) = \det(V^{\Tr} V) \|(I_{d} - P_{V}) v\|_{2}^{2} ,   \]
        where $P_{V} := V (V^{\Tr} V)^{-1} V^{\Tr}$ is the orthogonal projection onto ${\rm im}(V)$. 
    \end{fact}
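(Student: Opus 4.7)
The plan is to exploit the block-matrix structure of $\tilde{V}^{\Tr} \tilde{V}$ together with the Schur complement formula for block determinants. Since $\tilde{V} = [V, v]$, the Gram matrix decomposes as
\[ \tilde{V}^{\Tr} \tilde{V} = \begin{pmatrix} V^{\Tr} V & V^{\Tr} v \\ v^{\Tr} V & v^{\Tr} v \end{pmatrix}. \]
By hypothesis $V$ has linearly independent columns, so $V^{\Tr} V \in \R^{k \times k}$ is positive definite, and in particular invertible. This is exactly the condition required to apply the Schur complement determinant formula to the top-left block, yielding
\[ \det(\tilde{V}^{\Tr} \tilde{V}) = \det(V^{\Tr} V) \cdot \bigl( v^{\Tr} v - v^{\Tr} V (V^{\Tr} V)^{-1} V^{\Tr} v \bigr). \]

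Next, I would identify the scalar Schur complement as the squared residual norm. Using $P_V = V(V^{\Tr} V)^{-1} V^{\Tr}$, which is symmetric and idempotent, one has $(I_d - P_V)^{\Tr}(I_d - P_V) = I_d - P_V$, so
\[ \|(I_d - P_V) v\|_2^2 = v^{\Tr}(I_d - P_V) v = v^{\Tr} v - v^{\Tr} V(V^{\Tr} V)^{-1} V^{\Tr} v. \]
Substituting this into the previous display yields the claimed identity $\det(\tilde{V}^{\Tr} \tilde{V}) = \det(V^{\Tr} V) \cdot \|(I_d - P_V) v\|_2^2$.

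There is essentially no real obstacle here: both ingredients (the Schur determinant formula and the elementary algebra of an orthogonal projection) are routine, and the hypothesis of linear independence of the columns of $V$ is used in exactly one place, to guarantee invertibility of $V^{\Tr} V$. The only minor sanity check worth noting is degeneracy: if $v \in \mathrm{im}(V)$, then both sides vanish (the right side because $(I_d - P_V)v = 0$, the left side because $\tilde{V}$ then has linearly dependent columns), so the identity holds trivially in that case as well, and one can view the proof above as covering both the generic and degenerate situations uniformly.
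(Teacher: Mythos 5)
Your proof is correct. The paper states this as a fact without proof (the label suggests the intended justification is the standard Gram--Schmidt / base-times-height interpretation of Gram determinants), and your Schur-complement derivation is a clean and essentially equivalent route: the scalar Schur complement $v^{\Tr}v - v^{\Tr}V(V^{\Tr}V)^{-1}V^{\Tr}v$ is precisely the squared length of the Gram--Schmidt residual of $v$ against ${\rm im}(V)$, so the two arguments compute the same quantity. Your use of linear independence (only to invert $V^{\Tr}V$) and your handling of the degenerate case $v \in {\rm im}(V)$ are both accurate, and nothing further is needed.
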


    \begin{lemma} \label{l:detAvgLB}
        For matrix $V \in \R^{d \times m}$ with column subset $V_{S}$ of full column rank,  
        \[ \max_{j \in [m]} \frac{\det(V_{S+j}^{\Tr} V_{S+j})}{\det(V_{S}^{\Tr} V_{S})} = \max_{j \in [m]} \|(I_{d} - P_{S}) v_{j}\|_{2}^{2} \geq \frac{1}{m - |S|} \tr[(I_{d} - P_{S}) V V^{\Tr}] ,    \]
        where $P_{S} := V_{S} (V_{S}^{\Tr} V_{S})^{-1} V_{S}$ is the orthogonal projection onto ${\rm im}(V_{S})$. 
    \end{lemma}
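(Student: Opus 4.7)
The plan is to verify the equality column-by-column and then reduce the inequality to a max-versus-average argument after expanding the trace.

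First I would handle the equality $\frac{\det(V_{S+j}^{\Tr} V_{S+j})}{\det(V_{S}^{\Tr} V_{S})} = \|(I_{d} - P_{S}) v_{j}\|_{2}^{2}$. For $j \notin S$, this is a direct application of \cref{f:DetGS} with $V \gets V_S$ and $v \gets v_j$, using that $V_S$ has full column rank to divide by $\det(V_S^\Tr V_S) > 0$. For $j \in S$, the column $v_j$ already appears among the columns of $V_S$, so $V_{S+j}$ has a repeated column and $\det(V_{S+j}^\Tr V_{S+j}) = 0$; on the other hand $v_j \in {\rm im}(V_S)$, so $(I_d - P_S) v_j = 0$ and both sides equal $0$. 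This dispenses with the first equality.

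Next I would expand the trace on the right-hand side. Using the cyclic property of trace together with idempotence and symmetry of the orthogonal projector $I_d - P_S$,
\[
\tr\bigl[(I_{d} - P_{S}) V V^{\Tr}\bigr] = \sum_{j=1}^{m} v_j^\Tr (I_d - P_S) v_j = \sum_{j=1}^{m} \bigl\|(I_d - P_S) v_j\bigr\|_2^2.
\]
By the edge-case analysis in the previous step, the terms with $j \in S$ vanish, so the sum has only $m - |S|$ nonzero contributions, one for each $j \in [m] \setminus S$.

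Finally I would invoke the standard max-versus-average bound: a maximum of nonnegative numbers is at least their average. Since the sum on the right is over at most $m - |S|$ nonzero terms,
\[
\max_{j \in [m]} \bigl\|(I_d - P_S) v_j\bigr\|_2^2 \;\geq\; \frac{1}{m-|S|} \sum_{j \notin S} \bigl\|(I_d - P_S) v_j\bigr\|_2^2 \;=\; \frac{1}{m-|S|} \tr\bigl[(I_d - P_S) V V^\Tr\bigr],
\]
combining with the already-established equality gives the claim. There is no real obstacle here; the only subtlety is correctly accounting for $j \in S$, which contributes zero to both sides by the \cref{f:DetGS} interpretation and does not inflate the denominator on the right.
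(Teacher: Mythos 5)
Your proof is correct and follows essentially the same route as the paper: the equality via \cref{f:DetGS}, then rewriting the trace as $\sum_{j \notin S} \|(I_d - P_S) v_j\|_2^2$ using the projection property and $(I_d - P_S)v_j = 0$ for $j \in S$, and finishing with a max-versus-average bound over the $m - |S|$ remaining terms. Your explicit treatment of the $j \in S$ case in the determinant identity is a minor extra care the paper leaves implicit, but the argument is the same.
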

    \begin{proof}
        The first equality $\det(V_{S+j}^{\Tr} V_{S+j}) = \det(V_{S}^{\Tr} V_{S}) \|(I_{d} - P_{S}) v_{j}\|_{2}^{2}$ is due to \cref{f:DetGS}. For the inequality, we can bound
        \[ \max_{j \in [m]} \|(I_{d} - P_{S}) v_{j}\|_{2}^{2} \geq \frac{1}{m - |S|} \sum_{j \not\in S} \|(I_{d} - P_{S}) v_{j}\|_{2}^{2} = \frac{1}{m - |S|} \tr[(I_{d} - P_{S}) V V^{\Tr}] ,   \]
        where the first step was by averaging, and in the second step we used that $I_{d} - P_{S}$ is a projection and that $(I_{d} - P_{S}) v_{j} = 0$ if $j \in S$. 
    \end{proof}

 \subsection{Local Optimum} \label{ss:localOpt}

    In this section we present the algorithm to compute the required local optimum satisfying \cref{d:detLocalOpt} and show it runs in strongly polynomial time.

 \begin{algorithm}[h] \label{a:DetLocalOpt}
 \caption{$\operatorname{DetLocalOpt}(U,z,T,p)$}

\KwIn{$U \in \R^{d \times n}, \rk(U)=d, z \in \R^{n}_{++}, T \subseteq [n]$, $0 < p < \rk(U_{T})$, $\tr[U_{T} Z_{T} U_{T}^{\Tr} (U Z U^{\Tr})^{-1}] \geq p - 1/2$ }
\KwOut{ $\beta=2$-local maximizer for determinant $D \in {T \choose p}$ according to \cref{d:detLocalOpt} }
\For{$k = 1, ... p$}
{ $G \gets \arg\max_{i \in T-G} \det(U_{G+i}^{\Tr} (U Z U^{\Tr})^{-1} U_{G+i} Z_{G+i})$\; }
\While{True}{ 
$(i,j) \gets \arg\max_{i \in D, j \in T-D} \det(U_{D-i+j}^{\Tr} (U Z U^{\Tr})^{-1} U_{D-i+j} Z_{D-i+j})$\;
\If{$\det(U_{D-i+j}^{\Tr} (U Z U^{\Tr})^{-1} U_{D-i+j} Z_{D-i+j}) \leq 2 \det(U_{D}^{\Tr} (U Z U^{\Tr})^{+} U_{D} Z_{D})$}{Output $D$\;}
$D \gets D \triangle \{i,j\}$\;
}
\end{algorithm}

    The first part is a simply greedy maximization, and the second is a local search. This strongly polynomial algorithm appears in \cite{Kachiyan}, where it is used (along with a strongly polynomial algorithm for approximate John's position) to compute an exponential approximation for max subdeterminant. The guarantees are as follows:

    \begin{prop} \label{t:LocalOptAnalysis}
        The Greedy and Local Search Algorithm \ref{a:DetLocalOpt} applied to input $(U,z,T,p)$ computes a $\beta=2$-approximate local maximum according to \cref{d:detLocalOpt}. 
        This can be implemented using $O(n d + n d^{2} \log n)$ determinant computations on sub-matrices of $U_{T}^{\Tr} (U Z U^{\Tr})^{-1} U_{T} Z_{T}$. 
    \end{prop}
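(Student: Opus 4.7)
The plan is to separate the two claims. The local-optimality guarantee will follow immediately from the stopping rule of the while loop, while the runtime bound will come from counting greedy and local-search iterations and multiplying by the $O(nd)$ determinant evaluations each iteration performs. For the $\beta = 2$ local optimum, observe that at termination the pair $(i,j)$ selected by the $\arg\max_{i \in D,\, j \in T-D}$ satisfies $\det(U_{D-i+j}^{\Tr} (UZU^{\Tr})^{-1} U_{D-i+j} Z_{D-i+j}) \leq 2 \det(U_D^{\Tr} (UZU^{\Tr})^{-1} U_D Z_D)$. Since this pair is the maximizer over all swap candidates, every other pair satisfies the same inequality, which is exactly the condition in \cref{d:detLocalOpt}.

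For the runtime, I will work with $V := (UZU^{\Tr})^{-1/2} U Z^{1/2}$, which satisfies $V V^{\Tr} = I_d$ and $\det(U_D^{\Tr} (UZU^{\Tr})^{-1} U_D Z_D) = \det(V_D^{\Tr} V_D)$ by similarity. Since $V_D^{\Tr} V_D$ is a principal submatrix of the orthogonal projection $V^{\Tr} V$, its determinant stays in $[0,1]$ throughout the algorithm. The greedy phase consists of $p$ iterations each scanning $O(n)$ candidates, for $O(np)$ determinant computations. To bound the local-search iterations, I will lower bound the greedy output $D_0$: at step $k$, \cref{l:detAvgLB} shows the new factor is at least $\tr[(I_d - P_{G_{k-1}}) V_T V_T^{\Tr}] / (|T| - k + 1)$. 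Bounding $\tr[P_{G_{k-1}} V_T V_T^{\Tr}] \leq \tr[P_{G_{k-1}}] = k - 1$ using $V_T V_T^{\Tr} \preceq V V^{\Tr} = I_d$, and invoking the input hypothesis $\tr[V_T V_T^{\Tr}] \geq p - 1/2$, the trace is at least $p - k + 1/2 \geq 1/2$ for every $k \leq p$. This gives each greedy factor $\geq 1/(2n)$, so $D_0 \geq (2n)^{-p} \geq (2n)^{-d}$.

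Since each local-search swap at least doubles the determinant, which remains upper bounded by $1$, the number of swaps is at most $\log_2((2n)^d) = O(d \log n)$. Each swap iteration checks all $|D| \cdot |T - D| = O(p(n-p)) = O(nd)$ candidate pairs. Combining the two phases yields $O(np) + O(d \log n) \cdot O(nd) = O(nd + n d^2 \log n)$ total determinant computations, matching the claim. The main technical step is the trace-based lower bound on the greedy output; this is what converts the input hypothesis $\tr[V_T V_T^{\Tr}] \geq p - 1/2$ into a strongly polynomial iteration count independent of bit complexity, since without the gapped-eigenvalue guarantee one could have $D_0$ exponentially small and the local-search phase could take unboundedly many swaps.
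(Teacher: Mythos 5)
Your proposal is correct and follows essentially the same route as the paper: rewrite the subdeterminants in terms of $V := (UZU^{\Tr})^{-1/2}UZ^{1/2}$, lower bound each greedy ratio via \cref{l:detAvgLB} together with $\tr[V_T V_T^{\Tr}] \geq p-1/2$ and $\tr[P_{G_{k-1}}V_T V_T^{\Tr}] \leq k-1$, and then bound the number of local-search swaps by the doubling argument against the upper bound $\det(V_D^{\Tr}V_D) \leq 1$. Your uniform per-step bound of $1/(2n)$ gives a slightly weaker initialization bound $(2n)^{-p}$ than the paper's $\frac{1}{2p}\binom{|T|}{p}^{-1}$, but it yields the same $O(p\log n)$ swap count and the same $O(nd + nd^{2}\log n)$ total.
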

    \begin{proof}
        Recall that we use $V := (U Z U^{\Tr})^{-1/2} U Z^{1/2}$ for shorthand, and note that for any $S \subseteq [n]$
        \[ \det(V_{S}^{\Tr} V_{S}) = \det(U_{S}^{\Tr} (U Z U^{\Tr})^{-1} U_{S} Z_{S})   \]
        by multiplicativity of determinant. Therefore we can equivalently analyze our algorithm in terms of $V$. 
        
        In each iteration that the local search algorithm finds an improvement, the subdeterminant increases by a factor of $\beta=2$. As $V_{T} V_{T}^{\Tr} \preceq V V^{\Tr} = I_{d}$, we must have that all subdeterminants are upper bounded by $1$. Therefore the runtime bound holds if we can give $\exp(-\poly(d,n))$ lower bound on the initial subdeterminant. 
        For the $k$-th greedy iteration, we can lower bound
        \begin{align*} 
        \frac{\det(V_{G_{k+1}}^{\Tr} V_{G_{k+1}})}{\det(V_{G_{k}}^{\Tr} V_{G_{k}})} 
        & = \max_{j \in T-G_{k}} \frac{\det(V_{G_{k}+j}^{\Tr} V_{G_{k}+j})}{\det(V_{G_{k}}^{\Tr} V_{G_{k}})} \geq \frac{1}{|T| - k} \tr[(I_{d} - P_{G_{k}}) V_{T} V_{T}^{\Tr}]
        \\ & \geq \frac{1}{|T|-k} \big( \tr[V_{T} V_{T}^{\Tr}] - \|P_{G_{k}}\|_{1} \|V_{T} V_{T}^{\Tr}\|_{\ope} \big) \geq \frac{p - k - 1/2}{|T|-k} , 
        \end{align*}
        where the first step was by definition of the greedy algorithm, in the second step we applied \cref{l:detAvgLB}, in the third step we applied Holder's inequality for the second term, and in the final step we used the conditions $\tr[V_{T} V_{T}^{\Tr}] \geq p - 1/2$, $V_{T} V_{T}^{\Tr} \preceq I_{d}$, and $\|P\|_{1} \leq \rk(P) = k$ for orthogonal projection $P := P_{G_{k}}$. 

        Taking the product of the above bound for $k = 0, ..., p-1$ shows that the greedy initialization satisfies 
        \[ \det(V_{G}^{\Tr} V_{G}) \geq \frac{1}{2p} {|T| \choose p}^{-1}     \]
        As described above, $V_{T} V_{T}^{\Tr} \preceq I_{d}$ so the maximum subdeterminant is $\leq 1$, which implies the local search algorithm starting with $V_{G}$ finds $\beta = 2$-local maximizer in a bounded number of iterations
    \[ \log_{\beta} \frac{1}{\det(V_{G}^{\Tr} V_{G})} \leq \log_{2} 2p {|T| \choose p} \leq O(p \log n)  . \]

    For the runtime, the greedy algorithm requires $p$ iterations, each with $\leq |T|$ determinant computations; the local search algorithm requires $O(p \log |T|)$ iterations, each with $p (|T| - p)$ determinant computations. 
    \end{proof}

\subsection{Analysis of Projection Step} \label{ss:ProjAnalysis}

In this subsection we analyze the output of the projection step in Algorithm \ref{a:GuessAlgorithm} assuming that we have an approximate local optimum to the subdeterminant according to \cref{d:detLocalOpt}. 

\begin{prop} \label{p:projGuarantee}
    Consider input $(U,z,T,p)$ for Algorithm \ref{a:GuessAlgorithm} and let $D \subseteq T$ be the $\beta=2$-local optimum according to \cref{d:detLocalOpt}. The output of the algorithm satisfies 
    \[  \sum_{i = p+1}^{d} \mu_{i} \leq  \tr[(I_{d} - U_{D} ( U_{D}^{\Tr} ( U Z U^{\Tr})^{-1} U_{D})^{-1} U_{D}^{\Tr} (U Z U^{\Tr})^{-1} ) U_{T} Z_{T} U_{T}^{\Tr} (U Z U^{\Tr})^{-1} ] \leq (1 + 8nd^{2}) \sum_{i=p+1}^{d} \mu_{i}   ,  \]
    where $\mu,p$ are defined in the algorithm. 
\end{prop}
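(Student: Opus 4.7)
The lower bound is immediate from Ky-Fan (\cref{t:KyFan}): the matrix $P_D := U_D (U_D^\Tr (UZU^\Tr)^{-1} U_D)^{-1} U_D^\Tr (UZU^\Tr)^{-1}$ is the orthogonal projection onto ${\rm im}(V_D)$ (with $V := (UZU^\Tr)^{-1/2} U Z^{1/2}$), and has rank $\rk(V_D) = p$ since $D$ was produced by the greedy initialization of \cref{a:DetLocalOpt}. So $\tr[P_D V_T V_T^\Tr] \leq \sum_{i=1}^p \mu_i$ by Ky-Fan, and subtracting from $\tr[V_T V_T^\Tr] = \sum_i \mu_i$ gives $\tilde\mu \geq \sum_{i > p}\mu_i$.

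For the upper bound, the plan is to work in the eigenbasis of $V_T V_T^\Tr$. Let $E_+ \in \R^{d \times p}$ span the eigenvectors for the top $p$ eigenvalues $\mu_+$ (all $\geq 1/2$ by the algorithm's precondition) and $E_- \in \R^{d \times (d-p)}$ span the rest; write $v_j = E_+ a_j + E_- b_j$, so that $\sum_{j \in T} a_j a_j^\Tr = \diag(\mu_+)$ and $\|B_T\|_F^2 = \mu_s := \sum_{i > p}\mu_i$. A direct calculation using the orthogonality of $[E_+\ E_-]$ then gives
\[
\tilde\mu \leq \tr[(I - P_D)P_+] + \mu_s = \tr[B_D^\Tr B_D (V_D^\Tr V_D)^{-1}] + \mu_s \leq \mu_s \Big(1 + \frac{1}{\sigma_{\min}(V_D)^2}\Big),
\]
where $P_+ := E_+ E_+^\Tr$, using the identity $V_D^\Tr V_D = A_D^\Tr A_D + B_D^\Tr B_D$ and $\|B_D\|_F^2 \leq \mu_s$. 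The problem reduces to establishing $\sigma_{\min}(V_D)^2 \geq 1/(8nd^2)$.

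I would establish this lower bound by contradiction using the 2-local optimality of $D$. If $\sigma_{\min}(V_D)^2 = \delta < 1/(8nd^2)$, take a unit witness $\hat w \in \R^d$ with $\|V_D^\Tr \hat w\|_2^2 = \delta$ (so $\sum_{i \in D}\langle v_i, \hat w\rangle^2 = \delta$), then use $V_T V_T^\Tr \succeq P_+ /2$ to identify a $j^* \in T \setminus D$ with $\langle v_{j^*}, \hat w\rangle^2 = \Omega(1/n)$, and pick $i^* \in D$ so that the swap $D \to D - i^* + j^*$ multiplies $\det(V_D^\Tr V_D)$ by a factor exceeding $2$, contradicting 2-local optimality. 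The main obstacle is carefully controlling the swap ratio $\|R_{D-i^*}v_{j^*}\|_2^2/\|R_{D-i^*}v_{i^*}\|_2^2$ (given by the matrix-determinant lemma) to beat $2$ as soon as $\delta < 1/(8nd^2)$: this requires choosing $\hat w$ mostly aligned with $E_+$ (using the spectral gap condition $\sum_i \mu_i(1-\mu_i) < 1/4$) to suppress the $E_-$ components of $\hat w$ that would otherwise inflate $\|R_{D-i^*}v_{i^*}\|_2^2$. Substituting the resulting lower bound on $\sigma_{\min}(V_D)^2$ into the earlier reduction gives $\tilde\mu \leq (1 + 8nd^2)\mu_s$, completing the upper bound.
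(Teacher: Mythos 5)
Your lower bound matches the paper's (Ky--Fan with the rank-$p$ projection $P_D$), and your reduction of the upper bound is actually a nice simplification: the identity $\tr[(I_d-P_D)E_+E_+^{\Tr}] = \tr[(V_D^{\Tr}V_D)^{-1}B_D^{\Tr}B_D] \leq \mu_s/\sigma_{\min}(V_D)^2$ together with $\mu_i\leq 1$ does avoid the case split on $\mu_{p+1}$ and the invertibility of $E_b^{\Tr}V_D$ that the paper needs in \cref{l:projBoundSigma}. But the load-bearing step, $\sigma_{\min}(V_D)^2 \geq 1/(8nd^2)$, is exactly where your argument has a genuine gap, and it is the step the paper devotes \cref{l:sigmaLB} to. Your contradiction plan starts from a unit witness $\hat w$ with $\|V_D^{\Tr}\hat w\|_2^2$ small; first, such a witness only certifies $\sigma_{\min}(V_D)$ if $\hat w \in {\rm im}(V_D)$ (otherwise it exists trivially with value $0$ whenever $p<d$). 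More seriously, the step ``use $V_TV_T^{\Tr}\succeq E_+E_+^{\Tr}/2$ to find $j^*$ with $\langle v_{j^*},\hat w\rangle^2=\Omega(1/n)$'' requires $\|E_+^{\Tr}\hat w\|_2=\Omega(1)$, and you cannot arrange this by ``choosing $\hat w$ mostly aligned with $E_+$'': the witness is forced on you, $\hat w = V_Dy/\|V_Dy\|_2$ with $\|y\|_2$ as large as $1/\sigma_{\min}(V_D)$, so even though each column has small $E_-$ mass, the normalized combination can be almost entirely inside $E_-$, in which case $\hat w^{\Tr}V_TV_T^{\Tr}\hat w$ can be as small as roughly $\mu_{p+1}$ and no such $j^*$ exists. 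A second unresolved issue is that even with a good $j^*$, a large $\langle v_{j^*},\hat w\rangle^2$ does not lower bound the swap gain $\|P_{D-i^*}^{\perp}v_{j^*}\|_2^2$ unless $\hat w\perp{\rm im}(V_{D-i^*})$, which the global least-singular direction does not satisfy.

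The paper avoids both problems by arguing per column rather than through a global witness: by \cref{f:DetGS}, $2$-local optimality (\cref{d:detLocalOpt}) says $\|P_{D-i}^{\perp}v_i\|_2^2 \geq \tfrac12\max_{j\in T}\|P_{D-i}^{\perp}v_j\|_2^2$, the averaging bound of \cref{l:detAvgLB} plus $\tr[V_TV_T^{\Tr}]\geq p-\tfrac12$ and $\tr[P_{D-i}V_TV_T^{\Tr}]\leq p-1$ give $\|P_{D-i}^{\perp}v_i\|_2^2\geq \tfrac{1}{4n}$ for every $i\in D$, and then for any $x$ with $|x_i|=\|x\|_\infty$ one gets $\|V_Dx\|_2^2\geq \|P_{D-i}^{\perp}(x_iv_i)\|_2^2\geq \|x\|_2^2/(4np)$. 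This eigenbasis-free argument gives $\sigma_p(V_D)^2\geq 1/(4np)$, which is stronger than what your reduction needs; if you replace your contradiction sketch by this leave-one-out argument, your version of the proof goes through and is, if anything, slightly cleaner than the paper's.
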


    Letting $V := (U Z U^{\Tr})^{-1/2} U Z^{1/2}$ for shorthand, we have $\mu = {\rm spec}(U_{T} Z_{T} U_{T}^{\Tr} (U Z U^{\Tr})^{-1}) = {\rm spec}(V_{T} V_{T}^{\Tr})$, and we can rewrite the above guarantee
    \[ \sum_{i=p+1}^{d} \mu_{i} \leq \tr[ (I_{d} - P_{D}) V_{T} V_{T}^{\Tr}] \leq (1 + 8 nd^{2}) \sum_{i=p+1}^{d} \mu_{i} ,  \]
    where $P_{D} := V_{D} (V_{D}^{\Tr} V_{D})^{-1} V_{D}^{\Tr}$ is the orthogonal projection onto ${\rm im}(V_{D})$. So intuitively, we want to show that for local optimum $D$, the subspace ${\rm im}(V_{D}) = {\rm im}(P_{D})$ is a good approximation for the top-$p$ eigenspace of $V_{T} V_{T}^{\Tr}$.
    In the sequel, we use eigendecomposition $V_{T} V_{T}^{\Tr} = \sum_{i=1}^{d} \mu_{i} e_{i} e_{i}^{\Tr}$, and $E_{b} := [e_{1}, ..., e_{p}] \in \R^{d \times p}$ and $E_{s} := [e_{p+1}, ..., e_{d}] \in \R^{d \times (d-p)}$ the matrices corresponding to big and small eigenvectors, respectively. 

We first show the vectors $V_{D}$ are well-invertible for local optimum $D$. 

\begin{lemma} \label{l:sigmaLB}
    Let $(V_{T},p)$ be as defined above and $D \in {T \choose p}$ be a $\beta=2$-local maximum for the subdeterminant of $V_{T}^{\Tr} V_{T}$ according to \cref{d:detLocalOpt}. Then, 
    \[ \sigma_{p}(V_{D})^{2} \geq \frac{1}{4 np} , \qquad \text{ and } \qquad \sigma_{p}(E_{b}^{\Tr} V_{D})^{2} \geq \sigma_{p}(V_{D})^{2} - \|E_{s}^{\Tr} V_{D}\|_{\ope}^{2} ,  \]
    for $E_{b}, E_{s}$ the eigenvector matrices of $V_{T} V_{T}^{\Tr}$ defined above
\end{lemma}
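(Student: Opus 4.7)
The plan is to prove the two inequalities separately. Throughout I use the shorthand $V := (UZU^\Tr)^{-1/2}UZ^{1/2}$, so that $VV^\Tr = I_d$ and $V_TV_T^\Tr$ has spectrum $\mu$ with eigendecomposition $V_TV_T^\Tr = \sum_{i=1}^d \mu_i e_i e_i^\Tr$; I also write $P_S := V_S(V_S^\Tr V_S)^{-1}V_S^\Tr$ for the orthogonal projection onto $\mathrm{im}(V_S)$.

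For the first inequality, the approach combines local optimality, a Ky-Fan trace bound, the gap hypothesis $\sum_i \mu_i(1-\mu_i)<1/4$, and a pigeonhole. By \cref{f:DetGS}, the $\beta=2$ local optimality of $D$ translates to $\|(I-P_{D-d_i})v_j\|_2^2 \le 2\|(I-P_{D-d_i})v_{d_i}\|_2^2$ for all $i\in D$, $j\in T\setminus D$. Fixing any $i \in D$, I would expand the trace by columns and use that $(I-P_{D-d_i})v_{d_j}=0$ for $j\ne i$ to obtain
\[
\tr[(I-P_{D-d_i})V_TV_T^\Tr] \le (1+2(|T|-p))\|(I-P_{D-d_i})v_{d_i}\|_2^2 \le 2n\|(I-P_{D-d_i})v_{d_i}\|_2^2,
\]
using $p\ge 1$ in the last step. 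On the other hand, \cref{t:KyFan} applied to the rank $\le p-1$ projection $P_{D-d_i}$ gives $\tr[P_{D-d_i}V_TV_T^\Tr]\le \sum_{j=1}^{p-1}\mu_j$, which combined with the hypothesis $\tr[V_TV_T^\Tr]\ge p-1/2$ yields $\tr[(I-P_{D-d_i})V_TV_T^\Tr]\ge \mu_p$. The gap condition forces $\mu_p\ge 1/2$ (by the same one-line argument used at the start of the proof of \cref{t:GuessApproxR}), so $\|(I-P_{D-d_i})v_{d_i}\|_2^2 \ge 1/(4n)$ holds for every $i\in D$. To convert this into a bound on $\sigma_p(V_D)$, I would take a unit minimizer $u^* \in \R^p$ with $\sigma_p(V_D)^2=\|V_Du^*\|_2^2$ and pick $i^*$ with $|u^*_{i^*}|^2\ge 1/p$ by pigeonhole; since $(I-P_{D-d_{i^*}})$ annihilates every column of $V_D$ except the $i^*$th,
\[
\sigma_p(V_D)^2 \ge \|(I-P_{D-d_{i^*}})V_Du^*\|_2^2 = |u^*_{i^*}|^2\|(I-P_{D-d_{i^*}})v_{d_{i^*}}\|_2^2 \ge \tfrac{1}{4np}.
\]

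For the second inequality the plan is an orthogonal decomposition followed by Weyl. Since $E_bE_b^\Tr + E_sE_s^\Tr = I_d$ and $E_b^\Tr E_s = 0$, expanding yields
\[
V_D^\Tr V_D = (E_b^\Tr V_D)^\Tr(E_b^\Tr V_D) + (E_s^\Tr V_D)^\Tr(E_s^\Tr V_D),
\]
and applying Weyl's inequality to the smallest eigenvalue of this sum of PSD matrices gives $\sigma_p(V_D)^2 \le \sigma_p(E_b^\Tr V_D)^2 + \|E_s^\Tr V_D\|_{\ope}^2$, which rearranges to the claim.

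The main obstacle is the first part: local optimality only directly controls residual norms under single-column removals, and the link to $\sigma_p(V_D)$ is not immediate. The key insight is that the gap hypothesis forces $\mu_p\ge 1/2$, which via the Ky-Fan argument gives a uniform $\Omega(1/n)$ lower bound on every per-index residual, and then pigeonhole on the minimizing right singular vector upgrades this into the claimed $\Omega(1/(np))$ lower bound on $\sigma_p(V_D)^2$.
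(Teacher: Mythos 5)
Your proof is correct and follows essentially the same route as the paper: local optimality plus the per-column trace expansion gives the uniform $1/(4n)$ lower bound on each residual $\|(I_d-P_{D-i})v_i\|_2^2$, pigeonhole on the minimizing right singular vector upgrades this to $\sigma_p(V_D)^2 \geq 1/(4np)$, and the second inequality is the same orthogonal decomposition along $[E_b,E_s]$ (Weyl's inequality in place of the paper's variational/Pythagoras phrasing). The only minor difference is that you detour through $\mu_p \geq 1/2$ via the gap condition, whereas the paper obtains $\tr[(I_d - P_{D-i})V_TV_T^{\Tr}] \geq 1/2$ directly from $\tr[V_TV_T^{\Tr}] \geq p-1/2$ together with $\tr[P_{D-i}V_TV_T^{\Tr}] \leq p-1$, so the gap hypothesis is not actually needed for this lemma.
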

\begin{proof}
    The second inequality follows simply as $[E_{b}, E_{s}]$ is an orthonormal basis for $\R^{d}$:
    \[ \sigma_{p}(E_{b}^{\Tr} V_{D})^{2} = \min_{x \in \R^{D}} \frac{\|E_{b}^{\Tr} V_{D} x\|_{2}^{2}}{\|x\|_{2}^{2}} = \min_{x \in \R^{D}} \frac{\|V_{D} x\|_{2}^{2} - \|E_{s}^{\Tr} V_{D} x\|_{2}^{2}}{\|x\|_{2}^{2}} \geq \sigma_{p}(V_{D})^{2} - \|E_{s}^{\Tr} V_{D}\|_{\ope}^{2},     \]
    where we used $|D| = p$ so $\|V_{D} x\|_{2} \geq \sigma_{p}(V_{D}) \|x\|_{2}$ and $\|E_{s}^{\Tr} V_{D} x\|_{2} \leq \|E_{s}^{\Tr} V_{D}\|_{\ope} \|x\|_{2}$ by definition. 

    In the rest, we prove the first inequality. 
    We first show that the columns of $V_{D}$ are ``well-separated" in the sense that each is far from the hyperplane spanned by the rest. For any $S \subseteq T$ such that $V_{S}$ has linearly independent columns, we let $P_{S} := V_{S} (V_{S}^{\Tr} V_{S})^{-1} V_{S}^{\Tr}$ be the orthogonal projection onto ${\rm im}(V_{S})$ and $P_{S}^{\perp} := I_{d} - P_{S}$ be the orthogonal complement. Then by local optimality, for any $i \in D, j \not\in D$ we have
    \[ 2 \det(V_{D-i}^{\Tr} V_{D-i}) \|P_{D-i}^{\perp} v_{i}\|_{2}^{2} = 2 \det(V_{D}^{\Tr} V_{D})
    \geq \max_{j \in T} \det(V_{D-i+j}^{\Tr} V_{D-i+j}) = \max_{j \in T} \det(V_{D-i}^{\Tr} V_{D-i}) \|P_{D-i}^{\perp} v_{j}\|_{2}^{2} ,    \]   
    where the first and last step was by \cref{f:DetGS}, and the middle inequality was by $\beta=2$ local optimality of $D$ according to \cref{d:detLocalOpt}. This then implies
    \begin{align*} 
    \|P_{D-i}^{\perp} v_{i}\|_{2}^{2} \geq \frac{\max_{j \in T} \|P_{D-i}^{\perp} v_{j}\|_{2}^{2}}{2} \geq \frac{\tr[(I_{d} - P_{D-i}) V_{T} V_{T}^{\Tr}]}{2(|T| - (p-1))} 
    \geq \frac{(p-1/2) - (p-1)}{2n} \geq \frac{1}{4n} ,    
    \end{align*}
    where the first step was by the calculation above, in the second step we applied the lower bound in \cref{l:detAvgLB}, in the third step we used lower bound $\tr[V_{T} V_{T}^{\Tr}] \geq p - 1/2$ by assumption, and applied upper bound 
    \[ \tr[P_{D-i} V_{T} V_{T}^{\Tr}] \leq \|P_{D-i}\|_{1} \|V_{T} V_{T}^{\Tr}\|_{\ope} \leq p-1 \]
    by Holder's inequality, $\|P_{D-i}\|_{1} = \rk(P_{D-i}) = |D|-1 = p-1$ and $V_{T} V_{T}^{\Tr} \preceq I_{d}$. 
    
    This implies a lower bound on the $p$-th singular value of $V_{D}$ as follows: for $x \in \R^{D}$ with $|x_{i}| = \|x\|_{\infty}$, 
    \[ \| V_{D} x\|_{2}^{2} \geq \|P_{D-i}^{\perp} V_{D} x\|_{2}^{2} = \|P_{D-i}^{\perp} x_{i} v_{i}\|_{2}^{2} \geq \frac{|x_{i}|^{2}}{4 n} \geq \frac{\|x\|_{2}^{2}}{4 n p} ,    \]
    where the inequality in the first step is because $P_{D-i}^{\perp}$ is an orthogonal projection, in the second step we used that $P_{D-i}^{\perp} v_{j} = 0$ for all $j \in D-i$, in the third step we used the lower bound $\|P_{D-i}^{\perp} v_{i}\|_{2}^{2} \geq 1/4 n$ shown above, and the final step was by our assumption $|x_{i}| = \|x\|_{\infty}$ so $\|x\|_{2}^{2} \leq |D| \|x\|_{\infty}^{2} = p |x_{i}|^{2}$. 
    Since $x \in \R^{D}$ was arbitrary, this gives the required lower bound $\sigma_{p}(V_{D})^{2} \geq 1/4np$.
\end{proof}

Next, we bound the projection in terms of the overlap of $V_{D}$ with the large eigenspace. 

\begin{lemma} \label{l:projBoundSigma}
    For $(V,T,p,D)$ as defined above 
    \[ \tr[(I_{d} - P_{D}) V_{T} V_{T}^{\Tr}] \leq (1 + \|(E_{b}^{\Tr} V_{D})^{-1}\|_{\ope}^{2}) \sum_{i=p+1}^{d} \mu_{i} ,    \]
    where $\mu = {\rm spec}(V_{T} V_{T}^{\Tr})$ and $E_{b} \in \R^{d \times p}$ is the matrix of top-$p$ eigenvectors.  
\end{lemma}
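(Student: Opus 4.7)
The plan is to decompose $V_T V_T^\Tr$ in its own eigenbasis and handle the large and small eigenspaces separately. Writing $V_T V_T^\Tr = E_b \Lambda_b E_b^\Tr + E_s \Lambda_s E_s^\Tr$ with $\Lambda_b := \diag(\mu_1,\ldots,\mu_p)$ and $\Lambda_s := \diag(\mu_{p+1},\ldots,\mu_d)$, linearity of trace gives
\[ \tr[(I_d - P_D) V_T V_T^\Tr] = \tr[(I_d - P_D) E_b \Lambda_b E_b^\Tr] + \tr[(I_d - P_D) E_s \Lambda_s E_s^\Tr]. \]
The second term is bounded by $\tr[E_s \Lambda_s E_s^\Tr] = \sum_{i>p} \mu_i$ since $I_d - P_D \preceq I_d$; this accounts for the ``$1$'' in the claimed bound. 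The first term measures how far the large-eigenvector subspace ${\rm im}(E_b)$ leaks outside ${\rm im}(V_D)$, and the goal is to bound it by $\|(E_b^\Tr V_D)^{-1}\|_\ope^2 \sum_{i>p}\mu_i$.

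For the first term, I would rewrite it as a Frobenius norm, $\tr[(I_d-P_D) E_b \Lambda_b E_b^\Tr] = \|(I_d-P_D) E_b \Lambda_b^{1/2}\|_F^2$, and bound it column-by-column using \cref{f:ProjVariational}: for any $w$ and any $y$, $\|(I_d - P_D) w\|_2 \leq \|V_D y - w\|_2$. Taking $w = E_b \Lambda_b^{1/2} x$ and the key choice $y = B_D^{-1} \Lambda_b^{1/2} x$, where $B_D := E_b^\Tr V_D$ is invertible by \cref{l:sigmaLB}, the decomposition $V_D = E_b B_D + E_s S_D$ (with $S_D := E_s^\Tr V_D$) forces the $E_b$-component of $V_D y - w$ to vanish, leaving exactly the residual $E_s S_D B_D^{-1} \Lambda_b^{1/2} x$. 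Since $E_s$ has orthonormal columns, running $x$ over an orthonormal basis of $\R^p$ and summing yields
\[ \|(I_d - P_D) E_b \Lambda_b^{1/2}\|_F^2 \leq \|S_D B_D^{-1} \Lambda_b^{1/2}\|_F^2. \]

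To finish, I would chain the submultiplicative inequality $\|AB\|_F \leq \|A\|_F \|B\|_\ope$:
\[ \|S_D B_D^{-1} \Lambda_b^{1/2}\|_F^2 \leq \|S_D\|_F^2 \, \|B_D^{-1}\|_\ope^2 \, \|\Lambda_b^{1/2}\|_\ope^2 \leq \|B_D^{-1}\|_\ope^2 \sum_{i>p}\mu_i, \]
using $\|\Lambda_b^{1/2}\|_\ope^2 = \mu_1 \leq 1$ (since $V_T V_T^\Tr \preceq V V^\Tr = I_d$) and $\|S_D\|_F^2 = \tr[S_D S_D^\Tr] \leq \tr[S S^\Tr] = \tr[\Lambda_s] = \sum_{i > p}\mu_i$, which holds because $S_D$ is a column submatrix of $S := E_s^\Tr V_T$. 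Combining with the trivial bound on the second trace term yields the lemma. The main obstacle is identifying the right witness $y$: a direct triangle inequality on $V_D y - w$ introduces an unwanted factor of $2$, whereas $y = B_D^{-1} \Lambda_b^{1/2} x$ kills the $E_b$-component exactly; pairing this with the operator-norm bound $\|\Lambda_b^{1/2}\|_\ope \leq 1$ (rather than $\|\Lambda_b^{1/2}\|_F$) is precisely what avoids a stray $\sum_{i \leq p}\mu_i$ factor that would spoil the claimed bound.
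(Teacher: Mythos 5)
Your proposal is correct and follows essentially the same route as the paper: the same witness in the variational bound of \cref{f:ProjVariational} (your $y = (E_{b}^{\Tr}V_{D})^{-1}\Lambda_{b}^{1/2}x$ is exactly the paper's $x = (E_{b}^{\Tr}V_{D})^{-1}E_{b}^{\Tr}a$ with $a = E_{b}\Lambda_{b}^{1/2}x$), the same split into big/small eigenspace contributions, and the same chaining $\|AB\|_{F}\leq\|A\|_{F}\|B\|_{\ope}$ together with $\|E_{s}^{\Tr}V_{D}\|_{F}^{2}\leq\|E_{s}^{\Tr}V_{T}\|_{F}^{2}=\sum_{i>p}\mu_{i}$. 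The only nitpick is that invertibility of $E_{b}^{\Tr}V_{D}$ does not follow from \cref{l:sigmaLB} unconditionally; as in the paper, one should simply note that the non-invertible case is vacuous since the right-hand side is then infinite.
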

\begin{proof}
    If $E_{b}^{\Tr} V_{D} \in \R^{p \times p}$ is not invertible, then the lemma is trivial as the RHS is $\infty$. 
    Otherwise, we show that the quantity $\|E_{b}^{\Tr} V_{D}^{-1}\|_{\ope}$ gives an upper bound on the distance between subspaces ${\rm im}(E_{b})$ and ${\rm im}(V_{D})$. 
    
    Let $P_{D} := V_{D} (V_{D}^{\Tr} V_{D})^{-1} V_{D}^{\Tr}$ be the orthogonal projection onto ${\rm im}(V_{D})$ and $P_{D}^{\perp} := I_{d} - P_{D}$ be the orthogonal complement. Now for arbitrary $a \in {\rm im}(E_{b})$,
\[ a^{\Tr} P_{D}^{\perp} a  = \| P_{D}^{\perp} a\|_{2}^{2} = \min_{x \in \R^{D}} \|V_{D} x - a\|_{2}^{2} \leq \|V_{D} (E_{b}^{\Tr} V_{D})^{-1} (E_{b}^{\Tr} a) - a\|_{2}^{2} = \|E_{s}^{\Tr} V_{D} (E_{b}^{\Tr} V_{D})^{-1} E_{b}^{\Tr} a\|_{2}^{2} ,   \]
where the first step was because $P_{D}^{\perp}$ is an orthogonal projection, in the second step we applied \cref{f:ProjVariational}, in the third step we substituted $x = (E_{b}^{\Tr} V_{D})^{-1} E_{b}^{\Tr} a $ for invertible $E_{b}^{\Tr} V_{D} \in \R^{p \times p}$, and in the final step we used that $E_{b}^{\Tr} V_{D} (E_{b}^{\Tr} V_{D})^{-1} (E_{b}^{\Tr} a) = E_{b}^{\Tr} a$. Therefore, we can bound
\begin{align*}
    \tr[P_{D}^{\perp} V_{T} V_{T}^{\Tr}] & = \sum_{i=1}^{d} \mu_{i} e_{i}^{\Tr} P_{D}^{\perp} e_{i} 
    \leq \sum_{i=1}^{p} \|E_{s}^{\Tr} V_{D} (E_{b}^{\Tr} V_{D})^{-1} E_{b}^{\Tr} e_{i}\|_{2}^{2} + \sum_{i=p+1}^{d} \mu_{i}
    = \|(E_{s}^{\Tr} V_{D}) (E_{b}^{\Tr} V_{D})^{-1}\|_{F}^{2} + \sum_{i > p} \mu_{i} , 
\end{align*}
where the first step was by eigendecomposition $V_{T} V_{T}^{\Tr} = \sum_{i=1}^{d} \mu_{i} e_{i} e_{i}^{\Tr}$, in the second step we used the calculation above for $e_{i \in [p]} \in {\rm im}(E_{b})$ to bound the first $p$ terms and $\langle e_{i}, P_{D}^{\perp} e_{i} \rangle \leq 1$ for the remaining terms, and in the last step we used $\{E_{b}^{\Tr} e_{1}, ..., E_{b}^{\Tr} e_{p}\} = I_{p}$. We can complete the proof as
\[ \|(E_{s}^{\Tr} V_{D}) (E_{b}^{\Tr} V_{D})^{-1}\|_{F}^{2} \leq \|E_{s}^{\Tr} V_{D}\|_{F}^{2} \|(E_{b}^{\Tr} V_{D})^{-1}\|_{\ope}^{2} \leq \|E_{s}^{\Tr} V_{T}\|_{F}^{2} \|(E_{b}^{\Tr} V_{D})^{-1}\|_{\ope}^{2} = \|(E_{b}^{\Tr} V_{D})^{-1}\|_{\ope}^{2} \sum_{i=p+1}^{d} \mu_{i},   \]
where the first step was by sub-multiplicativity $\|X Y\|_{F} \leq \|X\|_{F} \|Y\|_{\ope}$, in the second step we used that $V_{D}$ is a submatrix of $V_{T}$, and the final step was because $E_{s}$ is the matrix of eigenvectors of $V_{T} V_{T}^{\Tr}$ corresponding to small eigenvalues $\{u_{p+1}, ..., \mu_{d}\}$. 

\end{proof}

Combining these two steps gives the approximation guarantee for the projected norm. 

\begin{proof} [Proof of \cref{p:projGuarantee}]
    The lower bound follows from the Ky-Fan Theorem \ref{t:KyFan}     
    \[  \sum_{i=1}^{p} \mu_{i} = \max_{\rk(P) = p} \tr[P V_{T} V_{T}^{\Tr}] \geq \tr[P_{D} V_{T} V_{T}^{\Tr}] ,      \]
    as $P_{D}$ is an orthogonal projection with $\rk(P_{D}) = p$. 

    If $\mu_{p+1} > 1/8nd$, then the upper bound is trivial as
\[ \|P_{D}^{\perp} V_{T}\|_{F}^{2} \leq \|V_{T}\|_{F}^{2} \leq d \leq (8 nd^{2}) \mu_{p+1} \leq 8 n d^{2} \sum_{i=p+1}^{d} \mu_{i} ,     \]
where in the second step we used the assumption $V_{T} V_{T}^{\Tr} \preceq I_{d}$, and in the third we used $\mu_{p+1} > 1/8nd$. 

    In the remaining case $\mu_{p+1} \leq 1/8nd$, we can apply the analysis of \cref{l:projBoundSigma}. 
    For this, we bound 
    \[ \|(E_{b}^{\Tr} V_{D})^{-1}\|_{\ope}^{-2} = \sigma_{p}(E_{b}^{\Tr} V_{D})^{2} \geq \sigma_{p}(V_{D})^{2} - \|E_{s}^{\Tr} V_{D}\|_{\ope}^{2} \geq \frac{1}{4nd} - \frac{1}{8nd} ,     \]
    where the first step is by definition of singular values for $E_{b}^{\Tr} V_{D} \in \R^{p \times p}$, the second step was by the second inequality from \cref{l:sigmaLB}, and in the final step we used the lower bound $\sigma_{p}(V_{D})^{2} \geq \frac{1}{4nd}$ by \cref{l:sigmaLB} and the upper bound $\|E_{s}^{\Tr} V_{D}\|_{\ope}^{2} = \mu_{p+1} \leq \frac{1}{8nd}$ by assumption. 
    
    Therefore, we can bound the projection as 
    \begin{align*}
        \| P_{D}^{\perp} V_{T}\|_{F}^{2} = \tr[P_{D}^{\perp} V_{T} V_{T}^{\Tr}] \leq (1 + \|(E_{b}^{\Tr} V_{D})^{-1}\|_{\ope}^{2}) \sum_{i=p+1}^{d} \mu_{i} \leq (1 + 8nd) \sum_{i=p+1}^{d} \mu_{i} , 
    \end{align*}
    where the second step was by \cref{l:projBoundSigma} applied to local optimum $D$, and in the final step we used $\|(E_{b}^{\Tr} V_{D})^{-1}\|_{\ope}^{2} \leq 8nd$ as shown above. 
\end{proof}

\section{Regularization} \label{s:regularization}

    In this section we explain the regularization step in Algorithm \ref{a:MainAlgorithm}. This allows us to maintain the property that all intermediate scaling have bit complexity bounded by $\poly(n,b,\log(1/\eps))$, where $b$ is the bit complexity of the input vectors $U \in \R^{d \times n}$. In \cref{ss:RegMagnitude}, we show how to control the magnitude or condition number of the scaling iterates. Then in \cref{ss:RegBitComplexity} we show how this implies strongly polynomial bit complexity by a simple rounding procedure. 
    Finally, we discuss the relation to previous work in \cref{ss:RegPreviousWork}.  

    We begin by presenting a condition measure of frames. 

    \begin{definition} \label{d:rhoMeasure}
    For input $U \in \R^{d \times n}$, let
    \[ \bar{\chi}_{T}(U) := \|((U U^{\Tr})^{-1/2} U_{T})^{+}\|_{\ope} , \qquad \bar{\chi}(U) := \max_{T \subseteq [n]} \bar{\chi}_{T}(U) ,   \]
    \[ \rho_{T}(U) := \|((U U^{\Tr})^{-1/2} U_{T})^{+}\|_{\ope}^{2} - 1 , \qquad \rho(U) := \max_{T \subseteq [n]} \rho_{T}(U) .   \]
    \end{definition}

    Note that these condition measures depend only on ${\rm im}(U^{\Tr})$, as $U^{\Tr} (U U^{\Tr})^{-1/2}$ is invariant under transformations $U \to L U$ for invertible $L \in \R^{d \times d}$.
    
    In \cref{ss:updateRange}, we relate the update step for input $(U,z,T)$ to $\mu = {\rm spec}((U_{T} Z_{T} U_{T}^{\Tr}) (U Z U^{\Tr})^{-1})$.
    We can relate this to our condition number as follows. 

\begin{fact} \label{f:spectralRho}
    For $V \in \R^{d \times n}$ and column subset $T \subseteq [n]$, let $\lambda_{\min}$ denote the smallest non-zero eigenvalue. Then
    \[ (\lambda_{\min}( V_{T}^{\Tr} (V V^{\Tr})^{-1} V_{T} ))^{-1} = \|( V_{T}^{\Tr} (V V^{\Tr})^{-1} V_{T} )^{+}\|_{\ope} = 1 + \rho_{T}(V) .  \] 
\end{fact}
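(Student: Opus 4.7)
The plan is to reduce everything to a short singular-value computation after renaming. First I would set $A := (VV^{\Tr})^{-1/2} V_T \in \R^{d \times |T|}$, so that by direct expansion
\[
V_T^{\Tr}(VV^{\Tr})^{-1} V_T = V_T^{\Tr}(VV^{\Tr})^{-1/2}(VV^{\Tr})^{-1/2} V_T = A^{\Tr} A.
\]
With this shorthand, the definition of $\rho_T(V)$ in \cref{d:rhoMeasure} becomes $1 + \rho_T(V) = \|A^+\|_{\ope}^{2}$, and the claim reduces to the two equalities $\lambda_{\min}(A^{\Tr}A)^{-1} = \|(A^{\Tr}A)^+\|_{\ope} = \|A^+\|_{\ope}^{2}$.

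For the first equality, I would invoke the spectral theorem for the positive semidefinite matrix $A^{\Tr}A$: if $A^{\Tr}A = \sum_i \lambda_i w_i w_i^{\Tr}$ with $\lambda_i \geq 0$, then $(A^{\Tr}A)^+ = \sum_{i: \lambda_i > 0} \lambda_i^{-1} w_i w_i^{\Tr}$, whose operator norm is exactly $1/\lambda_{\min}(A^{\Tr}A)$ where $\lambda_{\min}$ denotes the smallest nonzero eigenvalue.

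For the second equality, I would use the singular value decomposition $A = Q \Sigma W^{\Tr}$ with $\Sigma$ the diagonal matrix of singular values, so that $A^+ = W \Sigma^+ Q^{\Tr}$ and $A^{\Tr}A = W \Sigma^{\Tr} \Sigma W^{\Tr}$. The nonzero eigenvalues of $A^{\Tr}A$ are therefore precisely the squares of the nonzero singular values of $A$, which gives
\[
\|(A^{\Tr}A)^+\|_{\ope} = \frac{1}{\sigma_{\min}(A)^{2}} = \|A^+\|_{\ope}^{2},
\]
where $\sigma_{\min}(A)$ denotes the smallest nonzero singular value. Chaining the two equalities and substituting the definition of $\rho_T(V)$ completes the proof. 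There is no real obstacle here; the only thing to be careful about is correctly handling the rank-deficient case by working with nonzero singular values and the Moore--Penrose pseudoinverse throughout, which the SVD handles uniformly.
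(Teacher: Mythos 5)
Your proof is correct, and it is exactly the standard argument one would expect here: the paper states this as a fact without proof, and your reduction to $A := (VV^{\Tr})^{-1/2}V_T$ together with the SVD/spectral-theorem identities $\lambda_{\min}(A^{\Tr}A)^{-1} = \|(A^{\Tr}A)^+\|_{\ope} = \|A^+\|_{\ope}^2$ (with $\lambda_{\min}$, $\sigma_{\min}$ taken over nonzero values) matches the definition of $\rho_T(V)$ in \cref{d:rhoMeasure} and fills the gap cleanly. Your care with the rank-deficient case via the Moore--Penrose pseudoinverse is precisely the right point to be careful about; nothing is missing.
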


    In our algorithm, we are always dealing with frames $V$ that are column scalings of our original frame $U$. In the next lemma, we show how to relate the condition number $\rho$ between column scalings. 

\begin{lemma} \label{l:rhoZBound}
    For frame $U \in \R^{d \times n}$ and scaling $z \in \R^{n}_{++}$, 
    \[ \forall T \subseteq [n] : \quad  \rho_{T}(U \sqrt{Z}) \leq \frac{\max_{j \not\in T} z_{j}}{\min_{j \in T} z_{j}} \rho_{T}(U)  .  \]
\end{lemma}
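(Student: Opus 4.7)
The plan is to derive a Rayleigh-quotient characterization of $\rho_T$ that makes the dependence on the scaling $Z$ transparent, and then conclude via an elementary Löwner comparison.

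The first step is to establish the variational identity
\[
\rho_T(U) = \max_{x \in \R^d,\ U_T^\Tr x \neq 0} \frac{\|U_{\overline{T}}^\Tr x\|_2^2}{\|U_T^\Tr x\|_2^2}.
\]
To prove it I would start from $1+\rho_T(U) = 1/\lambda_{\min}(P_T P_T^\Tr)$, where $P_T := (UU^\Tr)^{-1/2} U_T$ (valid when $\rk(U_T)=d$; otherwise restrict all quadratic forms to ${\rm im}(U_T)$). The change of variable $y = (UU^\Tr)^{1/2} x$ rewrites the Rayleigh quotient for $\lambda_{\min}(P_T P_T^\Tr)$ as $\min_x \|U_T^\Tr x\|_2^2 / \|U^\Tr x\|_2^2$. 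Reciprocating, then subtracting $1$ and using the Pythagorean split $\|U^\Tr x\|_2^2 = \|U_T^\Tr x\|_2^2 + \|U_{\overline{T}}^\Tr x\|_2^2$, yields the claimed formula.

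Applying the identical derivation to $V := U\sqrt{Z}$ produces
\[
\rho_T(V) = \max_{x \in \R^d} \frac{\|\sqrt{Z_{\overline{T}}}\, U_{\overline{T}}^\Tr x\|_2^2}{\|\sqrt{Z_T}\, U_T^\Tr x\|_2^2}.
\]
Setting $a := \min_{j\in T} z_j$ and $b := \max_{j\notin T} z_j$, the Löwner bounds $Z_T \succeq aI$ and $Z_{\overline{T}} \preceq bI$ give the pointwise estimates
\[
\|\sqrt{Z_{\overline{T}}}\, U_{\overline{T}}^\Tr x\|_2^2 \leq b\,\|U_{\overline{T}}^\Tr x\|_2^2 \quad\text{and}\quad \|\sqrt{Z_T}\, U_T^\Tr x\|_2^2 \geq a\,\|U_T^\Tr x\|_2^2
\]
for every $x \in \R^d$. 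Substituting into the expression for $\rho_T(V)$ and pulling the factor $b/a$ outside the maximum matches the identity for $\rho_T(U)$:
\[
\rho_T(V) \leq \max_x \frac{b\,\|U_{\overline{T}}^\Tr x\|_2^2}{a\,\|U_T^\Tr x\|_2^2} = \frac{b}{a}\,\rho_T(U).
\]

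The main technical content lies in establishing the Rayleigh identity for $\rho_T$; once it is in hand, the remaining bound is a one-line Löwner comparison. The potential nuisance of $U_T$ being rank-deficient is handled by the same derivation after restricting all quadratic forms to ${\rm im}(U_T)$, since $\rho_T$ is defined through a pseudoinverse and is insensitive to this restriction.
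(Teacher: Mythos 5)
Your argument is correct, and genuinely simpler than the paper's, in the case $\rk(U_T)=d$: there the identity $\rho_T(U)=\max_{x\neq 0}\|U_{\overline{T}}^{\Tr}x\|_2^2/\|U_T^{\Tr}x\|_2^2$ holds and the L\"owner comparison finishes the proof. The gap is in the rank-deficient case, which you dismiss in one sentence but which is actually the typical case where the lemma is applied (e.g.\ in \cref{t:Regularization} it is used for prefix sets $T=[k]$ with $k<d$, and in \cref{p:iterateUB} for the update set $T$). If $\rk(U_T)<d$ the unrestricted supremum is always $+\infty$ (take $x=x_1+t\,x_0$ with $U_T^{\Tr}x_0=0$, $U^{\Tr}x_0\neq 0$, and $t\to\infty$), and your proposed fix --- restricting $x$ to ${\rm im}(U_T)$ --- does not recover $\rho_T(U)$ either. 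Concretely, take $U=\bigl(\begin{smallmatrix}1&0&1\\0&1&1\end{smallmatrix}\bigr)$ and $T=\{1\}$: then $u_1^{\Tr}(UU^{\Tr})^{-1}u_1=2/3$, so $\rho_T(U)=3/2-1=1/2$, while the ratio at $x=e_1\in{\rm im}(U_T)$ equals $1$. Tracing your own change of variables through the pseudoinverse shows the correct domain is $0\neq x\in (UU^{\Tr})^{-1}{\rm im}(U_T)$ (in the example, ${\rm span}(2,-1)$, which indeed gives $1/2$). This is fatal for the next step: for $V=U\sqrt{Z}$ the corresponding domain is $(UZU^{\Tr})^{-1}{\rm im}(U_T)$, a \emph{different} subspace that moves with the scaling, so after applying the L\"owner bounds you are comparing maxima taken over different sets, and $\max_{x\in (UZU^{\Tr})^{-1}{\rm im}(U_T)}\|U_{\overline{T}}^{\Tr}x\|_2^2/\|U_T^{\Tr}x\|_2^2$ need not be bounded by $\rho_T(U)$.

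For comparison, the paper avoids this by working in $\R^n$ rather than $\R^d$: it shows
\[ 1+\rho_T(U) \;=\; \max_{y\in{\rm im}(U_T^{\Tr})}\ \min\Big\{ \tfrac{\|w\|_2^2}{\|w_T\|_2^2} \;:\; w\in{\rm im}(U^{\Tr}),\ w_T=y \Big\}, \]
a max--min over the row space that is insensitive to $\rk(U_T)$, and then uses that $w\mapsto \sqrt{z}\circ w$ is an exact bijection between ${\rm im}(U^{\Tr})$ and ${\rm im}(V^{\Tr})$ respecting the $T/\overline{T}$ split, under which each ratio changes by at most $\max_{j\notin T}z_j/\min_{j\in T}z_j$. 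If you want to salvage your route, you would need a characterization of $\rho_T$ over a domain that does not depend on the scaling (the paper's lifting formulation is exactly such a device); as written, the restriction to ${\rm im}(U_T)$ is both incorrect as an identity and insufficient for the comparison.
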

\begin{proof}
    For shorthand let $V := U \sqrt{Z}$ and $B := V^{\Tr} (V V^{\Tr})^{-1/2}$ be the orthonormal basis for $W := {\rm im}(V^{\Tr})$. Further, let $B_{T \cdot} \in \R^{T \times d}$ denote the restriction to rows in $T$. We first rewrite $\|(V_{T}^{\Tr}(V V^{\Tr}))^{+} \|_{\ope}$ as
    \begin{align*} \|B_{T \cdot}^{+}\|_{\ope}^{2} 
    & = \max_{y \in {\rm im}(B_{T \cdot})} \frac{\|B_{T \cdot}^{+} y\|_{2}^{2}}{\|y\|_{2}^{2}}
    = \max_{y \in {\rm im}(B_{T \cdot})} \min_{u \in \R^{d}: (B u)_{T} = y} \frac{\|u\|_{2}^{2}}{\|y\|_{2}^{2}} 
    = \max_{y \in {\rm im}(B_{T \cdot})} \min_{ (Bu)_{T} = y} \frac{\|B u\|_{2}^{2}}{\|y\|_{2}^{2}} 
    \\ & = \max_{y \in {\rm im}(V_{T}^{\Tr})} \min_{w \in {\rm im}(V^{\Tr}), w|_{T} = y} \frac{\|w\|_{2}^{2}}{\|y\|_{2}^{2}} 
    = \max_{y \in {\rm im}(V_{T}^{\Tr})} \min_{w \in {\rm im}(V^{\Tr}), w|_{T} = y} \bigg( 1 + \frac{\|w_{\overline{T}}\|_{2}^{2}}{\|w_{T}\|_{2}^{2}} \bigg)
     , 
    \end{align*}
    where the first step was by definition of operator norm, the second step was by definition of the pseudo-inverse, in the third step we used that $B$ has orthonormal columns, i.e. $B^{\Tr} B = I_{d}$, the fourth step was by definition of ${\rm im}(V^{\Tr}) = {\rm im}(B)$, and in the final step we used that $w_{T} = y$. 

    Now note that scaling $U \to U \sqrt{Z} =: V$ gives a scaling of subspaces $w \in {\rm im}(U^{\Tr}) \longleftrightarrow \sqrt{z} \circ w \in {\rm im}(V^{\Tr})$. Therefore, we can upper bound
    \begin{align*} 
    \rho_{T}(V) & = \|B_{T \cdot}^{+}\|_{\ope}^{2} - 1 
    = \max_{y \in {\rm im}(V_{T}^{\Tr})} \min_{w \in {\rm im}(V^{\Tr}), w|_{T} = y} \frac{\|w_{\overline{T}}\|_{2}^{2}}{\|w_{T}\|_{2}^{2}}
    = \max_{y \in {\rm im}(U_{T}^{\Tr})} \min_{w \in {\rm im}(U^{\Tr}), w|_{T} = y} \frac{\|(\sqrt{z} \circ w)_{\overline{T}}\|_{2}^{2}}{\|(\sqrt{z} \circ w)_{T}\|_{2}^{2}}
    \\ & \leq \max_{y \in {\rm im}(U_{T}^{\Tr})} \min_{w \in {\rm im}(U^{\Tr}), w|_{T} = y} \frac{\max_{j \not\in T} z_{j}}{ \min_{j \in T} z_{j}} \frac{\|w_{\overline{T}}\|_{2}^{2}}{\|w_{T}\|_{2}^{2}} 
    = \frac{\max_{j \not\in T} z_{j}}{ \min_{j \in T} z_{j}} \rho_{T}(U) , 
    \end{align*}
    where the first step was by \cref{d:rhoMeasure} and the definition $B := V^{\Tr} (V V^{\Tr})^{-1/2}$, the second step was by the expression above for $\|B_{T \cdot}^{+}\|_{\ope}^{2}$, in the third step we used the correspondence between subspaces ${\rm im}(V^{\Tr}) = \sqrt{Z} {\rm im}(U^{\Tr})$, and the final step by the same calculation as above for $\rho_{T}(U)$.  
\end{proof}

    \begin{remark}
    Above we used the ``lifting map" interpretation of \cite{ScalingInvIPM} for condition measures $\bar{\chi}$ and $\rho$. 
    Specifically, they use $\ell^{{\rm im}(U^{\Tr})}(T) := \sqrt{\rho_{T}(U)}$ to denote the ``lifting score" of the coordinates $T$ with respect to subspace ${\rm im}(U^{\Tr})$. 
    We use notation $\rho$ for simplicity, and point the reader to \cite{ScalingInvIPM} for other interpretations of these condition measures, as well as a more detailed bibliography of the relation to linear programming. 
    \end{remark}

    In \cref{ss:RegBitComplexity}, we will show that $\rho(U)$ is a singly exponential function of dimension and bit complexity of $U$. But in fact, the main message of this section is that $\rho$ is a more refined complexity measure than bit complexity. For example if $A \in \R^{n \times m}$ is the node-edge incidence matrix of a (directed) graph, then $A$ has entries of constant bit complexity, but it can be shown that $\rho(A) \leq \poly(n)$, which is significantly better than the naive $2^{\poly(n)}$ bound derived just using bit complexity. As a further benefit, $\rho$ is continuous and therefore is robust to perturbation. 

        Using these definitions, we can prove a bound on the growth in magnitude of scaling $z$ in a single iteration. 

    \begin{prop} \label{p:iterateUB}
        For input $U$ and scaling $z^{(t)}$,  
        line 8 of Algorithm \ref{a:MainAlgorithm} produces update $z^{(t+1)}$ satisfying
        \[ \|\log z^{(t+1)}\|_{\infty} \leq 2 \|\log z^{(t)}\|_{\infty} + \log \rho(U) + O(1) .    \]
    \end{prop}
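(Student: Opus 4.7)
The plan is to reduce the claim to a bound on the multiplicative step-size $\hat{\alpha}$ returned by Algorithm~\ref{a:UpdateAlg}, since by construction $z^{(t+1)}_j \in \{z^{(t)}_j,\,\hat{\alpha}\,z^{(t)}_j\}$ and $\hat{\alpha} \geq 1$. Because $\lev^U(z)$ is invariant under the uniform rescaling $z \mapsto \lambda z$, I would first replace $z^{(t)}$ by its scale-equivalent representative with $\min_j z^{(t)}_j = 1$. Under this normalization every $\log z^{(t)}_j \geq 0$, so $\|\log z^{(t)}\|_\infty = \log(z^{(t)}_{\max}/z^{(t)}_{\min})$, and the same remains true of $z^{(t+1)}$ since the $T$-coordinates only scale up by $\hat{\alpha} \geq 1$. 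In particular,
\[
  \|\log z^{(t+1)}\|_\infty \;\leq\; \|\log z^{(t)}\|_\infty + \log \hat{\alpha},
\]
so it suffices to prove $\log \hat{\alpha} \leq \|\log z^{(t)}\|_\infty + \log \rho(U) + O(1)$.

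For that I would invoke \cref{c:updateBound}, which yields $\hat{\alpha} \leq 1 + O(\gamma/\mu_{\min})$ with $\mu_{\min}$ the smallest nonzero eigenvalue of $U_T Z^{(t)}_T U_T^{\Tr}(UZ^{(t)}U^{\Tr})^{-1}$. Since $\gamma \leq 1$, this already gives $\log \hat{\alpha} \leq \log(1/\mu_{\min}) + O(1)$. Setting $V := U\sqrt{Z^{(t)}}$, the nonzero spectrum of that matrix coincides with that of $V_T^{\Tr}(VV^{\Tr})^{-1}V_T$, and \cref{f:spectralRho} identifies $1/\mu_{\min} = 1 + \rho_T(V)$.

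Finally, \cref{l:rhoZBound} applied to $V = U\sqrt{Z^{(t)}}$ bounds
\[
  \rho_T(V) \;\leq\; \frac{z^{(t)}_{\max}}{z^{(t)}_{\min}}\,\rho_T(U) \;\leq\; e^{\|\log z^{(t)}\|_\infty}\,\rho(U),
\]
using the chosen normalization in the second inequality. Taking logarithms and substituting back gives $\log \hat{\alpha} \leq \|\log z^{(t)}\|_\infty + \log \rho(U) + O(1)$, which combined with the reduction of the first paragraph delivers the claim.

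All ingredients (\cref{c:updateBound}, \cref{f:spectralRho}, \cref{l:rhoZBound}) are already in place, so no step involves substantial new work; the argument is essentially a composition of existing bounds. The one point that requires care is the normalization of $z^{(t)}$: without passing to a scale-equivalent $z^{(t)}$ for which $\|\log z^{(t)}\|_\infty$ equals $\log(z^{(t)}_{\max}/z^{(t)}_{\min})$, the naive bound $\log(z_{\max}/z_{\min}) \leq 2\|\log z^{(t)}\|_\infty$ would propagate through \cref{l:rhoZBound} and yield a coefficient of $3$ rather than the stated $2$ in front of $\|\log z^{(t)}\|_\infty$.
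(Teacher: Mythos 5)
Your proposal is correct and follows essentially the same route as the paper: bound the step-size via \cref{c:updateBound}, convert $1/\mu_{\min}$ to $1+\rho_T(U\sqrt{Z})$ via \cref{f:spectralRho}, and pull out the factor $z_{\max}/z_{\min} \leq e^{\|\log z^{(t)}\|_\infty}$ via \cref{l:rhoZBound} after normalizing $z_{\min}=1$, exactly as in the paper's argument. Your closing remark about the normalization being what saves the coefficient (2 rather than 3) is a fair observation — the paper invokes the same ``without loss $z_{\min}=1$'' step, which is justified since the regularization step keeps the iterates normalized — but the substance of the proof is identical.
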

    \begin{proof}
    By \cref{c:updateBound}, we have update $z^{(t+1)} = z^{(t)} \circ (1_{\overline{T}} + \hat{\alpha} 1_{T})$ for $\hat{\alpha} \leq O(1/\mu_{\min})$, where $\mu_{\min}$ is the smallest non-zero eigenvalue of $U_{T} Z_{T} U_{T}^{\Tr} (U Z U^{\Tr})^{-1}$ for update set $T \subseteq [n]$. 
  
    We can bound this quantity by
        \[ \frac{1}{\mu_{\min}(Z_{T}^{1/2} U_{T}^{\Tr} (U Z U^{\Tr})^{-1} U_{T} Z_{T}^{1/2})}
        = 1 + \rho_{T}(U \sqrt{Z}) \leq 1 + \bigg( \frac{\max_{j \not\in T} z_{j}}{\min_{j \in T} z_{j}} \bigg) \rho_{T}(U) , 
        \]
        where the first step was by applying \cref{f:spectralRho} for $V := U \sqrt{Z}$, and the second was by \cref{l:rhoZBound}. 
        Without loss, we can assume $z_{\min} = 1$, as uniform scaling of $z$ has no effect on frame scaling, so we can bound the multiplicative term $\frac{z_{\max}}{z_{\min}} = \exp(\|\log z\|_{\infty})$ and $\rho_{T}(U) \leq \rho(U)$ by \cref{d:rhoMeasure}. 
                
        This gives a bound on the update
        \[ \|\log z^{(t+1)}\|_{\infty} - \|\log z^{(t)}\|_{\infty} \leq \log \hat{\alpha} \leq \log \bigg(\frac{O(1)}{\mu_{\min}} \bigg) \leq \|\log z^{(t)}\|_{\infty} + \log \rho(U) + O(1),    \]
        where the second step was by the bound $\hat{\alpha} \leq O(1)/\mu_{\min}$ from \cref{c:updateBound}, and in the final step we used the lower bound on $\mu_{\min}^{-1} \leq \exp(\|\log z\|_{\infty}) \rho(U)$ as shown above. 
    \end{proof}

    This result tells us that if the iterate $z^{(t)}$ is bounded, then the update $z^{(t+1)}$ will also be bounded. Of course, applying this for many iterations could cause the magnitude to blow up. Therefore, in the next subsection we show that for the purpose of approximate frame scaling, we can always regularize our scalings so that the magnitude is bounded by a function of the measure $\rho$. Finally, in \cref{ss:RegBitComplexity} we use a simple rounding procedure to maintain bounded bit complexity of scalings throughout the algorithm.  

\subsection{Controlling Magnitude of Scalings} \label{ss:RegMagnitude}

    In this subsection, we show that we can approximate any marginal using a bounded scaling. Similar results are crucially used in the analysis of previous weakly polynomial time algorithms for frame scaling, 
    and we discuss the relation to these results in \cref{ss:RegPreviousWork}. 
    We believe this result is of independent interest.

\begin{theorem} \label{t:Regularization}
    Given frame $U \in \R^{d \times n}$ and scaling $z \in \R_{++}^{n}$, for any $0 < \delta \leq 1$, there is $\hat{z} \in \R_{++}^{n}$ such that
    \[ \|\lev^{U}(z) - \lev^{U}(\hat{z})\|_{1} \leq 2 n d \delta \qquad \text{and} \qquad \log \frac{\hat{z}_{\max}}{\hat{z}_{\min}} \leq n \log(\rho(U)/\delta) .  \]
\end{theorem}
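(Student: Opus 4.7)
The plan is to design an iterative regularization procedure that repeatedly identifies a large multiplicative gap in the sorted entries of $z$ and partially collapses it by scaling up the small side. After at most $n$ iterations, no gap exceeds the target threshold, and the condition number bound then follows from pigeonhole. The main technical step is to control the leverage score change per iteration by $2d\delta$ using the condition measure $\rho(U)$.

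First I would sort $z$ in decreasing order and consider the $n-1$ consecutive log-ratios, whose sum equals $\log(z_{\max}/z_{\min})$. If this sum exceeds $n\log(\rho(U)/\delta)$, then at least one ratio exceeds $\rho(U)/\delta$; otherwise we already satisfy the condition number bound. Let $L$ and $S$ be the prefix/suffix split induced by such a gap, with $G := \min_L z / \max_S z > \rho(U)/\delta$, and set $z' := z \circ (1_L + c \cdot 1_S)$ where $c := G\delta/\rho(U) > 1$. After this replacement the gap between $L$ and $S$ shrinks to exactly $\rho(U)/\delta$, the sorted order is preserved (since $c \cdot \max_S z = \min_L z \cdot \delta/\rho \leq \min_L z$), and all log-ratios internal to $L$ or to $S$ are unchanged. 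Hence the number of consecutive gaps exceeding $\log(\rho(U)/\delta)$ strictly decreases, so at most $n-1$ iterations suffice and the terminal scaling satisfies $\log(\hat z_{\max}/\hat z_{\min}) \leq (n-1)\log(\rho(U)/\delta) \leq n\log(\rho(U)/\delta)$.

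The core step is bounding $\|\lev^U(z)-\lev^U(z')\|_1 \leq 2d\delta$ per iteration. Define $V := (UZU^{\Tr})^{-1/2}UZ^{1/2}$, so $VV^{\Tr} = I_d$, and let $\mu := {\rm spec}(V_S V_S^{\Tr})$. Writing $\tilde U := U\sqrt{Z}$, \cref{l:rhoZBound} gives $\rho_L(\tilde U) \leq (\max_S z/\min_L z)\rho_L(U) \leq \rho(U)/G \leq \delta \leq 1$. Applying \cref{f:spectralRho} to $\tilde U$ (and observing $V_L^{\Tr}V_L = \tilde U_L^{\Tr}(\tilde U\tilde U^{\Tr})^{-1}\tilde U_L$ has the same nonzero spectrum as $V_LV_L^{\Tr}$) shows that every nonzero eigenvalue of $V_LV_L^{\Tr}$ is at least $1/(1+\rho/G)$. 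Using the identity $V_SV_S^{\Tr} = I_d - V_LV_L^{\Tr}$, the spectrum $\mu$ therefore splits into $\rk(V_L)$ ``small'' eigenvalues of size at most $(\rho/G)/(1+\rho/G) \leq \rho/G$, and $d-\rk(V_L)$ eigenvalues exactly equal to $1$.

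Now I would invoke the proxy function identity \cref{l:outlineHParts} applied with $T = S$ and $\alpha = c$: the change in total leverage over $S$ equals $h_S(c)-h_S(1) = \sum_{i=1}^d (c-1)\mu_i(1-\mu_i)/(1+(c-1)\mu_i)$. The unit eigenvalues contribute zero due to the factor $(1-\mu_i)$, while each small eigenvalue contributes at most $(c-1)\mu_i \leq (c-1)\rho/G = \delta - \rho/G \leq \delta$, using $c = G\delta/\rho(U)$. Summing over at most $d$ small eigenvalues yields $h_S(c)-h_S(1) \leq d\delta$. Combining monotonicity of $\lev_j^U$ under uniform up-scaling of $S$ (\cref{f:LevScoreScaleUp}) with conservation $\langle \lev^U(\cdot),1_n\rangle = d$ (\cref{f:outlineSumLevScores}), the $L$ coordinates decrease by exactly the same total, and there is no $\ell_1$ cancellation within either group, so $\|\lev^U(z)-\lev^U(z')\|_1 = 2(h_S(c)-h_S(1)) \leq 2d\delta$. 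Summing over at most $n$ iterations via the triangle inequality gives the stated $2nd\delta$ bound. The main obstacle is setting up the eigenvalue dichotomy for $V_SV_S^{\Tr}$ via the identity with $V_LV_L^{\Tr}$ and the transfer of $\rho$ through the $U \mapsto U\sqrt Z$ substitution; once the dichotomy is in hand, the vanishing of unit eigenvalues in the progress formula makes the rest a direct calculation.
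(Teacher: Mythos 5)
Your proposal is correct and follows essentially the same route as the paper: iteratively collapse oversized prefix/suffix gaps in the sorted scaling, and bound each step's leverage-score movement by $2d\delta$ via \cref{l:rhoZBound}, \cref{f:spectralRho} and the spectral formula of \cref{l:outlineHParts}, then sum over at most $n$ steps. The only difference is cosmetic: you scale up the small suffix and analyze the spectrum of $V_S V_S^{\Tr}$ at the old scaling (through the complement identity $V_S V_S^{\Tr} = I_d - V_L V_L^{\Tr}$), whereas the paper scales down the large prefix --- the same map up to a global rescaling --- and reads off $\mu_{\min} \geq 1-\delta$ on the large side at the new scaling, so each term is bounded by $1-\mu_i \leq \delta$ directly.
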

\begin{proof}
    Assume $z$ is in non-increasing order $z_{1} \geq ... \geq z_{n}$, and apply uniform scaling so that $z_{n} = 1$. For each prefix set $k \in [n]$, if
    \[ \frac{z_{k}}{z_{k+1}} \leq \frac{\rho_{[k]}(U)}{\delta}  ,  \]
    then the current scaling $z$ already satisfies the stated bound, so we are done. Otherwise, we show that we can round down the ratio to this value while incurring little error in leverage scores.   

    So assume there is a set $T := [k]$ such that the above bound is violated. Then we scale down to
    \[ \forall j \leq k: \qquad \hat{z}_{j} := z_{j} \bigg( \frac{z_{k}}{z_{k+1}} \bigg)^{-1} \bigg( \frac{\rho_{T}(U)}{\delta} \bigg)    \]
    and $\hat{z}|_{\overline{T}} = z|_{\overline{T}}$. Note that this rounding is of the form $z = \hat{z} \circ (1_{\overline{T}} + \alpha 1_{T})$ for $\alpha := \frac{z_{k}}{z_{k+1}} \frac{\delta}{\rho_{T}(U)} > 1$. Therefore we can analyze the change in leverage scores using progress function $h := h_{U,\hat{z}}^{T}$ as
    \begin{align*}
        \|\lev^{U}(z) - \lev^{U}(\hat{z})\|_{1} & 
        = \sum_{j \in T} (\lev^{U}_{j}(z) - \lev^{U}_{j}(\hat{z})) + \sum_{j \not\in T} (\lev^{U}_{j}(\hat{z}) - \lev^{U}_{j}(z)) 
        = 2(h(\alpha) - h(1))
        \\ & = 2 \sum_{i=1}^{d} \frac{(\alpha-1) \mu_{i} (1-\mu_{i})}{1 + (\alpha-1) \mu_{i}} 
        \leq 2 \sum_{i=1}^{r} (1-\mu_{i}) \leq 2 d (1 - \mu_{\min}) 
        \end{align*}
    where the first step is because we are scaling up the set $T$ so $\lev^{U}_{j}(z) \geq \lev^{U}_{j}(\hat{z})$ for $j \in T$ and $\lev^{U}_{j}(z) \leq \lev^{U}_{j}(\hat{z})$ for $j \not\in T$, the second step was by \cref{d:ProgressFunction} of the progress function, in the third step we used the expression from \cref{l:outlineHParts} to rewrite $h$ in terms of $\mu := {\rm spec}(U_{T} \hat{Z}_{T} U_{T}^{\Tr} (U \hat{Z} U^{\Tr})^{-1} )$, in the fourth step we have $\mu_{i > r} = 0$ and the remaining terms can be bounded using $(\alpha-1) \mu \leq 1 + (\alpha-1) \mu$. We can bound this minimum eigenvalue as
    \begin{align*}
        \mu_{\min} = (1 + \rho_{T}(U \sqrt{\hat{Z}}))^{-1} 
        \geq \bigg( 1 + \frac{\max_{j \not\in T} \hat{z}_{j}}{\min_{j \in T} \hat{z}_{j}} \rho_{T}(U) \bigg)^{-1}
        = \bigg( 1 + \frac{\hat{z}_{k+1}}{\hat{z}_{k}} \rho_{T}(U) \bigg)^{-1}
        = \frac{1}{1 + \delta} 
        \geq 1 - \delta  , 
    \end{align*}
    where the first step was by the eigenvalue bound in \cref{f:spectralRho} applied to $V := U \hat{Z}^{1/2}$, in the second step we applied \cref{l:rhoZBound}, in the third step we used that $\hat{z}$ is in increasing order, the fourth step was by our choice of $\frac{\hat{z}_{k}}{\hat{z}_{k+1}} = \frac{\rho_{T}(U)}{\delta}$, and the final step was by Taylor approximation for $0 < \delta \leq 1$. 
    
    Putting this together gives the bound $\|\lev^{U}(z) - \lev^{U}(\hat{z})\|_{1} \leq 2d \delta$, and applying this bound for each of the $\leq n$ prefix sets gives the result. 

\end{proof}

    \begin{remark}
     In the above procedure, we only round prefix sets, so we can strengthen the $n \log \rho$ term to
    \[ \max_{T_{1} \subseteq ... \subseteq T_{n}} \sum_{k \in [n]} \log \rho_{T_{k}}(U) .   \]
    We conjecture that the guarantee in \cref{t:Regularization} can be improved to $\log \frac{\hat{z}_{\max}}{\hat{z}_{\min}} \lesssim \log(\rho/\delta)$. 
    This and further improvements we leave to future work. 

    \end{remark}    

    Actually, it is well known that $\overline{\chi}$ and $\rho$ are NP-hard to approximate even to exponential accuracy \cite{TuncelChiNP}. But note that the procedure described in the proof above only requires us to evaluate $\rho_{T}(U)$ for polynomially many sets $T \subseteq [n]$. In the following \cref{ss:RegBitComplexity}, we show how to turn the above \cref{t:Regularization} into a strongly polynomial algorithm by using simple overestimates for $\rho_{T}$.

\subsection{Bit Complexity Bound} \label{ss:RegBitComplexity}
    
    So far we have only given bounds on (relative) magnitudes of scalings and updates.
    In order to bound the bit complexity, we first show that leverage scores are robust to small multiplicative perturbations of scalings.  

    \begin{lemma} \label{l:scalingRobustness}
        For input $U \in \R^{d \times n}$ and scaling $z \in \R^{n}_{++}$, if $\hat{z} \in (1 \pm \delta) z$ for $0 \leq \delta < 1/2$, then
        \[ \lev^{U}(\hat{z}) \in (1 \pm 3\delta) \lev^{U}(z) .    \]
    \end{lemma}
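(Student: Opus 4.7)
\textbf{Proof proposal for \cref{l:scalingRobustness}.}

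The plan is to reduce the perturbation of each leverage score to a PSD perturbation of the matrix $UZU^{\Tr}$ together with the scalar perturbation $\hat{z}_j / z_j$. First I would factor $\hat{Z} = ZD$, where $D = \diag(d_1,\dots,d_n)$ with $d_j := \hat{z}_j/z_j \in [1-\delta,1+\delta]$. Then writing $U\hat{Z}U^{\Tr} = \sum_j \hat{z}_j u_j u_j^{\Tr}$ and using $d_j \in [1-\delta,1+\delta]$ termwise gives the PSD sandwich
\[
(1-\delta)\,UZU^{\Tr} \preceq U\hat{Z}U^{\Tr} \preceq (1+\delta)\,UZU^{\Tr}.
\]
Since $UZU^{\Tr}\succ 0$ and $\delta<1/2$, the matrix on the left is still positive definite, so I can invert while reversing the ordering:
\[
\tfrac{1}{1+\delta}(UZU^{\Tr})^{-1} \preceq (U\hat{Z}U^{\Tr})^{-1} \preceq \tfrac{1}{1-\delta}(UZU^{\Tr})^{-1}.
\]

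Next I would plug this into the leverage-score expression from \cref{d:outlineLeverageScores}. Using $\lev_j^{U}(\hat{z}) = \hat{z}_j u_j^{\Tr} (U\hat{Z}U^{\Tr})^{-1} u_j$ together with the scalar bound $\hat{z}_j \in [(1-\delta)z_j,(1+\delta)z_j]$ and the PSD sandwich on $(U\hat{Z}U^{\Tr})^{-1}$, evaluating the quadratic form at $u_j$ gives
\[
\tfrac{1-\delta}{1+\delta}\,\lev_j^{U}(z) \;\leq\; \lev_j^{U}(\hat{z}) \;\leq\; \tfrac{1+\delta}{1-\delta}\,\lev_j^{U}(z).
\]

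Finally I would clean up the constants: for $\delta<1/2$, Taylor expanding (or using $1-\delta \geq 1/2$ in the denominator) gives $\tfrac{1+\delta}{1-\delta} = 1 + \tfrac{2\delta}{1-\delta} \leq 1+3\delta$ in the regime where the lemma is used (the slack between $\delta<1/2$ and the tight threshold $\delta\leq 1/3$ for this arithmetic inequality is what makes the $3\delta$ constant work with a bit of room). The matching lower bound $\tfrac{1-\delta}{1+\delta}\geq 1-3\delta$ is a direct rearrangement of the same estimate. Combining these gives the claimed entrywise containment $\lev^U(\hat{z}) \in (1\pm 3\delta)\lev^U(z)$.

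There is no real obstacle here: the entire argument is a PSD comparison followed by a scalar Taylor bound. The only thing to be slightly careful about is the direction in which the scalar factor $\hat{z}_j / z_j$ and the matrix perturbation stack, which is why both effects contribute together to the constant $3$ (roughly, one $\delta$ from the scalar factor and $2\delta$ from the matrix inversion).
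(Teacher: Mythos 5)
Your proposal is correct and follows essentially the same route as the paper's proof: both combine the scalar bound $\hat{z}_j \in (1\pm\delta)z_j$ with the Loewner sandwich $U\hat{Z}U^{\Tr} \in (1\pm\delta)\,UZU^{\Tr}$, invert, and then bound $\frac{1+\delta}{1-\delta}$ by $1+3\delta$ (your version just spells out the PSD comparison more explicitly). One shared wrinkle, which you half-note: the inequality $\frac{1+\delta}{1-\delta} \le 1+3\delta$ actually requires $\delta \le 1/3$ rather than $\delta < 1/2$ (for $\delta \in (1/3,1/2)$ one only gets the factor $1+4\delta$), but this affects the paper's own proof identically and is immaterial to how the lemma is used.
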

    \begin{proof}
    This follows by a straightforward Taylor approximation
        \[ \lev^{U}_{j}(\hat{z}) = \hat{z}_{j} u_{j}^{\Tr} ( U \hat{Z} U^{\Tr})^{-1} u_{j} \in (1 \pm \delta) z_{j} u_{j}^{\Tr} ( (1 \pm \delta) U Z U^{\Tr})^{-1} u_{j}  
        \in (1 \pm 3 \delta) \lev^{U}_{j}(z) .  \]
        where in the second step we used $\hat{z}_{j} \in (1 \pm \delta) z_{j}$ and $\hat{Z} \in (1 \pm \delta) Z$ which implies $U \hat{Z} U^{\Tr} \in (1 \pm \delta) U Z U^{\Tr}$ and therefore $(U \hat{Z} U^{\Tr})^{-1} \in (1 \pm \delta)^{-1} (U Z U^{\Tr})$, and in the final step we used $|\frac{1+\delta}{1-\delta} - 1| \leq 3\delta$ for $0 \leq \delta < 1/2$. 
    \end{proof}

    Therefore, if the relative magnitude of our scalings remain bounded, then we can always round to sufficient accuracy to maintain our guarantees with bounded bit complexity. This is the only place in our work that requires a rounding oracle and therefore does not fit into the real model. For further discussion, see \cref{s:stronglyPoly}.  
    
    The following well-known result allows us bound $\rho$ in terms of bit complexity. 

    \begin{theorem} [Theorem 6 in \cite{VYChiBound}] \label{t:VYChiBound}
        For $U \in \R^{d \times n}$ with each entry having bit complexity $b$, 
        \[ \log \overline{\chi}(U) \leq O(d (b + \log d) + \log n) , \qquad \log \rho(U) \leq O(d (b + \log d) + \log n) .  \]
    \end{theorem}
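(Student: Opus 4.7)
The plan is to reduce to the integer case by clearing denominators (any rational of bit complexity $b$ has numerator and denominator of bit complexity $b$, so after multiplying $U$ by its common denominator we obtain an integer matrix $\tilde{U}$ with entries of magnitude $2^{O(b)}$, and $\bar\chi$, $\rho$ are invariant under this uniform scaling of $U$). It therefore suffices to prove the bound when $U$ is integer with $\|U\|_\infty \leq 2^{O(b)}$. The key structural fact to invoke is the identity
\[
\rho_T(U) + 1 = \bar\chi_T(U)^2 = \lambda_{\min}\!\left(U_T^{\Tr}(UU^{\Tr})^{-1} U_T\right)^{-1},
\]
which is the formula underlying \cref{f:spectralRho} (applied to $V := U$, using that $U U^{\Tr}$ is invertible after restricting to the row span). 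So the whole claim reduces to a lower bound on the smallest nonzero eigenvalue of the positive semidefinite matrix $M_T := U_T^{\Tr}(UU^{\Tr})^{-1}U_T$, uniformly over $T \subseteq [n]$.

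The next step is to clear the rational denominator. Multiplying through by $\det(UU^{\Tr})$ and using the adjugate formula gives $\det(UU^{\Tr}) \cdot M_T = U_T^{\Tr} \operatorname{adj}(UU^{\Tr}) U_T$, which is an integer positive semidefinite matrix $N_T$. Two classical determinantal estimates then finish the job. First, by the Cauchy–Binet formula $\det(UU^{\Tr}) = \sum_{S \in \binom{[n]}{d}} \det(U_S)^2$, and each $d \times d$ minor $\det(U_S)$ is bounded by Hadamard's inequality by $d^{d/2} 2^{db}$, giving an upper bound $\det(UU^\Tr) \leq \binom{n}{d} d^d 2^{2db}$ of bit complexity $O(d(b + \log d) + \log n)$. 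Similarly, each entry of $\operatorname{adj}(UU^{\Tr})$ is a signed $(d-1) \times (d-1)$ minor, so all entries of $N_T$ are bounded by $2^{O(d(b + \log d) + \log n)}$. Second, for any integer symmetric PSD matrix of rank $r$, the product of the nonzero eigenvalues equals the largest nonzero principal minor (up to sign), which is a positive integer and hence at least $1$; combined with the trivial upper bound $\lambda_i \leq \operatorname{tr}(N_T)$ on the other $r-1$ eigenvalues, this yields $\lambda_{\min,\neq 0}(N_T) \geq \operatorname{tr}(N_T)^{-(r-1)} \geq 2^{-O(d^2(b+\log d) + d\log n)}$. Dividing by $\det(UU^{\Tr})$ converts this to a lower bound on $\lambda_{\min,\neq 0}(M_T)$, which inverts to the desired upper bound on $\bar\chi_T(U)^2$.

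The main obstacle is bookkeeping the various bit complexities carefully enough to get the advertised $O(d(b+\log d)+\log n)$ bound rather than a worse $O(d^2(b+\log d))$ bound that naive manipulation produces. In particular, the step bounding $\lambda_{\min,\neq 0}$ via $\operatorname{tr}^{r-1}$ appears to lose a factor of $d$ in the exponent. To get the stated bound, one should instead use the sharper inequality $\lambda_{\min,\neq 0}(N_T) \geq 1/\det(\operatorname{pr}(N_T))$ where the RHS is the determinant of the projection onto the range — this is bounded by a single $r \times r$ principal minor of $N_T$, whose magnitude is controlled by one application of Hadamard in dimension $r \leq d$. Finally one takes the maximum over the at most $2^n$ subsets $T$; but since the bound is uniform in $T$, no loss is incurred. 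Taking logarithms of the combined upper bound for $\det(UU^\Tr) \cdot \lambda_{\min,\neq 0}(N_T)^{-1}$ yields the claimed $\log \bar\chi(U), \log \rho(U) \leq O(d(b+\log d) + \log n)$.
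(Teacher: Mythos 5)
First, note that the paper gives no proof of this statement: it is imported wholesale as Theorem~6 of the cited reference, so your argument has to stand on its own — and it has a genuine gap at exactly the step you single out as the delicate one. The rescue inequality $\lambda_{\min,\neq 0}(N_T) \geq 1/\det(\operatorname{pr}(N_T))$ is false under any reading where $\det(\operatorname{pr}(N_T))$ means the product of the nonzero eigenvalues: integrality only gives that this product is a positive integer, which yields $\lambda_{\min,\neq 0}(N_T) \geq 1/(\lambda_1\cdots\lambda_{r-1})$, not $\geq 1/(\lambda_1\cdots\lambda_r)$. Concretely, the integer positive definite matrix with rows $(50,7)$ and $(7,1)$ has determinant $1$ but smallest eigenvalue $\approx 0.02 < 1$. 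With the corrected inequality you must upper bound the product of the $r-1$ largest eigenvalues of $N_T = U_T^{\Tr}\operatorname{adj}(UU^{\Tr})U_T$, and since each \emph{entry} of $N_T$ is already of bit size $\Theta(d(b+\log d+\log n))$ (adjugate entries are $(d-1)\times(d-1)$ minors of $UU^{\Tr}$), any such product — or a Hadamard bound in dimension $r\leq d$ applied to a principal minor of $N_T$ — only certifies $2^{O(d^2(b+\log d+\log n))}$, i.e.\ the very $d^2$ loss you were trying to remove. Even the best repair within your framework that I can see (lower bound a single $r\times r$ principal minor of $M_T$ by $\det(U_S^{\Tr}U_S)/\lambda_{\max}(UU^{\Tr})^{r} \geq (nd\,2^{2b})^{-d}$ using integrality of $\det(U_S^{\Tr}U_S)$, and note that all nonzero eigenvalues of $M_T$ lie in $(0,1]$ so $\lambda_{\min,\neq0}(M_T)$ is at least their product) proves only $\log\bar\chi(U) = O(d(b+\log d+\log n))$, which has an additive $d\log n$ rather than the stated $\log n$; these genuinely differ (e.g.\ $d=\log n$, $b=O(1)$), although the weaker bound would in fact still suffice for the bit-complexity bounds the paper derives from it. Two smaller slips point the same way: $\log\binom{n}{d} = \Theta(d\log(n/d))$, so your bound on $\det(UU^{\Tr})$ is also $O(d(b+\log d+\log n))$, not $O(d(b+\log d)+\log n)$; and clearing a common denominator of $nd$ rationals of bit complexity $b$ can produce integers of bit size $\Theta(ndb)$, not $O(b)$ (harmless if, as intended, the entries are integers of $b$ bits, but it should be said).

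The reason the cited bound carries only an additive $\log n$ is that the known proof does not go through products of eigenvalues of $T$-indexed Gram matrices at all. One uses the classical characterization of $\bar\chi$ via bases (the lifting quantity $\bar\chi_T(U)=\|((UU^{\Tr})^{-1/2}U_T)^{+}\|_{\ope}$ is at most the classical $\bar\chi$ of the row space, which equals $\max\{\|U_B^{-1}U\|_{\ope}: U_B \text{ a nonsingular } d\times d \text{ column submatrix}\}$) together with Cramer's rule: for integer $U$ each entry of $U_B^{-1}U$ is a ratio of two $d\times d$ subdeterminants whose denominator has absolute value at least $1$, hence is bounded by the maximum subdeterminant $\Delta \leq d^{d/2}2^{db}$, and passing from the entrywise bound to the operator norm costs only a factor $\sqrt{dn}$ — this single factor of $n$ is the source of the lone $+\log n$; the bound for $\rho = \bar\chi^2-1$ follows. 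If you want to keep your spectral setup, you would need to bound the smallest nonzero singular value of $(UU^{\Tr})^{-1/2}U_T$ directly by a Cramer-type lifting argument (in the spirit of the computation in \cref{l:rhoZBound}), not via determinants of $N_T$.
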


    As a consequence, we can maintain bounded bit complexity for all intermediate scalings. We emphasize that the procedure below is the only time we require the `rounding' oracle described in \cref{s:stronglyPoly}.  
    
\begin{corollary} \label{c:RegularizationBit}
    Given any frame $U \in \R^{d \times n}$ of bit complexity $b$, for any scaling $z \in \R^{n}_{++}$ and precision $0 < \delta < 1/2$, there is $\hat{z} \in \R_{++}^{n}$ with (entry-wise) bit complexity $O(n (d b + \log n + \log(1/\delta)) )$ that satisfies $\|\lev^{U}(z) - \lev^{U}(\hat{z})\|_{1} \leq O(n d \delta)$. 
    Further, $\hat{z}$ can be computed in strongly polynomial time, involving $O(n \log n)$ comparisons and $O(n)$ matrix operations on $d \times d$ and $d \times n$ matrices (Gram-Schmidt, inverse, multiplication).  
\end{corollary}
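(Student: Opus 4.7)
The plan is to apply Theorem~\ref{t:Regularization} with a universal overestimate of $\rho_T(U)$ derived from Theorem~\ref{t:VYChiBound}, then truncate the resulting scaling entrywise using the robustness guarantee \cref{l:scalingRobustness}. The main obstacle is that $\rho_T(U)$ is NP-hard to approximate, so the rounding rule in \cref{t:Regularization} cannot be implemented literally; the resolution is that the proof of \cref{t:Regularization} only uses $\rho_T(U)$ as an upper bound, so substituting any overestimate preserves the per-step error analysis.

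\textbf{Step 1 (magnitude control with universal $\tilde\rho$).} By Theorem~\ref{t:VYChiBound} the explicit quantity $\tilde\rho := 2^{C(d(b+\log d)+\log n)}$ satisfies $\tilde\rho \geq \rho(U) \geq \rho_T(U)$ for all $T\subseteq[n]$ and a fixed constant $C$; it depends only on $(n,d,b)$ and requires no matrix work to write down. Run the prefix-rounding procedure of \cref{t:Regularization} with $\tilde\rho$ in place of $\rho_T(U)$: whenever $z_k/z_{k+1} > \tilde\rho/\delta$, scale down the first $k$ coordinates so that the ratio equals $\tilde\rho/\delta$. In the proof's key calculation we then have $\hat z_{k+1}/\hat z_k = \delta/\tilde\rho$, so $\mu_{\min} \geq (1+(\delta/\tilde\rho)\rho_T(U))^{-1} \geq (1+\delta)^{-1} \geq 1-\delta$ using $\rho_T(U)\leq\tilde\rho$; the rest of the derivation goes through verbatim, yielding per-step $L_1$ error at most $2d\delta$. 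Iterating over the $\leq n$ prefix sets produces $\tilde z$ with $\|\lev^U(z)-\lev^U(\tilde z)\|_1 \leq 2nd\delta$ and $\log(\tilde z_{\max}/\tilde z_{\min}) \leq n\log(\tilde\rho/\delta) = O(n(db + \log n + \log(1/\delta)))$.

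\textbf{Step 2 (entrywise truncation).} Set $\delta_0 := \delta/12$ and round each $\tilde z_j$ to the nearest binary representation with relative precision $\delta_0$, obtaining $\hat z_j \in (1\pm\delta_0)\tilde z_j$. Each $\hat z_j$ is specified by its exponent (of magnitude $O(n(db+\log n+\log(1/\delta)))$) and a mantissa of $O(\log(1/\delta))$ bits, so the total entrywise bit complexity is $O(n(db+\log n+\log(1/\delta)))$. Since $\delta_0<1/2$, \cref{l:scalingRobustness} yields $\lev^U_j(\hat z) \in (1\pm 3\delta_0)\lev^U_j(\tilde z)$, and therefore
\[
\|\lev^U(\hat z)-\lev^U(\tilde z)\|_1 \leq 3\delta_0 \langle \lev^U(\tilde z), 1_n\rangle = 3\delta_0 d \leq d\delta/4,
\]
using \cref{f:outlineSumLevScores}. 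Combining with Step~1 by the triangle inequality gives $\|\lev^U(z)-\lev^U(\hat z)\|_1 \leq O(nd\delta)$.

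\textbf{Step 3 (complexity).} Sorting $z$ uses $O(n\log n)$ comparisons; the prefix-rounding loop is $O(n)$ entrywise arithmetic updates against the precomputed constant $\tilde\rho$; the final truncation is entrywise. The $O(n)$ matrix operations absorb a single Gram--Schmidt / inverse computation of $(U\hat Z U^\Tr)^{-1}U\sqrt{\hat Z}$ used to (re)evaluate $\lev^U(\hat z)$ when this subroutine is invoked inside Algorithm~\ref{a:MainAlgorithm}, together with any auxiliary $d\times d$ multiplications/inversions needed to update representations between iterations. The only step that leaves the real-arithmetic model is the entrywise rounding, exactly the ``rounding'' primitive discussed in \cref{s:stronglyPoly}.
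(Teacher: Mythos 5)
Your error and magnitude analysis is sound, and the overall architecture matches the paper's: run the prefix-shrinking procedure of \cref{t:Regularization} with an \emph{overestimate} of $\rho_T(U)$ (the per-step bound $\mu_{\min}\geq (1+\delta)^{-1}$ indeed only needs an upper bound on $\rho_T$), then round entries and combine \cref{l:scalingRobustness} with the triangle inequality. The real divergence is in \emph{which} overestimate is used. The paper uses the data-dependent surrogate $1+\rho_T(U) \leq \tr[(U_T^{\Tr}(UU^{\Tr})^{-1}U_T)^{+}] \leq d(1+\rho_T(U))$, which is computable from the real input with the $O(n)$ matrix operations budgeted in the statement (and reduced to the $d$ rank-increasing prefixes via one Gram--Schmidt); \cref{t:VYChiBound} enters only in the \emph{analysis} to convert the resulting $n\log(d(1+\rho(U))/\delta)$ range bound into a bit bound. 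You instead hard-code $\tilde\rho = 2^{O(d(b+\log d)+\log n)}$ inside the algorithm. Under the paper's own model (\cref{s:stronglyPoly}: the real model plus a rounding oracle, where the algorithm is explicitly \emph{not} given the bit representation of the input, only precisions it chooses), an algorithm that needs the input bit complexity $b$ as a parameter does not qualify; it also forfeits the refined $\rho(U)$-dependent magnitude guarantee that is the main point of \cref{s:regularization}, even though the worst-case bit bound stated in the corollary still comes out the same. So the statement's conclusion is met numerically, but the "strongly polynomial" claim, as the paper interprets it, is not established by your construction; replacing $\tilde\rho$ by the trace overestimate repairs this and is exactly what the matrix-operation count in the statement is there for.

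Two smaller points. First, your Step 2 asserts each exponent has magnitude $O(n(db+\log n+\log(1/\delta)))$, but Step 1 only bounds the \emph{ratio} $\tilde z_{\max}/\tilde z_{\min}$; since $\lev^U$ is invariant under uniform scaling of $z$, you must first normalize (e.g.\ divide by $\tilde z_{\min}$, as the paper does) before any entrywise bit bound holds -- without this the exponents can be arbitrarily large. Second, "round to relative precision $\delta_0$" is not literally the oracle of \cref{s:stronglyPoly}; after normalizing to $\tilde z_{\min}=1$ it is cleaner (and matches the paper) to round to integer multiples of $\delta$, which gives relative error at most $\delta/2$ on all entries and uses the oracle directly. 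Both are easy fixes, but they should be stated.
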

\begin{proof}
    We follow the same procedure as in \cref{t:Regularization}, but all quantities are maintained only up to bounded precision by using the `rounding' oracle described in \cref{s:stronglyPoly}. 

    We first note that in strongly polynomial time, we can compute an overestimate for $\rho_{T}(U)$ as 
    \[ 1 + \rho_{T}(U) = \|((U U^{\Tr})^{-1/2} U_{T})^{+}\|_{\ope}^{2} \leq \|((U U^{\Tr})^{-1/2} U_{T})^{+}\|_{F}^{2} = \tr[(U_{T}^{\Tr} (U U^{\Tr})^{-1} U_{T})^{+}] \leq d (1 + \rho_{T}(U)) , \]
    to get a slightly worse guarantee in \cref{t:Regularization}: there exists $\hat{z} \in \R^{n}_{++}$ such that $\|\lev^{U}(z) - \lev^{U}(\hat{z})\|_{1} \leq 2 n d \delta$ and $\log \frac{\hat{z}_{\max}}{\hat{z}_{\min}} \leq n \log( d (1+\rho(U)) / \delta)$. 

    By dividing $\hat{z} \leftarrow \hat{z}/\hat{z}_{\min}$, we can assume without loss that $\hat{z}_{\min} = 1$ and $\log \hat{z}_{\max} \leq n \log( d (1+\rho(U)) / \delta)$. Therefore, we can use $N := \lceil n \log( d(1+ \rho(U))/\delta) + \log(1/\delta) \rceil$ bits for each entry and maintain a good approximation. Explicitly, we apply the rounding oracle described in \cref{s:stronglyPoly} to get $\bar{z} \leftarrow \lfloor z/\delta \rceil \delta$, so that $\bar{z}$ has entrywise bit complexity $\leq N$. Further, $\bar{z} \in (1 \pm \delta/2) \hat{z}$ which by \cref{l:scalingRobustness} implies $\lev^{U}(\bar{z}) \in (1 \pm O(\delta)) \lev^{U}(\hat{z})$. Since $\lev \in [0,1]^{n}$, we can combine this with the error guarantee of \cref{t:Regularization} to get 
    \[ \|\lev^{U}(z) - \lev^{U}(\bar{z})\|_{1} \leq \|\lev^{U}(z) - \lev^{U}(\hat{z})\|_{1} + \|\lev^{U}(\hat{z}) - \lev^{U}(\bar{z})\|_{1} \leq O(nd \delta) .  \]
    And applying the bound $\log(\rho(U)) \leq O(d( b + \log d) + \log n)$ from \cref{t:VYChiBound} gives the required bound on bit complexity. 
\end{proof}

\begin{remark}
    In the above procedure, we computed upper bounds for $\rho_{T}$ for sets $\{T = \{k\}\}_{k \in [n]}$. To reduce runtime, we note the following fact: 
    \[ S \subseteq T, \rk_{U}(S) = \rk_{U}(T) \implies \rho_{S}(U) \geq \rho_{T}(U) . \]
    Therefore we can perform a single Gram-Schmidt and just compute $\rho_{T}$ for the $d$ sets where the rank increases. 
\end{remark}

    \subsection{Comparison to Prior Work} \label{ss:RegPreviousWork}

    In this subsection we will compare our regularization \cref{t:Regularization} to similar results in the literature on algorithms for frame scaling. 

    The strongly polynomial algorithm of \cite{DKT} for Forster transformation (the special case of \cref{d:introForsterTransform} with $c = \frac{d}{n} 1_{n}$) maintains left scaling $L \in \R^{d \times d}$ while keeping the right scaling implicit. The regularization result given in Theorem 5.1 of \cite{DKT} is therefore much more complicated, as they need to show how to round the left scaling matrix $L \in \R^{d \times d}$ to bounded bit complexity while maintaining small error with respect to the frame scaling problem \cref{d:introForsterTransform}.

    All other results come from weakly polynomial algorithms from frame scaling. The line of works \cite{HardtMoitra}, \cite{AKS}, \cite{SinghVishnoi}, \cite{StraszakVishnoi}, \cite{IPMforGP} all study a convex formulation for frame scaling and show that, for any desired precision $\delta > 0$, there exists a scaling with magnitude bounded by a function of $1/\delta$, that achieves optimality gap $\delta$, and therefore has marginals that are $\delta$-close to the desired marginals. This is then combined with off-the-shelf optimization methods to give a weakly polynomial algorithm to solve the convex formulation. 

    In our work, we require an algorithm to produce a bounded scaling that satisfies the weaker condition that the marginals are close, instead of small optimality gap. The results given in the previous works often give an existential proof of the bound on scalings, whereas in our work we need a constructive version. Our regularization procedure, sorting the entries of the scaling and then shrinking gaps to appropriate size, can in fact be seen as a special case of the implicit procedures used in \cite{StraszakVishnoi} and \cite{IPMforGP} for the more general `geometric programming' setting.
    Our main contribution is therefore to make this procedure explicit, and to give a refined analysis of the error bound in our frame scaling setting. We believe that our relation to the refined $\rho$ parameter is of independent interest for scaling.

\section{Putting it Together} \label{s:FinalProofs}

Now that we have all the pieces, we can analyze the convergence of Algorithm \ref{a:MainAlgorithm}. 

\begin{proof} [Proof of \cref{t:mainAlgAnalysis}]
    We show that in each iteration we either get a certificate of infeasibility $\langle c, 1_{T} \rangle > \rk(U_{T})$ or make $1 - 1/O(n^{3})$ multiplicative progress in the error $\|\lev - c\|_{2}^{2}$, where $\lev := \lev^{U}$ for input frame $U$. This suffices to give the iteration bound as $\lev, c \in [0,1]^{n}$ so $\|\lev - c\|_{2}^{2} \leq n$.

    Fix iteration $t$ and let $z := z^{(t)}$. By \cref{p:outlinePolyGap} we can compute a set $T \subseteq [n]$ with margin $\gamma^{2} \geq \|\lev(z) - c\|_{2}^{2}/2 n^{3}$. For $h := h_{T}^{U,z}$ according to \cref{d:ProgressFunction}, and assuming we are in the feasible case, we can compute $\hat{\alpha} \geq 1$ satisfying $\gamma/5 \leq h(\hat{\alpha}) - h(1) \leq \gamma$ by
    \cref{t:updateGuarantee}.
    Applying \cref{l:outlineProgressLemma} to update $z' := z \circ (1_{\overline{T}} + \hat{\alpha} 1_{T})$ gives
    \[ \|\lev(z) - c\|_{2}^{2} - \|\lev(z') - c\|_{2}^{2} \geq 2 \gamma (h(\hat{\alpha}) - h(0)) \geq 2 \gamma^{2}/5 .    \]

    We show this progress lower bound holds up to a constant factor even after regularization: let $\hat{z}$ be the output of \cref{c:RegularizationBit} with parameter $\delta$ chosen later. Then we can bound
    \[ \|\lev(\hat{z}) - c\|_{2} - \|\lev(z') - c\|_{2} \leq \|\lev(\hat{z}) - \lev(z')\|_{2} \leq \|\lev(\hat{z}) - \lev(z')\|_{1} \leq 2 n d \delta ,   \]
    where the first step was by triangle inequality and the final step was by the guarantee of \cref{c:RegularizationBit}. For $\delta \in (0,1/2)$ chosen later, this implies 
    \begin{align*} 
    \|\lev(\hat{z}) - c\|_{2}^{2} - \|\lev(z') - c\|_{2}^{2} & = (\|\lev(\hat{z}) - c\|_{2} + \|\lev(z') - c\|_{2} ) (\|\lev(\hat{z}) - c\|_{2} - \|\lev(z') - c\|_{2} )
    \\ & \leq 4 (\sqrt{2 n^{3} \gamma^{2}} + nd\delta ) (n d \delta), 
    \end{align*}
    where we bounded $\|\lev(\hat{z}) - c\|_{2} - \|\lev(z') - c\|_{2} \leq 2 nd \delta$ by the calculation above and $\|\lev(z') - c\|_{2} \leq \|\lev(z) - c\|_{2} \leq \sqrt{2 n^{3} \gamma^{2}}$, as our update $z'$ decreases the error as shown above, and the second inequality is by \cref{p:outlinePolyGap} for margin $\gamma$. Choosing $\delta = \gamma/(15 n^{5/2} d)$, we can bound this whole expression by $\gamma^{2}/5$. In total this gives 
    \begin{align*} 
    \|\lev(z) - c\|_{2}^{2} - \|\lev(\hat{z}) - c\|_{2}^{2} & = (\|\lev(z) - c\|_{2}^{2} - \|\lev(z') - c\|_{2}^{2}) - (\|\lev(\hat{z}) - c\|_{2}^{2} - \|\lev(z') - c\|_{2}^{2}) 
    \geq \frac{2 \gamma^{2}}{5} - \frac{\gamma^{2}}{5}  , 
    \end{align*}
    where in the second step we lower bounded the progress term by $2 \gamma^{2}/5$ and upper bounded the regularization term by $\gamma^{2}/5$ as shown above. By \cref{p:outlinePolyGap}, we have $\gamma^{2} \geq \|\lev(z) - c\|_{2}^{2}/2n$, so this implies that in $O(n^{3} \log(n/\eps))$ iterations we have 
    \[ \|\lev(z^{(t)}) - c\|_{2}^{2} \leq \|\lev - c\|_{2}^{2} \bigg( 1 - \frac{1}{O(n^{3})} \bigg)^{O(n^{3} \log(n/\eps))} \leq n \cdot e^{- O(\log (n/\eps))} \leq \eps^{2} ,    \]
    where we bounded $\|\lev - c\|_{2}^{2} \leq n$ in the second step as $\lev, c \in [0,1]^{n}$. 

    Next, we bound the number of operations in each iteration. We can find the set $T$ using $O(n)$ matrix multiplications and inversions of $d \times d$ matrices, along with $O(n \log n)$ comparison operations. Verifying feasibility requires a rank computation, which can be performed using Gram-Schmidt applied to a $d \times n$ matrix. For the Newton-Dinkelbach iterations we compute $h$ and $h'$ using the explicit expressions described in \cref{r:hStronglyPoly}, which can be done using $O(1)$ matrix multiplications and inversions involving $d \times d$ matrices. 

    For the guessing Algorithm \ref{a:GuessAlgorithm}, we require $O(nd + n d^{2} \log n)$ determinants, inverses, and multiplications with $d \times d$ matrices. 
    For the regularization procedure in \cref{c:RegularizationBit}, we sort the entries in $O(n \log n)$ time and compute $\rho_{T}(U)$ for $n$ sets, each of which requires a Gram-Schmidt on a $d \times n$ matrix, along with $O(1)$ matrix multiplications and inversions of $d \times d$ matrices. 
    
    Finally, assuming that the bit complexity of the input $U$ is $b$, the regularization step guarantees that the (entry-wise) bit complexity of the scaling at the end of each iteration remains bound by 
    \[  n \log(\rho/\delta) \leq O(n (d b + \log(nd/\eps))) ,    \]
    by \cref{c:RegularizationBit} and our choice $\delta = \Omega(\gamma/\poly(nd))$ applied above for $\gamma \geq \Omega(\eps)$. Further, \cref{p:iterateUB} implies that the bit complexity of the scaling after the update step, before regularization, is also bounded by the same quantity up to constant factor. 
    
    Therefore the algorithm is strongly polynomial with given operation counts. 

    \end{proof}

\section{Improvement of Matrix Scaling} \label{s:matrixImprovement}

In this section we show how our techniques give a faster strongly polynomial algorithm for the matrix scaling problem. 
In this setting, we are given $A \in \R^{m \times n}_{+}$ and want to find diagonal scalings $L \in \diag_{++}(m), R \in \diag_{++}(n)$ such that $B := L A R$ satisfies $B 1_{n} = r, B^{\Tr} 1_{m} = c$. Formally: 

\begin{definition} [Matrix Scaling Problem] \label{d:matrixScalingInput}
    For input $A \in \R^{m \times n}_{+}$, let $(r(A), c(A)) \in \R^{m} \times \R^{n}$ be the row and column sums. Explicitly, 
\[ r_{i}(A) := \sum_{j \in [n]} A_{ij}, \qquad c_{j}(A) := \sum_{i \in [m]} A_{ij}   . \]
Given desired marginals $(r, c) \in \R^{m}_{+} \times \R^{n}_{+}$ and precision $\eps > 0$, find diagonal scalings $L \in \diag_{++}(m), R \in \diag_{++}(n)$ such that $B := L A R$ is $\eps$-approximately $(r, c)$-scaled: 
\[ \|r(B) - r\|_{2}^{2} + \|c(B) - c\|_{2}^{2} \leq \eps^{2} .     \]
\end{definition}

We will also use the combinatorial structure of the input $A$ (i.e. ${\rm supp}(A)$) in our algorithms. 

\begin{definition} \label{d:matrixNbrhood}
    For input $A \in \R^{m \times n}_{+}$ and column subset $T \subseteq [n]$, the neighborhood is 
    \[ N(T) := \{i \in [m] \mid \exists j \in T: A_{ij} > 0 \} .   \]
\end{definition}

If we interpret $A \in \R^{m \times n}_{+}$ as the incidence matrix of a weighted bipartite graph, then $N(T)$ corresponds to the neighborhood in this graph. 

In \cite{LSW}, the authors give the first strongly polynomial algorithm for this problem. 
By adapting our frame scaling analysis to this setting, we are able to reduce the number of iterations from $O(n^{5} \log(n/\eps))$ as given in \cite{LSW} to $O(n^{3} \log(n/\eps))$ iterations, matching our Main \cref{t:mainAlgAnalysis}. Further, each iteration can still be performed in nearly linear time, so this gives a quadratic improvement in runtime over the algorithm in \cite{LSW}. 
In order to deal with the polynomial bit complexity requirement, the work of \cite{LSW} does not use a rounding step, but instead maintains quantities in floating point representation and argues that the exponent remains of polynomial bit size. 

We show that we can also adapt our regularization procedure from \cref{s:regularization} to the matrix setting, giving a new stronger bound on the size and bit complexity of scalings required for $\eps$-approximate matrix scaling. 
This also allows us to avoid the complications that arise from merging floating point precision with the real model as discussed in \cref{s:stronglyPoly}. 

Our main result is as follows:

\begin{theorem} \label{t:mainMatrixScaling}
    Given input $A \in \R^{m \times n}_{+}$ with desired marginals $(r,c) \in \R^{m}_{++} \times \R^{n}_{++}$ satisfying $s := \langle r, 1_{m} \rangle = \langle c, 1_{n} \rangle$, and precision $\eps > 0$, there is a strongly polynomial algorithm to produce $\eps$-approximate scalings or a certificate of infeasibility. This algorithm takes $O(n^{3} \log(sn/\eps))$ iterations, each of which require $\tilde{O}(\text{nnz}(A))$ arithmetic operations (here $\text{nnz}(A)$ is the number of non-zero entries of $A$). Further, if the input $A$ has entrywise bit complexity $\leq b$, then the intermediate scalings all have bit complexity bounded by $\poly(n,m,b) \log(1/\eps)$.  
\end{theorem}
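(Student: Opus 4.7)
The plan is to lift the frame-scaling framework of \cref{s:HighLevelAnalysis,s:Guess,s:regularization} to matrix scaling, exploiting a structural identification between the two settings. As in the frame case, first reduce to maintaining only a right scaling $z \in \R^n_{++}$: given $z$, the unique left scaling $L$ satisfying $(LAR)1_n = r$ is $L_{ii} := r_i/(Az)_i$, and the resulting column sums become
\[ \sigma^A_j(z) := z_j \sum_{i \in [m]} A_{ij}\, r_i/(Az)_i, \qquad j \in [n], \]
the matrix analog of the leverage-score vector $\lev^U(z)$. A direct swap of sums gives $\langle \sigma^A(z), 1_n\rangle = \sum_i r_i = s$, so assuming the compatibility $\langle c, 1_n\rangle = s$ (a necessary condition), the task reduces to finding $z$ with $\|\sigma^A(z) - c\|_2^2 \le \eps^2$.

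Now mirror \cref{a:MainAlgorithm}: sort the error $\sigma^A(z) - c$, pick the prefix set $T$ of largest margin $\gamma$, and uniformly rescale $z \to z \circ (1_{\overline{T}} + \alpha\, 1_T)$. The matrix analog of the proxy in \cref{d:ProgressFunction} is
\[ h^{A,z}_T(\alpha) := \sum_{j \in T} \sigma^A_j\bigl(z \circ (1_{\overline{T}} + \alpha\, 1_T)\bigr) = \sum_{i \in [m]} r_i \cdot \frac{\alpha a_i}{\alpha a_i + b_i}, \]
with $a_i := (A_T z_T)_i$ and $b_i := (A_{\overline{T}} z_{\overline{T}})_i$. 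Setting $\mu_i := a_i/(a_i + b_i) \in [0,1]$, each summand becomes $r_i\, \alpha \mu_i/(1 - \mu_i + \alpha \mu_i)$, matching the constituent functions of \cref{l:outlineHParts} up to the positive weights $r_i$. Hence $h$ is increasing and concave, and $\lim_{\alpha \to \infty} h(\alpha) = \sum_{i: a_i > 0} r_i = r(N(T))$, so the infeasibility certificate $\rk(U_T) < \langle c, 1_T\rangle$ of the frame case becomes the Hall-type condition $r(N(T)) < \langle c, 1_T\rangle$. The margin bound \cref{p:outlinePolyGap}, the progress lemma \cref{l:outlineProgressLemma}, and the update-range analysis \cref{l:updateRange} port over essentially verbatim, as their proofs only use the monotone-concave-sum structure of $h$ and the conservation $\langle \sigma^A(z), 1_n\rangle = s$.

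Computing the update efficiently is strictly simpler in the matrix setting than in the frame setting. The coefficients $(a_i, b_i)$ and the weights $r_i$ are explicit, so $h(\alpha)$ and $h'(\alpha)$ can be evaluated in $O(\text{nnz}(A))$ time by a single pass over the support of $A$, with no matrix operations required. Moreover, the ``small-weight'' sum $\sum_{i: \mu_i < 1/2} r_i \mu_i$ --- the matrix analog of $\mu_s$ in \cref{l:updateRange} --- can be computed \emph{exactly} in $O(\text{nnz}(A))$ time, so the entire guessing subroutine of \cref{s:Guess} collapses to a direct evaluation. Plugging this into Newton--Dinkelbach (\cref{a:NDAlgorithm}) and re-running the analysis of \cref{t:updateGuarantee} yields $O(\log(mn))$ Newton steps per outer iteration, each of cost $O(\text{nnz}(A))$. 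The outer iteration count $O(n^3 \log(sn/\eps))$ then follows from the analysis of \cref{t:mainAlgAnalysis}, where the extra $\log s$ accounts for the possibly larger initial error $\|\sigma^A(z^{(0)}) - c\|_2^2 \le O(s^2 n)$ coming from $\sigma^A, c \in [0,s]^n$ rather than $[0,1]^n$.

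The main obstacle is controlling bit complexity, where we cannot borrow the floating-point workaround of \cite{LSW} since our Newton--Dinkelbach updates require exact arithmetic on rationals. Here the plan is to adapt the regularization scheme of \cref{s:regularization}: after sorting the entries of $z$, whenever a prefix-gap $z_k/z_{k+1}$ exceeds an appropriate threshold, shrink it; the incurred change in $\sigma^A(z)$ is bounded through $h$ by reusing the calculation behind \cref{t:Regularization}. The key step is identifying the correct replacement for the condition measure $\rho_T(U)$: a natural candidate is the ratio $\max_{i \in N(T)} b_i / \min_{i \in N(T)} a_i$ over all tight prefix sets, which for $b$-bit input $A$ can be bounded by $\poly(n,m)\, 2^{O(b)}$ via elementary arithmetic on the entries of $A$. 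Combined with a perturbation analogue of \cref{l:scalingRobustness} for $\sigma^A$, this yields the $\poly(n,m,b)\log(1/\eps)$ bit-complexity bound on all intermediate scalings and completes the proof.
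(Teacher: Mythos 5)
Your overall lifting strategy (reduce to a right scaling, potential $\|c^{A,r}(y)-c\|_2^2$, prefix set of maximum margin, weighted proxy $h(\alpha)=\sum_i r_i\frac{\alpha\mu_i}{1+(\alpha-1)\mu_i}$, limit $r(N(T))$ giving the Hall-type infeasibility certificate, and a prefix-gap regularization with a matrix analog of $\rho_T$) matches the paper's. The genuine gap is in the step-size computation: you claim the update-range analysis of \cref{l:updateRange} ``ports over essentially verbatim'' with the weighted small sum $\mu_s:=\sum_{i:\mu_i<1/2} r_i\mu_i$, and then invoke Newton--Dinkelbach with $O(\log(mn))$ steps. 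But the proof of \cref{l:updateRange}(2) crucially uses $\mu_i\le\mu_s$ (each small eigenvalue is a unit-weight summand of $\mu_s$) to control the denominator $1+(\alpha-1)\mu_i\le 1+\delta$ at $\alpha=1+\delta/\mu_s$. With weights, one only gets $r_i\mu_i\le\mu_s$, i.e.\ $1+(\alpha-1)\mu_i\le 1+\delta/r_i$, and the lower bound $h(1+\delta/\mu_s)-h(1)\ge\delta/4$ fails: take a single small index with $r_i=10^{-6}$, $\mu_i=0.4$; then every term of the small sum is at most $r_i(1-\mu_i)\le 10^{-6}$, far below $\delta/4$. Consequently the bound $\alpha_*\le 1+\gamma/\mu_s$ is lost, the Bregman divergence of your starting guess is no longer $\poly(n,m)\cdot\gamma$, and the ND iteration count can depend on ratios such as $r_{\max}/r_{\min}$ and $A_{\max}/A_{\min}$, i.e.\ on the bit size of the input --- which breaks the strongly polynomial operation count you need.

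The fix is available precisely because, as you observe, $\mu$ and $r$ are explicit in the matrix setting: instead of guessing a starting point for Newton--Dinkelbach, use the sandwich $\frac{\alpha\mu}{2\max\{1,(\alpha-1)\mu\}}\le\frac{\alpha\mu}{1+(\alpha-1)\mu}\le\frac{\alpha\mu}{\max\{1,(\alpha-1)\mu\}}$ to build an explicit increasing piecewise-linear surrogate $g$ with $g(\alpha)/2\le h(\alpha)-h(1)\le g(\alpha)$ and breakpoints $1/\mu_k$; sort the $\mu_i$, evaluate $g$ at the breakpoints incrementally, locate the segment containing the solution of $g(\hat\alpha)=\gamma$ (which exists by the matrix version of \cref{p:outlineInfeasible}), and solve one linear equation there. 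This yields $\gamma/2\le h(\hat\alpha)-h(1)\le\gamma$ in $\tilde O(\mathrm{nnz}(A))$ time with no ND at all, which is what the paper does; your proposal would be complete if you replaced the ND step by this (or supplied a genuinely weighted version of \cref{l:updateRange}, which as stated you do not have). Your regularization sketch with the coarser condition measure is in the right spirit, though the paper's per-row measure $\rho_T(A)=\max_{i\in N(T)}\sum_{j\notin T}A_{ij}/\sum_{j\in T}A_{ij}$ and its chain-structure trick are what give the clean $\tilde O(\mathrm{nnz}(A))$ implementation.
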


    Matrix scaling is a fundamental problem that has appeared in a variety of fields throughout science and engineering (for a thorough survey, see \cite{Idel}). Recently, there has been a renewed interest in fast algorithms for this problem. In particular, \cite{CMTV17} and \cite{AZLOW} study a convex formulation for matrix scaling and use sophisticated optimization techniques to solve it in near-linear time. 
    While this gives fast runtime in terms of the input dimensions, the number of iterations of these algorithms depends on quantities such as $A_{\min} := \min \{ A_{ij} \mid (i,j) \in {\rm supp}(A) \}$, and so these are weakly polynomial algorithms.

In the remainder of the section, we discuss how to adapt our algorithm and analysis to the matrix setting. 
We first recall the high-level outline for frame scaling: (1) reduce to just the right scaling; (2) use the norm of the column error as a potential function; (3) find a set with large `margin' to scale up; (4) reduce the analysis of the potential decrease to a proxy function; (5) find a good update making sufficient progress for the proxy function; (6) finally, apply regularization procedure to control the size and bit complexity of intermediate scalings. 
The matrix case will follow same outline, so we mainly focus on the necessary modifications and omit most proofs.

We begin by reducing to row-scaled matrices. Specifically, for any column scaling $y \in \R^{n}_{++}$ we implicitly match the desired row sums, namely using row scaling $x_{i} := r_{i} / \sum_{j \in [n]} A_{ij} y_{j}$. 
This allows us to just keep track of column scalings $y$ and the column sums of $X A Y$ induced by this implicit row scaling. 
 
\begin{definition} [Matrix version of \cref{d:outlineLeverageScores}] \label{d:matrixCols}
    For matrix scaling input $(A,r,c)$ as in \cref{t:mainMatrixScaling}, for any right scaling $y \in \R^{n}_{++}$, 
    \[ c^{A,r}_{j}(y) := \sum_{i \in [m]} r_{i} \frac{A_{ij} y_{j}}{\sum_{k \in [n]} A_{ik} y_{k}} .    \]
\end{definition}

We use potential function $\|c^{A,r}(y) - c\|_{2}^{2}$ and attempt to compute an update in each step that decreases this by a multiplicative factor. 
We follow the same update strategy, computing set $T \subseteq [n]$ with largest margin, where we recall the margin of $T \subseteq [n]$ is defined as the largest $\gamma \geq 0$ such that there exists a threshold $\nu \in \R$ satisfying 
    \[ \max_{j \in T} (c^{A,r}_{j}(y) - c_{j}) \leq \nu - \gamma \leq \nu + \gamma \leq \min_{j \not\in T} (c^{A,r}_{j}(y) - c_{j}) .    \]
    This allows us to apply the lower bound from \cref{p:outlinePolyGap} verbatim: 

\begin{prop} [Matrix version of \cref{p:outlinePolyGap}] \label{p:matrixPolyGap}
For input $A \in \R^{m \times n}_{+}$, desired marginals $(r,c) \in \R^{m+n}_{+}$, and current scaling $y \in \R^{n}_{++}$, sort $c^{A,r}(y) - c$ in non-decreasing order, and let $T \subseteq [n]$ be the prefix set with maximum margin $\gamma$. Then
\[ \gamma^{2} \geq \frac{1}{2n^{3}} \|c^{A,r}(y) - c\|_{2}^{2} .  \]
\end{prop}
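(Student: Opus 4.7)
The plan is to mimic the proof of \cref{p:outlinePolyGap} essentially verbatim, after establishing the only nontrivial prerequisite, namely that $\langle c^{A,r}(y),1_{n}\rangle = \langle c, 1_{n}\rangle$, so that $x := c^{A,r}(y) - c$ satisfies $\langle x, 1_{n}\rangle = 0$. This is the matrix scaling analogue of \cref{f:outlineSumLevScores}, which was the key conservation law used for the frame case.

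First I would verify the sum condition directly from \cref{d:matrixCols}: swapping the order of summation gives
\[ \sum_{j \in [n]} c^{A,r}_{j}(y) \;=\; \sum_{i \in [m]} r_{i} \sum_{j \in [n]} \frac{A_{ij} y_{j}}{\sum_{k} A_{ik} y_{k}} \;=\; \sum_{i \in [m]} r_{i} \;=\; s, \]
where the inner sum telescopes to $1$ for every row $i$ (using the convention that rows all of whose entries are zero can be discarded since they contribute $0$ to both sides). The assumption $s = \langle r,1_{m}\rangle = \langle c,1_{n}\rangle$ then immediately yields $\langle x, 1_{n}\rangle = 0$.

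Given this, I would invoke the same identity as in the frame case,
\[ 2 n \|x\|_{2}^{2} \;=\; \sum_{i,j \in [n]} (x_{i} - x_{j})^{2}, \]
which holds for any mean-zero vector. Since $x$ is sorted in nondecreasing order and the prefix set with the largest gap has margin $\gamma$, two consecutive entries differ by at most $2\gamma$, so the extreme values $x_{1}, x_{n}$ differ by at most $2 n \gamma$; hence each pairwise square $(x_{i}-x_{j})^{2}$ is at most $(2 n \gamma)^{2}$. Combining these bounds gives $2 n \|x\|_{2}^{2} \leq n^{2}(2 n \gamma)^{2} = 4 n^{4} \gamma^{2}$, and rearranging yields the desired inequality $\gamma^{2} \geq \|c^{A,r}(y) - c\|_{2}^{2} / (2 n^{3})$.

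There is essentially no obstacle, since the combinatorial step (bounding the range of a sorted mean-zero vector by $2 n$ times the largest consecutive gap) is identical across both settings. The only substantive difference from the frame case is that the role of ``leverage scores summing to $d$'' is replaced by the elementary fact that the column sums of a row-normalized nonnegative matrix sum to the total row mass $s$; both are instances of a conservation law that makes the error vector mean-zero. One minor point to note is that the prefix set achieving the max gap is indeed the one given by the sort-and-split construction, which is exactly the set $T$ selected in the algorithm.
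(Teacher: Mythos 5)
Your proof is correct and follows the same route the paper intends: it verifies the conservation law $\langle c^{A,r}(y),1_n\rangle = \langle r,1_m\rangle = \langle c,1_n\rangle$ (the matrix analogue of \cref{f:outlineSumLevScores}) and then reuses the sorted mean-zero-gap argument of \cref{p:outlinePolyGap} verbatim, which is exactly what the paper does. The only nitpick is your parenthetical about all-zero rows: such a row contributes $0$ to the column sums but $r_i>0$ to $\langle r,1_m\rangle$, so it cannot be ``discarded from both sides''; rather, one should note that $c^{A,r}$ in \cref{d:matrixCols} is only defined when every row has positive weighted sum, so all-zero rows are excluded by assumption (and would anyway make the instance infeasible).
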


    Now we want to show that we can make $\Omega(\gamma^{2})$ progress by scaling up $T$. For this purpose, we define a proxy function as follows.

    \begin{definition} [Matrix version of \cref{d:ProgressFunction}] \label{d:matrixProgressFunction}
    In the setting of \cref{p:matrixPolyGap}, the proxy function is defined as
    \begin{align*} 
    h^{A,r,y}_{T}(\alpha) & := \sum_{j \in T} c^{A,r}_{j}(y \circ (1_{\overline{T}} + \alpha 1_{T})) = \sum_{j \in T} \sum_{i \in [m]} r_{i} \frac{\alpha A_{ij} y_{j}}{\sum_{k \not\in T} A_{ik} y_{k} + \alpha \sum_{k \in T} A_{ik}} 
    = \sum_{i \in N(T)} r_{i} \frac{\alpha \mu_{i}}{1 + (\alpha-1) \mu_{i}} , 
    \end{align*}
    where $N(T)$ is the neighborhood according to \cref{d:matrixNbrhood} and $\mu_{i} := \frac{\sum_{j \in T} A_{ij} y_{j}}{\sum_{j \in [n]} A_{ij} y_{j}}$. 
    \end{definition}

    Note this is also positive sum of terms $\alpha \mu / (1 + (\alpha-1) \mu)$ for $\mu \in (0,1)$, so many of the desirable properties in our previous analysis carry to this new proxy function. In particular, $c_{j \in T}$ is increasing in $\alpha$ and $c_{j \not\in T}$ is decreasing, so we can apply the same potential analysis in \cref{l:outlineProgressLemma}: 

\begin{lemma} [Matrix version of \cref{l:outlineProgressLemma}] \label{l:matrixProgressLemma}
    Consider proxy function $h := h^{A,r,y}_{T}$ according to \cref{d:matrixProgressFunction}, where subset $T \subseteq [n]$ has margin $\gamma$. 
    Then for any $\alpha$ satisfying $0 \leq h(\alpha) - h(1) \leq \gamma$, the scaling $y' := y \circ (1_{\overline{T}} + \alpha 1_{T})$ results in 
    \[ \|c^{A,r}(y) - c\|_{2}^{2} - \|c^{A,r}(y') - c\|_{2}^{2} \geq 2 \gamma (h(\alpha) - h(1)) .  \]
\end{lemma}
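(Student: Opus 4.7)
The plan is to mirror the proof of \cref{l:outlineProgressLemma} line by line, substituting $c^{A,r}(\cdot)$ for $\lev^U(\cdot)$ and using the matrix proxy function of \cref{d:matrixProgressFunction} in place of the frame one. Write $c := c$, $c := c^{A,r}(y)$, $c' := c^{A,r}(y')$ for shorthand. First I would record two structural facts. (i) The total mass is conserved: $\langle c^{A,r}(y), 1_n\rangle = \sum_i r_i \sum_j \frac{A_{ij} y_j}{\sum_k A_{ik} y_k} = \sum_i r_i = s$, independent of $y$. Hence $\langle c - c', 1_n\rangle = 0$, so for \emph{any} $\nu \in \R$ we may freely subtract $\nu 1_n$ from the vectors when computing the difference of squared norms. (ii) The one-sided monotonicity of the coordinate functions under the update $y \to y \circ (1_{\bar T}+\alpha 1_T)$: writing each $c_j^{A,r}$ with the formula in \cref{d:matrixProgressFunction}, every $j \in T$ gives a sum of terms $r_i \alpha \mu_{ij}/(1+(\alpha-1)\mu_{ij})$ with $\mu_{ij} \in [0,1]$, which is non-decreasing in $\alpha$; every $j \notin T$ gives a sum of terms of the form $r_i A_{ij} y_j / ((\sum_{k\notin T} A_{ik} y_k)+\alpha\sum_{k\in T} A_{ik} y_k)$, which is non-increasing in $\alpha$. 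This is the exact analogue of \cref{f:LevScoreScaleUp}.

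With these in hand, I would carry out the identical algebraic expansion used in the frame proof. Letting $\nu$ be the threshold witnessing margin $\gamma$ for $T$, write
\begin{align*}
\|c - c\|_2^2 - \|c' - c\|_2^2 &= \|c - c - \nu 1_n\|_2^2 - \|c' - c - \nu 1_n\|_2^2 \\
&= \sum_{j \in T} (c'_j - c_j)\bigl(2\nu - (c_j - c_j) - (c'_j - c_j)\bigr) \\
&\quad + \sum_{j \notin T} (c_j - c'_j)\bigl((c_j - c_j) + (c'_j - c_j) - 2\nu\bigr),
\end{align*}
using the conservation from (i) in the first step. Then by the margin definition and (ii), for $j \in T$ we have $\nu - (c_j - c_j) \geq \gamma$; and by the hypothesis $h(\alpha) - h(1) \leq \gamma$ together with the monotonicity $c'_j \geq c_j$ on $T$, we get $c'_j - c_j \leq \sum_{i \in T}(c'_i - c_i) = h(\alpha) - h(1) \leq \gamma$, so $\nu - (c'_j - c_j) \geq 0$. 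Each $T$-term is therefore $\geq \gamma(c'_j - c_j)$. A symmetric argument using the opposite margin bound on $\bar T$ handles the second sum.

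Summing and using the balance $\langle c' - c, 1_T\rangle = \langle c - c', 1_{\bar T}\rangle$ (which follows from (i)) yields
\[
\|c - c\|_2^2 - \|c' - c\|_2^2 \;\geq\; \gamma\Bigl(\sum_{j \in T}(c'_j - c_j) + \sum_{j \notin T}(c_j - c'_j)\Bigr) = 2\gamma\sum_{j \in T}(c'_j - c_j) = 2\gamma(h(\alpha)-h(1)),
\]
which is the claim. The only potential obstacle is verifying the monotonicity in (ii) carefully enough that the signs work out in both the $T$ and $\bar T$ sums; once this is in place, the rest of the proof is purely formal and identical to \cref{l:outlineProgressLemma}. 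No new idea beyond the frame case is required, which is the point of framing matrix scaling inside the same proxy-function potential analysis.
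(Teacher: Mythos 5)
Your proposal is correct and follows essentially the same route the paper takes: it invokes mass conservation (the matrix analogue of \cref{f:outlineSumLevScores}) and coordinatewise monotonicity under scaling up $T$, then repeats the potential expansion of \cref{l:outlineProgressLemma} verbatim, which is exactly how the paper justifies \cref{l:matrixProgressLemma}. One small note: in your step (ii) the denominator of each $T$-term involves the row-level fraction $\mu_i = \sum_{k\in T}A_{ik}y_k/\sum_k A_{ik}y_k$ rather than a per-entry $\mu_{ij}$, but the claimed monotonicity still holds (the derivative of $\alpha A_{ij}y_j/(\sum_{k\notin T}A_{ik}y_k+\alpha\sum_{k\in T}A_{ik}y_k)$ in $\alpha$ is nonnegative), so the argument goes through unchanged.
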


 Our goal is now to find an update such that $\gamma \geq h(\alpha) - h(1) \gtrsim \gamma$, which will give the required iteration bound. 
 In order to guarantee that such an update exists, we need to use the fact that the input $A$ can be scaled to $(r,c)$. 
    Feasibility of scalings has the following characterization: 

\begin{theorem} [Prop 2.2 in \cite{LSW}] \label{t:matrixScalingFeasibility}
	For matrix scaling input $(A,r,c)$ as in \cref{d:matrixScalingInput}, the following are equivalent:
	\begin{itemize}
		\item for every $\eps > 0$, there is $L_{\eps} \in \text{diag}(m), R_{\eps} \in \text{diag}(n)$ such that $B := L_{\eps} A R_{\eps}$ is $\eps$-approximately $(r, c)$-scaled according to \cref{d:matrixScalingInput};
		\item $\langle c, 1_{n} \rangle = \langle r, 1_{m} \rangle$ and for every $T \subseteq [n]$, $c(T) := \sum_{j \in T} c_{j} \leq \sum_{i \in N(T)} r_{i} =: r(N(T))$ where $N(T)$ is the neighborhood according to \cref{d:matrixNbrhood}. 
	\end{itemize}
	$(A,r,c)$ satisfying the above conditions is called feasible, and otherwise it is infeasible and $T \subseteq [n]$ satisfying $c(T) > r(N(T))$ is a certificate of infeasibility. 
\end{theorem}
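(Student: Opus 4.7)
The plan is to treat the two implications separately. The forward direction (necessity) is an immediate consequence of mass conservation combined with the fact that positive diagonal scalings preserve the support of $A$. The reverse direction (sufficiency) requires more work and is most cleanly obtained via a convex-analytic argument showing that a natural convex potential is bounded below precisely when the Hall-type inequalities hold.

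For the forward direction, suppose $B := LAR$ is $\eps$-approximately $(r,c)$-scaled. Since $L, R$ have strictly positive diagonals, $B_{ij} = 0$ whenever $A_{ij} = 0$, so in particular $B_{ij} = 0$ for $j \in T$ and $i \notin N(T)$. Therefore
\[
\sum_{j \in T} c_j(B) \;=\; \sum_{j \in T} \sum_{i \in N(T)} B_{ij} \;\leq\; \sum_{i \in N(T)} \sum_{j \in [n]} B_{ij} \;=\; \sum_{i \in N(T)} r_i(B),
\]
and combining with $\|c(B) - c\|_2, \|r(B) - r\|_2 \leq \eps$ and taking $\eps \to 0$ yields $c(T) \leq r(N(T))$. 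The equality $\langle r, 1_m\rangle = \langle c, 1_n\rangle$ follows similarly from the identity $\sum_i r_i(B) = \sum_j c_j(B)$.

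For the reverse direction, introduce the convex potential
\[
f(x,y) \;:=\; \sum_{i,j} A_{ij} e^{x_i + y_j} - \langle r, x\rangle - \langle c, y\rangle,
\]
and note that $\nabla f(x,y) = \bigl(r(B)-r,\; c(B)-c\bigr)$ where $B$ is defined by $L_{ii} := e^{x_i}, R_{jj} := e^{y_j}$. Since $f$ is convex, once $\inf f > -\infty$ is established we automatically get $\inf_{(x,y)} \|\nabla f(x,y)\|_2 = 0$, which produces $\eps$-approximate scalings for every $\eps > 0$. The remaining task is therefore to show $\inf f > -\infty$ under the Hall-type condition. Along any ray $(x,y) + t(u,v)$, the exponentials $A_{ij} e^{t(u_i+v_j)}$ blow up unless $u_i + v_j \leq 0$ for all $(i,j)$ with $A_{ij} > 0$, while the linear term grows at rate $-\langle r,u\rangle - \langle c,v\rangle$. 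The extremal recession directions satisfying the first constraint are of the form $u = 1_{[m]\setminus N(T)}$, $v = -1_T$ for some $T \subseteq [n]$, and along these rays $f$ grows linearly at slope $c(T) - r(N(T))$. Hence, modulo the one-dimensional translation invariance arising from $\langle r,1_m\rangle = \langle c,1_n\rangle$, boundedness from below is equivalent to $c(T) \leq r(N(T))$ for every $T$.

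The main obstacle will be the recession-cone analysis: justifying that $\{0,1\}$-valued indicator directions capture all extremal descent rays of $f$ requires a careful combinatorial argument on the support of $A$, combined with a discretization/limiting argument on continuous directions. A cleaner alternative that sidesteps the convex geometry entirely is an operational proof invoking the LSW algorithm: for any $\eps > 0$, the algorithm either outputs an $\eps$-approximate scaling or produces a subset $T$ certifying $c(T) > r(N(T))$, and this dichotomy establishes the equivalence directly from algorithmic guarantees, though with less insight into the underlying convex structure.
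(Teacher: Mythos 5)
First, note that the paper does not prove this statement at all: it is imported verbatim as Proposition 2.2 of \cite{LSW}, so your proposal can only be judged on its own merits. Your forward (necessity) direction is correct and complete: diagonal scalings preserve the support, so $\sum_{j\in T}c_j(B)\le\sum_{i\in N(T)}r_i(B)$, and letting $\eps\to 0$ gives the Hall-type inequalities and the equality of total masses. The convex potential $f(x,y)=\sum_{ij}A_{ij}e^{x_i+y_j}-\langle r,x\rangle-\langle c,y\rangle$ with $\nabla f=(r(B)-r,\,c(B)-c)$ is also the standard and appropriate vehicle for sufficiency, and the fact that a differentiable convex function bounded below has $\inf\|\nabla f\|_2=0$ is legitimate.

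The sufficiency direction, however, has two genuine problems. First, the directions you single out are wrong: with $u=1_{[m]\setminus N(T)}$, $v=-1_T$, a support pair $(i,j)$ with $i\notin N(T)$ and $j\notin T$ gives $u_i+v_j=1>0$, so this is not a recession direction of the exponential part. The correct choices are $u=-1_{N(T)}$, $v=1_T$, which do satisfy $u_i+v_j\le 0$ on ${\rm supp}(A)$ and along which the linear growth rate of $f$ is $r(N(T))-c(T)$; this in fact re-proves necessity, but for sufficiency you must control \emph{all} admissible $(u,v)$, and the reduction to indicator directions (a threshold/layer-cake argument on the level sets of $v$) is exactly the step you flag but do not supply. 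Second, and more fundamentally, even a complete ray analysis does not close the argument: a finite-valued convex function can be unbounded below while its recession function is nonnegative (it may decay sublinearly along rays of zero recession slope), so ``bounded below along every ray at linear rate'' does not imply $\inf f>-\infty$. The standard way to finish is not ray geometry but LP duality: Hall's condition plus $\langle r,1_m\rangle=\langle c,1_n\rangle$ is, by max-flow--min-cut, equivalent to the existence of a nonnegative matrix $P$ with ${\rm supp}(P)\subseteq{\rm supp}(A)$, row sums $r$ and column sums $c$; then the pointwise bound $A_{ij}e^{x_i+y_j}\ge P_{ij}\bigl(x_i+y_j+1-\log(P_{ij}/A_{ij})\bigr)$ gives $f(x,y)\ge\sum_{ij}P_{ij}\bigl(1-\log(P_{ij}/A_{ij})\bigr)$, an explicit finite lower bound. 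This combinatorial/duality ingredient is the missing core of the proof. Your fallback of invoking the LSW algorithmic dichotomy would work in principle, but it requires the full convergence analysis of that algorithm and so is not a self-contained proof of the statement.
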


    Therefore, we can use a similar argument as in the frame case to guarantee that, for a feasible input, it is possible to find an update $\alpha$ making sufficient progress. 

\begin{prop} [Matrix version of \cref{p:outlineInfeasible}] \label{p:matrixInfeasible}
    Consider progress function $h := h^{A,r,y}_{T}$ according to \cref{d:matrixProgressFunction}, where $T \subseteq [n]$ has margin $\gamma >0$. If $r(N(T)) \geq c(T)$, then for any $\delta \in [0,\gamma)$, there is a finite solution $\alpha < \infty$ to the equation $h(\alpha) = h(1) + \delta$.
\end{prop}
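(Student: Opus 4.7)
The plan is to mirror the four-item argument in Proposition \ref{p:outlineInfeasible} almost verbatim, exploiting the fact that the proxy function $h := h^{A,r,y}_T$ again admits a simple scalar decomposition $h(\alpha) = \sum_{i \in N(T)} r_i \frac{\alpha \mu_i}{1+(\alpha-1)\mu_i}$ with $\mu_i := \frac{\sum_{j \in T} A_{ij} y_j}{\sum_{j \in [n]} A_{ij} y_j} \in (0,1]$. The only real change from the frame case is that the role of the spectrum is played by the concrete scalars $\mu_i$, and the role of $\rk(U_T)$ is played by $r(N(T))$.

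First I would note that each summand $\alpha \mapsto r_i \frac{\alpha \mu_i}{1+(\alpha-1)\mu_i}$ is increasing for $\mu_i \in (0,1]$, so $h$ is strictly increasing on $[1,\infty)$; continuity is obvious. Next, since $\mu_i > 0$ for every $i \in N(T)$, a direct limit computation gives $\frac{\alpha \mu_i}{1+(\alpha-1)\mu_i} \to 1$ as $\alpha \to \infty$, whence $\lim_{\alpha \to \infty} h(\alpha) = \sum_{i \in N(T)} r_i = r(N(T))$.

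The main substantive step is the margin inequality $\langle c, 1_T\rangle - h(1) \geq \gamma$. For this I would use the identity $\langle c^{A,r}(y), 1_n\rangle = \sum_i r_i = \langle c, 1_n\rangle$, which gives $\langle c^{A,r}(y) - c, 1_T\rangle = -\langle c^{A,r}(y)-c, 1_{\overline T}\rangle$. With threshold $\nu$ as in Definition \ref{d:SetMargin}, either $\nu \leq 0$ (so $\max_{j \in T}(c^{A,r}_j(y) - c_j) \leq -\gamma$ and $\langle c, 1_T\rangle - h(1) \geq |T|\gamma$) or $\nu \geq 0$ (so $\min_{j \notin T}(c^{A,r}_j(y) - c_j) \geq \gamma$ and $\langle c, 1_T\rangle - h(1) = \langle c - c^{A,r}(y), 1_T\rangle = \langle c^{A,r}(y) - c, 1_{\overline T}\rangle \geq |\overline T|\gamma$). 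Either way the bound $\gamma$ holds (under the mild assumption that both $T$ and $\overline T$ are nonempty, the only case in which the margin is nontrivially defined).

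Chaining these observations with feasibility gives
\[
\sup_{\alpha \geq 1} h(\alpha) - h(1) = r(N(T)) - h(1) = \bigl(r(N(T)) - c(T)\bigr) + \bigl(c(T) - h(1)\bigr) \geq 0 + \gamma = \gamma.
\]
Since $h$ is continuous and strictly increasing on $[1,\infty)$, for every $\delta \in [0,\gamma)$ the intermediate value theorem produces a finite $\alpha$ with $h(\alpha) = h(1) + \delta$, which is exactly the claim. I don't foresee any real obstacle: the proof is a transliteration of the frame case, with the algebraic expressions becoming simpler because the $\mu_i$'s are explicit sums over the support of $A$ rather than eigenvalues of a matrix ratio.
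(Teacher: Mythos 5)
Your proposal is correct and matches the paper's argument: the paper likewise computes $\lim_{\alpha\to\infty} h(\alpha) = r(N(T))$ and then invokes exactly the frame-case chain (monotonicity, the margin bound $c(T)-h(1)\geq\gamma$ via $\langle c^{A,r}(y)-c,1_n\rangle=0$, feasibility, and continuity) that you spell out. Your explicit handling of the two threshold cases and the intermediate value theorem is just the transliteration the paper alludes to, so there is nothing to add.
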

\begin{proof}
    As $h$ is increasing, we compute the limit of the progress function: 
    \[ \lim_{\alpha \to \infty} h(\alpha) = \lim_{\alpha \to \infty} \sum_{i \in N(T)} r_{i} \frac{\alpha \mu_{i}}{1 + (\alpha-1) \mu_{i}} = r(N(T))  . \]
     The result then follows from exactly the same calculations as in the frame case, replacing $\lev$ with $c^{A,r}$ and using feasibility $r(N(T)) \geq c(T)$. 
\end{proof}

    Therefore in the feasible case, there is always a good update $\gamma \geq h(\alpha) - h(1) \gtrsim \gamma$, so by \cref{l:matrixProgressLemma}, we have the required iteration bound. In fact, this is the main source of our quadratic improvement in iterations: in \cite{LSW}, they use a different analysis to guarantee some entry must make $\Omega(\gamma^{2}/n^{2})$ progress. By using our proxy function, we are able to measure progress of the whole subset $T \subseteq [n]$ and guarantee $\Omega(\gamma^{2})$ progress. 

    All that remains is to actually compute an update making good progress. In the matrix setting, this is a much simpler task as we have access to $\mu$. Specifically, we use the following simple approximation
    \[ \forall \alpha \geq 1, \mu \in [0,1]: \quad \frac{\alpha \mu}{2 \max\{1, (\alpha-1)\mu\}} \leq  \frac{\alpha \mu}{1 + (\alpha-1) \mu} \leq \frac{\alpha \mu}{\max\{1, (\alpha-1)\mu\}}  .   \]
    Therefore, we have explicit access to the following piece-wise linear approximation satisfying $g(\alpha)/2 \leq h(\alpha) - h(1) \leq g(\alpha)$: 
    \[ g(\alpha) := \begin{cases}
        0 + (\alpha - 1) \sum_{i \in N(T)} r_{i} \mu_{i} (1-\mu_{i}) & \qquad \text{for} \quad \alpha - 1 \in [ 0, \frac{1}{\mu_{1}} ]
        \\ \sum_{i \leq k} r_{i} (1-\mu_{i}) + (\alpha - 1) \sum_{i > k} \mu_{i} r_{i} (1-\mu_{i}) & \qquad \text{for} \quad \alpha - 1 \in [ \frac{1}{\mu_{k}}, \frac{1}{\mu_{k+1}} ] 
        \\ \sum_{i \in N(T)} r_{i} (1-\mu_{i}) + 0 &  \qquad \text{for} \quad \alpha - 1 \in  [ \frac{1}{\mu_{|N(T)|}}, \infty ] 
    \end{cases} ,  \]
    where we sorted $\mu_{1} \geq ... \geq \mu_{|N(T)|} > 0$ and $\mu_{i > |N(T)|} := 0$. 

    Note that $g$ is increasing, so we can find the update by computing $g$ at the breakpoints $\{g(1/\mu_{k}) \}_{k \in [r]}$, finding the segment containing the solution $g(\hat{\alpha}) = \gamma$, which must exist by \cref{p:matrixInfeasible}, and then solving the linear equation in that segment. With some care, this can be performed in $\tilde{O}(\text{nnz}(A))$ time: first sort $\mu$, and then note that the linear equation for the $k$-th and $(k+1)$-st piece differs by $O(1)$ terms, so we compute them in order. This gives
    \[ \gamma/2 = g(\hat{\alpha})/2 \leq h(\hat{\alpha}) - h(1) \leq g(\hat{\alpha}) = \gamma  . \]

    Combined with the potential analysis in \cref{l:matrixProgressLemma} and gap lower bound in \cref{p:matrixPolyGap} gives 
    \[ \|c^{A,r}(y^{(t)}) - c\|_{2}^{2} - \|c^{A,r}(y^{(t+1)}) - c\|_{2}^{2} \geq 2 \gamma (h(\alpha) - h(1)) \geq \gamma^{2} \geq \|c^{A,r}(y^{(t)}) - c\|_{2}^{2}/(2 n^{3}) .   \]

    Finally, we can adapt our regularization argument to keep the magnitude and bit complexity of scalings bounded. The appropriate condition number for matrix scaling is much simpler than in the frame case. 

    \begin{definition} [Matrix version of \cref{d:rhoMeasure}] \label{d:matrixRho}
        For matrix input $A \in \R^{m \times n}_+$, 
        \[ \rho_{T}(A) := \max_{i \in N(T)} \frac{\sum_{j \not\in T} A_{ij}}{\sum_{j \in T} A_{ij}} , \qquad \rho(A) := \max_{T \subseteq [n]} \rho_{T}(A) , \]
        where $N(T)$ denotes the neighborhood of $T$ according to $A$ as discussed above. 
    \end{definition}

    Note that $\rho$ can be simply bounded by $\rho(A) \leq n \frac{A_{\max}}{A_{\min}}$ where $A_{\max,\min}$ denote the largest and smallest non-zero entries, and this is clearly bounded by $n 2^{2b}$ if all entries of $A$ have bit complexity $b$. This already gives a bound on how much the scalings grow in each iteration, as the update is bounded by $1/\mu_{\min}$ where 
    \[ \mu_{\min} := \min_{i \in N(T)} \mu_{i} = \frac{\sum_{j \in T} A_{ij} y_{j}}{\sum_{j \in [n]} A_{ij} y_{j}} \geq \frac{y_{\min}}{y_{\max}} \frac{\sum_{j \in T} A_{ij}}{\sum_{j \in [n]} A_{ij}} \geq \frac{y_{\min}}{y_{\max}} (1 + \rho_{T}(A))^{-1} .  \]

    Following the same regularization procedure in the matrix setting gives the following guarantees:
        
    \begin{corollary} [Matrix version of \cref{t:Regularization} and \cref{c:RegularizationBit}]
    Given matrix $A \in \R^{m \times n}$, for row sums $r \in \R^{m}_{+}$, scaling $y \in \R^{n}_{++}$, and precision $0 < \delta < 1/2$, there is $\hat{y} \in \R_{++}^{n}$ satisfying \[ \log \frac{\hat{y}_{\max}}{\hat{y}_{\min}} \leq n \log(\rho(A)/\delta), \qquad \text{and} \quad \|c^{A,r}(y) - c^{A,r}(\hat{y})\|_{1} \leq 2 n \delta \langle r, 1_m \rangle. \]
    Further, $\hat{y}$ can be computed in strongly polynomial time using $\tilde{O}(\text{nnz}(A))$ arithmetic operations, and if the entries of $A$ have bit complexity $b$, then the bit complexity of $\hat{y}$ is $\tilde{O}(n b)$. 
    \end{corollary}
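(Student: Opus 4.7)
The plan is to adapt, nearly verbatim, the proof of the frame-scaling regularization \cref{t:Regularization} to the matrix setting, exploiting the fact that the matrix proxy function $h_T^{A,r,y}$ from \cref{d:matrixProgressFunction} admits a similar ``small $\mu$ gap means small change'' decomposition, but now with a much more transparent computation of $1 - \mu_i$ that directly involves $\rho_T(A)$ from \cref{d:matrixRho} rather than any spectral quantity. Concretely, I will sort $y$ in non-increasing order and rescale so $y_n = 1$. For each prefix $T := [k]$, I check whether $y_k/y_{k+1} \leq \rho_T(A)/\delta$. If every prefix satisfies this bound, the product telescopes to $\log (y_{\max}/y_{\min}) \leq \sum_{k=1}^{n-1} \log(\rho_{[k]}(A)/\delta) \leq n \log(\rho(A)/\delta)$, giving the required magnitude bound immediately.

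If some prefix $T = [k]$ violates the bound, I set $\hat{y}_j := y_j \cdot (y_{k+1}/y_k) \cdot (\rho_T(A)/\delta)$ for $j \in T$ and $\hat{y}|_{\bar T} = y|_{\bar T}$. This is precisely the inverse of scaling up $T$ by $\alpha := (y_k/y_{k+1})(\delta/\rho_T(A)) > 1$ starting from $\hat{y}$, so $y = \hat{y} \circ (1_{\bar T} + \alpha 1_T)$. By the same computation as in \cref{t:Regularization} (using $\sum_j c_j^{A,r}$ constant because scaling doesn't change which rows have nonzero support), we get
\[ \|c^{A,r}(y) - c^{A,r}(\hat{y})\|_1 = 2 (h(\alpha) - h(1)) \leq 2 \sum_{i \in N(T)} r_i (1-\mu_i), \]
where $\mu_i := \frac{\sum_{j \in T} A_{ij} \hat{y}_j}{\sum_{j \in [n]} A_{ij} \hat{y}_j}$ and I used the pointwise bound $\frac{(\alpha-1)\mu(1-\mu)}{1+(\alpha-1)\mu} \leq (1-\mu)$. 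The central step, which replaces the appeal to \cref{l:rhoZBound} in the frame case, is the direct estimate
\[ 1 - \mu_i = \frac{\sum_{j \not\in T} A_{ij} \hat{y}_j}{\sum_{j \in [n]} A_{ij} \hat{y}_j} \leq \frac{\hat{y}_{k+1}}{\hat{y}_k} \cdot \frac{\sum_{j \not\in T} A_{ij}}{\sum_{j \in T} A_{ij}} \leq \frac{\hat{y}_{k+1}}{\hat{y}_k} \rho_T(A) = \delta \]
by our choice of rescaling. This yields $\|c^{A,r}(y) - c^{A,r}(\hat{y})\|_1 \leq 2 \delta \langle r, 1_m \rangle$ per rounding. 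Since each such rounding strictly shrinks one of the $n-1$ prefix ratios to its target value and subsequent roundings of other prefixes do not inflate earlier ones above the target (the scaling is monotone in the same way as before), after at most $n$ rounds all prefix ratios are in range and the cumulative error is $\leq 2 n \delta \langle r, 1_m \rangle$.

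For strong polynomiality and the runtime bound, I note that unlike the frame case $\rho_T(A)$ is an explicit ratio of sums of entries of $A$ and can be computed in $O(\operatorname{nnz}(A_{[m] \times T}))$ arithmetic operations; sorting $y$ and sweeping through prefixes takes $O(n \log n + \operatorname{nnz}(A))$ time, so the whole procedure runs in $\tilde O(\operatorname{nnz}(A))$. For the bit complexity claim, I combine the magnitude bound $\log(\hat{y}_{\max}/\hat{y}_{\min}) \leq n \log(\rho(A)/\delta)$ with the trivial estimate $\rho(A) \leq n \cdot A_{\max}/A_{\min} \leq n \cdot 2^{2b}$, then apply a coarse rounding oracle (analogous to \cref{c:RegularizationBit}) combined with a matrix version of \cref{l:scalingRobustness} showing $c^{A,r}$ is Lipschitz in relative perturbations of $y$. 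The main (minor) obstacle is verifying that successive prefix roundings don't interfere, i.e.\ rounding prefix $[k]$ does not make some earlier prefix ratio $y_{k'}/y_{k'+1}$ (for $k' < k$) grow; this follows because the rounding only rescales $y|_{[k]}$ uniformly, leaving all ratios within $[k]$ unchanged and only decreasing the boundary ratio $y_k/y_{k+1}$. With that observation, the rest is essentially a direct translation of \cref{t:Regularization} and \cref{c:RegularizationBit}.
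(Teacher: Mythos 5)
Your proposal is essentially the paper's own proof: sort $y$, shrink any violating prefix ratio $y_k/y_{k+1}$ down to $\rho_{[k]}(A)/\delta$, view this as $y = \hat y \circ (1_{\overline T} + \alpha 1_T)$, bound the $\ell_1$ change by $2(h(\alpha)-h(1)) \leq 2\sum_{i \in N(T)} r_i(1-\mu_i)$, and use the gap to force $1-\mu_i \leq \delta$; your explicit computation of $1-\mu_i$ via $\frac{\sum_{j\notin T} A_{ij}\hat y_j}{\sum_j A_{ij}\hat y_j} \leq \frac{\hat y_{k+1}}{\hat y_k}\rho_T(A) = \delta$ is exactly the intended replacement of \cref{l:rhoZBound}, and the telescoping magnitude bound, the non-interference of successive prefix shrinkings, and the bit-complexity step via a rounding oracle plus a Lipschitz estimate for $c^{A,r}$ all match the paper.

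The one place your write-up falls short of the claim is the $\tilde O(\operatorname{nnz}(A))$ runtime. Computing $\rho_{[k]}(A)$ from scratch costs $O(\operatorname{nnz}(A))$ per prefix, so ``sweeping through prefixes'' as you describe it is $\Theta(n\cdot\operatorname{nnz}(A))$ in the worst case, not nearly linear. You need one more observation: either maintain the per-row partial sums $\sum_{j\in[k]}A_{ij}$ and $\sum_{j\notin[k]}A_{ij}$ incrementally as $k$ grows (each nonzero entry of $A$ triggers exactly one update, and the row maxima can be tracked with a heap in $\tilde O(\operatorname{nnz}(A))$ total), or do as the paper does: use the chain structure $N([1])\subseteq\cdots\subseteq N([n])$ and, for each row $i$, record its ratio only the first time $i$ enters $N([k])$, using that value as an overestimate of its contribution to $\rho_{[k']}(A)$ for all later $k'$. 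Since these overestimates are still at most $\rho(A)$, the magnitude bound $\log(\hat y_{\max}/\hat y_{\min}) \leq n\log(\rho(A)/\delta)$ is unaffected. With that accounting fixed, the rest of your argument goes through as stated.
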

    \begin{proof}
        The algorithm is exactly the same, sorting by $y_{1} \geq ... \geq y_{n}$ and shrinking any prefix gap. So assume there is a prefix set $T = [k]$ with index $k$ such that $\frac{y_{k}}{y_{k+1}} > \frac{\rho_{[k]}(A)}{\delta}$, where $\rho$ is according to \cref{d:matrixRho}. We scale this subset down to     
        \[ \forall j \leq k: \qquad \hat{y}_{j} := y_{j} \bigg( \frac{y_{k}}{y_{k+1}} \bigg)^{-1} \bigg( \frac{\rho_{T}(A)}{\delta} \bigg)    \]
    and $\hat{y}|_{\overline{T}} = y|_{\overline{T}}$. We once again note that $y = \hat{y} \circ (1_{\overline{T}} + \alpha 1_{T})$ for $\alpha := \frac{y_{k}}{y_{k+1}} \frac{\delta}{\rho_{T}(A)} > 1$, so we can analyze the change in column sums using progress function $h := h_{A,\hat{y}}^{T}$ according to \cref{d:matrixProgressFunction} as:
        \[ \|c^{A,r}(y) - c^{A,r}(\hat{y})\|_{1} \leq 2 (h(\alpha) - h(1)) = 2 \sum_{i \in N(T)} r_{i} \frac{(\alpha-1) \mu_{i}(1-\mu_{i})}{1 + (\alpha-1) \mu_{i})} \leq 2 \sum_{i \in N(T)} r_{i}(1 - \mu_{i}) .    \]
        Finally, we can bound $1-\mu_{\min} \geq 1-\delta$ using the large gap $\frac{y_{k}}{y_{k+1}} > \frac{\rho_{[k]}(A)}{\delta}$ and \cref{d:matrixRho} of $\rho$. Thus $2 \sum_{i \in N(T)} r_i(1-\mu_i) \leq 2 \delta \langle r, 1_m \rangle$. Since we perform this shrinking operation at most $n$ times, once for each prefix, the final error bound follows by the triangle inequality.

        It remains to justify the nearly linear running time. Note we only need to compute upper bounds for $\rho_{T}$ for prefix sets $T_{k} := [k]$ for $k \in [n]$. In this setting $N(T_{k})$ is a chain, i.e. $N([1]) \subseteq ... \subseteq N([n])$, so we can upper bound the term $\frac{\sum_{j \in T} A_{ij}}{\sum_{j \not\in T} A_{ij}}$ for each $i \in [d]$ individually by considering the first time $i \in N(T)$. 
        Explicitly, for each row $i$ in this ordering compute $\sum_{j \in T} A_{ij}$ for first time $i \in N(T)$, and for each $T$ just consider these as upper bounds for $\rho_{T} := \max_{i \in N(T)} \frac{\sum_{j \in T} A_{ij}}{\sum_{j \not\in T} A_{ij}}$.         
    \end{proof}

    Therefore, by regularizing to precision $\delta = \poly(\gamma/s n)$ for $s := \langle r, 1_{m} \rangle = \langle c, 1_{n} \rangle$, we still make $\Omega(\gamma^{2}) \geq \Omega(\|c^{A,r}(y) - c\|_{2}^{2}/n^{3})$ progress in each iteration. Noting the original error $\|c^{A,r}(1_{n}) - c\|_{2}^{2} \leq n s^{2}$, this gives the required bound on the number of iterations.
    And by applying the regularization procedure with bounded bit precision, we maintain $\poly(s,n,b) \log(1/\eps)$ bit complexity for all intermediate scalings.

\section{Strongly Polynomial Models of Computation} \label{s:stronglyPoly}

In this section, we discuss the precise sense in which our algorithm is strongly polynomial. 
We will also place this model in context with other commonly used computational models and give motivation for its use. Some of the below descriptions are inspired by the recent survey of Srivastava \cite{SrivastavaSurvey} on the complexity of eigendecomposition. 

\begin{enumerate}
    \item \textbf{Real Model}: the input is given as $n$ real numbers; the algorithm is allowed oracle access to exact arithmetic gates, which take constant time; here, a polynomial time algorithm is one that makes $\poly(n)$ calls to the arithmetic oracle. This model is studied in complexity theory and pure mathematics. 

    \item \textbf{Bit model}: the input is given as $n$ numbers each consisting of $b$ bits; the algorithm is able to perform exact arithmetic operations, but the cost is the number of bit operations; since bit size can grow with each operations, the algorithm is allowed the freedom to round arbitrarily; here, a polynomial time algorithm performs $\poly(n,b)$ bit operations. Note that this automatically enforces a $\poly(n,b)$ space constraint on all intermediate quantities. This is the most prominent computational model used to define tractable problems from the perspective of theoretical computer science. 

    \item \textbf{Finite precision model}: the input is given as $n$ numbers in floating point with fixed machine precision $b$ bits; the algorithm is allowed arithmetic operations but they are only performed approximately up to adversarial multiplicative error that depends on the machine precision; here once again a polynomial time algorithm performs $\poly(n,b)$ bit operations, but because all intermediate quantities remain at fixed precision, this is equivalent to performing $\poly(n,b)$ arithmetic operations. This is the dominant model used in numerical analysis.  
\end{enumerate}

As described in \cref{ss:stronglyPoly}, the strongly polynomial model is a blend of the real and bit models: arithmetic operations are performed using exact arithmetic and only take constant time, but the bit complexity of intermediate quantities still needs to remain polynomially bounded in the input complexity.  

The prototypical strongly polynomial algorithm in linear algebra is the computation of the determinant. 
In practice, this is performed in the finite precision model by applying Gaussian elimination to reduce to row-echelon form and computing an approximation of the determinant that depends on machine precision. But this naive algorithm is not strongly polynomial, as it is well-known that Gaussian elimination with exact arithmetic and arbitrary pivots could lead to an exponential size blow up in the bit complexity \cite{GaussianEliminationWorstCase}. 
In \cite{Edmonds}, Edmonds gave a simple Schur-complement style algorithm that computes the determinant in the real model. Furthermore, all intermediate quantities are $k \times k$ subdeterminants of the original matrix for $k \in [n]$, so have polynomial bit complexity in the original input. This algorithm therefore satisfies all strongly polynomial requirements. 

Smale \cite{Smale} famously conjectured the existence of strongly polynomial algorithms for general linear programs. Several such algorithms have been developed for special classes of LP's, such as flow problems on graphs, but the general case is still open. In \cite{ScalingInvIPM}, the authors build an interior point framework for LP whose complexity depends only on the constraint matrix. This gives a unified strongly polynomial algorithm for many known important cases of LP. 

The algorithms we consider in this paper almost fit into this model, but require slightly extra power. Specifically, we consider the stronger version of strong polynomial as defined in section 1.3 of \cite{GLS} where we are given an additional `rounding oracle': given input $x \in \R, b \in \mathbb{N}$, output $x$ rounded to the nearest multiple of $2^{-b}$. 
Note that this is not oblivious to the bit representation of the input, and so does not fit into the real model interpretation of strongly polynomial. We require this extra power only in our regularization step in \cref{s:regularization}, and otherwise we only require exact arithmetic operations. This difficulty is encountered in many other other strongly polynomial algorithms, such as the original matrix scaling algorithm of \cite{LSW}. 
We believe it may be possible to avoid this rounding step by maintaining scalings as small complexity combinations of some `canonical scalings', and we leave this as an interesting open problem. 

\section{Applications to Learning Theory} \label{s:learningTheory}

In this section we discuss two previous works that apply frame scaling in the context of machine learning. In particular, we will show how our new algorithm for frame scaling can be used to improve the runtime of the learning algorithms of \cite{DKT} and \cite{PointLocation}. 

\subsection{Halfspace Learning \cite{DKT}}

We first discuss the application of frame scaling to halfspace learning.
We begin by formally stating the problem. In \cite{DKT}, the goal is to learn an unknown affine halfspace defined by normal vector $w \in \R^{d}$. 

The algorithm gets labeled samples from an unknown distribution $(x,y) \sim \mathcal{D}$, where $x \in \R^{d}, y = \text{sign}(\langle w, x \rangle)$. And the main result is as follows:

\begin{theorem} [Theorem 1.6 in \cite{DKT}] \label{t:DKTLearning}
    Given unknown halfspace defined by normal vector $w \in \R^{d+1}$ and unknown distribution $\mathcal{D}$ as above, there is a strongly polynomial algorithm that, for any $\delta > 0$, takes $\poly(d/\delta)$ samples from $\mathcal{D}$ and with high probability outputs a hypothesis $f : \R^{d} \to \{\pm 1\}$ satisfying
    \[ Pr_{(x,y) \sim \mathcal{D}}[f(x) \neq y] \leq \delta.     \]
\end{theorem}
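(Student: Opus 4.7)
The plan is to invoke the halfspace learning framework of \cite{DKT} using Algorithm \ref{a:MainAlgorithm} as the Forster transform subroutine. Their approach reduces learning the unknown halfspace to: (i)~drawing $m = \poly(d/\delta)$ i.i.d.\ labeled samples from $\mathcal{D}$; (ii)~computing a Forster transform of the unlabeled sample frame; and (iii)~applying an elementary, sample-efficient learning rule to the transformed data. The original bottleneck was step (ii), which \cite{DKT} handled using a randomized $\poly(d, n, 1/\eps)$-iteration scaling procedure; we replace this with our deterministic $\poly(d, n, \log(1/\eps))$-iteration algorithm while leaving steps (i) and (iii) as in \cite{DKT}.

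Concretely, the algorithm first draws $m = \poly(d/\delta)$ labeled samples $(x_i, y_i) \sim \mathcal{D}$, lifts them via $x_i \mapsto (x_i, 1) \in \R^{d+1}$ to absorb the affine offset, and forms the frame $U := (x_1, \dots, x_m)$. Then it calls $\operatorname{Main}(U, \tfrac{d+1}{m} 1_m, \eps)$ with $\eps := \poly(\delta/d)$; by \cref{t:mainAlgAnalysis}, this produces in strongly polynomial time a vector $z \in \R^m_{++}$ for which the implicit transformed samples $v_i := (U Z U^{\Tr})^{-1/2} x_i \sqrt{z_i}$ lie in $\eps$-approximate $(I_{d+1}, \tfrac{d+1}{m} 1_m)$-position (so $V V^{\Tr} = I_{d+1}$ exactly and the squared column norms match $\tfrac{d+1}{m}$ up to $\eps$). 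Finally, it applies the DKT post-processing to these (implicitly-represented) transformed samples and the labels $y_i$ to output the hypothesis $f$.

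The main technical obstacle is verifying that our implicit, approximate representation is compatible with the downstream learning rule. Two issues arise. First, we must check that the $\eps$-approximate Forster guarantee suffices in place of exact isotropy; this is straightforward because the error propagation in the classifier analysis of \cite{DKT} is polynomial in the scaling error, and our $\log(1/\eps)$ convergence lets us drive $\eps$ polynomially small in $\delta/d$ at negligible cost in iterations. Second, we maintain only the squared right scaling $z \in \R^m_{++}$, with the left scaling $L := (U Z U^{\Tr})^{-1/2}$ and the quantities $\sqrt{z_i}$ implicit; since matrix square roots and scalar square roots are not available in strongly polynomial time, we must verify that both training and inference can be implemented using only $z$, $U$, and matrix inverses of $U Z U^{\Tr}$. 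This is possible because the resulting linear hypothesis on the original space can be expressed in the form $\operatorname{sign}(\langle (U Z U^{\Tr})^{-1} q,\, x \rangle)$, where $q \in \R^{d+1}$ is a vector depending only on $z$, $U$, and the labels $y$; by \cref{d:outlineLeverageScores} and \cref{r:hStronglyPoly}, all operations used in constructing $q$ and evaluating the hypothesis on a fresh point $x$ reduce to strongly polynomial matrix arithmetic.
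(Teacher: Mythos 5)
Your proposal is correct and follows essentially the same route as the paper: both plug the new scaling algorithm (\cref{t:mainAlgAnalysis}) into the DKT learning framework and reduce the whole question to checking that the downstream classification step can be run with the implicit representation, i.e.\ with every use of the square-root scalings replaced by the inner product $\langle u,v\rangle_{Q} := u^{\Tr}(U Z U^{\Tr})^{-1} v$, which needs only strongly polynomial matrix arithmetic. The only difference is one of detail: where you assert compatibility (hypotheses of the form $\mathrm{sign}(\langle (U Z U^{\Tr})^{-1} q, x\rangle)$), the paper verifies it concretely by re-proving DKT's Lemma 7.2 as the Dunagan--Vempala perceptron for a general inner product (\cref{l:improvedPerceptron}) and Lemma 7.3 as the large-margin-fraction guarantee under approximate $(I_{d}, \frac{d}{n} 1_{n})$-position (\cref{l:ForsterMargin}), together with the trick of running $2n$ perceptrons initialized at $\pm u_{j}$ to supply the required positively correlated starting vector.
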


They are also able to extend this result to the inhomogeneous setting, where the halfspace has threshold $\{x \mid \langle w, x \rangle \geq t\}$, as well certain noisy models, where the labels only agree with the halfspace on some large fraction of samples (see Theorem 1.8 in \cite{DKT}). We focus on showing that our new frame scaling algorithm can also be applied in the simpler homogeneous setting with exact labels, noting that the ideas can be extended in a straightforward way. 

Frame scaling is helpful in this setting for two key reasons: (1) points with large ``margin" can be easily classified; (2) for point sets in approximate $(I_{d}, c = \frac{d}{n} 1_{n})$-position, a large fraction of points have large ``margin". These two observations are formalized in Lemma 7.2 and 7.3 in \cite{DKT} respectively. We show that these technical lemmas can be appropriately modified to work with our implicit frame scaling representation.

We first show that the large ``margin" classifier can be modified to work in strongly polynomial time for arbitrary inner product. We will apply this for inner product $\langle u, v \rangle_{Q} := u^{\Tr} (U Z U^{\Tr})^{-1} v$ for our frame $U \in \R^{d \times n}$ and scaling $z \in \R^{n}$. 
This follows the ``Improved Perceptron" algorithm given in Dunagan and Vempala \cite{DV04}.

\begin{lemma} [Lemma 7.2 in \cite{DKT} (Perceptron)] \label{l:improvedPerceptron}
    Consider $n$ labeled examples $\{(u_{j},y_{j}) \in \R^{d} \times \{\pm 1\}\}_{j \in [n]}$, such that there is an unknown linear halfspace satisfying $\forall j \in [n]: y_{j} = \text{sign}(\langle w, u_{j} \rangle_{Q})$ for some inner product $\langle \cdot, \cdot \rangle_{Q}$. 
    Then for any $\gamma \in (0,1)$ and initial guess $v_{0}$ satisfying $\langle v_{0}, w \rangle_{Q} > 0$, there is an algorithm requiring $\log_{1-\gamma^{2}}(\frac{\langle v_{0}, w \rangle_{Q}^{2}}{\|v_{0}\|_{Q}^{2} \|w\|_{Q}^{2}})$ iterations that outputs guess $v_{T}$ such that 
    \[ \langle v_{T}, u_{j} \rangle_{Q}^{2} \geq \gamma^{2} \|v_{T}\|_{Q}^{2} \|u_{j}\|_{Q}^{2} \implies \text{sign}(\langle v_{T}, u_{j} \rangle) = y_{j} .    \]    
\end{lemma}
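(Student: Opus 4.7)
The plan is to adapt the Dunagan--Vempala ``improved perceptron'' analysis to the $Q$--inner product setting. First I would specify the iterative algorithm: starting from $v_0$, at iteration $t$ I scan the examples for an ``offender'' $u_j$ that is misclassified with confidence at least $\gamma$, meaning $y_j \langle v_t, u_j\rangle_Q < 0$ and $\langle v_t, u_j\rangle_Q^2 \geq \gamma^2 \|v_t\|_Q^2 \|u_j\|_Q^2$. If no such $u_j$ exists, I output $v_T := v_t$; by contrapositive every $u_j$ satisfying the margin condition of the lemma must then have $y_j \langle v_T, u_j\rangle_Q \geq 0$, yielding the classification guarantee immediately. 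Otherwise, I apply the projection--style update
\[ v_{t+1} := v_t - \frac{\langle v_t, u_j\rangle_Q}{\|u_j\|_Q^2}\, u_j , \]
which zeroes out the $u_j$--component of $v_t$ in the $Q$--geometry.

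The heart of the proof is a potential analysis of
\[ \Phi_t := \frac{\langle v_t, w\rangle_Q^2}{\|v_t\|_Q^2 \|w\|_Q^2} , \]
which I will show grows by a factor of at least $1/(1-\gamma^2)$ at every non--terminating step. For the denominator, direct expansion gives $\|v_{t+1}\|_Q^2 = \|v_t\|_Q^2 - \langle v_t, u_j\rangle_Q^2/\|u_j\|_Q^2 \leq (1-\gamma^2)\|v_t\|_Q^2$ by the margin assumption. For the numerator, the key sign observation is that $\langle v_t, u_j\rangle_Q$ has sign $-y_j$ (since $u_j$ is misclassified) while $\langle w, u_j\rangle_Q$ has sign $y_j$ (by the halfspace hypothesis), so the subtracted correction $\frac{\langle v_t, u_j\rangle_Q}{\|u_j\|_Q^2}\langle w, u_j\rangle_Q$ is strictly negative. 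Hence $\langle v_{t+1}, w\rangle_Q > \langle v_t, w\rangle_Q > 0$, where positivity propagates by induction from the hypothesis $\langle v_0, w\rangle_Q > 0$. Combining both bounds gives $\Phi_{t+1} \geq \Phi_t/(1-\gamma^2)$.

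Cauchy--Schwarz forces $\Phi_t \leq 1$, so after $T$ iterations we have $\Phi_0 (1-\gamma^2)^{-T} \leq 1$, which rearranges to $T \leq \log_{1-\gamma^2}(\Phi_0)$, matching the claimed bound. Each iteration requires only $Q$--inner products, $Q$--norms, and a vector update; given oracle access to the matrix encoding $\langle\cdot,\cdot\rangle_Q$ (in our intended application, $(UZU^\Tr)^{-1}$ computed once from the implicit frame scaling representation), all of these are strongly polynomial.

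I do not anticipate a serious technical obstacle; the one step requiring care is the monotonicity of the numerator, which is precisely where the misclassified--with--margin choice of $u_j$ is used. Picking an arbitrary large--margin example (without the wrong--sign requirement) would not guarantee that $\langle v_{t+1}, w\rangle_Q \geq \langle v_t, w\rangle_Q$, and the geometric progress in $\Phi_t$ would fail.
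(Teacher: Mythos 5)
Your proposal is correct and follows essentially the same route as the paper: the same misclassified-with-margin update $v_{t+1} = v_t - \frac{\langle v_t,u_j\rangle_Q}{\|u_j\|_Q^2}u_j$, the same two observations (numerator $\langle v_t,w\rangle_Q$ strictly increases by the sign argument, $\|v_{t+1}\|_Q^2 \leq (1-\gamma^2)\|v_t\|_Q^2$), and the same Cauchy--Schwarz cap giving the iteration bound. The only cosmetic difference is that you track the squared correlation $\Phi_t$ while the paper tracks its square root, which yields the identical bound.
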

\begin{proof}
    The algorithm is as follows: in each iteration, if there is some $j \in [n]$ such that $v$ mis-classifies $u_{j}$ and there is a large margin, then update
\[ y_{j} \langle v_{t}, u_{j} \rangle_{Q} \leq - \gamma \|v_{t}\|_{Q} \|u_{j}\|_{Q}  \implies  \quad v_{t+1} := v_{t} - \frac{\langle v_{t}, u_{j} \rangle_{Q}}{\|u_{j}\|_{Q}^{2}} u_{j} .    \]

We will use the correlation $\frac{\langle v, w \rangle}{\|v\|_{Q} \|w\|_{Q}}$ as a potential function and show that it improves by a geometric factor after each update. This will imply the iteration bound. 

For shorthand, let $v$ be our starting vector, and $v'$ be the vector after updating the $j$-th sample. 
First note that, since $w$ correctly classifies all the samples, we have 
\[ \langle v', w \rangle_{Q} - \langle v, w \rangle_{Q} = \frac{- \langle v, u_{j} \rangle_{Q} 
\langle u_{j}, w \rangle_{Q}}{\|u_{j}\|_{Q}^{2}} > 0 ,    \]
where in the first step we used the definition of the update, and the final inequality was because $v$ and $w$ disagree on the classification of the $j$-th sample: $\text{sign}(\langle w, u_{j} \rangle_{Q}) \neq \text{sign}(\langle v, u_{j} \rangle_{Q})$.

Next, we can show that the norm of our guess is decreasing geometrically:
\[ \|v'\|_{Q}^{2} = \|v\|_{Q}^{2} - 2 \langle v, u_{j} \rangle_{Q} \frac{\langle v, u_{j} \rangle_{Q}}{\|u_{j}\|_{Q}^{2}} + \|u_{j}\|_{Q}^{2} \frac{\langle v, u_{j} \rangle_{Q}^{2}}{\|u_{j}\|_{Q}^{4}} = \|v\|_{Q}^{2} - \frac{\langle v, u_{j} \rangle_{Q}^{2}}{\|u_{j}\|_{Q}^{2}} \leq (1-\gamma^{2}) \|v\|_{Q}^{2} ,   \]
where the final step was by the margin assumption. 

Therefore, the correlation improves after each update:
\[ \frac{\langle v', w \rangle_{Q}}{\|v'\|_{Q} \|w\|_{Q}} \geq \frac{1}{\sqrt{1-\gamma^{2}}} \frac{\langle v, w \rangle_{Q}}{\|v\|_{Q} \|w\|_{Q}} ,   \]
where we used $\langle v', w \rangle > \langle v, w \rangle_{Q}$ for the numerator and $\|v'\|_{Q}^{2} \leq (1-\gamma^{2}) \|v\|_{Q}^{2}$. Since we have initial correlation $\langle v_{0}, w \rangle_{Q} > 0$ and the maximum is $\leq 1$ by Cauchy-Schwarz, this gives the required iteration bound. 
\end{proof}

Next, if the points are in approximate radial isotropic position, then a large fraction of the points will have large margin. 

\begin{lemma} [Lemma 7.3 in \cite{DKT}] \label{l:ForsterMargin}
    Consider frame $U \in \R^{d \times n}$ with scaling $z \in \R^{n}_{++}$ is in $\eps$-approximate $(I_{d}, c = \frac{d}{n} 1_{n})$-position according to \cref{d:introForsterTransform} with $\eps \leq \frac{d}{2n}$, and inner product $\langle u, v \rangle_{Q} := u^{\Tr} (U Z U^{\Tr})^{-1} v$. Then for any $w \in \R^{d}$,
    \[ T := \bigg\{ j \in [n] :  \frac{\langle w, u_{j} \rangle_{Q}^{2}}{\|w\|_{Q}^{2} \|u_{j}\|_{Q}^{2}} \geq \frac{1}{4d} \bigg\} \implies \quad \frac{|T|}{n} \geq \frac{1}{5d} .    \]
\end{lemma}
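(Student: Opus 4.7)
The plan is to pass from the frame $U$ to the implicitly rescaled isotropic frame $v_j := (UZU^\Tr)^{-1/2} u_j \sqrt{z_j}$, since the left scaling $L = (UZU^\Tr)^{-1/2}$ makes the chosen inner product $\langle \cdot,\cdot\rangle_Q$ coincide with the standard Euclidean one on the $v_j$'s. Concretely, writing $\tilde{w} := (UZU^\Tr)^{-1/2} w$ we obtain the identities $\langle w, u_j \rangle_Q = \tilde{w}^\Tr v_j / \sqrt{z_j}$, $\|w\|_Q^2 = \|\tilde{w}\|_2^2$, and $\|u_j\|_Q^2 = \lev_j^U(z)/z_j$, so that the quantity of interest simplifies to
\[
\frac{\langle w, u_j \rangle_Q^2}{\|w\|_Q^2 \|u_j\|_Q^2} \;=\; \frac{(\tilde{w}^\Tr v_j)^2}{\|\tilde{w}\|_2^2 \cdot \lev_j^U(z)}.
\]
Setting $p_j := (\tilde{w}^\Tr v_j)^2/\|\tilde{w}\|_2^2$, the isotropy condition $\sum_j v_j v_j^\Tr = I_d$ (built into our choice of $L$) yields $\sum_j p_j = 1$, while Cauchy--Schwarz gives $p_j \leq \|v_j\|_2^2 = \lev_j^U(z)$.

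Next I would exploit the approximate Forster condition. Since the marginals are $c = (d/n)\mathbf{1}_n$ and the isotropy error term vanishes for our representation, Definition~\ref{d:introForsterTransform} reduces to $\sum_j (\lev_j^U(z) - d/n)^2 \leq \eps^2$, and in particular $\|\lev^U(z) - (d/n)\mathbf{1}_n\|_\infty \leq \eps \leq d/(2n)$. Therefore
\[
\tfrac{d}{2n} \;\leq\; \lev_j^U(z) \;\leq\; \tfrac{3d}{2n} \qquad \forall j \in [n],
\]
uniformly controlling every $\lev_j$.

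The final step is a simple averaging argument. By definition of $T$, every $j \notin T$ satisfies $p_j < \lev_j^U(z)/(4d)$, while for $j \in T$ I use the bound $p_j \leq \lev_j^U(z)$. Combining with $\sum_j p_j = 1$ and the uniform upper bound $\lev_j^U(z) \leq 3d/(2n)$,
\[
1 \;=\; \sum_{j \in T} p_j + \sum_{j \notin T} p_j \;\leq\; |T|\cdot \tfrac{3d}{2n} + (n-|T|)\cdot \tfrac{3d}{2n} \cdot \tfrac{1}{4d} \;=\; \tfrac{3|T|(4d-1)}{8n} + \tfrac{3}{8}.
\]
Rearranging gives $|T|/n \geq 5/(3(4d-1)) \geq 1/(5d)$, as desired.

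There is no real obstacle here once the change of variables to the isotropic frame $v_j$ is made; the main thing to get right is the dictionary between $\langle\cdot,\cdot\rangle_Q$ and the Euclidean inner product after applying $L$, so that the orthonormality identity $\sum_j v_j v_j^\Tr = I_d$ can be invoked to force $\sum_j p_j = 1$. The $\eps \leq d/(2n)$ hypothesis is used in precisely one place: to promote the $\ell_2$ Forster bound on $\lev^U(z) - (d/n)\mathbf{1}_n$ to the two-sided $\ell_\infty$ bound that controls every individual leverage score.
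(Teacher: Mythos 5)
Your proof is correct and follows essentially the same route as the paper: your identities $\sum_j p_j = 1$ and $p_j \le \lev_j^U(z)$ are exactly the paper's $\sum_j z_j\langle w,u_j\rangle_Q^2 = \|w\|_Q^2$ and $z_j\|u_j\|_Q^2 = \lev_j^U(z)$ rewritten via the isotropic frame, and the averaging over $T$ versus $[n]\setminus T$ with the bound $\lev_j \le \tfrac{3d}{2n}$ is the same argument with the same conclusion.
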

\begin{proof}
    By the approximate $(I_{d},c)$-position, we have $\forall j \in [n]: \lev^{U}_{j}(z) \in (1 \pm 1/2) \frac{d}{n}$ as $\|\lev^{U}(z) - \frac{d}{n} 1_{n}\|_{2}^{2} \leq (\frac{d}{2n})^{2}$ by assumption. Then, following the argument in Lemma 7.3 of \cite{DKT}, for unknown $w \in \R^{d}$ we have
\[ \sum_{j \in [n]} z_{j} \langle w, u_{j} \rangle_{Q}^{2} = w^{\Tr} (U Z U^{\Tr})^{-1} (U Z U^{\Tr}) (U Z U^{\Tr})^{-1} w = \|w\|_{Q}^{2} .   \]
Also, by the approximate radial isotropy condition and Cauchy-Schwarz, we have 
\[ \forall j \in [n]:  \frac{z_{j} \langle w, u_{j} \rangle_{Q}^{2}}{\|w\|_{Q}^{2}} \leq z_{j} \|u_{j}\|_{Q}^{2} = \lev^{U}_{j}(z) \leq \frac{3}{2} \frac{d}{n} .  \]
Therefore by a simple averaging argument, there must be $\Omega(1/d)$ points with large margin: 
\begin{align*}
    1 = \sum_{j \in [n]} \frac{z_{j} \langle w, u_{j} \rangle_{Q}^{2}}{\|w\|_{Q}^{2}}
    < \sum_{j \in T} z_{j} \|u_{j}\|_{Q}^{2} + \sum_{j \not\in T} \frac{z_{j} \|u_{j}\|_{Q}^{2}}{4d} \leq |T| \frac{3}{2} \frac{d}{n} + (n - |T|) \frac{3}{2} \frac{1}{4n} , 
\end{align*}
where the first step was by the isotropy assumption and calculation above, in the second step we used the Cauchy-Schwarz bound for $j \in T$ and the low margin assumption for $j \not\in T$, and in the final step we used the bound on the norms. Rearranging this gives the desired condition:
\[ \frac{|T|}{n} > \frac{1}{4} \big(d - \frac{1}{4} \big)^{-1} \geq \frac{1}{5d}  .     \]
\end{proof}

We trivially have that for each $j \in [n]$, one of $\pm u_{j}$ has non-negative margin $\langle w, u_{j} \rangle_{Q} \geq 0$. Therefore $\exists u \in \{\pm u_{j} \}_{j \in [n]}$ with large margin $\frac{\langle w, u \rangle_{Q}}{\|w\|_{Q} \|u\|_{Q}} \geq \frac{1}{\sqrt{4d}}$. Therefore, we can run $2n$ instances of perceptron in parallel using each $\pm u_{j}$ as a separate initialization, and the fastest one will converge in strongly polynomial number of iterations by the margin lower bound.  

Therefore, we can use our improved frame scaling algorithm in \cref{t:mainAlgAnalysis} to improve the runtime of the strongly polynomial improper halfspace learner given in \cite{DKT}. In particular, we note that our implicit representation of the left and right scaling for approximate radial isotropic position can still be used with the perceptron algorithm to learn large margin points.

\subsection{Point Location \cite{PointLocation}}

In this subsection, we discuss the application of frame scaling to point location problem studied in \cite{PointLocation}. Here the input is a fixed set of $n$ points and we are given query access to an unknown halfspace in the form $x \to \text{sign}(\langle w, x \rangle)$. The goal is to learn the label of all $n$ points using as few queries as possible.

In the following, we discuss how frame scaling is used for this problem in \cite{PointLocation}. The first observation is quite similar to that of \cite{DKT}: if the points are in approximate $(I_{d},c)$-position according to \cref{d:introForsterTransform}, then there is a simple way to learn the label of a large fraction of points. The technical statement is as follows: 

\begin{lemma} [Lemma 4.6 in \cite{PointLocation} (Informal)]
    Let $S$ be a set of $n$ labeled examples $(x,y) \in \R{d} \times \{\pm 1\}$, such that there is an unknown linear halfspace satisfying $\forall (x,y) \in S: y = \text{sign}(\langle w, x \rangle)$. 
    Further, assume the input is in approximate $(I_{d}, c)$-position according to \cref{d:introForsterTransform}. 
    Then there is an algorithm ``IsoLearn" with the guarantee: for some integer $k \in [d]$, the algorithm makes $\tilde{O}(k)$ queries and correctly classifies a set of points $T \subseteq [n]$ with weight $\langle c, 1_{T} \rangle \geq k$.  
\end{lemma}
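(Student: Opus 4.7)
The plan is to reduce to the $(I_d, c)$-isotropic setting via our main frame scaling algorithm (\cref{t:mainAlgAnalysis}) and then run a parallel perceptron procedure in the spirit of \cref{l:improvedPerceptron}, adapted to non-uniform marginals. First I would compute a scaling $z \in \R^{n}_{++}$ placing the points in $\eps$-approximate $(I_{d}, c)$-position for $\eps$ small enough that $z_{j} \|u_{j}\|_{Q}^{2} = \lev^{U}_{j}(z) \approx c_{j}$, where $\langle \cdot, \cdot \rangle_{Q} := \langle \cdot, (U Z U^{\Tr})^{-1} \cdot\rangle$. Under this inner product, the Parseval-type identity $\sum_{j} z_{j} \langle w, u_{j}\rangle_{Q}^{2} = \|w\|_{Q}^{2}$ holds for every $w \in \R^{d}$, which is the structural hook for every subsequent averaging argument.

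The key technical lemma is a weighted-margin generalization of \cref{l:ForsterMargin}: for any unit direction $v$ and threshold $\gamma \in (0,1)$, the large-margin set $T_{\gamma}(v) := \{j : \langle v, u_{j}\rangle_{Q}^{2}/(\|v\|_{Q}^{2} \|u_{j}\|_{Q}^{2}) \geq \gamma^{2}\}$ satisfies $\langle c, 1_{T_{\gamma}(v)}\rangle \gtrsim (1 - \gamma^{2} d)/(1 - \gamma^{2})$ by the same averaging argument applied with the non-uniform weights $c_{j}$; taking $\gamma^{2} = \Theta(1/d)$ produces a set of weight $\Omega(1)$ whose $Q$-margin is $\Omega(1/\sqrt{d})$.

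IsoLearn then runs $2n$ parallel perceptrons under $\langle \cdot, \cdot\rangle_{Q}$ initialized at $v_{0} \in \{\pm u_{j}\}_{j \in [n]}$. The weighted-margin lemma applied to the unknown $w$ guarantees some seed $\pm u_{j^{\star}}$ starts with correlation $\Omega(1/d)$; each perceptron update (\cref{l:improvedPerceptron}) consumes one label query on a point misclassified at the current threshold $\gamma$ and boosts correlation geometrically, so once correlation exceeds $1 - \gamma^{2}$, every point with current $Q$-margin at least $\gamma$ with respect to the running hypothesis $v$ is correctly labeled by $\text{sign}\langle v, \cdot\rangle$ (because the angle between $v$ and $w$ is strictly smaller than the margin).

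To obtain the query-for-weight trade-off, run the parallel perceptrons with a doubling schedule of margin thresholds $\gamma_{t}$ and stop as soon as no large-margin misclassification remains; at that stopping time the classified set $T$ has weight $\langle c, 1_{T}\rangle \geq k$ via the weighted-margin lemma, while the cumulative query count is $\tilde{O}(k)$ by summing the geometric perceptron bound across the schedule. The main obstacle will be selecting the stopping rule so that $k$ and the classified weight genuinely scale together across the full range $k \in [d]$, and carefully controlling the propagation of the $O(\eps)$ error from approximate $(I_{d}, c)$-position so that it does not erode the margin bound by more than a constant factor; a secondary subtlety is that the winning seed $j^{\star}$ is unknown, so the query accounting must ensure the $2n$ parallel runs together consume only $\tilde{O}(k)$ total queries, e.g.\ by interleaving rounds across all seeds and charging queries only when a perceptron step is actually taken.
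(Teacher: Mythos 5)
The statement you set out to prove is not actually proved in this paper: it is an informal restatement of Lemma 4.6 of \cite{PointLocation}, which the paper simply cites, and the surrounding section only explains how the implicit scaling representation from \cref{t:mainAlgAnalysis} plugs into that algorithm (the paper even notes that IsoLearn queries the halfspace on linear and random Gaussian combinations of the points through a margin oracle, steps it does not attempt to reprove). Your perceptron-based construction is therefore a genuinely different route, and as written it has two concrete gaps.

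First, the query accounting. In the point-location model the labels of the $n$ points are not given for free; to decide whether a large-$Q$-margin point is misclassified by the running hypothesis you must query its label, and a query spent on a point that turns out to be correctly classified cannot be ``charged to an update''. So your plan to charge queries only when a perceptron step is actually taken does not bound the number of queries. Moreover, with seed correlation $\Omega(1/\sqrt{d})$ and $\gamma^{2} = \Theta(1/d)$, each of the $2n$ parallel perceptrons may need $\Omega(d \log d)$ updates before its correlation guarantee kicks in, so even the update count alone is not $\tilde{O}(k)$ for small $k$. Second, the weight-for-queries trade-off. Your weighted analogue of \cref{l:ForsterMargin} only yields a large-margin set of weight $\Omega(1)$, i.e.\ essentially the $k = O(1)$ case; the content of Lemma 4.6 is that the classified weight scales with the query count over the entire range $k \in [d]$, and in \cite{PointLocation} this is obtained by querying linear (Gaussian) combinations of the isotropically positioned points via a margin oracle, not by perceptron updates restricted to the dataset itself. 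The ``doubling schedule of margin thresholds'' is only sketched and does not substitute for that mechanism; you flag both issues as obstacles, but they are exactly where the proof would have to live.
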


In other words, we can learn a large weighted fraction of points according to the weighting $c$. In order to learn all the labels, this is combined with a boosting procedure, which iteratively increases the weighting of points whose label we have not yet learned (described in Algorithm 5 in \cite{PointLocation}). Therefore, it is important that we can compute frame scalings for arbitrary weighting $c \in \R^{n}_{++}$ as given in \cref{d:introForsterTransform}. 

Using our improved frame scaling result in \cref{t:mainAlgAnalysis}, we are able to give a strongly polynomial time procedure that can be used to place points in approximate $(I_{d},c)$-position or give a certificate of infeasibility. Recall that we keep track of the square of the right scaling and leave the left scaling is implicit. Following the ideas in the previous subsection, this is sufficient if we once again switch to an appropriate inner product that depends on our scalings. 

There are a few steps in the learning algorithms given in this work that are not strictly strongly polynomial, as the main goal is to give optimal query complexity for this problem. In particular, we are required to query the halfspace on linear combinations of our points in approximate $(I_{d},c)$-position. We can once again simulate the left scaling by changing the inner product $\langle x, x' \rangle-{Q} := x^{\Tr} (U Z U^{\Tr})^{-1} x'$; but again for linear combinations we will require square root operations. Further, a key technique in this work is the notion of margin oracle, which requires queries on random Gaussian linear combinations of points. Both of these are not strongly polynomial, and we leave to future work the question of whether these issues can be avoided.

\bibliographystyle{plain}
\bibliography{refs.bib}

\begin{thebibliography}{10}

\bibitem{AZLOW}
Zeyuan Allen-Zhu, Yuanzhi Li, Rafael Oliveira, and Avi Wigderson.
\newblock Much faster algorithms for matrix scaling.
\newblock In {\em 2017 IEEE 58th Annual Symposium on Foundations of Computer
  Science (FOCS)}, 2017.

\bibitem{AZGLOW18}
Zeyuan Allen-Zhu, Ankit Garg, Yuanzhi Li, Rafael Oliveira, and Avi Wigderson.
\newblock Operator Scaling via Geodesically Convex Optimization, Invariant Theory and Polynomial Identity Testing.
\newblock In {\em Proceedings of the 50th Annual ACM SIGACT Symposium on Theory of Computing (STOC)}, 2018.

\bibitem{AKS}
Shiri Artstein-Avidan, Haim Kaplan, and Micha Sharir.
\newblock On radial isotropic position: Theory and algorithms.
\newblock {\em arXiv preprint arXiv:2005.04918}, 2020.

\bibitem{BartheBL}
Frank Barthe.
\newblock On a reverse form of the {Brascamp}-{Lieb} inequality.
\newblock {\em Inventiones mathematicae}, 134(2), 1998.

\bibitem{BhatiaPD}
Rajendra Bhatia.
\newblock {\em Positive Definite Matrices}.
\newblock Princeton University Press, 2007.

\bibitem{BFGOWW}
Peter B{\"u}rgisser, Cole Franks, Ankit Garg, Rafael Oliveira, Michael Walter,
  and Avi Wigderson.
\newblock Efficient algorithms for tensor scaling, quantum marginals, and
  moment polytopes.
\newblock In {\em 2018 IEEE Symposium on Foundations of Computer Science
  (FOCS)}, 2018.

\bibitem{Towards}
Peter B{\"u}rgisser, Cole Franks, Ankit Garg, Rafael Oliveira, Michael Walter,
  and Avi Wigderson.
\newblock Towards a theory of non-commutative optimization: geodesic 1st and
  2nd order methods for moment maps and polytopes.
\newblock In {\em 2019 IEEE 60th Annual Symposium on Foundations of Computer
  Science (FOCS)}, pages 845--861. IEEE, 2019.

\bibitem{IPMforGP}
Peter Burgisser, Yinan Li, Harold Nieuwboer, and Michael Walter.
\newblock Interior-point methods for unconstrained geometric programming and
  scaling problems.
\newblock {\em arXiv preprint arXiv:2008.12110}, 2020.

\bibitem{FramesBook}
Peter Casazza, and Gitta Kutyniok.
\newblock {\em Finite Frames: Theory and Applications}.
\newblock Birkhauser Basel, 2013.

\bibitem{CMTV17}
Michael~B. Cohen, Aleksander Madry, Dimitris Tsipras, and Adrian Vladu.
\newblock Matrix scaling and balancing via box constrained {Newton}’s method
  and interior point methods.
\newblock In {\em 2017 IEEE Symposium on Foundations of Computer Science
  (FOCS)}. IEEE, 2017.

\bibitem{ScalingInvIPM}
D.~Dadush, S.~Huiberts, B.~Natura, and L.~Vegh.
\newblock A scaling-invariant algorithm for linear programming whose running
  time depends only on the constraint matrix.
\newblock In {\em Proceedings of the 52nd Annual ACM Symposium on Theory of
  Computing (STOC)}. ACM, 2020.

\bibitem{dadush2023scaling}
Daniel Dadush, Sophie Huiberts, Bento Natura, and L{\'a}szl{\'o}~A V{\'e}gh.
\newblock A scaling-invariant algorithm for linear programming whose running
  time depends only on the constraint matrix.
\newblock {\em Mathematical Programming}, pages 1--72, 2023.

\bibitem{AcceleratedND}
D.~Dadush, Z.K.~Koh, B.~Natura, and L.~Vegh.
\newblock An Accelerated Newton-Dinkelbach Method and Its Application to Two 
  Variables per Inequality Systems.
\newblock In {\em 29th Annual European Symposium on Algorithms (ESA)}, 2021.

\bibitem{DKT21}
Ilias Diakonikolas, Daniel Kane, and Christos Tzamos.
\newblock Forster decomposition and learning halfspaces with noise.
\newblock In {\em 2021 Conference on Neural Information Processing Systems
  (NeurIPS)}, 2021.

\bibitem{DKT}
Ilias Diakonikolas, Christos Tzamos, and Daniel Kane.
\newblock A strongly polynomial algorithm for approximate forster transforms
  and its application to halfspace learning.
\newblock In {\em Proceedings of the 55th Annual ACM Symposium on Theory of
  Computing}, pages 1741--1754, 2023.

\bibitem{Dinklebach}
W.~ Dinkelbach.
\newblock On nonlinear fractional programming.
\newblock {\em Management Science}, 1967.

\bibitem{DV04}
John Dunagan and Santosh Vempala.
\newblock A simple polynomial-time rescaling algorithm for solving linear programs.
\newblock In {\em Proceedings of the 36th Annual Symposium on Theory of
  Computing (STOC)}. ACM, 2004.

\bibitem{Edmonds}
Jack Edmonds.
\newblock Systems of Distinct Representatives and Linear Algebra.
\newblock In {\em Journal of Research of the National Bureau of Standards}. 1967.

\bibitem{GaussianEliminationWorstCase}
Xin Gui Fang, and George Havas.
\newblock On the Worst-case Complexity of Integer Gaussian Elimination.
\newblock In {\em Proceedings of the International Symposium on Symbolic and Algebraic Computation (ISSAC)}. 1997.

\bibitem{Forster}
Jurgen Forster.
\newblock A linear lower bound on the unbounded error probabilistic
  communication complexity.
\newblock {\em Journal of Computer and System Sciences}, 65, 2002.

\bibitem{GGOW15}
Ankit Garg, Leonid Gurvits, Rafael~Mendes de~Oliveira, and Avi Wigderson.
\newblock A deterministic polynomial time algorithm for non-commutative rational identity testing.
\newblock {\em 57th IEEE Annual Symposium on Foundations of Computer Science
  (FOCS)}, 2016.

\bibitem{GGOWBL}
Ankit Garg, Leonid Gurvits, Rafael~Mendes de~Oliveira, and Avi Wigderson.
\newblock Algorithmic and optimization aspects of {Brascamp}-{Lieb}
  inequalities, via operator scaling.
\newblock {\em Geometric and Functional Analysis}, 28, 2018.

\bibitem{GelfandMatroid}
I.~Gelfand, R.~Goresky, R.~MacPherson, and V.~Serganova.
\newblock Combinatorial geometries, convex polyhedra, and schubert cells.
\newblock {\em Advances in Mathematics}, 1987.

\bibitem{GLS}
M.~Gr{\'o}tschel, L.~Lov{\'a}sz, and A.~Schrijver.
\newblock {\em Geometric Algorithms and Combinatorial Optimization}.
\newblock Springer, 1993.

\bibitem{HardtMoitra}
Moritz Hardt and Ankur Moitra.
\newblock Algorithms and hardness for robust subspace recovery.
\newblock In {\em 26th Annual Conference on Learning Theory (COLT)}, 2013.

\bibitem{HP}
R.B.~ Holmes, and V.I.~Paulsen.
\newblock Optimal frames for erasures.
\newblock {\em Linear Algebra and its Applications}, 2004.

\bibitem{PointLocation}
M.~Hopkins, D.~Kane, S.~Lovett, and G.~Mahajan.
\newblock Point location and active learning: Learning halfspaces almost
  optimally.
\newblock In {\em 61st IEEE Annual Symposium on Foundations of Computer Science
  (FOCS)}, 2020.

\bibitem{Idel}
M.~Idel.
\newblock A review of matrix scaling and {Sinkhorn}’s normal form for
  matrices and positive maps.
\newblock {\em arXiv preprint arXiv:1609.06349}, 2016.

\bibitem{Kachiyan}
Leonid Kachiyan.
\newblock On the complexity of approximating extremal determinants in matrices.
\newblock {\em Journal of Complexity}, 11, 1995.

\bibitem{KN}
George Kempf and Linda Ness.
\newblock The length of vectors in representation spaces.
\newblock In Knud L{\o}nsted, editor, {\em Algebraic Geometry}, pages 233--243,
  Berlin, Heidelberg, 1979. Springer Berlin Heidelberg.

\bibitem{Knuth}
Donald Knuth.
\newblock Semi-optimal bases for linear dependencies.
\newblock {\em Linear and Multilinear Algebra}, 17, 1985.

\bibitem{lan2009polynomial}
Guanghui Lan, Renato~DC Monteiro, and Takashi Tsuchiya.
\newblock A polynomial predictor-corrector trust-region algorithm for linear
  programming.
\newblock {\em SIAM Journal on Optimization}, 19(4):1918--1946, 2009.

\bibitem{LSW}
Nathan Linial, Alex Samorodnitsky, and Avi Wigderson.
\newblock A deterministic strongly polynomial algorithm for matrix scaling and
  approximate permanents.
\newblock {\em Combinatorica}, 20(4):545--568, 2000.


\bibitem{Megiddo}
N.~Megiddo.
\newblock Towards a genuinely polynomial algorithm for linear programming.
\newblock {\em SIAM Journal on Computing}, 1983.

\bibitem{Tardos}
E.~Tardos.
\newblock A strongly polynomial minimum cost circulation algorithm.
\newblock {\em Combinatorica}, 1985.

\bibitem{monteiro2003variant}
Renato~DC Monteiro and Takashi Tsuchiya.
\newblock A variant of the vavasis--ye layered-step interior-point algorithm
  for linear programming.
\newblock {\em SIAM Journal on Optimization}, 13(4):1054--1079, 2003.

\bibitem{Radzik}
T.~Radzik.
\newblock Newton’s method for fractional combinatorial optimization.
\newblock In {\em Proceedings of the 33rd Annual Symposium on Foundations 
  of Computer Science (FOCS)}, 1992.

\bibitem{SinghVishnoi}
Mohit Singh and Nisheeth Vishnoi.
\newblock Entropy, optimization and counting.
\newblock In {\em Proceedings of the Forty-Sixth Annual Symposium on Theory of
  Computing (STOC)}. ACM, 2014.

\bibitem{DV04}
John Dunagan and Santosh Vempala.
\newblock A simple polynomial-time rescaling algorithm for solving linear programs.
\newblock In {\em Proceedings of the 36th Annual Symposium on Theory of
  Computing (STOC)}. ACM, 2004.

\bibitem{Smale}
Stephen Smale.
\newblock Mathematical problems for the next century.
\newblock In {\em The Mathematical Intelligencer}. 1998.

\bibitem{SrivastavaSurvey}
Nikhil Srivastava.
\newblock The Complexity of Diagonalization.
\newblock In {\em Proceedings of the International Symposium on Symbolic and Algebraic Computation (ISSAC)}. 2023.

\bibitem{StraszakVishnoi}
Damian Straszak and Nisheeth Vishnoi.
\newblock Maximum entropy distributions: Bit complexity and stability.
\newblock In {\em Proceedings of the Thirty-Second Conference on Learning
  Theory (COLT)}. ACM, 2019.

\bibitem{TuncelChiNP}
Levent Tuncel.
\newblock Approximating the complexity measure of Vavasis-Ye algorithm is NP-hard.
\newblock {\em Mathematical Programming}, 1999.


\bibitem{VYChiBound}
Stephen Vavasis and Yinyu Ye.
\newblock Condition numbers for polyhedra with real number data.
\newblock {\em Operations Research Letters}, 1995.

\bibitem{VYIPM}
Stephen Vavasis and Yinyu Ye.
\newblock A primal-dual interior point method whose running time depends only
  on the constraint matrix.
\newblock {\em Mathematical Programming}, 1996.

\bibitem{ABGJ}
Allamigeon X, Benchimol P, Gaubert S, and Joswig M.
\newblock Log-barrier interior point methods are not strongly polynomial.
\newblock {\em SIAM Journal on Applied Algebra and Geometry}, 2018.

\end{thebibliography}

\end{document}